\documentclass[a4paper,notitlepage,twocolumn,superscriptaddress,longbibliography,nofootinbib]{revtex4-1}

\usepackage[utf8]{inputenc}
\usepackage{braket}
\usepackage{amsmath,amsthm,amssymb,amsfonts}
\usepackage{mathtools}
\usepackage{graphicx}
\usepackage{hyperref}
\usepackage{comment}
\usepackage{color}
\usepackage{makecell}
\usepackage{enumitem}

\usepackage[dvipsnames]{xcolor}
\usepackage{hyperref}
\usepackage{braket}
\usepackage{bbm}
\usepackage{bm}
\usepackage{longtable}
\usepackage[bb=boondox]{mathalfa}
\usepackage{adjustbox}
\usepackage{array}
\usepackage{multirow}
\usepackage{tabularx}
\usepackage{float}

\usepackage{array}
\newcolumntype{C}[1]{>{\centering\let\newline\\\arraybackslash\hspace{0pt}}m{#1}}

\usepackage{tikz}
\usetikzlibrary{shapes,arrows,patterns}
\usepackage{tikz-cd}


\newtheorem{definition}{Definition}
\newtheorem{proposition}{Proposition}
\newtheorem{corollary}{Corollary}

\newtheorem{example}{Example}

\definecolor{gray}{gray}{0.5}
\definecolor{G}{rgb}{0.6,0,0}
\definecolor{J}{rgb}{0.6,0,0.6}
\definecolor{O}{rgb}{0,0,0.6}

\definecolor{dgreen}{rgb}{0,0.6,0}
\definecolor{dred}{rgb}{0.6,0,0}

\usepackage{scalerel,stackengine}
\stackMath
\newcommand\reallywidehat[1]{%
\savestack{\tmpbox}{\stretchto{%
  \scaleto{%
    \scalerel*[\widthof{\ensuremath{#1}}]{\kern.1pt\mathchar"0362\kern.1pt}%
    {\rule{0ex}{\textheight}}
  }{\textheight}%
}{2.4ex}}%
\stackon[-6.9pt]{#1}{\tmpbox}%
}
\parskip 1ex

\newcommand{\ccaption}[2]{\caption{\emph{#1.} #2}}
\newcommand{\ii}{\mathrm{i}}
\newcommand{\ie}{{\it i.e.},\ }
\newcommand{\eg}{{\it e.g.},\ }

\newcommand{\id}{\mathbb{1}} 

\newcommand{\plus}{-}
\newcommand{\minus}{+}
\newcommand{\plusminus}{\mp}
\newcommand{\minusplus}{\pm}

\newcommand\qp{\mathrel{\overset{\makebox[0pt]{\mbox{\tiny $q,\!p$}}}{\equiv}}}

\usetikzlibrary{decorations.markings,arrows,decorations.pathreplacing,calc,intersections,through,backgrounds}
\newcommand\manifold[3][]{
	\draw[every to/.style={out=-20,in=160,relative},#1] (#2) 
	to ($(#2 -| #3)!0.2!(#2 |- #3)$)
	to (#3)
	to ($(#2 -| #3)!0.8!(#2 |- #3)$)
	to cycle;
}
\tikzset{->-/.style={decoration={markings,mark=at position #1 with {\arrow{>}}},postaction={decorate}}}
\tikzset{
	partial ellipse/.style args={#1:#2:#3}{
		insert path={+ (#1:#3) arc (#1:#2:#3)}
	}
}


\newcommand{\K}{\bm{K}}
\newcommand{\R}{\bm{R}}
\newcommand{\HH}{\bm{H}}
\newcommand{\g}{\bm{g}}
\newcommand{\G}{\bm{G}}
\newcommand{\Om}{\bm{\Omega}}
\newcommand{\om}{\bm{\omega}}
\newcommand{\J}{\bm{J}}
\newcommand{\Gg}{g}
\newcommand{\GG}{G}
\newcommand{\GO}{\Omega}
\newcommand{\Go}{\omega}

\newcommand{\M}{M}

\begin{document}
\title{Geometry of variational methods: dynamics of closed quantum systems}

\author{Lucas Hackl}
\email{lucas.hackl@mpq.mpg.de}
\affiliation{QMATH, Department of Mathematical Sciences, University of Copenhagen, Universitetsparken 5, 2100 Copenhagen, Denmark}
\affiliation{Max-Planck-Institut für Quantenoptik, Hans-Kopfermann-Str.~1, 85748 Garching, Germany}
\affiliation{Munich Center for Quantum Science and Technology, Schellingstr.~4, 80799 München, Germany}

\author{Tommaso Guaita}
\email{tommaso.guaita@mpq.mpg.de}
\affiliation{Max-Planck-Institut für Quantenoptik, Hans-Kopfermann-Str.~1, 85748 Garching, Germany}
\affiliation{Munich Center for Quantum Science and Technology, Schellingstr.~4, 80799 München, Germany}

\author{Tao Shi}
\affiliation{CAS Key Laboratory of Theoretical Physics, Institute of Theoretical Physics, Chinese Academy of Sciences, Beijing 100190, China}
\affiliation{CAS Center for Excellence in Topological Quantum Computation, University of Chinese Academy of Sciences, Beijing 100049, China}

\author{Jutho Haegeman}
\affiliation{Department of Physics and Astronomy, Ghent University, Krijgslaan 281, 9000 Gent, Belgium}

\author{Eugene Demler}
\affiliation{Lyman Laboratory, Department of Physics, Harvard University, 17 Oxford St., Cambridge, MA 02138, USA}

\author{J. Ignacio Cirac}
\affiliation{Max-Planck-Institut für Quantenoptik, Hans-Kopfermann-Str.~1, 85748 Garching, Germany}
\affiliation{Munich Center for Quantum Science and Technology, Schellingstr.~4, 80799 München, Germany}

\begin{abstract}
We present a systematic geometric framework to study closed quantum systems based on suitably chosen variational families. For the purpose of (A) real time evolution, (B) excitation spectra, (C) spectral functions and (D) imaginary time evolution, we show how the geometric approach highlights the necessity to distinguish between two classes of manifolds: Kähler and non-Kähler. Traditional variational methods typically require the variational family to be a Kähler manifold, where multiplication by the imaginary unit preserves the tangent spaces. This covers the vast majority of cases studied in the literature. However, recently proposed classes of generalized Gaussian states make it necessary to also include the non-Kähler case, which has already been encountered occasionally. We illustrate our approach in detail with a range of concrete examples where the geometric structures of the considered manifolds are particularly relevant. These go from Gaussian states and group theoretic coherent states to generalized Gaussian states. 
\end{abstract}

\maketitle

\tableofcontents

\section{Introduction}
Variational methods are of utmost importance in quantum physics. They have played a crucial role in the discovery and characterization of paradigmatic phenomena in many-body system, like Bose-Einstein condensation~\cite{gross1961structure,pitaevskii1961vortex}, superconductivity~\cite{bardeen1957theory}, superfluidity~\cite{bogoliubov1947theory}, the fractional quantum hall~\cite{stormer1983quantized} or the Kondo effect~\cite{kondo1964resistance}. They are the basis of Hartree-Fock methods~\cite{hartree1928wave}, Bardeen-Cooper-Schrieffer theory~\cite{bardeen1957theory}, Gutzwiller~\cite{gutzwiller1963effect} or Laughlin wavefunctions~\cite{laughlin1983anomalous}, the Gross-Pitaevskii equation~\cite{gross1961structure,pitaevskii1961vortex}, and the density matrix renormalization group~\cite{white1992density}, which are nowadays part of standard textbooks in quantum physics. Variational methods are particularly well suited to describe complex systems where exact or perturbative techniques cannot be applied. This is typically the case in many-body problems: on the one hand, the exponential growth of the Hilbert space dimension with the system size restricts exact computational methods to relatively small systems; on the other hand, as perturbations are generally extensive, they cannot be considered as small as the system size grows. Furthermore, variational methods are becoming especially relevant in recent times due to the continuous growth of computational power, as this enables to enlarge the number of variational parameters, for instance, to scale polynomially with the system size. Their power and scope can be further extended in combination with other methods, like Monte-Carlo, or even in the context of quantum computing.

A variational method parametrizes a family of states $\ket{\psi(x)}$ or, in case of mixed states, $\rho(x)$, in terms of so-called variational parameters $x=(x^1,\ldots,x^n)$.  The choice of the family of states is crucial as it has to encompass the physical behavior we want to describe, as well as to be amenable of efficient computations, circumventing the exponential growth in computational resources that appears in exact computations. A variety of variational principles can then be used, depending on the problem at hand. At thermal equilibrium, one can rely on the fact that the state minimizes the free energy, which reduces to the energy at zero temperature. In that case, for instance, one can just compute the expectation value $E(x)$ of the Hamiltonian in the state $\ket{\psi(x)}$, and find the $x_0$ that minimizes that quantity, yielding a state, $\ket{\psi(x_0)}$ that should resemble the real ground state of the system. For time-dependent problems, one can use Dirac's variational principle. There, one computes the action $S[x(t),\dot x(t)]$ for the state $\ket{\psi(x(t))}$ and extracts a set of differential equations for $x(t)$ requiring it to be stationary. Thus, the computational problem is reduced to solving this set of equations, which can usually be done even for very large systems. While the use of time-dependent variational methods is not so widespread as those for thermal equilibrium, the first have experienced a renewed interest thanks to the recent experimental progress in taming and studying the dynamics of many-body quantum systems in diverse setups. They include cold atoms in bulk or in optical lattices~\cite{bloch2008ultracold}, trapped ions~\cite{porras2004effective,kim2010quantum}, boson-fermion mixtures~\cite{mitrano2016possible}, quantum impurity problems~\cite{kanasz2018exploring} and pump and probe experiments in condensed matter systems~\cite{tsuji2015theory,knap2016dynamical,chu2019new}. Recently, such methods have been used in the context of matrix product states to analyze a variety of phenomena, or with Gaussian states in the study of impurity problems~\cite{ashida2018variational,shchadilova2016polaronic}, Holstein models~\cite{wang2019zero}, or Rydberg states in cold atomic systems~\cite{ashida2019quantum,ashida2019efficient}.

Time-dependent variational methods can also be formulated in geometric terms. Here, the family of states is seen as a manifold in Hilbert space, and the differential equations for the variational parameters are derived by projecting the infinitesimal change of the state onto the tangent space of the manifold. This approach offers a very intuitive understanding of the variational methods through geometry. The translation between the different formulations is straightforward in the case of complex parametrizations: that is, where the $x^\mu$ are complex variables, in which case the corresponding manifold is, from the geometric point of view, usually referred to as a Kähler manifold. If this is not the case, the geometric formalism has the advantage of highlighting several subtleties that appear and that have to be treated with care.

The main aim of the present paper is two-fold. First, to give a complete formulation of the geometric variational principle in the more general terms, not restricting ourselves to the case of complex parametrizations: that is, when the $x^\mu$ are taken to be real parameters\footnote{Note that a complex parametrization (in terms of $z^\mu\in C$) can always be expressed in terms of a real  parametrization just by replacing $z^\mu=x_1^\mu + \ii x_2^\mu$, with $x_{1,2}^\mu\in \mathbb{R}$.}. For all the variational methods that will be introduced, we will provide a detailed analysis of the differences that emerge in the non-complex case, most importantly the existence of inequivalent time dependent variational principles. The motivation to address the case of real parametrization stems from the fact that, in some situations, one has to impose that some of the variational parameters are real since otherwise the variational problem becomes intractable. This occurs, for instance, when one deals with a family of the form $\ket{\psi(x)}=\mathcal{U}(x)|\phi\rangle$, where $\phi$ is some fiducial state and $\mathcal{U}(x)$ is unitary. By taking $x$ complex, $\mathcal{U}(x)$ ceases to be unitary, and thus even the computation of the normalization of the state may require unreasonable computational resources.

Second, even though there exists a vast literature on geometrical methods~\cite{kibble1979geometrization,kramer1980geometry,heslot1985quantum,gibbons1992typical,hughston2017geometric}, it is mostly addressed to mathematicians and it may be hard to practitioners to extract from it readily applicable methods. Here, we present a comprehensive, but at the same time rigorous illustration of geometric methods that is accessible to readers ranging from mathematical physicists to condensed matter physicists. For this, we first give a simple and compact formulation, and then present the mathematical subtleties together with simple examples and illustrations. We will address some of the issues which are most important when it comes to the practical application of these methods: the conservation of physical quantities, the computation of excitations above the ground state, and the evaluation of spectral functions as suggested by the geometrical approach. For each of them, we will provide a motivation and derivation from physical considerations and, where we find inequivalent feasible approaches, give a detailed discussion of the differences and subtleties. Moreover, we discuss how the geometrical method can be naturally extended to imaginary time evolution, providing us with a very practical tool for analyzing systems at zero temperature.

To illustrate our results and to connect to applications, we discuss representative examples of variational classes, for which the presented methods are suitable. In particular, we will recast the prominent families of bosonic and fermionic Gaussian states in a geometric language, which makes their variational properties transparent. We will further show how the geometric structures discussed in this paper emerge in a natural way in the context of Gilmore-Perelomov group theoretic coherent states~\cite{gilmore1972geometry,perelomov1972coherent}, of which traditional coherent and Gaussian states are examples. Finally, we will discuss possible generalizations going beyond \textit{ansätze} of this type.\\

The paper is structured as follows: In section~\ref{sec:variational-principle-geometrical-representation}, we motivate our geometric approach and present its key ingredients without requiring any background in differential geometry. In section~\ref{sec:geometry}, we give a pedagogical introduction to the differential geometry of Kähler manifolds and fix conventions for the following sections. In section~\ref{sec:variational_methods}, we  define our formalism in geometric terms and discuss various subtleties for the most important applications ranging from (A) real time evolution, (B) excitation spectra, (C) spectral functions to (D) imaginary time evolution. In section~\ref{sec:applications}, we apply our formalism to the variational families of Gaussian states, group theoretic coherent states (Gilmore–Perelomov) and certain classes of non-Gaussian states. In section~\ref{sec:discussion}, we summarize and discuss our results.

\section{Variational principle and its geometry}
\label{sec:variational-principle-geometrical-representation}

This section serves as a prelude and summary of the more technical sections~\ref{sec:geometry} and~\ref{sec:variational_methods}. In section~\ref{sec:geometry} we will give a rigorous definition of the most generic variational family as a differentiable manifold embedded in projective Hilbert space and define the structures that characterise it as Kähler manifold. In section~\ref{sec:variational_methods}, we illustrate the possible ways in which variational principles can be defined on such manifolds and highlight their differences. A reader familiar with the general approach, but interested in the technical details may skip directly to sections~\ref{sec:geometry} and~\ref{sec:variational_methods}.

In the present section, on the other hand, we illustrate and summarize these results in a physical language that aims to be familiar also for readers who may not be accustomed to the more mathematical formulation of the following sections.

We consider closed quantum systems with Hamiltonian $\hat{H}$ acting on some Hilbert space $\mathcal{H}$. Here, we have a many-body quantum system in mind, where $\mathcal{H}$ is a tensor product of local Hilbert spaces, although we will not use this fact in the general description. We would like to find the evolution of an initial state, $\ket{\psi(0)}$, according to the real-time Schrödinger equation
\begin{align}
    \ii \frac{d}{dt} \ket{\psi} = \hat{H} \ket{\psi}\,,\label{eq:schro}
\end{align}
and also according to the imaginary-time evolution
\begin{align}
    \frac{d}{d\tau} \ket{\psi} = -(\hat{H}-\braket{\hat{H}}) \ket{\psi}\,. \label{eq:imaginary-time-evolution}
\end{align}
The latter will converge to a ground state of $\hat{H}$ as long as $\ket{\psi(0)}$ possesses a non-vanishing overlap with the corresponding ground subspace.

\subsection{Variational families}
We will study variational families of states described as $\ket{\psi(x)}\in\mathcal{H}$, where $x^{\mu}\in\mathbb{R}^N$ is a set of real parameters. The goal is to approximate the time evolution within this class of states, \ie to find a set of differential equations for $x^\mu(t)$ so that, provided the variational family accounts well for the physically relevant properties of the states, we can approximate the exact evolution by $\ket{\psi[x(t)]}$. In the case of imaginary-time evolution, the goal will be to find $x_0$ such that $\ket{\psi(x_0)}$ minimizes the energy, \ie the expectation value of $\hat{H}$, within the variational family.

In principle, one could restrict oneself to variational families that admit a complex parametrization, \ie where $\ket{\psi(z)}\in\mathcal{H}$ is holomorphic in $z^{\mu}\in\mathbb{C}^M$ and thus independent of $z^*$. As we will see, this leads to enormous simplifications, as in the geometric language we are dealing with so-called Kähler manifolds, which have very friendly properties. However, in general, we want to use real parametrizations, which cover the complex case (taking the real and imaginary part of $z$ as independent real parameters), but apply to more general situations.

While in certain situations, it is easy to extend or map a real parametrization to a complex one, this is not always the case. This applies, in particular, to parametrizations of the form
\begin{align}
    \ket{\psi(x)}=\mathcal{U}(x)\ket{\phi}\,,
\end{align}
where $\ket{\phi}$ is a suitably chosen reference state and $\mathcal{U}(x)$ is a unitary operator that depends on $x^\mu\in\mathbb{R}^N$. Such parametrizations are often used to describe various many-body phenomena appearing in impurity models~\cite{ashida2018solving,ashida2018variational,ashida2019quantum,ashida2019efficient} electron-phonon interactions~\cite{shi2018variational,wang2019zero} and lattice gauge theory~\cite{sala2018variational}, and the fact that $\mathcal{U}(x)$ is unitary is crucial to compute 
physical properties efficiently. However, extending $x^\mu$ analytically to complexify our parametrizations, will break unitarity of $U$ and often make computations inefficient, thereby limiting the applicability of the variational class. We review such examples in section~\ref{sec:applications}, including bosonic and fermionic Gaussian states and certain non-Gaussian generalizations. 

The following example shows an important issue about different possibilities for parametrizations: 
\begin{example}
\label{ex:complexification-coherent-states}
For a single bosonic degree of freedom (with creation operator $\hat{a}^\dagger$ and annihilation operator $\hat{a}$), we define normalized coherent states as
\begin{align}
    \ket{\psi(x)}=e^{-|z|^2/2}e^{z \hat{a}^\dagger}\ket{0}\,,
\end{align}
where $x=(\mathrm{Re}z,\mathrm{Im}z)$. This parametrization is complex but not holomorphic. We can define the extended family
\begin{align}
     \ket{\psi'(z)}=z_1 e^{z_2\hat{a}^\dagger}\ket{0}\,,
 \end{align}
whose parametrization is holomorphic in $z^\mu\equiv(z_1,z_2)$. The latter parametrization differs from the former as it allows the total phase and
normalization of the state to vary freely.
\end{example}

Given a family with a generic parametrization $\ket{\psi(x)}$ we can always include two 
other parameters, $(\kappa,\varphi)$, to allow for a variation of normalization and complex phase, so that the new family is
\begin{align}
    \ket{\psi'(x')}= e^{\kappa+\ii\varphi}\ket{\psi(x)}\,,
    \label{eq:extended-parameters}
\end{align}
where now the total set of variational parameters is $x'=(\kappa,\varphi,x)$. While the global phase does not have a physical meaning on its own, if we want to study the evolution of a superposition of one or several variational states, or quantities like spectral functions, the phase will be relevant and thus should be included in the computation.

This extension of the variational parameteres can always be done at little extra computational cost and the variational principle can be formulated most simply in terms of the extended variables $x'$. For this reason, in the rest of this section (except for subsection~\ref{sec:dynamics-phase-normalisation} where we add some extra observations on this issue) we will assume that this extension has been done and we will drop the primes, for the sake of an easier notation.

\subsection{Time-dependent variational principle}
One can get Schrödinger's equation~(\ref{eq:schro}) from the action
\begin{align}
    S=\int \!dt\,L\quad\text{with}\quad L=\mathrm{Re}\braket{\psi|(\ii \tfrac{d}{dt}-\hat{H})|\psi},\label{eq:L-action}
\end{align}
as the Euler-Lagrange equation ensuring stationarity of $S$. This immediately yields a variational principle for the real-time evolution, the so-called Dirac principle. For this, we compute the Euler-Lagrange equations\footnote{We find $L(x,\dot x)=\dot{x}^\nu\mathrm{Re}\!\braket{\psi(x)|\ii|v_\nu(x)}-\varepsilon(x)$ with $\varepsilon$ and $\ket{v_\mu}$ defined after~\eqref{eq:eom-from-EL}. The Euler-Lagrange equations $\tfrac{d}{dt}\tfrac{\partial L}{\partial \dot{x}^\mu}=\tfrac{\partial L}{\partial x^\mu}$ follow then directly from $\tfrac{d}{dt}\tfrac{\partial L}{\partial \dot{x}^\mu}=\dot{x}^\nu(\mathrm{Re}\!\braket{v_\nu|\ii|v_\mu}+\mathrm{Re}\!\braket{\psi|\ii\partial_\nu| v_\mu})$, $\tfrac{\partial L}{\partial x^\mu}=\dot{x}^\nu(\mathrm{Re}\!\braket{v_\mu|\ii|v_\nu}+\mathrm{Re}\!\braket{\psi|\ii\partial_\mu|v_\nu})-\partial_\mu \varepsilon$ and the definition of $\omega_{\mu\nu}=2\mathrm{Im}\braket{v_\mu|v_\nu}=\mathrm{Re}\braket{v_{\nu}|\ii|v_{\mu}}-\mathrm{Re}\braket{v_{\mu}|\ii|v_{\nu}}$}. as
\begin{align}
    \omega_{\mu\nu}\, \dot x^\nu = -\partial_\mu\, \varepsilon(x)\,,\label{eq:eom-from-EL}
\end{align}
where $\varepsilon(x)=\braket{\psi(x)|\hat{H}|\psi(x)}$ is the expectation value of $\hat H$ on the unnormalized state $\ket{\psi(x)}$, $\omega_{\mu\nu}=2\,\mathrm{Im}\braket{ v_\mu|v_\nu}$ and $\ket{v_\mu}=\partial_\mu\ket{\psi(x)}$. We exploited the antisymmetry of $\omega$, resulting from the antilinearity of the Hermitian inner product. Here and in the following, we use Einstein's convention of summing over repeated indices and we omit to indicate the explicit dependence on $x$ of some quantities, such as $\omega$. Furthermore, we refer to the time derivative $\tfrac{d}{dt}$ by a dot, and to the partial derivative with respect to $x^\nu$ by $\partial_\nu$.

In cases, where $\omega$ is invertible, \ie where an $\Omega$ exists, such that $\Omega^{\mu\nu}\omega_{\nu\sigma}=\delta^{\mu}{}_\sigma$, we obtain the equations of motion
\begin{align}
    \dot x^\mu = -\Omega^{\mu\nu}  \partial_\nu \varepsilon(x) =:\mathcal{X}^\mu(x)\,.\label{eq:real-time-eom}
\end{align}

If $\omega$ is not invertible, this means that the evolution equations for $x^\mu$ are underdetermined. The reason may be an overparametrization, in which case one can simply drop some of the parameters. However, when we discuss the geometric approach, we will see there can be other reasons related to the fact that the parameters $x^\mu$ are real, in which case one has to proceed in a different way. In particular, it may even occur that~\eqref{eq:eom-from-EL} becomes ill-defined, so that we need to project out a part of its RHS. We will discuss this in section~\ref{sec:variational-principles} and appendix~\ref{app:calculation-pseudo-inverse}.

Let us also remark that if we have a complex representation of the state, \ie $\ket{\psi(z)}$ is holomorphic in $z\in \mathbb{C}^M$, we can get the equations directly for $z$, namely\footnote{We have $L=\tfrac{\ii}{2}(\dot{z}^\mu\braket{\psi|v_\mu}-\dot{z}^{*\mu}\braket{v_\mu|\psi})-\varepsilon(z,z^*)$. The Euler-Lagrange equations $\tfrac{d}{dt}\tfrac{\partial L}{\partial \dot{z}^{*\mu}}=\tfrac{\partial L}{\partial z^{*\mu}}$ are therefore given by the expression $\tfrac{d}{dt}\tfrac{\partial L}{\partial \dot{z}^{*\mu}}=-\tfrac{\ii}{2}(\dot{z}^{\nu}\braket{v_\mu|v_\nu}+\dot{z}^{*\nu}\braket{\partial^*_\nu v_{\mu}|\psi})$ and also $\tfrac{\partial L}{\partial z^{*\mu}}=\tfrac{\ii}{2}(\dot{z}^\nu\braket{v_\mu|v_\nu}-\dot{z}^{*\nu}\braket{\partial^*_\mu v_\nu|\psi})-\partial^*_\mu \varepsilon$. Notice that if the ket $\ket{\psi}$ is a function of $z$ only, then the corresponding bra $\bra{\psi}$ is a function of $z^*$ only.}
\begin{align}
    \dot z^\mu = -\tilde\Omega^{\mu\nu}  \frac{\partial \varepsilon(z,z^*)}{\partial z^{*\nu}} \,,
    \label{xieq}
\end{align}
where $(\tilde\Omega^{-1})_{\mu\nu}=- \ii \braket{v_\mu|v_\nu}$ with $\ket{v_\mu}=\tfrac{d}{dz^\mu}\ket{\psi}$ and $\varepsilon(z,z^*)=\braket{\psi(z^*)|\hat{H}|\psi(z)}$. Notice that in this case, $\tilde \Omega$ is invertible unless there is some redundancy in the parametrization. This is a consequence of the fact that such variational families are, from the geometric point of view, what is known as a Kähler manifold (see definition Section~\ref{sec:geometry}).

In what follows, we will see that many desirable properties are naturally satisfied when dealing with Kähler manifolds, while we also point out various subtleties that arise otherwise.

\subsubsection{Dynamics of phase and normalization}
\label{sec:dynamics-phase-normalisation}
Let us now briefly consider some more details related to the inclusion of the normalization and phase $(\kappa,\varphi)$ as variational parameters. For this, we will temporarily reintroduce the distinction between $\ket{\psi(x)}$ and $\ket{\psi'(x')}$ as in equation~\eqref{eq:extended-parameters}. If we consider the Euler-Lagrange equations corresponding specifically to each of the parameters $(\kappa,\varphi,x)$, we have
\begin{align}
    0&=\frac{d}{dt}\left(e^\kappa \braket{\psi(x)|\psi(x)}^{1/2}\right)\,, \label{eq:rr} \\
    \dot{\varphi}&=-\frac{\varepsilon(x')+\dot{x}^{\mu}\,\mathrm{Im}\braket{\psi'(x')|v_\mu}}{e^{2\kappa} \braket{\psi(x)|\psi(x)}}\label{eq:phase}
\end{align}
and equations for $x$ that do not depend on $\varphi$ and are proportional to $e^{2\kappa}$, so that one can replace the solution of (\ref{eq:rr}) in those equations and solve them independently. If one is interested in the evolution of the phase, then one just has to plug the solutions for $x$ in (\ref{eq:phase}) and integrate that differential equation separately.

It is important to note that using the Lagrangian $L$ from~\eqref{eq:L-action} \emph{without} having introduced the extra parameters $(\kappa,\varphi)$ or, more precisely, without ensuring that \emph{both} phase and normalisation can be freely varied, can lead to unexpected results. More specifically, it can produce equations of motion which leave some parameters undetermined or where the unwanted coupling between phases and physical degrees of freedom leads to artificial dynamics. 

Nonetheless, one can also equivalently derive the equations for the $x$ directly from a Lagrangian formulation, without introducing the extra parameters $(\kappa,\varphi)$. It is sufficient to use the alternative Lagrangian
\begin{align}
    \mathcal{L}(x,\dot x)=\frac{\mathrm{Re}\braket{\psi(x)|(\ii \tfrac{d}{dt}-\hat{H})|\psi(x)}}{\braket{\psi(x)|\psi(x)}}\,, \label{eq:lagrangian-normalised}
\end{align}
which is invariant under $\ket{\psi(x)}\to c(x)\ket{\psi}$ (up to a total derivative) and thus differs from~\eqref{eq:L-action}. We discuss more in detail how these two definitions are related in Section~\ref{sec:complex-phase}.

\subsubsection{Conserved quantities}
An important feature of the time-dependent variational principle is that energy expectation value is conserved if the Hamiltonian is time-independent. This can be readily seen because 
from~\eqref{eq:rr} we know that the states remain normalized during the evolution, so for an initially normalized state, the energy $E$ will always coincide with the function $\varepsilon$, for which we find
\begin{align}
    \frac{d}{dt} \varepsilon(x) = \dot x^\mu \partial_\mu \varepsilon  = - \Omega^{\mu\nu} (\partial_\mu \varepsilon)(\partial_\nu \varepsilon)=0\,,
\end{align}
as $\Omega$ is antisymmetric. 

However, in general, other observables $\hat A=\hat A^\dagger$ that commute with the Hamiltonian may not be conserved by the time-dependent variational principle. Indeed, for every variational family, one can find symmetry generators $\hat{A}$ with $[\hat{A},\hat{H}]=0$ which will not be preserved. The question is if those quantities are conserved, that are relevant for describing the physics of the problem at hand. In the special case where we have a complex parametrization, we will show in Section~\ref{sec:conserved-quantities} what further conditions $\hat A$ has to satisfy for it to be conserved. More specifically, it must fulfil a compatibility requirement with the chosen variational family. Importantly, we will also discuss how this simple picture is no longer true in the case that no complex parametrization is available.

If the observables of interest happen not to be conserved, it may be wise to consider an alternative variational family, but one can also enforce conservation by hand at the expense of effectively reducing the number of parameters. There are indeed several possibilities to enforce the conservation of observables other than the energy. For instance, one may think of including time-dependent Lagrange multipliers in the Lagrangian action to ensure that property~\cite{ohta2004time}. However, this can only work in a restricted number of cases, as can be already seen if one  wants to conserve just a single observable $\hat{A}$. Denoting by $A(x)=\braket{ \psi(x)|\hat A|\psi(x)}$, and adding to  the Lagrangian $L$ the term $\lambda(t) A$, it is easy to see that both $\varepsilon[x(t)]$ and $A[x(t)]$ remain constant if
\begin{align}
    \dot{A}(t)=-\Omega^{\mu\nu} (\partial_\mu \varepsilon) (\partial_\nu A)=0   
\end{align}
for all times. The function $\lambda(t)$ can be chosen such that $\ddot A(t)=0$ for all times, namely taking $\lambda(t)=\zeta^\mu \partial_\mu {\varepsilon}/\zeta^\nu \partial_\nu A$, where $\zeta^\mu=\Omega^{\mu\nu} \partial_\nu \Omega^{\alpha \beta} \partial_\alpha {\varepsilon} \partial_\beta A$. On top of that, one has to choose an initial state and a parametrization such that at the initial time $\dot{A}(0)=0$. Furthermore, the denominator in the definition of $\lambda(t)$ must not vanish and since the addition of a Lagrange multiplier modifies the Schrödinger equation, one has to compensate for that. In particular, at the final time $T$, one has to apply the operator $\exp(\ii \int_0^T\lambda(t) dt \hat{A})$ to the final state, which may be difficult in practice. This severely limits the range of applicability of the Lagrange multiplier method.

Another possibility is to solve $A(x)=A_0$ for one of the variables, e.g. leading to $x^N=f(x^1,\ldots,x^{N-1})$, and choose a new reduced variational family with parameters $\tilde x=(x^1,\ldots,x^{N-1})$ as $|\tilde \psi(\tilde x)\rangle = |\psi(\tilde x,f(\tilde x))\rangle$. On this reduced family, $A$ will have the constant value $A_0$ by construction. However, this requires to find the function $f$ which may be difficult in practice. In Section~\ref{sec:conserved-quantities} we will discuss how, thanks to the geometric understanding, this condition can be easily enforced locally without having to explicitly solve for $f$. In the same section we will also discuss how to deal with the fact that reducing the variational family by an odd number of real degrees of freedom, as proposed here, will inevitably make $\omega$ degenerate and thus non-invertible.

\subsubsection{Excitation spectra}
An approach often used systematically~\cite{stringari2018quantum} for computing the energy of elementary excitations is to linearize the equations of motion~\eqref{eq:real-time-eom} around the approximate ground state $\psi(x_0)$ to find
\begin{align}
        \delta \dot{x}^\mu=K^\mu{}_\nu\,\delta x^\nu\quad\text{with}\quad K^\mu{}_\nu=\frac{\partial\mathcal{X}^\mu}{\partial x^\nu}(x_0)\,.
        \label{eq:linearised-eom}
\end{align}
The spectrum of $K$ comes in conjugate imaginary pairs $\pm\ii\omega_\ell$. The underlying idea is that if we slightly perturb the state within the variational manifold and solve the linearized equations of motion, we can approximate the excitation energies as the resulting oscillation frequencies $\omega_\ell$ of the normal mode perturbations around the approximate ground state.

We will see how our geometric perspective provides us with another possibility to compute the excitation spectrum. Both methods have advantages and drawbacks which we carefully explain in~\ref{sec:excitation-spectra}.

\subsubsection{Spectral functions}
In the literature one can find several approaches~\cite{weichselbaum2009variational,fetter2012quantum,hendry2019machine} for estimating spectral functions by relying on a variational family. In section~\ref{sec:spectral-functions}, we argue that the approach that at the same time is most in line with the spirit of variational principles and better adapts to being used with generic ansätze consists in performing linear response theory directly on the variational manifold. Furthermore, this approach leads to a simple closed formula for the spectral function based only on the generator $K^\mu{}_\nu$ of the linearized equations of motion introduced in~\eqref{eq:linearised-eom}. Let us decompose $K^\mu{}_\nu$ in terms of eigenvectors as
\begin{align}
    K^\mu{}_\nu=\sum_{\ell}\ii\omega_{\ell}\mathcal{E}^\mu(\ii\omega_\ell)\widetilde{\mathcal{E}}_{\nu}(\ii\omega_{\ell})\,,
\end{align}
where $\widetilde{\mathcal{E}}_\nu(\lambda)$ refers to the dual basis of left eigenvectors\footnote{As explained in section~\ref{sec:linearized-eom}, $K^\mu{}_\nu$ is diagonalizable and has completely imaginary eigenvalues $\lambda=\pm \ii \omega_\ell$ with a complete set of right-eigenvectors $\mathcal{E}^\mu(\lambda)$ and left-eigenvectors $\widetilde{\mathcal{E}}_\mu(\lambda)$ satisfying
	\begin{align}
	K^\mu{}_\nu\mathcal{E}^\nu(\lambda)=\lambda\mathcal{E}^\mu(\lambda)\,,\quad \widetilde{\mathcal{E}}_\mu(\lambda)K^\mu{}_\nu=\lambda\widetilde{\mathcal{E}}_\nu(\lambda)\,.
	\end{align}
Note that the eigenvectors will be complex with the relations $\mathcal{E}^\mu(\lambda^*)=\mathcal{E}^{*\mu}(\lambda)$ and $\widetilde{\mathcal{E}}_\mu(\lambda^*)=\widetilde{\mathcal{E}}^*_\mu(\lambda)$. We choose the normalizations $\mathcal{E}^\mu(\lambda)\widetilde{\mathcal{E}}_\mu(\lambda')=\delta_{\lambda\lambda'}$ and ${\mathcal{E}^{*\mu}(\ii\omega_\ell)}\,\omega_{\mu\nu}\mathcal{E}^\nu(\ii\omega_\ell)=\ii \,\mathrm{sgn}(\omega_\ell)$.}, chosen such that $\mathcal{E}^\mu(\lambda)\widetilde{\mathcal{E}}_\mu(\lambda')=\delta_{\lambda\lambda'}$. Further, we use the normalization ${\mathcal{E}^\mu(\ii\omega_\ell)}^*\,\omega_{\mu\nu}\mathcal{E}^\nu(\ii\omega_\ell)=\ii \,\mathrm{sgn}(\omega_\ell)$, where we apply complex conjugation component-wise. Then, the spectral function associated to a perturbation $\hat{V}$ is
\begin{align}
    \mathcal{A}_{V}(\omega)=\mathrm{sgn}(\omega)\sum_\ell\left|(\partial_\mu \braket{\hat{V}})\mathcal{E}^\mu(\ii\omega_\ell)\right|^2\delta(\omega-\omega_{\ell})\,.
\end{align}

\subsection{Geometric approach}
Let us now make the connection between the time-dependent variational method reviewed above with a differential geometry description. The basic idea is to consider the states $\ket{\psi(x)}$ as constituting a manifold $\mathcal{M}$ embedded in Hilbert space, and define a tangent space at each point. Then the evolution can be viewed as a projection on that tangent space after each infinitesimal time step. The main issue here is that, if our parametrization is real, the tangent space is not a complex vector space. Therefore, we cannot utilize projection operators in Hilbert space, but rather need to define them on the real tangent spaces. Before entering the general case, let us briefly analyze the one of complex parametrization.

In that case, the left hand side of Schrödinger's equation~\eqref{eq:schro} would yield
\begin{align}
    \ii \frac{d}{dt} \ket{\psi(z)} = \ii\dot z^\mu \ket{v_\mu}\,,
\end{align}
where $\ket{v_\mu}=\partial_\mu\ket{\psi(z)}$. Thus, it lies in the tangent space, which is spanned by the $\ket{v_\mu}$. The right hand side of the equation, however, does not necessarily do so, as $\hat{H}\ket{\psi(z)}$ will have components outside that span. If we evolve infinitesimally and we want to remain in the manifold, we will have to project the change of $\ket{\psi(z)}$ onto the tangent space. In fact, in this way we get the optimal approximation to the real evolution within our manifold. In practice, this amounts to projecting the right hand side of (\ref{eq:schro}) on that tangent space. This can be achieved by just taking the scalar product on both sides of the equation with $\bra{ v_\nu}$ which leads exactly to (\ref{xieq}).

If we do \emph{not} have a complex parametrization, this procedure needs to be modified. In the rest of this section, we will explain how this is done.

\subsubsection{Tangent space and Kähler structures}
The tangent space $\mathcal{T}_{\psi}$ to the manifold $\mathcal{M}$ at its point $\ket{\psi}$ is the space of all possible linear variations on the manifold around $\ket{\psi(x)}$. We can write them as $\dot x^\mu \, \partial_\mu\!\ket{\psi(x)}$ and thus the tangent space can be defined as the span of the tangent vectors $\ket{v_\mu}=\partial_\mu \ket{\psi(x)}$. However, as our parameters $x$ are taken to be real to maintain generality, this span should only allow real coefficients. The tangent space should therefore be understood as a real linear space embedded in the Hilbert space $\mathcal{H}$. In particular, this implies that for $\ket{v}\in \mathcal{T}_{\psi}$, the direction $\ii\ket{v}$ should be seen as linearly independent of $\ket{v}$ and therefore may itself not belong to the tangent space.

Note that if, on the other hand, $\mathcal{M}$ has a complex holomorphic parametrization then both $\ket{v_\mu}$ and $\ii\ket{v_\mu}$ naturally belong to the tangent space as they correspond to $\frac{\partial}{\partial \mathrm{Re} z^\mu}\ket{\psi}$ and $\frac{\partial}{\partial \mathrm{Im} z^\mu}\ket{\psi}$, respectively. In this case, $\mathcal{T}_{\psi}$
is clearly a complex subspace of the Hilbert space.

From the Hilbert inner product we can derive the two real-valued bilinear forms
\begin{align}
    g_{\mu\nu}=2\,\mathrm{Re}\!\braket{v_\mu|v_\nu}\quad\text{and}\quad\omega_{\mu\nu}=2\,\mathrm{Im}\!\braket{v_\mu|v_\nu}\,.\label{eq:vnuvmu}
\end{align}
We define their inverses respectively as $G^{\mu\nu}$ and $\Omega^{\mu\nu}$. As mentioned before, $\omega$ may not necessarily admit a regular inverse, in which case we can still define a meaningful pseudo-inverse as discussed in section~\ref{sec:general_manifold} and in further detail in appendix~\ref{app:calculation-pseudo-inverse}.

Given any Hilbert space vector $\ket{\phi}$, we define its projection on the tangent space $\mathcal{T}_{\psi}$ as the vector $P_\psi \ket{\phi}\in\mathcal{T}_{\psi}$ that minimizes the distance from $\ket{\phi}$ in state norm. As we are not dealing with a complex linear space, this will not be given by the standard Hermitian projection operator in Hilbert space. Rather, it takes the form
\begin{align}
P_{\psi}\ket{\phi}=2\ket{v_\mu}G^{\mu\nu}\mathrm{Re}\!\braket{v_\nu|\phi}\,.\label{eq:Ppsi}
\end{align}

Finally, let us introduce $J^\mu{}_\nu=-G^{\mu\sigma}\omega_{\sigma\nu}$, which represents the projection of the imaginary unit, as seen from
\begin{align}
	P_\psi\ii\ket{v_\nu}=2\ket{v_\mu}G^{\mu\sigma}\mathrm{Re}\braket{v_\sigma|\ii|v_\nu}=J^\mu{}_\nu\ket{v_\mu}\,,
\end{align}
where we used~\eqref{eq:Ppsi} and~\eqref{eq:vnuvmu}. As highlighted previously, $\ii\ket{v_\nu}$ may not lie in the tangent space, in which case the projection is non-trivial and we have $J^2\neq-\id$ in contrast to $\ii^2=-1$. We will explain that $J$ satisfying $J^2=-\id$ on every tangent space is equivalent to having a manifold that admits a complex holomorphic parametrization, in such case we will speak of a Kähler manifold. If, on the other hand, it is somewhere not satisfied, we speak of a non-Kähler manifold and in this case there exist tangent vectors $\ket{v_\mu}$, for which $\ii\ket{v_\mu}$ will not belong to the tangent space. Moreover, the projection $P_\psi$ will not commute with multiplication by the imaginary unit.

\begin{example}\label{ex:tangent1}
Following example~\ref{ex:complexification-coherent-states}, normalized coherent states have tangent vectors
\begin{align}
\begin{split}
    \ket{v_1}&=\frac{\partial}{\partial\mathrm{Re}z}\ket{\psi(x)}=(\hat{a}^\dagger-\mathrm{Re}z)\ket{\Psi(x)}\,,\\
    \ket{v_2}&=\frac{\partial}{\partial\mathrm{Im}z}\ket{\psi(x)}=(\ii\hat{a}^\dagger-\mathrm{Im}z)\ket{\Psi(x)}\,.
\end{split}
\end{align}
For $z\neq 0$, $\ii\ket{v_\mu}$ will not be a tangent vector, \ie $\ii\ket{v_\mu}\notin\mathrm{span}_{\mathbb{R}}(\ket{v_1},\ket{v_2})$. This changes if we allowed for a variation of phase and normalization (from the complex holomorphic parametrization), such that we had the additional basis vectors $\ket{v_3}=\ket{\Psi(x)}$ and $\ket{v_4}=\ii\ket{\Psi(x)}$.
\end{example}

As emphasized at the beginning of section~\ref{sec:variational-principle-geometrical-representation}, here we use a simplified notation, where the variational family $M$ is a subset of Hilbert space $\mathcal{H}$ with complex phase $\varphi$ and normalization $\kappa$ as free parameters. In the more technical treatments of sections~\ref{sec:geometry} and~\ref{sec:variational-principles}, we will avoid this by defining variational families $\mathcal{M}$ as subsets of projective Hilbert space $\mathcal{P}(\mathcal{H})$, where we project out those tangent directions that correspond to changing phase or normalization of the state. To avoid confusion between these different definitions we use the symbols $\omega$, $g$, $J$, $P_\psi$ and $\ket{v_\mu}$ to indicate the quantities introduced here (including phase and normalization), while later we will use $\om$, $\g$, $\J$, $\mathbb{P}_\psi$ and $\ket{V_\mu}$ (with phase and normalization being removed).

\subsubsection{Real time evolution}
We already mentioned how the time dependent variational principle is equivalent to projecting infinitesimal time evolution steps onto the tangent space. In the general case of non-Kähler manifolds, there exist two inequivalent projections of Schrödinger's equation given by
\begin{align}
P_{\psi}(\ii\tfrac{d}{dt}-\hat{H})\ket{\psi}=0 \quad \mathrm{or} \quad P_{\psi}(\tfrac{d}{dt}+\ii\hat{H})\ket{\psi}=0\,,\label{eq:schro-proj}
\end{align}
which are obviously equivalent on a complex vector space, as the two forms only differ by a factor of $\ii$. However, the defining property of a non-Kähler manifold is precisely that its tangent space is not a complex, but merely a real vector space and multiplication by $\ii$ will not commute with the projection $P_{\psi}$.

In section~\ref{sec:variational_methods} we will show that the first projection Schrödinger's equation in~\eqref{eq:schro-proj} is equivalent to the formulation in terms of a Lagrangian $L$ introduced earlier. It consequently leads to the equations of motion~\eqref{eq:real-time-eom}. The second choice of~\eqref{eq:schro-proj}, often referred to as the \emph{McLachlan variational principle}, corresponds to minimizing the local error
$\lVert\tfrac{d}{dt}\ket{\psi}-(-\ii\hat{H})\ket{\psi}\rVert$ made at every step of the approximation of the evolution and leads to the equations
\begin{align}
	\dot x^\mu = 2 G^{\mu\nu}\mathrm{Im}\!\braket{v_\nu|\hat{H}|\psi}\,.
	\label{eq:ml_sec2}
\end{align}

In section~\ref{sec:variational_methods}, we will argue that in most cases the Lagrangian action principle presents the more desirable properties. In particular, it leads to simple equations of motion that only depend on the gradient $\partial_\mu\varepsilon$ and whose dynamics necessarily preserve the energy itself. However, for some aspects, the McLachlan evolution still has some advantages, such as the conservation of observables that commute with the Hamiltonian and are compatible with the variational family, in the sense defined in Section~\ref{sec:conserved-quantities}. We will further explain, how one can construct a restricted evolution that maintains the desirable properties of both projections in~\eqref{eq:schro-proj} for non-Kähler families, but at the expense of locally reducing the number of free parameters. 

Finally, our geometric formalism provides a simple notation to understand and describe the methods reviewed so far.

\subsection{Imaginary time evolution}
So far, our discussion was purely focused on real-time dynamics. In the context of excitations and spectral functions, we referred to an approximate ground state $\ket{\psi_0}$ in our variational family, that minimizes the energy function $E(x)$. While there are many numerical methods to finding minima, our geometric  perspective leads to a natural approach based on approximating imaginary time evolution, which we defined in~\eqref{eq:imaginary-time-evolution} for the full Hilbert space.
We would like to approximate this evolution as it is known to converge to a true ground state of the Hamiltonian, provided one starts from a state with a non-vanishing overlap with such ground state. However, as this evolution does not derive from an action principle, one cannot naively generalise for it Dirac's time dependent variational principle. On the other hand, the tangent space projection can be straightforwardly applied to equation~\eqref{eq:imaginary-time-evolution}, leading, 
as we prove in Section~\ref{sec:imaginary-time-evolution}, to the time evolution
\begin{align}
   \frac{dx^\mu}{d\tau}=-G^{\mu\nu} \partial_\nu E(x)\,,
   \label{eq:imaginary-time-eom}
\end{align}
where $E(x)=\braket{\psi(x)|\hat H|\psi(x)} / \braket{\psi(x)|\psi(x)}$ is the energy expectation value function.

The evolution defined in this way always decreases the energy $E$ of the state, as can be seen from~\cite{kraus2010generalized}
\begin{align}
	\frac{d}{d\tau} E &=\frac{dx^\mu}{d\tau}\partial_\mu E = - G^{\mu\nu}\partial_\nu E \partial_\mu E  <0
\end{align}
where we used that $G$ is positive definite.

Indeed, the dynamics defined by~\eqref{eq:imaginary-time-eom} can be simply recognised as a gradient descent on the manifold with respect to the energy function and the natural notion of distance given by the metric $g$. Consequently, this evolution will converge to a (possibly only local) minimum of the energy. In conclusion, we recognize imaginary time evolution projected onto the variational manifold as a natural method to find the approximate ground the state $\ket{\psi_0}=\ket{\psi(x_0)}$.

\section{Geometry of variational families}\label{sec:geometry}
In this section, we review the mathematical structures of variational families, assuming them to be defined by real parameters, which leads to a description that is more general than the complex case. First, we explain how a complex Hilbert space can be described as real vector space equipped with so called Kähler structures. Second, we describe the manifold of \emph{all} pure quantum states as projective Hilbert space, which is a real differentiable manifold whose tangent spaces can be embedded as \emph{complex} subspaces in Hilbert space and thereby inherit Kähler structures themselves. Third, we introduce general variational families as \emph{real} submanifolds, whose tangent spaces may lose the Kähler property. Fourth, we study this potential violation and possible cures.

Starting with the present section, we define variational families $\mathcal{M}$ as sub manifolds of projective Hilbert space $\mathcal{P}(\mathcal{H})$, \ie we describe pure states $\psi$ rather than state vectors $\ket{\psi}$ as already foreshadowed after example~\ref{ex:tangent1}.

\subsection{Hilbert space as Kähler space}
Given a separable Hilbert space $\mathcal{H}$ with inner product $\braket{\cdot|\cdot}$, we can always describe vectors by a set of complex number $\psi_n$ with respect to a basis $\{\ket{n}\}$, \ie
\begin{align}
    \ket{\psi}=\sum_n \psi_n\ket{n}\,.
\end{align}
We will see that the tangent space of a general variational manifold is a real subspace of Hilbert space. Given a set of vectors $\{\ket{n}\}$, we distinguish the real and complex span
\begin{align}
\begin{split}
	\mathrm{span}_{\mathbb{C}}\{\ket{n}\}&=\left\{\textstyle\sum_n \psi_n\ket{n}\,\big|\,\psi_n\in\mathbb{C}\right\}\,,\\
	\mathrm{span}_{\mathbb{R}}\{\ket{n}\}&=\left\{\textstyle\sum_n \psi_n\ket{n}\,\big|\,\psi_n\in\mathbb{R}\right\}\,.
\end{split}
\end{align}
On a real vector space, $\ket{\psi}\neq 0$ and $\ii\ket{\psi}$ are linearly independent vectors, because one cannot be expressed as linear combination with real coefficients of the other. A real basis $\{\ket{V_\mu}\}$ of $\mathcal{H}$ has therefore twice as many elements as the complex basis $\{\ket{n}\}$, such as
\begin{align}
    \{\ket{V_\mu}\}\equiv\left\{\ket{1},\ii\!\ket{1},\ket{2},\ii\!\ket{2},\dots\right\}\,.
\end{align}
Given any real basis $\{\ket{V_\mu}\}$ of vectors, we can express every vector $\ket{X}$ as real linear combination
\begin{align}
    \ket{X}=X^\mu\ket{V_\mu}\,,
\end{align}
where we use Einstein's summation convention\footnote{We will be careful to only write equations with indices that are truly independent of the choice of basis, such that the symbol $X^\mu$ may very well stand for the vector $\ket{X}$ itself. This notation is known as abstract index notation (see appendix~\ref{app:abstractindices}).}.

A general real linear map $\hat{A}:\mathcal{H}\to\mathcal{H}$ will satisfy $\hat{A}(\alpha\ket{X})=\alpha \hat{A}\ket{X}$ only for real $\alpha$. If it also holds for complex $\alpha$, we refer to $\hat{A}$ as complex-linear. The imaginary unit $\ii$ becomes itself a linear map, which only commutes with complex-linear maps.

The Hermitian inner product $\braket{\cdot|\cdot}$ can be decomposed into its real and imaginary parts given by
\begin{align}
    \braket{V_\mu|V_\nu}=\frac{\mathcal{N}}{2}\big(\g_{\mu\nu}+\ii\,\om_{\mu\nu}\big)\label{eq:ViVj}
\end{align}
with $\g_{\mu\nu}=\frac{2}{\mathcal{N}}\,\mathrm{Re}\braket{V_\mu|V_\nu}$, $\om_{\mu\nu}=\frac{2}{\mathcal{N}}\,\mathrm{Im}\braket{V_\mu|V_\nu}$ and $\mathcal{N}$ being a normalization which we will fix in~\eqref{eq:gmunu-def}. This gives rise to the following set of structures, illustrated in figure~\ref{fig:triangle}.

\begin{definition}
\label{def:kaehler-space}
A real vector space is called Kähler space if it is equipped with the following two bilinear forms
\begin{itemize}
    \item \textbf{Metric\footnote{Here, ``metric'' refers to a metric tensor, \ie an inner product on a vector space. It should not be confused with the notion of metric spaces in analysis and topology.} $\g_{\mu\nu}$} being symmetric and positive-definite with inverse $\G^{\mu\nu}$, so that $\G^{\mu\sigma}\g_{\sigma\mu}=\delta^\mu{}_\nu$,
    \item \textbf{Symplectic form $\om_{\mu\nu}$} being antisymmetric and non-degenerate\footnote{A bilinear form $b_{\mu\nu}$ is called non-degenerate, if it is invertible. For this, we can check $\det(b)\neq 0$ in any basis of our choice.} with inverse $\Om^{\mu\nu}$, so that $\Om^{\mu\sigma}\om_{\sigma\nu}=\delta^\mu{}_\nu$,
\end{itemize}
and such that the linear map $\J^\mu{}_\nu:=-\G^{\mu\sigma}\om_{\sigma\nu}$ is a
\begin{itemize}
    \item \textbf{Complex structure $\J^\mu{}_\nu$} satisfying $\J^2=-\id$.
\end{itemize}
The last condition is also called compatibility between $\g$ and $\om$. We refer to $(\g,\om,\J)$ as \textbf{Kähler structures}.
\end{definition}

\begin{figure}
	\begin{center}
	\begin{tikzpicture}
	\draw[-,thick] (-1.6,0) -- (1.6,0);
	\draw[-,thick] (.2,3.11769) -- (1.8,0.34641);
	\draw[-,thick] (-.2,3.11769) -- (-1.8,0.34641);
		\draw (0,1.1547) node{${\color{J}\J^\mu{}_\nu}=-{\color{G}\G^{\mu\sigma}}{\color{O}\om_{\sigma\nu}}$};
		\draw (0,.6547) node{\textit{(compatibility)}};
	\draw (-2,0) node{${\color{O}\om_{\mu\nu}}$} (2,0) node{${\color{G}\g_{\mu\nu}}$} (0,3.4641) node{${\color{J}\J^\mu{}_\nu}$};
		\node[draw,align=left,left] at (-1.5,-1.1) {\textbf{Symplectic form:}\\ Antisymmetric\\non-degenerate\\bilinear form};
		\node[draw,align=left,right] at (1.5,-1.1) {\textbf{Metric:}\\Symmetric\\positive-definite\\bilinear form};
		\node[draw,align=left] at (0,4.3741) {\textbf{Linear complex structure:}\\Squares to minus identity: $\J^2=-\mathbb{1}$};
		\draw[O] (-2.9,1.2) node[align=center]{\textbf{Inverse $\Om^{ij}$ with}\\$\Om^{\mu\sigma}\om_{\sigma\nu}=\delta^\mu{}_\nu$};
		\draw[G] (2.7,1.2) node[align=center]{\textbf{Inverse $\G^{\mu\nu}$ with}\\$\G^{\mu\sigma}\g_{\sigma\nu}=\delta^\mu{}_\nu$};
	\end{tikzpicture}
	\end{center}
	\ccaption{Triangle of Kähler structures}{This sketch illustrates the triangle of Kähler structures, consisting of a symplectic form $\omega$, a positive definite metric $g$ and a linear complex structure $J$. We also define the inverse symplectic form $\Omega$ and the inverse metric $G$.}
	\label{fig:triangle}
\end{figure}

Clearly, $\g$ is a metric and $\om$ is a symplectic form. Furthermore, we will see that they are indeed compatible and define a complex structure $\J$. For this, it is useful to introduce the real dual vectors $\mathrm{Re}\!\bra{X}$ and $\mathrm{Im}\!\bra{X}$ that act on a vector $\ket{Y}$ via
\begin{align}
    \hspace{-3mm}\mathrm{Re}\!\braket{X|Y}=\tfrac{\mathcal{N}}{2}X^\mu\g_{\mu\nu}Y^\nu\,,\,\mathrm{Im}\!\braket{X|Y}=\tfrac{\mathcal{N}}{2} X^\mu\om_{\mu\nu}Y^\nu\,, \label{eq:re-im-parts}
\end{align}
as one may expect. The identity $\id=\sum_n \ket{n}\bra{n}$ is then
\begin{align}
    \id=\tfrac{2}{\mathcal{N}}\,\G^{\mu\nu}\ket{V_\mu}\,\mathrm{Re}\!\bra{V_\nu}\,.
\end{align}
Similarly, the matrix representation of an operator $\hat{A}$ is
\begin{align}
    A^\mu{}_\nu=\tfrac{2}{\mathcal{N}}\,\G^{\mu\sigma}\mathrm{Re}\!\braket{V_\sigma|\hat{A}|V_\nu}\,.
\end{align}
In particular, we compute the matrix representation of the imaginary unit $\ii$ to be given by
\begin{align}
    \J^\mu{}_\nu=\tfrac{2}{\mathcal{N}}\,\G^{\mu\sigma}\mathrm{Re}\!\braket{V_\sigma|\ii|V_\nu}=-\G^{\mu\sigma}\om_{\sigma\nu}
\end{align}
as anticipated in our definition. From $\ii^2=-1$, we conclude that the so defined $\J$ indeed satisfies $\J^2=-\id$ and is thus a complex structures. Therefore, $\g$ and $\om$ as defined in~\eqref{eq:ViVj} are compatible.

\begin{example}
A qubit is described by the Hilbert space $\mathcal{H}=\mathbb{C}^2$ with complex basis $\{\ket{0},\ket{1}\}$ and real basis
\begin{align}
    \ket{V_i}\equiv\{\ket{0},\ket{1},\ii\ket{0},\ii\ket{1}\}\,.
\end{align}
With respect to this real basis $\g_{\mu\nu}$, $\om_{\mu\nu}$ and $\J^\mu{}_\nu$ are
\begin{align}
    \g_{\mu\nu}\equiv\tfrac{2}{\mathcal{N}}\!\begin{pmatrix}
    \id & 0\\
    0 & \id
    \end{pmatrix}\,,\,\, \om_{\mu\nu}\equiv\tfrac{2}{\mathcal{N}}\!\begin{pmatrix}
    0 & \id\\
    -\id & 0
    \end{pmatrix}\,,\,\, \J^\mu{}_\nu\equiv\begin{pmatrix}
    0 & -\id\\
    \id & 0
    \end{pmatrix}\,,
\end{align}
where $\id$ is the $2\times 2$ identity matrix.
We can represent a complex-linear map $\hat{A}=\sum_{n,m} a_{nm}\ket{n}\bra{m}$, \ie with $[A,J]=0$, as the matrix
\begin{align}
    A^\mu{}_\nu\equiv\begin{pmatrix}
    \mathbb{A} & -\mathbb{B}\\
    \mathbb{B} & \mathbb{A}
    \end{pmatrix}\,,
\end{align}
where $\mathbb{A}=\mathrm{Re}(a)$ and $\mathbb{B}=\mathrm{Im}(a)$ in above basis.
\end{example}

In summary, every Hilbert space is a real Kähler space with metric, symplectic form and complex structure.

\subsection{Projective Hilbert space}
\label{sec:projective-hilbert-space}
Multiplying a Hilbert space vector $\ket{\psi}$ with a non-zero complex number does not change the quantum state it represents. Therefore, the manifold representing all physical states is given by the projective Hilbert space $\mathcal{P}(\mathcal{H})$, which we will define and analyze in this section. Variational families, which we will discuss in the following section, should then naturally be understood as submanifolds $\mathcal{M}$ of projective Hilbert space $\mathcal{P}(\mathcal{H})$.

The projective Hilbert space of $\mathcal{H}$
\begin{align}
    \mathcal{P}(\mathcal{H})=\left(\mathcal{H}\!\setminus\!\{0\}\right)/\sim
\end{align}
is given by the equivalence classes of non-zero Hilbert space vectors with respect to the equivalence relation
\begin{align}
    \ket{\psi}\sim\ket{\tilde{\psi}}\hspace{3mm}\Leftrightarrow\hspace{3mm}\exists\, c\in\mathbb{C}\hspace{3mm}\text{with}\hspace{3mm}\ket{\tilde{\psi}}=c\ket{\psi}\,.
\end{align}
Thus, a state $\psi\in\mathcal{P}(\mathcal{H})$ is a ray in Hilbert space consisting of all non-zero vectors that are related by multiplication with a non-zero complex number $c$.

The tangent space $\mathcal{T}_{\psi}\mathcal{P}(\mathcal{H})$ represents the space of changes $\delta\psi$ around an element $\psi\subset\mathcal{P}(\mathcal{H})$. Changing a representative $\ket{\psi}$ in the direction of itself, \ie $\ket{\delta\psi}\propto \ket{\psi}$, corresponds to changing $\ket{\psi}$ by a complex factor and thus does not change the underlying state $\psi$. Two Hilbert space vectors $\ket{X},\ket{\tilde{X}}\in \mathcal{H}$ therefore represent the same change $\ket{\delta\psi}$ of the state $\ket{\psi}\in\psi$, if they only differ by some $\alpha\ket{\psi}$. We define tangent space as
\begin{align}
    \mathcal{T}_{\psi}\mathcal{P}(\mathcal{H})=\mathcal{H}/\!\approx\,,
\end{align}
where we introduced the equivalence relation
\begin{align}
    \ket{X}\approx\ket{\tilde{X}}\hspace{2mm}\Leftrightarrow\hspace{2mm}\exists\, c\in\mathbb{C}\hspace{2mm}\text{with}\hspace{2mm}\ket{X}-\ket{\tilde{X}}=c\ket{\psi}\,,
\end{align}
leading to a regular (not projective) vector space.

We can pick a unique representative $\ket{X}$ of the class $[\ket{\delta\psi}]$ at the state $\ket{\psi}$ by requiring $\braket{\psi|X}=0$. Viceversa, two vectors $\ket{X}\neq\ket{\tilde{X}}$ both satisfying $\braket{\psi|X}=\braket{\psi|\tilde{X}}=0$ belong to different equivalence classes. We thus identify $\mathcal{T}_{\psi}\mathcal{P}(\mathcal{H})$ with
\begin{align}
    \mathcal{H}^\perp_{\psi}=\left\{\ket{X}\in\mathcal{H}\,\big|\braket{\psi|X}=0\right\}\,.
\end{align}
Given a general representative $\ket{\delta\psi}\in[\ket{\delta\psi}]$, we compute the unique representative mentioned above as $\ket{X}=\mathbb{Q}_\psi\ket{\delta\psi}$ with
\begin{align}
	\mathbb{Q}_\psi\ket{\delta\psi}=\ket{\delta\psi}-\frac{\braket{\psi|\delta\psi}}{\braket{\psi|\psi}}\ket{\psi}\,.\label{eq:Pperp}
\end{align}

There is a further subtlety: representing a change $\delta\psi$ of a state $\psi$ as vector $\ket{\delta\psi}$ will always be with respect to a representative $\ket{\psi}$. If we choose a different representative $\ket{\tilde{\psi}}=c\ket{\psi}\in\psi$, the same change $\delta\psi$ would be represented by a different Hilbert space vector $\ket{\delta\tilde{\psi}}=c\ket{\delta\psi}$. It therefore does not suffice to specify a Hilbert space vector $\ket{\delta\psi}$, but we always need to say with respect to which representative $\ket{\psi}$ it was chosen. This could be avoided when moving to density operators\footnote{We can equivalently define projective Hilbert space as the set of pure density operators, \ie Hermitian, positive operators $\rho$ with $\mathrm{Tr}\rho=\mathrm{Tr}\rho^2=1$. The state $\psi$ is then given by the density operator $\rho_\psi=\tfrac{\ket{\psi}\bra{\psi}}{\braket{\psi|\psi}}$ and its change $\delta\psi$ by $\delta\rho_\psi=\tfrac{\ket{\delta\psi}\bra{\psi}+\ket{\psi}\bra{\delta\psi}}{\braket{\psi|\psi}}$.}.

The fact we can identify the tangent space at each point with a Hilbert space $\mathcal{H}^\perp_{\psi}$ enables us, given a \emph{local} real basis $\{\ket{V_\mu}\}$ at $\psi$, such that $\mathcal{H}^\perp_{\psi}=\mathrm{span}_{\mathbb{R}}\{\ket{V_\mu}\}$, to induce a canonical metric $\g_{\mu\nu}$, symplectic form $\om_{\mu\nu}$ and $\J^\mu{}_\nu$ onto the tangent space, which thus is a Kähler space, as discussed previously. We see at this point that on the tangent space $\mathcal{T}_{\psi}\mathcal{P}(\mathcal{H})$, it is convenient to choose $\mathcal{N}=\braket{\psi|\psi}$ as normalization for the Kähler structures. The rescaled metric $\tfrac{1}{2}\g_{\mu\nu}$ is well-known as the Fubini-Study metric~\cite{fubini1904sulle,study1905kurzeste}, while the symplectic form gives projective Hilbert space a natural phase space structure.

Manifolds such as $\mathcal{P}(\mathcal{H})$, whose tangent spaces are equipped with differentiable Kähler structures, are called almost-Hermitian manifolds. In appendix~\ref{app:Kaehler-structures}, we show that $\mathcal{P}(\mathcal{H})$ satisfies even stronger conditions, which make it a so called Kähler manifold.

\begin{example}\label{ex:bloch-sphere}
The projective Hilbert space of a Qubit is $\mathcal{P}(\mathbb{C}^2)=S^2$, equivalent to the Bloch sphere. Using spherical coordinates $x^\mu\equiv(\theta,\phi)$ and the complex Hilbert space basis $\{\ket{0},\ket{1}\}$, we can parametrize the set of states as
\begin{align}
    \ket{\psi(x)}=\cos{\left(\tfrac{\theta}{2}\right)}\ket{0}+e^{\ii \phi}\sin{\left(\tfrac{\theta}{2}\right)}\ket{1}\,.\label{eq:Qubit-psi}
\end{align}
The elements of $\mathcal{P}(\mathcal{H})$ are the equivalence classes $\psi(x)=\left\{c\ket{\psi(x)}\big|\,c\in\mathbb{C},c\neq 0\right\}$. Consequently, the tangent space $\mathcal{T}_{\psi}\mathcal{P}(\mathbb{C}^2)=\mathcal{H}^\perp_{\psi}$ of the Bloch sphere at $x^\mu\equiv(\theta,\phi)$ can be spanned by the basis $\ket{V_\mu}=\mathbb{Q}_{\psi} \left(\frac{\partial}{\partial x^\mu}\right)\ket{\psi(x)}$ with
\begin{align}
\begin{split}
	\ket{V_1}&=-\tfrac{1}{2}\sin{\left(\tfrac{\theta}{2}\right)}\ket{0}+\tfrac{e^{\ii\phi}}{2}\cos{\left(\tfrac{\theta}{2}\right)}\ket{1}\,,\\
\ket{V_2}&=-\tfrac{\ii}{2}\sin{\left(\tfrac{\theta}{2}\right)}\sin{\theta} \,\ket{0}+\tfrac{\ii e^{\ii\phi}}{2}\cos\left(\tfrac{\theta}{2}\right)\sin{\theta}\ket{1}\,.
\end{split}\label{eq:Qubit-Vi}
\end{align}
Using the definition~\eqref{eq:ViVj} of the metric and symplectic form from the Hilbert space inner product, we can compute the matrix representations
\begin{align}
\hspace{-2.3mm}	\g_{\mu\nu}\equiv2\!\begin{pmatrix}
	1 & 0\\
	0 & \sin^2{\theta}
	\end{pmatrix}\,\text{and}\,\,\,\om_{\mu\nu}\equiv2\!\begin{pmatrix}
	0 & \sin{\theta}\\
	-\sin{\theta} & 0
	\end{pmatrix}.
\end{align}
We recognize $\g_{\mu\nu}dx^\mu dx^\nu=\tfrac{1}{2}(d\theta^2+\sin^2(\theta) d\phi^2)$ to be the standard metric of a sphere with radius $1/\sqrt{2}$. Similarly, we recognize $\om_{\mu\nu}dx^\mu dx^\nu=\tfrac{1}{2}\sin{\theta}d\theta\!\wedge\! d\phi$ to be the standard volume form on this sphere. Finally, it is easy to verify that $\J^2=-\id$ everywhere.
\end{example}

In summary, a given pure state can be represented by the equivalence class $\psi\in\mathcal{P}(\mathcal{H})$ of all states related by multiplication with a non-zero complex number. Similarly, a tangent vector $[\ket{X}]\in\mathcal{T}_{\psi}\mathcal{P}(\mathcal{H})$ at a state $\psi$ is initially defined as the affine space $[\ket{X}]$ of all vectors $\ket{X}$ differing by a complex multiple of $\ket{\psi}$. A unique representative $\ket{\tilde{X}}$ can be chosen requiring $\braket{\psi|\tilde{X}}=0$. This leads to the identification $\mathcal{T}_{\psi}\mathcal{P}(\mathcal{H})\simeq\mathcal{H}^\perp_{\psi}$, such that the Hilbert space inner product $\braket{\cdot|\cdot}$ induces local Kähler structures onto $\mathcal{T}_{\psi}\mathcal{P}(\mathcal{H})$.

\subsection{General variational manifold}\label{sec:general_manifold}
The most general variational family is a real differentiable submanifold $\mathcal{M}\subset \mathcal{P}(\mathcal{H})$. At every point $\psi\in\mathcal{M}$, we have the tangent space $\mathcal{T}_{\psi}\mathcal{M}$ of tangent vectors $\ket{X}_{\psi}$. $\mathcal{T}_{\psi}\mathcal{M}$ can be embedded into Hilbert space by defining the local frame $\ket{V_\mu}_{\psi}\in\mathcal{H}$, such that $\ket{X}=X^\mu\ket{V_\mu}$, as before. Note, however, that in general the tangent space $\mathcal{T}_\psi\mathcal{M}=\mathrm{span}_{\mathbb{R}}\{\ket{V_\mu}\}$ is only a \emph{real}, but not necessarily a complex subspace of $\mathcal{H}$. Thus, we will encounter families, for which $\ket{X}$ is a tangent vector, but not $\ii\!\ket{X}$.

In practice, we often parametrize $\psi(x)\in\mathcal{M}$ by choosing a representative $\ket{\psi(x)}\in\mathcal{H}$. This allows us to construct the local basis $\ket{V_\mu(x)}$ of tangent space $\mathcal{T}_\psi\mathcal{M}$
\begin{align}
\ket{V_\mu(x)}=\mathbb{Q}_{\psi(x)}\,\partial_\mu\ket{\psi(x)}\,,\label{eq:Videf}
\end{align}
at the state $\ket{\psi(x)}$, where $\mathbb{Q}_\psi$ was defined\footnote{The projector $\mathbb{Q}_\psi$ is important to ensure that $\ket{V_\mu}$ can be identified with an element of $\mathcal{H}^\perp_\psi\simeq\mathcal{T}_\psi\mathcal{P}(\mathcal{H})$ as discussed in Section~\ref{sec:projective-hilbert-space}, \ie $\braket{\psi|V_\mu}=0$. For derivations, it can be useful to go into a local coordinate system of $x^\mu$, in which $\ket{V_\mu}=\partial_\mu\ket{\psi}$, \ie the action of $\mathbb{Q}_\psi$ can be ignored. This can always be achieved locally at a point and any invariant expressions derived this way, will be valid in any coordinate system.} in~\eqref{eq:Pperp}. To simplify notation, we will usually drop the reference to $\psi(x)$ or $x$ and only write $\ket{V_\mu}$, whenever it is clear at which state we are. The schematic idea behind tangent space is sketched in figure~\ref{fig:general-manifold}.

\begin{figure}
    \centering
	\begin{tikzpicture}[scale=.7]
	\manifold[black,thick,fill=white,fill opacity=0.9]{-1,0}{10,5}
	\draw (10,5) node[right]{$\mathcal{M}$};
	\coordinate (lb) at (-1,0);
	\coordinate (rt) at (10,5);
	\coordinate (lt) at (1.2,4);
	\coordinate (rb) at (7.8,1);
	\coordinate (l) at (.1,2);
	\coordinate (r) at (8.9,3);
	\coordinate (t) at (5.6,4.5);
	\coordinate (b) at (3.4,.5);
	\draw[gray,every to/.style={out=-20,in=160,relative}]
		(l) to (r);
	\draw[gray,every to/.style={out=-20,in=160,relative}]
		(b) to (t);
	\begin{scope}[shift={(4.5,2.5)}]
	    \draw[gray] (-2.7,-.25) node{$x^1$}
	    (-1.05, -1.5) node{$x^2$};
	    \begin{scope}[scale=.9]
	    \draw[thick,brown] (-1.5,-1.35) -- (.5,-.85) -- (1.5,1.35) node[right]{$\mathcal{T}_{\psi}\mathcal{M}\subset\mathcal{H}^\perp_\psi$} -- (-.5,.85) -- cycle;
	    \end{scope}
		\draw[->,thick,red] (0,0) -- (-.45, -1.45) node[right,yshift=4pt,xshift=2pt]{$\ket{V_2}$};
		\draw[->,thick,blue] (0,0) -- (-1.7,-.55) node[below,xshift=-3pt]{$\ket{V_1}$};
		\fill (0,0) node[above,xshift=2pt]{$\psi(x)$} circle (2pt);
	\end{scope}
	\end{tikzpicture}
    \ccaption{Tangent vectors}{We sketch the basis vectors $\ket{V_\mu}$ of tangent space $\mathcal{T}_{\psi}\mathcal{M}$ for a manifold $\mathcal{M}$ parametrized by two coordinates $(x^1,x^2)$.}
    \label{fig:general-manifold}
\end{figure}

Similar to projective Hilbert space, we define \emph{restricted} Kähler structures on tangent space $\mathcal{T}_\psi\mathcal{M}\subset\mathcal{T}_\psi\mathcal{P}(\mathcal{H})$ as
\begin{align}
    \g_{\mu\nu}\!=\frac{2\,\mathrm{Re}\!\braket{V_\mu|V_\nu}}{\braket{\psi|\psi}}\,\quad\!\text{and}\!\quad\om_{\mu\nu}\!=\frac{2\,\mathrm{Im}\!\braket{V_\mu|V_\nu}}{\braket{\psi|\psi}}\,.\label{eq:gmunu-def}
\end{align}
There are two important differences to the corresponding definition~\eqref{eq:ViVj} in full Hilbert space. First, with a slight abuse of notation, $\ket{V_\mu}$ here does not span the Hilbert space, but rather the typically much smaller tangent space. Second, we set $\mathcal{N}=\braket{\psi|\psi}$ just like for $\mathcal{P}(\mathcal{H})$, such that
\begin{align}
    \braket{V_\mu|V_\nu}=\frac{\braket{\psi|\psi}}{2}\left(\g_{\mu\nu}+\ii\om_{\mu\nu}\right)\,.
\end{align}
This has the important consequence that the restricted Kähler structures are invariant under the change of representative $\ket{\psi}$ of the physical state. Namely, under the transformation $\ket{\psi}\to c\ket{\tilde{\psi}}$ with $\ket{V_\mu}\to c\ket{V_\mu}$, our Kähler structures will not change. This ensures that equations involving restricted Kähler structures are manifestly defined on projective Hilbert space and thus independent of the representative $\ket{\psi(x)}\in\mathcal{H}$, we use to represent the abstract state $\psi(x)\in\mathcal{M}\subset\mathcal{P}(\mathcal{H})$.

We have $\mathcal{T}_\psi\mathcal{M}\subset\mathcal{H}$ and for $\ket{X},\ket{Y}\in\mathcal{H}$, we have the real inner product $\mathrm{Re}\!\braket{X|Y}$ on $\mathcal{H}$ inducing the norm $\lVert \ket{X}\rVert=\sqrt{\mathrm{Re}\braket{X|X}}=\sqrt{\braket{X|X}}$, which is nothing more than the regular Hilbert space norm. We then define the orthogonal projector $\mathbb{P}_\psi$ from $\mathcal{H}$ onto $\mathcal{T}_\psi\mathcal{M}$ with respect to $\mathrm{Re}\!\braket{\cdot|\cdot}$, \ie for each vector $\ket{X}\in\mathcal{H}$ we find the vector $\mathbb{P}_\psi \ket{X}\in\mathcal{T}_\psi\mathcal{M}$ minimizing the norm $\lVert \ket{X}-\mathbb{P}_{\psi}\ket{X}\rVert$.

We can write this orthogonal projector in two ways:
\begin{align}
\mathbb{P}_{\psi}=\frac{2\ket{V_\mu}\G^{\mu\nu}\mathrm{Re}\!\bra{V_\nu}}{\braket{\psi|\psi}}\,,\quad
\mathbb{P}^\mu_{\psi}=\frac{2\G^{\mu\nu}\mathrm{Re}\!\bra{V_\nu}}{\braket{\psi|\psi}}\,,\label{eq:projectors}
\end{align}
such that we have $\mathbb{P}_\psi=\ket{V_\mu}\mathbb{P}_\psi^\mu$. The difference lies in the co-domain: While $\mathbb{P}_{\psi}:\mathcal{H}\to\mathcal{H}$ maps back onto Hilbert space, \eg to compute $\mathbb{P}^2_{\psi}=\mathbb{P}_{\psi}$, we have that $\mathbb{P}^\mu_{\psi}:\mathcal{H}\to\mathcal{T}_\psi\mathcal{M}$ is a map from Hilbert space into tangent space. Due to $\mathcal{T}_\psi\mathcal{M}\subset\mathcal{T}_\psi\mathcal{P}(\mathcal{H})$, we have
\begin{align}
	\mathbb{P}_\psi=\mathbb{P}_\psi\mathbb{Q}_\psi=\mathbb{Q}_\psi\mathbb{P}_\psi\quad\text{and}\quad\mathbb{P}^\mu_\psi=\mathbb{P}_\psi^\mu\mathbb{Q}_\psi\,,
\end{align}
which follows from $\mathbb{Q}_\psi\ket{V_\mu}=\ket{V_\mu}$ and $\mathbb{Q}_\psi^\dagger=\mathbb{Q}_\psi$. In contrast to $\mathbb{Q}_\psi$, the projector $\mathbb{P}_\psi$ is in general not Hermitian.

\begin{table*}[t!]
	\centering
	\ccaption{Comparison: Kähler vs. Non-Kähler}{We review the properties of restricted Kähler structures in each case. See appendix~\ref{app:Kaehler-structures} for a review of the conditions for a general manifold to be Kähler.}
	\def\arraystretch{1.2}
	\begin{tabular}{C{6cm}||C{5cm}|C{3cm}|C{3cm}}
		\hspace{4cm}&\multirow{2}{*}{\textbf{Kähler}} & \multicolumn{2}{c}{\textbf{Non-Kähler}} \\[-.5mm]
		&& \textbf{(non-degenerate)} & \textbf{(degenerate)}\\
		\hline
		\hline
 		\textbf{Restricted metric} $\g$  & symmetric, positive definite, & \multicolumn{2}{c}{symmetric, positive definite,}\\[-1mm]
 		inverse $\G$ ($\G\g=\id$) & invertible & \multicolumn{2}{c}{invertible}\\
 		\hline
 		\textbf{Restricted symplectic form} $\om$ & antisymmetric, closed ($d\omega=0$), & \multicolumn{2}{c}{antisymmetric, may not be closed}\\[-1mm]
 		inverse $\Om$ ($\Om\om=\id$) or pseudo-inverse $\Om$ & non-degenerate & non-degenerate & degenerate\\
 		\hline
 		\textbf{Restricted complex structure} $\J$ & $\J^2=-\id$, & \multicolumn{2}{c}{$0\geq \J^2\geq -\id$,}\\[-1mm]
 		inverse or pseudo-inverse $\J^{-1}=-\Om\g$ & invertible with $\J^{-1}=-\J$ & invertible & pseudo-invertible
	\end{tabular}
	\label{tab:Kaehler-vs-non-Kaehler}
\end{table*}

Provided that there are no redundancies or gauge directions (only changing phase or normalization) in our choice of parameters, $\g_{\mu\nu}$ will still be positive-definite and invertible with inverse $\G^{\mu\nu}$.
We find that
\begin{align}
\hspace{-2mm}\J^\mu{}_\nu=-\G^{\mu\sigma}\om_{\sigma\nu}=\frac{2 \G^{\mu\sigma}\mathrm{Re}\!\braket{V_\sigma|\ii|V_\nu}}{\braket{\psi|\psi}}=\mathbb{P}_\psi^\mu\ii\ket{V_\nu}
    \label{eq:restricted_J}
\end{align}
is the projection of the multiplication by the imaginary unit (as real-linear map) onto $\mathcal{T}_\psi\mathcal{M}$. It will not square to minus identity if multiplication by $\ii$ in full Hilbert space does not preserve the tangent space.

If $\g_{\mu\nu}$ is not invertible, it means that there exists a set of coefficients $X^\mu$ such that $X^\mu\g_{\mu\nu}X^\nu=0$, that is $\| X^\mu\ket{V_\mu} \|=0$ and therefore $X^\mu\ket{V_\mu}=0$. In other words, not all vectors $\ket{V_\mu}$ are linearly independent and thus also not all parameters are independent. If this is the case, it is not a real problem as the formalism introduced can still be used with little modifications. More precisely, the projectors~\eqref{eq:projectors}, as well as all other objects we will introduce, are meaningfully defined if we indicate with $\G^{\mu\nu}$ the Moore-Penrose pseudo-inverse of $\g_{\mu\nu}$, \ie we invert $\g_{\mu\nu}$ only on the orthogonal complement to its kernel (orthogonal with respect of the flat metric $\delta_{\mu\nu}$ in our coordinates\footnote{In the specific case of the manifold of matrix product states, there exists a different, more natural definition of orthogonality~\cite{haegeman2014geometry}.}). Indeed, all directions in the kernel correspond to a vanishing vector in the tangent space and therefore do not matter. In this case, also $\Om^{\mu\nu}$, should be defined as the inverse of $\om_{\mu\nu}$ on the orthogonal complement to the kernel of $\g_{\mu\nu}$.\footnote{Note that the kernel of $\om_{\mu\nu}$ itself does not necessarily correspond to redundant directions of the parametrization as $X^\mu\om_{\mu\nu}=0$ does not imply $X^\mu\ket{V_\mu}=0$.} However, it is still possible that $\om$ and $\J$ are not invertible even on this reduced subspace.

In this case, in order to define $\Om$ one has to reduce oneself to working on an even smaller subspace, that is one that does not contain the kernel of $\om$ and $\J$. Here, however, the way in which we reduce these extra dimensions is not equivalent, as these directions are not anymore just redundant gauge choices of our  parametrization. The reduction here effectively corresponds to working on a physically smaller manifold, as we will explain better in the next section. For what follows we will always suppose that $\Om$ is defined by inverting $\om$ on the tangent subspace orthogonal, with respect to the metric $\g_{\mu\nu}$, to the kernel of $\J$. That is, $\Om$ is the Moore-Penrose pseudo-inverse of $\om$ with respect to $\g$, \ie the pseudo-inverse is evaluated in an orthonormal basis. In appendix~\ref{app:calculation-pseudo-inverse}, we elaborate further on the definition and evaluation of this pseudo-inverse.

In conclusion, we see that we are able to define the \emph{restricted} structures $(\g,\om,\J)$ which, however, do not necessarily satisfy the Kähler property. This is due to the fact that the tangent space, as we have pointed out, is a real, but not necessarily complex subspace of $\mathcal{H}$. Note that these objects are locally defined in each tangent space $\mathcal{T}_\psi\mathcal{M}$ for $\psi\in\mathcal{M}$.

\begin{example}\label{ex:qubit-product}
	For the Hilbert space $\mathcal{H}=(\mathbb{C}^2)^{\otimes 2}$ of two Qubits, we can choose the variational manifold $\mathcal{M}$ of symmetric product states represented by
	\begin{align}
	\ket{\varphi(x)}=\ket{\psi(x)}\otimes\ket{\psi(x)}\,,
	\end{align}
	with $x^\mu\equiv(\theta,\phi)$, where $\ket{\psi(x)}$ is a single Qubit state as parametrized in~\eqref{eq:Qubit-psi}. The tangent space is spanned by
	\begin{align}
	\ket{W_\mu}&=\ket{V_\mu}\otimes\ket{\psi(x)}+\ket{\psi(x)}\otimes\ket{V_\mu}\,,
	\end{align}
	where $\ket{V_\mu}$ are the single Qubit tangent vectors defined in~\eqref{eq:Qubit-Vi}. With this, we find
	\begin{align}
	\g_{\mu\nu}\equiv\begin{pmatrix}
	1 & 0\\
	0 & \sin^2{(\theta)}
	\end{pmatrix}\hspace{1mm}\text{and}\hspace{2mm}\om_{\mu\nu}\equiv\begin{pmatrix}
	0 & \sin{\theta}\\
	-\sin{\theta} & 0
	\end{pmatrix}
	\end{align}
	leading to $\J^2=-\id$ everywhere. We therefore conclude that the tangent space $\mathcal{T}_\psi\mathcal{M}$ satisfies the Kähler property everywhere.
\end{example}
\begin{example}\label{ex:qubit-equator}
For the single Qubit Hilbert space $\mathcal{H}=\mathbb{C}^2$, we can choose the equator of the Bloch sphere as our variational manifold $\mathcal{M}$. This amounts to fixing $\theta=\pi/2$ in the single Qubit state~\eqref{eq:Qubit-psi} leading to the representatives
\begin{align}
	\ket{\psi(\phi)}=\tfrac{1}{\sqrt{2}}\ket{0}+\frac{e^{\ii\phi}}{\sqrt{2}}\ket{1}
\end{align}
with a single variational parameter $\phi$. We have the single tangent vector $\ket{V}=\ket{V_1}$ as defined in~\eqref{eq:Qubit-Vi}. From the inner product $\braket{V|V}=\tfrac{1}{4}$, we find $g=\frac{1}{2}$ and $\omega=0$ implying $J=0$. Consequently and not surprising due to the odd dimension, the tangent spaces of our variational manifold $\mathcal{M}$ are not Kähler spaces. Moreover, neither $\omega$ nor $J$ are invertible.
\end{example}
\begin{example}\label{ex:coherent-non-Kaehler}
We consider a bosonic system with two degrees of freedom associated with annihilation operators $\hat{a}_1$ and $\hat{a}_2$. The vacuum state $\ket{0,0}$ satisfies $\hat{a}_m\ket{0,0}=0$, $\hat{a}_1^\dagger\ket{0,0}=\ket{1,0}$ and $\hat{a}^\dagger_2\ket{0,0}=\ket{0,1}$. We introduce
\begin{align}
    \hat{b}=\cosh{r}\,\hat{a}_1+\sinh{r}\,\hat{a}_2^\dagger\label{def:b-example}
\end{align}
with canonical commutation relations $[\hat{b},\hat{b}^\dagger]=1$ and $r$ being a fixed constant (not a variational parameter). We then define the states of our variational manifold as
\begin{align}
    \ket{\psi(\alpha)}=e^{\alpha \hat{b}^\dagger-\alpha^*\hat{b}}\ket{0}\,,
\end{align}
parametrized by a single complex number $\alpha$. $\ket{\psi(\alpha)}$ is not the one-mode coherent state $\ket{\alpha}=e^{\alpha \hat{a}^\dag-\alpha^*\hat{a}}\ket{0}$, because $\hat{b}\ket{0}\neq0$. Our variational manifold has two independent real parameters given by $x^\mu\equiv (\mathrm{Re}\alpha,\mathrm{Im}\alpha)$. After some algebra taking $[\hat{b},\hat{b}^\dagger]=1$ into account, we find
\begin{align}
\begin{split}
    \ket{V_1}&=e^{\alpha\hat{b}^\dagger-\alpha^*\hat{b}}\,(\cosh{r}\,\ket{1,0}-\sinh{r}\ket{0,1})\,,\\
    \ket{V_2}&=e^{\alpha\hat{b}^\dagger-\alpha^*\hat{b}}\,\ii(\cosh{r}\ket{1,0}+\sinh{r}\ket{0,1})\,.
\end{split}
\end{align}
Metric and symplectic form take the forms
\begin{align}
    \g_{\mu\nu}\equiv\cosh{2r}\begin{pmatrix}
    2 & 0\\
    0 & 2
    \end{pmatrix}\quad\text{and}\quad\om_{\mu\nu}\equiv\begin{pmatrix}
    0 & 2\\
    -2 & 0
    \end{pmatrix}\,.
\end{align}
This gives rise to the restricted complex structure
\begin{align}
    \J^\mu{}_\nu\equiv\mathrm{sech}{\,2r}\begin{pmatrix}
    0 & -1\\
    1 & 0
    \end{pmatrix}\,,
\end{align}
which only satisfies $\J^2=-\id$ for $r=0$.
\end{example}

In summary, we introduced general variational manifolds as real differentiable submanifolds $\mathcal{M}$ of projective Hilbert space $\mathcal{P}(\mathcal{H})$. By embedding the tangent spaces $\mathcal{T}_\psi\mathcal{M}$ into Hilbert space, the Hilbert space inner product defines restricted Kähler structures on the tangent spaces, whose properties we will explore next.

\subsection{Kähler and non-Kähler manifolds}
\label{sec:kaehler-non-kaehler}
We categorize variational manifolds depending on whether their tangent spaces are Kähler spaces or not. We will see in the following sections that this distinction has some important consequences for the application of variational methods on the given family.

\begin{definition}
We classify general variational families $\mathcal{M}\subset\mathcal{P}(\mathcal{H})$ based on their restricted Kähler structures. We refer to a variational family $\mathcal{M}$ as
\begin{itemize}
    \item \textbf{Kähler\footnote{A general manifold $\mathcal{M}$, whose tangent spaces are equipped with compatible Kähler structures, is known as an almost Hermitian manifold. However, if an almost Hermitian manifold is the submanifold of a Kähler manifold (as defined in appendix~\ref{app:Kaehler-structures}), then it is also a Kähler manifold itself. Thus, due to the fact that $\mathcal{P}(\mathcal{H})$ is a Kähler manifold, all almost Hermitian submanifolds $\mathcal{M}\subset\mathcal{P}(\mathcal{H})$ are also Kähler manifolds, which is why we use the term Kähler in this context.}}, if all tangent spaces $\mathcal{T}_\psi\mathcal{M}$ are a Kähler spaces, \ie $\J^2=-\id$ everywhere on the manifold,
    \item \textbf{Non-Kähler}, if it is not Kähler. If $\om$ is degenerate, we define $\Om$ as the pseudo-inverse.
\end{itemize}
This classification refers to the manifold as a whole. In table~\ref{tab:Kaehler-vs-non-Kaehler} we summarize the  properties of each class of manifolds.
\end{definition}

Many well-known variational families, such as Gaussian states~\cite{weedbrook2012gaussian}, coherent states~\cite{glauber1963coherent,gilmore1972geometry,perelomov1972coherent}, matrix product states~\cite{perez2007matrix} and projected entangled pair states~\cite{verstraete2008matrix}, are Kähler. On the other hand, one naturally encounters non-Kähler manifolds when one parametrizes states through a family of general unitaries $\mathcal{U}(x)$ applied to a reference state $\ket{\phi}$, \ie
\begin{align}
\ket{\psi(x)}=\mathcal{U}(x)\ket{\phi}\,.
\end{align}
For example, this issue arises for the classes of generalized Gaussian states introduced in~\cite{shi2018variational}, for the Multi-scale Entanglement Renormalisation Ansatz states~\cite{vidal2007entanglement} or if one applies Gaussian transformations $\mathcal{U}(x)$ to general non-Gaussian states.

\begin{example}
We already reviewed examples for these three cases in the previous section. More precisely, example~\ref{ex:qubit-product} is Kähler, example~\ref{ex:qubit-equator} is non-Kähler with degenerate $\om$ and example~\ref{ex:coherent-non-Kaehler} is non-Kähler with non-degenerate $\om$.
\end{example}

Given a submanifold $\mathcal{M}\subset\mathcal{P}(\mathcal{H})$, we can use the embedding in the manifold $\mathcal{P}(\mathcal{H})$ to constrain the form that the restricted complex structure $J$ can take on $\mathcal{M}$.

\begin{proposition}\label{prop:J-standard-form}
On a tangent space $\mathcal{T}_\psi\mathcal{M}\subset\mathcal{H}$ of a submanifold $\mathcal{M}\subset\mathcal{P}(\mathcal{H})$ we can always find an orthonormal basis $\{\ket{V_\mu}\}$, such that $\g_{\mu\nu}\equiv\id$ and the restricted complex structure is represented by the block matrix
\begin{align}
\footnotesize
	\J^\mu{}_\nu\equiv\left(\begin{array}{ccc|ccccc|cc}
		& 1 &  &  &  & & & & & \\
	-1  &   &  &  &  & & & & & \\
		&& \ddots&&  & & & & &\\
		\hline
		&&  &  & c_1 & & & & &\\
		&&  &  -c_1& & & & & &\\
		&&  &  &  & & c_2& & &\\
		&&  &  & & -c_2 & & & &\\
		&&  &  & & & &\ddots & &\\
		\hline
		&&  &  & & & & &0 &\\
		&&  &  & & & & & &\ddots
	\end{array}\right)\label{eq:J-standard}
\end{align}
with $0<c_i<1$. This standard form induces the decomposition of $\mathcal{T}_\psi\mathcal{M}$ into the three orthogonal parts
\begin{align}
	\mathcal{T}_\psi\mathcal{M}=\underbrace{\overline{\overline{\mathcal{T}_\psi\mathcal{M}}}\oplus\mathcal{I}_\psi\mathcal{M}}_{\overline{\mathcal{T}_\psi\mathcal{M}}}\oplus\mathcal{D}_\psi\mathcal{M}\,,
\end{align}
where $\overline{\overline{\mathcal{T}_\psi\mathcal{M}}}$ is the largest Kähler subspace and $\overline{\mathcal{T}_\psi\mathcal{M}}$ is the largest space on which $J$ and $\omega$ are invertible.
\end{proposition}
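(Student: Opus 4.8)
The plan is to reduce the statement to the canonical form of a real antisymmetric operator and then read off the extra constraint $0<c_i<1$ from the embedding $\mathcal{T}_\psi\mathcal{M}\subset\mathcal{H}$. First I would normalize the metric. Assuming, as in Section~\ref{sec:general_manifold}, that the parametrization has no redundancies so that $\g$ is symmetric and positive-definite, a Gram--Schmidt procedure yields a basis with $\g_{\mu\nu}\equiv\id$, hence $\G^{\mu\nu}\equiv\id$. In such a $\g$-orthonormal basis the defining relation $\J^\mu{}_\nu=-\G^{\mu\sigma}\om_{\sigma\nu}$ reduces to $\J=-\om$ as a matrix, so $\J$ is real and antisymmetric; invariantly, $\g_{\rho\mu}\J^\mu{}_\nu=-\om_{\rho\nu}$ is antisymmetric, i.e. the $\g$-adjoint of $\J$ is $-\J$. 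I would then invoke the canonical form of a real antisymmetric matrix under orthogonal conjugation, which preserves $\g\equiv\id$: there is a further orthonormal basis in which $\J$ is block diagonal with $2\times2$ blocks $\left(\begin{smallmatrix}0&\lambda_i\\-\lambda_i&0\end{smallmatrix}\right)$, $\lambda_i\ge 0$, plus a zero block. The eigenvalues of $\J$ are then $\pm\ii\lambda_i$, those of $\J^2$ are $-\lambda_i^2$, and the zero block is exactly $\ker\J$.

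The only step that uses the embedding rather than pure linear algebra is bounding the $\lambda_i$. By~\eqref{eq:restricted_J}, $\J$ acts on $\mathcal{T}_\psi\mathcal{M}$ as $\mathbb{P}_\psi$ followed by multiplication by $\ii$. Multiplication by $\ii$ is an isometry of $\mathcal{H}$ with respect to $\mathrm{Re}\braket{\cdot|\cdot}$ and $\mathbb{P}_\psi$ is the orthogonal projector for the same real inner product, so for every $\ket{X}\in\mathcal{T}_\psi\mathcal{M}$ one has $\lVert\J\ket{X}\rVert\le\lVert\ii\ket{X}\rVert=\lVert\ket{X}\rVert$. Because $\g_{\mu\nu}X^\mu X^\nu=\tfrac{2}{\braket{\psi|\psi}}\lVert\ket{X}\rVert^2$, this norm bound says $\g(\J X,\J X)\le\g(X,X)$; combining it with the $\g$-antisymmetry identity $\g(\J X,\J X)=-\g(X,\J^2X)$ gives $0\ge\J^2\ge-\id$ as quadratic forms. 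Reading this off the block form forces $\lambda_i^2\in[0,1]$, i.e. $0\le\lambda_i\le1$, which also reproduces the table entry $0\ge\J^2\ge-\id$.

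Finally I would sort the blocks by the value of $\lambda_i$. Collecting the blocks with $\lambda_i=1$ first, renaming those with $0<\lambda_i<1$ as $c_i$, and placing the $\lambda_i=0$ block last produces exactly~\eqref{eq:J-standard}. These three groups define $\overline{\overline{\mathcal{T}_\psi\mathcal{M}}}=\ker(\J^2+\id)$, $\mathcal{I}_\psi\mathcal{M}$, and $\mathcal{D}_\psi\mathcal{M}=\ker\J$. I would then check the two maximality claims: $\ker(\J^2+\id)$ is $\J$-invariant with $\J^2=-\id$, hence Kähler, and it contains every Kähler subspace (on such a subspace $\J^2=-\id$, so each of its vectors lies in $\ker(\J^2+\id)$), making it the largest one; and $\overline{\mathcal{T}_\psi\mathcal{M}}=(\ker\J)^\perp$ is precisely where $\J$ is invertible, and there $\om=-\g\J$ is invertible as well since $\g$ is.

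The main obstacle is the bound $\lambda_i\le1$: the orthonormalization, canonical form and sorting are routine linear algebra, but this inequality is what singles out tangent spaces of submanifolds of $\mathcal{P}(\mathcal{H})$ from arbitrary metric-plus-symplectic data. It relies on correctly identifying $\J$ as the compression of the isometry $\ii$ and on translating the resulting norm bound into the operator inequality $0\ge\J^2\ge-\id$.
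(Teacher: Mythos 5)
Your proof is correct and takes essentially the same approach as the paper's: both reduce $\J$ to the canonical block form of a real $\g$-antisymmetric operator and then get the bound $c_i\le 1$ from the fact that $\J$ is the orthogonal compression to $\mathcal{T}_\psi\mathcal{M}$ of multiplication by $\ii$, which is an isometry — the paper runs this estimate on eigenvectors via $g(a,a)=g(Ja,Ja)\ge g(J_A a,J_A a)$ after block-decomposing $\mathcal{H}$, while you package the identical estimate as the operator inequality $0\ge\J^2\ge-\id$. Your explicit sorting of the blocks and verification of the maximality of $\overline{\overline{\mathcal{T}_\psi\mathcal{M}}}$ and $\overline{\mathcal{T}_\psi\mathcal{M}}$ goes slightly beyond the paper's own proof, which only establishes the standard form and the bound.
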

\begin{proof}
We present a constrictive proof in appendix~\ref{app:proofs}.
\end{proof}

Proposition~\ref{prop:J-standard-form} is also relevant for classifying real subspaces of complex Hilbert spaces. Interestingly, it is linked to the entanglement structure of fermionic Gaussian states, as made explicit in~\cite{hackl2019minimal}.

The manifold $\mathcal{M}$ is Kähler if there is only the first block in~\eqref{eq:J-standard} everywhere. The symplectic  form $\om$ is non-degenerate if we only have the first two diagonal blocks. The next proposition provides some further intuition for the non-Kähler case, which is also known in mathematics in the context of sub manifolds of Kähler manifolds~\cite{huybrechts2005complex}.
\begin{proposition}
\label{cor:Kaehler-property}
The Kähler property is equivalent to requiring that $\mathcal{T}_{\psi}\mathcal{M}$ is not just a real, but also a complex subspace, \ie for all $\ket{X}\in\mathcal{T}_{\psi}\mathcal{M}$, we also have $\ii\!\ket{X}\in \mathcal{T}_{\psi}\mathcal{M}$. Therefore, the multiplication by $\ii$ commutes with the projector $\mathbb{P}_\psi$, \ie $\mathbb{P}_\psi\ii=\ii\mathbb{P}_\psi$ and $\mathbb{P}_\psi$ is complex-linear.
\end{proposition}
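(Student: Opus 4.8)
The plan is to reduce everything to the single operator identity that, on the tangent space, the restricted complex structure is nothing but the real-orthogonal projection of multiplication by $\ii$. Concretely, \eqref{eq:restricted_J} reads $\J^\mu{}_\nu=\mathbb{P}^\mu_\psi\,\ii\ket{V_\nu}$, so contracting with $\ket{V_\mu}$ and using real linearity gives
\begin{align}
\J\ket{X}=\mathbb{P}_\psi\,\ii\ket{X}\qquad\text{for all }\ket{X}\in\mathcal{T}_\psi\mathcal{M}\,.
\end{align}
All three statements will then follow from two elementary facts about the real inner product $\mathrm{Re}\!\braket{\cdot|\cdot}$ on $\mathcal{H}$: multiplication by $\ii$ is norm-preserving and anti-self-adjoint (since $\braket{\ii X|\ii Y}=\braket{X|Y}$ and $\mathrm{Re}\!\braket{\ii X|Y}=-\mathrm{Re}\!\braket{X|\ii Y}$), while $\mathbb{P}_\psi$ is by construction the orthogonal projector onto $\mathcal{T}_\psi\mathcal{M}$ with respect to $\mathrm{Re}\!\braket{\cdot|\cdot}$, hence self-adjoint, idempotent and norm non-increasing.

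For the core equivalence I would evaluate the quadratic form of $\J^2$. Using anti-self-adjointness of $\ii$ together with self-adjointness and idempotency of $\mathbb{P}_\psi$, one obtains for $\ket{X}\in\mathcal{T}_\psi\mathcal{M}$
\begin{align}
\mathrm{Re}\!\braket{X|\J^2 X}=-\mathrm{Re}\!\braket{\J X|\J X}=-\lVert\mathbb{P}_\psi\,\ii\ket{X}\rVert^2\,.
\end{align}
Since $\lVert\ii\ket{X}\rVert=\lVert\ket{X}\rVert$ and an orthogonal projector satisfies $\lVert\mathbb{P}_\psi\,\ii\ket{X}\rVert\le\lVert\ii\ket{X}\rVert$ with equality iff $\ii\ket{X}$ already lies in the range $\mathcal{T}_\psi\mathcal{M}$, the right-hand side equals $-\lVert\ket{X}\rVert^2$ for every $\ket{X}$ precisely when $\ii\ket{X}\in\mathcal{T}_\psi\mathcal{M}$ for every $\ket{X}$, i.e.\ when $\mathcal{T}_\psi\mathcal{M}$ is a complex subspace. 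To promote the statement ``$\mathrm{Re}\!\braket{X|\J^2 X}=-\lVert X\rVert^2$ for all $X$'' to the operator identity $\J^2=-\id$, I would note that $\J$ is anti-self-adjoint, so $\J^2+\id$ is self-adjoint with respect to $\mathrm{Re}\!\braket{\cdot|\cdot}$; a polarization argument then turns the vanishing of its real quadratic form into the vanishing of the operator. This establishes $\J^2=-\id\iff\mathcal{T}_\psi\mathcal{M}$ complex.

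For the two ``therefore'' clauses I would first check that complexity of $\mathcal{T}_\psi\mathcal{M}$ forces complexity of its real-orthogonal complement: if $\mathrm{Re}\!\braket{V|Y}=0$ for all $\ket{V}\in\mathcal{T}_\psi\mathcal{M}$, then using $\ii\ket{V}\in\mathcal{T}_\psi\mathcal{M}$ together with $\mathrm{Re}\!\braket{\ii V|Y}=\mathrm{Im}\!\braket{V|Y}$ gives $\mathrm{Im}\!\braket{V|Y}=0$, whence $\mathrm{Re}\!\braket{V|\ii Y}=0$, so $\ii\ket{Y}$ is again orthogonal to the tangent space. Decomposing any $\ket{Z}\in\mathcal{H}$ into its tangent and orthogonal parts and multiplying by $\ii$, uniqueness of the orthogonal splitting then yields $\mathbb{P}_\psi\,\ii=\ii\,\mathbb{P}_\psi$; complex-linearity of $\mathbb{P}_\psi$ is immediate from real-linearity plus this commutation. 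The converse is trivial, since applying $\mathbb{P}_\psi\ii=\ii\mathbb{P}_\psi$ to $\ket{X}\in\mathcal{T}_\psi\mathcal{M}$ gives $\ii\ket{X}=\mathbb{P}_\psi\ii\ket{X}\in\mathcal{T}_\psi\mathcal{M}$.

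The routine parts are the sign bookkeeping for the real and imaginary parts of the Hermitian product and the care needed to perform all orthogonality and adjointness statements with respect to $\mathrm{Re}\!\braket{\cdot|\cdot}$ rather than the Hermitian product. The one place that genuinely needs attention is the core equivalence, specifically the equality case of the projection-norm inequality and the polarization step upgrading the quadratic-form identity to the operator identity $\J^2=-\id$. A clean alternative to this direct argument would be to invoke the canonical form of Proposition~\ref{prop:J-standard-form}: $\J^2=-\id$ holds exactly when only the first block type is present, which is precisely the statement that $\ii$ preserves the tangent space.
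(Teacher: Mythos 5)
Your proof is correct, and it organizes the argument differently from the paper. The paper's proof (in the appendix) stays with the ambient complex structure $J$ on $\mathcal{H}$: it splits $\mathcal{H}=A\oplus B$ with $A=\mathcal{T}_\psi\mathcal{M}$, writes the relation $J^2=-\id$ in blocks $(J_A,J_{AB},J_{BA},J_B)$, and notes that the Kähler condition $\J^2=J_A^2=-\id_A$ forces the block product $J_{AB}J_{BA}$ to vanish, whence $g(J_{BA}a,J_{BA}a)=-g(a,J_{AB}J_{BA}a)=0$ and so $J_{BA}=0$ (and likewise $J_{AB}=0$); block-diagonality of $J$ then gives $\mathbb{P}_\psi\ii=\ii\mathbb{P}_\psi$ immediately. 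You instead work intrinsically with the restricted structure $\J=\mathbb{P}_\psi\,\ii$ on the tangent space, characterize the Kähler property through the equality case of the projection-norm inequality $\lVert\mathbb{P}_\psi\ii X\rVert\le\lVert\ii X\rVert=\lVert X\rVert$, and use polarization (legitimate because $\J^2+\id$ is self-adjoint with respect to $\mathrm{Re}\!\braket{\cdot|\cdot}$) to pass between the quadratic-form identity and the operator identity $\J^2=-\id$; you then obtain the commutation from the separate observation that the real-orthogonal complement of a complex subspace is itself complex, plus uniqueness of the orthogonal splitting. Under the hood both arguments rest on the same mechanism — positive-definiteness of $\g$ together with anti-self-adjointness of multiplication by $\ii$ forces the component of $\ii\ket{X}$ orthogonal to the tangent space to vanish; your deficit $\lVert X\rVert^2-\lVert\mathbb{P}_\psi\ii X\rVert^2$ is precisely the paper's $\lVert J_{BA}X\rVert^2$. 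What your version buys is self-containedness (no reliance on the block machinery set up for Proposition~\ref{prop:J-standard-form}, though you correctly note it could be invoked as a shortcut) and an explicit treatment of both directions of the equivalence, which the paper leaves implicit; the paper's version buys brevity, since $J_{AB}=J_{BA}=0$ delivers the complex subspace property and the commutation with $\mathbb{P}_\psi$ in a single stroke.
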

\begin{proof}
We present a proof in appendix~\ref{app:proofs}.
\end{proof}

If a manifold admits a complex holomorphic parametrization, \ie a parametrization that depends on the complex parameters $z^\mu$, but not on ${z^*}^\mu$, then the manifold will be Kähler. Indeed, taking  $\mathrm{Re}z^\mu$ and $\mathrm{Im}z^\mu$ as real parameters gives the tangent vectors
\begin{align}
    \ket{v_\mu}=\frac{\partial}{\partial \mathrm{Re}z^\mu}\ket{\psi(z)}, \quad \ii\ket{v_\mu}=\frac{\partial}{\partial \mathrm{Im}z^\mu}\ket{\psi(z)}\,.
\end{align}
It is actually also possible to show that, viceversa, a Kähler manifold is also a complex manifold, that is it admits, at least locally, a complex holomorphic parametrization.

As mentioned before, in order to define the inverse of $\om$ it is necessary to restrict ourselves to work only in a subspace of $\mathcal{T}_\psi\mathcal{M}$. We now see that the definition we gave previously of always defining $\Om$ as the pseudo-inverse with respect to $\g$ coincides with always choosing to consider only the tangent directions in 
\begin{equation}
    \overline{\mathcal{T}_\psi\mathcal{M}}=\mathrm{span}_{\mathbb{R}}\{\ket{\overline{V}_i}\}
\end{equation}

In order to apply variational methods as explained in the following sections, it may be necessary to at least locally restore the Kähler property. We can achieve this by locally further restricting ourselves to
\begin{align}
	\overline{\overline{\mathcal{T}_\psi\mathcal{M}}}=\mathrm{span}_{\mathbb{R}} \{\ket{\overline{\overline{V}}_i}\}\,.\label{eq:subspaces}
\end{align}

Using the bases $\{\ket{\overline{\overline{V}}_\mu}\}$ and $\{\ket{\overline{V}_\mu}\}$, we can define the restricted Kähler structures $(\overline{\overline{\g}},\overline{\overline{\om}},\overline{\overline{\J}})$, which are compatible, and $(\overline{\g},\overline{\om},\overline{\J})$, where $\overline{\om}$ and $\overline{\J}$ are non-degenerate.

Our assumption on the definition of $\Om$ can be understood as taking $\Om$ to be zero on the subspace $\mathcal{D}_\psi \mathcal{M}$, where $\om$ is not invertible, and equal to the inverse of $\overline{\om}$ on $\overline{\mathcal{T}_{\psi}\mathcal{M}}$. Note that this definition is only possible because the tangent space is also equipped with a metric $\g$, which makes the orthogonal decomposition $\mathcal{T}_\psi\mathcal{M}=\overline{\mathcal{T}_\psi\mathcal{M}}\oplus\mathcal{D}_\psi\mathcal{M}$ well-defined.

In summary, a general variational family $\mathcal{M}\subset\mathcal{P}(\mathcal{H})$ is not necessarily a Kähler manifold. We can check locally, if the restricted Kähler structures fail to satisfy the Kähler condition. If this happens, we can always choose local subspaces
\begin{align}
    \overline{\overline{\mathcal{T}_{\psi}\mathcal{M}}}\subset\overline{\mathcal{T}_{\psi}\mathcal{M}}\subset\mathcal{T}_{\psi}\mathcal{M}
\end{align}
on which the restricted Kähler structures satisfy the Kähler properties or are at least invertible. Defining $\Om$ as the pseudo-inverse with respect to $\g$ is equivalent to inverting $\om$ only on $\overline{\mathcal{T}_{\psi}\mathcal{M}}$. In what follows, we therefore do not need to distinguish between the non-Kähler cases with degenerate or non-degenerate structures, as we will always be able to apply the same variational techniques based on $\Om$.

\subsection{Observables and Poisson bracket}
Any Hermitian operator $\hat{A}$ defines a real-valued function $\braket{\hat{A}}$ on the manifold $\mathcal{M}$ and in fact on the whole projective Hilbert space. The function is given by the expectation value
\begin{equation}
    A(x)=\braket{\hat{A}}(x)=\frac{\braket{\psi(x)|\hat{A}|\psi(x)}}{\braket{\psi(x)|\psi(x)}}\,.
    \label{eq:def-expectation}
\end{equation}
It is invariant under rescalings of $\ket{\psi}$ by complex factors and is thus a well-defined map on projective Hilbert space $\mathcal{P}(\mathcal{H})$. We will use the notation $\braket{\hat{A}}$ and $A(x)$ interchangeably. For the function deriving from the Hamiltonian operator $\hat{H}$, we use the symbol $E=\braket{\hat{H}}$ and call it the \emph{energy}.

Given a Hermitian operator $\hat{A}$ and the representative $\ket{\psi(x)}$, we have the important relation
\begin{align}
	\mathbb{P}^\mu_\psi\hat{A}\ket{\psi}=\G^{\mu\nu}(\partial_\nu A)\,,
\end{align}
which is invariant under the change of representative $\ket{\psi}\to c\,\ket{\psi}$ and $\ket{V_\mu} \to c\,\ket{V_\mu}$. It follows from
\begin{align}
\begin{split}
    \partial_\mu A&=\frac{2\,\mathrm{Re}\!\bra{V_\mu}\hat{A}\ket{\psi}}{\braket{\psi|\psi}}=\g_{\mu\nu}\mathbb{P}_\psi^\nu \hat{A}\ket{\psi}\,,
    \label{eq:E-gradient} 
\end{split}
\end{align}
where we used product rule and~\eqref{eq:Videf}.

The following definition will play an important role in the context of Poisson brackets and conserved quantities. Every operator $\hat{A}$ defines a vector field given by $\mathbb{Q}_\psi\hat{A}\ket{\psi}$. If this vector field is tangent to $\mathcal{M}$ for all $\psi\in\mathcal{M}$, the following definition applies.

\begin{definition}\label{def:preserves}
Given a general operator $\hat{A}$ and a variational family $\mathcal{M}\subset\mathcal{P}(\mathcal{H})$, we say $\hat{A}$ \textbf{preserves} $\mathcal{M}$ if
\begin{align}
    \mathbb{Q}_\psi \hat{A}\ket{\psi}=(\hat{A}-\braket{\hat{A}})\ket{\psi}\quad\text{for all}\quad \psi\in\mathcal{M}
\end{align}
lies in the tangent space $\mathcal{T}_{\psi}\mathcal{M}$, \ie $\mathbb{Q}_\psi\hat{A}\ket{\psi}=\mathbb{P}_\psi\hat{A}\ket{\psi}$.
\end{definition}

The symplectic structure of the manifold naturally induces a Poisson bracket on the space of differentiable functions, which is given by
\begin{equation}
    \{A,B\}:=(\partial_\mu A) \,\Om^{\mu\nu} (\partial_\nu B)\,.
\end{equation}
In some special cases this can be related to the commutator of the related operators.

\begin{proposition}\label{prop:bracket-commutator}
Given two Hermitian operators $\hat{A}$ and $\hat{B}$ of which one preserves the Kähler manifold $\mathcal{M}$, \ie
\begin{align}
\begin{split}
    \vspace{-2mm}(\hat{A}-\braket{\hat{A}})\ket{\psi}\in\mathcal{T}_{\psi}\mathcal{M} \,\,\text{or} \,\, (\hat{B}-\braket{\hat{B}})\ket{\psi}\in\mathcal{T}_{\psi}\mathcal{M}\,,
\end{split}
\end{align}
the Poisson bracket is related to the commutator via
\begin{equation}
    \{A,B\} = \ii \frac{\braket{\psi|[ \hat{A}, \hat{B}]|\psi}}{\braket{\psi|\psi}}\,.
    \label{eq:bracket-commutator}
\end{equation}
\end{proposition}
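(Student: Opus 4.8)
The plan is to reduce both sides of~\eqref{eq:bracket-commutator} to the imaginary part of a Hilbert-space overlap of projected tangent vectors, and then to identify the two expressions using the Kähler hypothesis. The preserving condition and the Kähler property will be needed only at the very last step.

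First I would rewrite the left-hand side geometrically. Setting $a^\mu=\mathbb{P}^\mu_\psi\hat A\ket\psi=\G^{\mu\nu}\partial_\nu A$ and $b^\mu=\mathbb{P}^\mu_\psi\hat B\ket\psi=\G^{\mu\nu}\partial_\nu B$ from the gradient relation~\eqref{eq:E-gradient}, one has $\partial_\mu A=\g_{\mu\nu}a^\nu$ and $\partial_\mu B=\g_{\mu\nu}b^\nu$. On a Kähler manifold $\om$ is genuinely invertible, and combining $\J=-\G\om$ with $\J^2=-\id$ shows that its inverse takes the form $\Om^{\mu\nu}=\J^\mu{}_\rho\G^{\rho\nu}$; hence $\g_{\mu\alpha}\Om^{\mu\nu}\g_{\nu\beta}=\g_{\mu\alpha}\J^\mu{}_\beta=-\om_{\alpha\beta}$. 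Substituting this into the definition of the bracket and using that $a^\mu,b^\mu$ are real together with~\eqref{eq:gmunu-def} gives the representation-independent identity $\{A,B\}=-a^\alpha\om_{\alpha\beta}b^\beta=-\tfrac{2}{\braket{\psi|\psi}}\mathrm{Im}\braket{\mathbb{P}_\psi\hat A\psi|\mathbb{P}_\psi\hat B\psi}$, valid on any Kähler manifold and not yet invoking the preserving hypothesis.

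Next I would expand the right-hand side. Hermiticity of $\hat A,\hat B$ gives $\braket{\psi|[\hat A,\hat B]|\psi}=2\ii\,\mathrm{Im}\braket{\psi|\hat A\hat B|\psi}$, so~\eqref{eq:bracket-commutator} is equivalent to $\{A,B\}=-\tfrac{2}{\braket{\psi|\psi}}\mathrm{Im}\braket{\psi|\hat A\hat B|\psi}$. Writing $\ket{A'}=(\hat A-\braket{\hat A})\ket\psi=\mathbb{Q}_\psi\hat A\ket\psi$ and $\ket{B'}$ analogously (both orthogonal to $\ket\psi$, with $\mathbb{P}_\psi\hat A\ket\psi=\mathbb{P}_\psi\ket{A'}$ since $\mathbb{P}_\psi=\mathbb{P}_\psi\mathbb{Q}_\psi$), I expand $\braket{\psi|\hat A\hat B|\psi}=\braket{\hat A\psi|\hat B\psi}$: the cross terms vanish by $\braket{\psi|A'}=\braket{\psi|B'}=0$, and the surviving scalar $\braket{\hat A}\braket{\hat B}\braket{\psi|\psi}$ is real. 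Thus $\mathrm{Im}\braket{\psi|\hat A\hat B|\psi}=\mathrm{Im}\braket{A'|B'}$, and by the first paragraph the whole claim collapses to the single identity $\mathrm{Im}\braket{\mathbb{P}_\psi A'|\mathbb{P}_\psi B'}=\mathrm{Im}\braket{A'|B'}$.

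The decisive and hardest step is this last identity, where the preserving hypothesis and the Kähler property enter. By the antisymmetry of both sides under $A\leftrightarrow B$ I may assume $\hat A$ preserves $\mathcal{M}$, so that $\ket{A'}\in\mathcal{T}_\psi\mathcal{M}$ and $\mathbb{P}_\psi\ket{A'}=\ket{A'}$ by Definition~\ref{def:preserves}. Splitting $\ket{B'}=\mathbb{P}_\psi\ket{B'}+\ket{B'_\perp}$ into its tangent and $\mathrm{Re}\braket{\cdot|\cdot}$-orthogonal parts, the difference of the two sides equals $\mathrm{Im}\braket{A'|B'_\perp}=\mathrm{Re}\braket{\ii A'|B'_\perp}$. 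The main obstacle is that $\mathbb{P}_\psi$ does not in general preserve the imaginary part of overlaps, so this residual term need not vanish; it does vanish precisely because, by Proposition~\ref{cor:Kaehler-property}, the Kähler condition makes $\mathcal{T}_\psi\mathcal{M}$ a complex subspace, whence $\ii\ket{A'}\in\mathcal{T}_\psi\mathcal{M}$ while $\ket{B'_\perp}$ is orthogonal to all of $\mathcal{T}_\psi\mathcal{M}$, giving $\mathrm{Re}\braket{\ii A'|B'_\perp}=0$. Assembling the three steps yields~\eqref{eq:bracket-commutator}. On a non-Kähler manifold the argument breaks down at exactly this point, as $\ii\ket{A'}$ need no longer be tangent and $\Om$ is only a pseudo-inverse.
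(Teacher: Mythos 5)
Your proof is correct: the reduction of $\{A,B\}$ to $-\tfrac{2}{\braket{\psi|\psi}}\mathrm{Im}\braket{\mathbb{P}_\psi\hat A\psi|\mathbb{P}_\psi\hat B\psi}$, the reduction of the commutator side to $-\tfrac{2}{\braket{\psi|\psi}}\mathrm{Im}\braket{A'|B'}$, and the final orthogonality argument all check out.

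The paper's own proof runs in the opposite direction and stays in coordinates throughout: it starts from the commutator side, rewrites it as $2\,\mathrm{Re}\braket{\psi|(\hat A-\braket{\hat A})\,\ii\,(\hat B-\braket{\hat B})|\psi}/\braket{\psi|\psi}$, uses the tangency of one of the two centered vectors to apply \eqref{eq:re-im-parts} and pass to the contraction $\mathbb{P}^\mu_\psi\hat A\ket{\psi}\,\g_{\mu\nu}\,\mathbb{P}^\nu_\psi\ii\hat B\ket{\psi}$, and then concludes with \eqref{eq:E-gradient} and the single Kähler identity $\J=-\J^{-1}=\Om\g$. You share all the essential ingredients (the gradient relation, $\Om=\J\G$, the tangency hypothesis), but you package the crux differently: you meet in the middle by reducing both sides to imaginary parts of Hilbert-space overlaps, $\mathrm{Im}\braket{\mathbb{P}_\psi A'|\mathbb{P}_\psi B'}$ versus $\mathrm{Im}\braket{A'|B'}$, and prove their equality as a standalone lemma via Proposition~\ref{cor:Kaehler-property}: on a Kähler manifold $\ii\ket{A'}$ is again tangent, so the residual $\mathrm{Re}\braket{\ii A'|B'_\perp}$ vanishes by orthogonality with respect to $\mathrm{Re}\braket{\cdot|\cdot}$. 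This buys two things the paper's terse computation leaves implicit: the role of the Kähler hypothesis is exhibited as a concrete obstruction term (exactly what survives on a non-Kähler family), and the $A\leftrightarrow B$ symmetry behind the ``or'' in the hypothesis is handled explicitly by antisymmetry, whereas the paper's coordinate expression $\mathbb{P}^\nu_\psi\ii\hat B\ket{\psi}=\J^\nu{}_\rho\G^{\rho\sigma}\partial_\sigma B$ silently relies on the complex linearity $\mathbb{P}_\psi\ii=\ii\mathbb{P}_\psi$ of the projector (or on $\hat B$ itself preserving $\mathcal{M}$). The price is a second invocation of the Kähler property (your step 1 identity $\g_{\mu\alpha}\Om^{\mu\nu}\g_{\nu\beta}=-\om_{\alpha\beta}$), which the paper needs only once; both uses are legitimate, and the two arguments are otherwise of comparable length and generality.
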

\begin{proof}
We compute
\begin{align}
\begin{split}
    \ii \tfrac{\braket{\psi|[ \hat{A}, \hat{B}]|\psi}}{\braket{\psi|\psi}}&=\tfrac{2\,\mathrm{Re}\braket{\psi| (\hat{A}-\braket{\hat A}) \ii (\hat{B}-\braket{\hat B})|\psi}}{\braket{\psi|\psi}}\,.
\end{split}
\end{align}
As one of the vectors $(\hat{A}-\braket{\hat{A}})\ket{\psi}$ or $(\hat{B}-\braket{\hat{B}})\ket{\psi}$ lies in the tangent space $\mathcal{T}_\psi\mathcal{M}$, \eqref{eq:re-im-parts} applies, giving 
\begin{align}
\begin{split}
    \hspace{-3mm}\ii \tfrac{\braket{\psi|[ \hat{A}, \hat{B}]|\psi}}{\braket{\psi|\psi}}&=\mathbb{P}^\mu_\psi\hat{A}\ket{\psi}\, \g_{\mu\nu}\, \mathbb{P}^\nu_\psi\ii\hat{B}\ket{\psi}\\
    &=\partial_\nu A \; \J^\nu{}_\rho \G^{\rho\sigma}\partial_\sigma B=\partial_\nu A \;\Om^{\nu\sigma}\partial_\sigma B \,,
\end{split}
\end{align}
where we used~\eqref{eq:E-gradient} and $\J=-\J^{-1}=\Om\g$ for a Kähler manifold.
\end{proof}

For $\mathcal{M}=\mathcal{P}(\mathcal{H})$, above conditions are clearly met for any Hermitian operators $\hat{A}$ and $\hat{B}$. For a general Kähler submanifold $\mathcal{M}\subset\mathcal{P}(\mathcal{H})$, however, the validity of~\eqref{eq:bracket-commutator} depends on the choice of operators considered. On a submanifold which is not Kähler the statement is in general no longer valid.

\section{Variational methods}\label{sec:variational_methods}
\begin{table*}[t!]
	\centering
	\ccaption{Action principles}{We review the different action principles and how they relate to the respective manifolds.}
	\def\arraystretch{1.4}
	\begin{tabular}{C{4cm}||C{4.5cm}|C{4cm}|C{3.5cm}}
		\hspace{4cm}&\textbf{Lagrangian} & \textbf{McLachlan} & \textbf{Dirac-Frenkel} \\
		\hline
		\hline
		\textbf{Definition}  & $\mathbb{P_\psi}(\ii\tfrac{d}{dt}-\hat{H})\ket{\psi}=0$ & $\mathbb{P_\psi}(\tfrac{d}{dt}+\ii\hat{H})\ket{\psi}=0$ & both\\
 		\hline
 		\textbf{Kähler manifold}  & \multicolumn{3}{c}{always defined and all equivalent}\\
 		\hline
 		\textbf{Non-Kähler manifold}  & defined for chosen inverse $\Om$\hspace{3cm}(see proposition~\ref{prop:equation-of-motion}) & always defined\hspace{3cm}(see proposition~\ref{prop:equation-of-motion-ml}) & not defined\\
 		\hline
 		
 		\textbf{Advantage}  & energy conservation\hspace{2cm}(see proposition~\ref{prop:equation-of-motion}) & conservation of symmetries\hspace{3cm}(see proposition~\ref{prop:conservation-law}) & both\\
 		\hline
 		\textbf{Linearization around ground state}  & possible\hspace{7cm}(see section~\ref{sec:linearized-eom}) & not possible\hspace{7cm}(see section~\ref{sec:linearized-eom}) & possible
	\end{tabular}
	\label{tab:manifolds-and-actions}
\end{table*}

Having introduced the mathematical background in the previous section, we can now study how variational methods allow us to describe closed quantum systems approximately. Given a system defined by a Hilbert space $\mathcal{H}$ and a Hamiltonian $\hat{H}$, we assume that a choice of a variational manifold $\mathcal{M}\subset\mathcal{P}(\mathcal{H})$, as defined in the previous section, has been made and show how to
(A) describe real time dynamics,
(B) approximate excitation energies,
(C) compute spectral functions,
(D) search for approximate ground states.
In doing so, we will emphasize the differences that arise between the cases where the chosen variational manifold is or is not of the Kähler type.

Following the conventions introduced in section~\ref{sec:geometry}, we present a systematic and rigorous treatment of variational methods, for which section~\ref{sec:variational-principle-geometrical-representation} served as prelude with some simplifications as discussed after example~\ref{ex:tangent1}.

\subsection{Real time evolution}
\label{sec:real-time-evolution}
For what concerns real time evolution, we would like to approximate the Schrödinger equation~\eqref{eq:schro} on our variational manifold $\mathcal{M}$. There are different principles, used extensively in the literature, according to which this approximation can be performed. We will see that only in the case of Kähler manifolds they are all equivalent.

\subsubsection{Variational principles}
\label{sec:variational-principles}
Following the literature, we can define the following variational principles for $\ket{\psi}:=\ket{\psi(t)}$.

\textbf{Lagrangian action principle~\cite{kramer1980geometry}.} The most commonly used variational  principle relies on the Lagrangian action, already introduced in~\eqref{eq:lagrangian-normalised},
\begin{align}
    \mathcal{S}=\int_{t_i}^{t_f} \!\!\mathcal{L}\,dt=\int_{t_i}^{t_f} \!\!dt \; \mathrm{Re}\,\frac{\braket{\psi|(\ii\frac{d}{dt}-\hat{H})|\psi}}{\braket{\psi|\psi}}\,,
    \label{eq:schroedinger_action}
\end{align}
whose stationary solution satisfies
\begin{align}
    0&=\mathrm{Re}\braket{ \mathbb{Q}_{\psi}\delta\psi|(\ii\tfrac{d}{dt}-\hat{H})|\psi}
    \label{eq:stationary_action_condition}
\end{align}
for all times and all allowed variations $\ket{\delta\psi(t)}$ with $\mathbb{Q}_\psi\ket{\delta\psi}=\ket{\delta \psi} - \tfrac{{\braket{\psi|\delta\psi}}}{\braket{\psi|\psi}}\ket{\psi}$ from~\eqref{eq:Pperp}. This is equivalent to Schrödinger's equation on projective Hilbert space\footnote{The fact that the projector $\mathbb{Q}_{\psi}$ onto projective tangent space $\mathcal{H}^\perp_\psi$ appears, shows that the resulting dynamics is defined on projective Hilbert space, while global phase and normalization are left undetermined. We will explain how to recover the dynamics of phase and normalization in Section~\ref{sec:complex-phase}.}. On a variational manifold $\mathcal{M}\subset\mathcal{P}(\mathcal{H})$, \ie where we require $\mathbb{Q}_\psi \ket{\delta\psi(t)}\in \mathcal{T}_{\psi(t)}\mathcal{M}$ in~\eqref{eq:stationary_action_condition}, we instead have
\begin{align}
    \mathbb{P}_{\psi}\ii\,\tfrac{d}{dt}\ket{\psi}=\mathbb{P}_{\psi}\hat{H}\ket{\psi}\,.
    \label{eq:la_projected_evolution}
\end{align}
This gives rise to the equations of motion~\eqref{eq:real-time-eom} anticipated in Section~\ref{sec:variational-principle-geometrical-representation}, which we derive in Proposition~\ref{prop:equation-of-motion}. For a time-independent Hamiltonian, they always preserve the energy expectation value.

\textbf{McLachlan minimal error principle~\cite{mclachlan1964variational}.} Alternatively, we can try to minimize the error between the approximate trajectory and the true solution. As we do not know the latter, we cannot compute the total error, but at least we can quantify the local error in state norm
\begin{align}
    \left\lVert\tfrac{d}{dt}\ket{\psi}-(-\ii\hat{H})\ket{\psi}\right\rVert\,,
\end{align}
due to imposing that $\tfrac{d}{dt}\ket{\psi(x)}$ represents a variation tangent to the manifold, \ie $\mathbb{Q}_\psi\tfrac{d}{dt}\ket{\psi(x)}\in\mathcal{T}_{\psi}\mathcal{M}$. It is minimized by the projection
\begin{align}
    \mathbb{Q}_{\psi}\tfrac{d}{dt}\ket{\psi}=-\mathbb{P}_{\psi}\ii\hat{H}\ket{\psi}\,.\label{eq:ml_rt_projected_evolution}
\end{align}
This gives rise to the equations of motion~\eqref{eq:ml_sec2} anticipated in Section~\ref{sec:variational-principle-geometrical-representation}, which we derive in Proposition~\ref{prop:equation-of-motion-ml}.
The resulting equations of motion only agree with the Lagrangian action if $\mathcal{M}$ is a Kähler manifold. Otherwise, they may not preserve the energy expectation value.

\textbf{Dirac-Frenkel variational principle~\cite{dirac1930note,frenkel1935wave}.} Another variational principle requires
\begin{equation}
    \braket{\delta\psi|(\ii\tfrac{d}{dt}-\hat{H})|\psi}=0
    \label{eq:Dirac-Frenkel}
\end{equation}
for all allowed variations $\ket{\delta\psi(t)}$. It is easy to see that the real and imaginary parts of~\eqref{eq:Dirac-Frenkel} are equivalent to~\eqref{eq:la_projected_evolution} and~\eqref{eq:ml_rt_projected_evolution} respectively. Therefore, this principle is well-defined (and equivalent to the other two) only in the cases in which they are equivalent between themselves, that is, as we will see, if and only if $\mathcal{M}$ is a Kähler manifold. Otherwise, the resulting equations will be overdetermined.

Expressing equations~\eqref{eq:la_projected_evolution} and~\eqref{eq:ml_rt_projected_evolution} in coordinates leads to flow equations for the manifold parameters $x(t)$. We can then define a real time evolution vector field $\mathcal{X}^\mu$ everywhere on $\mathcal{M}$, such that
\begin{align}
    \frac{dx^\mu}{dt}=\mathcal{X}^\mu(x)\,.\label{eq:X-field}
\end{align}
Integrating such equations defines the flow map $\Phi_t$ that maps an initial set of coordinates $x^\mu(0)$ to the values $x^\mu(t)$ that they assume after evolving for a time $t$.

In the case of the Lagrangian action principle, the vector field $\mathcal{X}$ takes the form given in the following proposition. A similar derivation was also considered in~\cite{kramer1980geometry}.

\begin{proposition}\label{prop:equation-of-motion}
The real time evolution projected according to the \textbf{Lagrangian action principle}~\eqref{eq:la_projected_evolution} is
\begin{align}
    \frac{dx^\mu}{dt}\equiv\mathcal{X}^\mu=-\Om^{\mu\nu}(\partial_\nu E)\,.\quad\normalfont{\textbf{(Lagrangian)}}\label{eq:projected-real-time-evolution}
\end{align}
where $E(x)$ is the energy function, defined in the context of equation~\eqref{eq:def-expectation}. Such evolution always conserves the energy expectation value.
\end{proposition}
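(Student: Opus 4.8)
The plan is to rewrite the projected equation~\eqref{eq:la_projected_evolution} in the local frame $\{\ket{V_\mu}\}$ and read off the flow equation for $x^\mu(t)$. First I would expand $\tfrac{d}{dt}\ket{\psi}=\dot{x}^\nu\partial_\nu\ket{\psi}$ and project with $\mathbb{P}^\mu_\psi$. The essential preliminary observation is that $\mathbb{P}_\psi$ annihilates the whole ray spanned by $\ket{\psi}$ over $\mathbb{C}$: since $\braket{V_\nu|\psi}=0$ by the construction $\ket{V_\nu}=\mathbb{Q}_\psi\partial_\nu\ket{\psi}$, one checks both $\mathbb{P}^\mu_\psi\ket{\psi}=0$ and $\mathbb{P}^\mu_\psi\ii\ket{\psi}=0$, hence $\mathbb{P}^\mu_\psi(c\ket{\psi})=0$ for every complex $c$ (decomposing $c$ into real and imaginary parts and using real-linearity of $\mathbb{P}_\psi$). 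Consequently the part of $\partial_\nu\ket{\psi}$ along $\ket{\psi}$, which is a complex multiple of $\ket{\psi}$, drops out after projection, giving $\mathbb{P}^\mu_\psi\,\ii\,\tfrac{d}{dt}\ket{\psi}=\dot{x}^\nu\,\mathbb{P}^\mu_\psi\ii\ket{V_\nu}=\J^\mu{}_\nu\,\dot{x}^\nu$ by~\eqref{eq:restricted_J}.

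For the right-hand side I would invoke the gradient relation for observables specialized to $\hat{A}=\hat{H}$, namely $\mathbb{P}^\mu_\psi\hat{H}\ket{\psi}=\G^{\mu\nu}(\partial_\nu E)$, which follows from~\eqref{eq:E-gradient} upon contracting with $\G$. Equating the two sides of~\eqref{eq:la_projected_evolution} in components then yields $\J^\mu{}_\nu\dot{x}^\nu=\G^{\mu\nu}\partial_\nu E$. Substituting $\J^\mu{}_\nu=-\G^{\mu\sigma}\om_{\sigma\nu}$ and contracting with the invertible metric $\g$ (with $\G\g=\id$) clears the common factor $\G$ and produces $\om_{\mu\nu}\dot{x}^\nu=-\partial_\mu E$. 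Applying $\Om$, which satisfies $\Om\om=\id$ on the relevant subspace, delivers the claimed equation $\dot{x}^\mu=-\Om^{\mu\nu}\partial_\nu E$.

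Energy conservation then follows almost immediately: I would compute $\tfrac{d}{dt}E=\dot{x}^\mu\partial_\mu E=-\Om^{\mu\nu}(\partial_\nu E)(\partial_\mu E)$ and observe that this is the contraction of the antisymmetric tensor $\Om^{\mu\nu}$ with the symmetric tensor $(\partial_\mu E)(\partial_\nu E)$, which vanishes identically.

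The main obstacle is the non-Kähler degenerate case, where $\om$ fails to be invertible and $\Om$ is only the Moore--Penrose pseudo-inverse with respect to $\g$. Here the passage from $\om_{\mu\nu}\dot{x}^\nu=-\partial_\mu E$ to $\dot{x}^\mu=-\Om^{\mu\nu}\partial_\nu E$ needs care: one must restrict to the subspace $\overline{\mathcal{T}_\psi\mathcal{M}}$ on which $\om$ is invertible, argue that the surviving gradient lies in the range of $\om$ there, and confirm that the pseudo-inverse indeed acts as $\Om\om=\id$ on that subspace. One should also record that the pseudo-inverse of an antisymmetric form with respect to a positive-definite metric is itself antisymmetric, so the energy-conservation argument above carries over unchanged to the degenerate case.
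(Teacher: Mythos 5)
Your proposal is correct and follows essentially the same route as the paper's proof: expand $\tfrac{d}{dt}\ket{\psi}$ in the frame $\{\ket{V_\mu}\}$, use $\mathbb{P}_\psi\ket{\psi}=\mathbb{P}_\psi\ii\ket{\psi}=0$ to drop the ray component, identify the left side with $\J^\mu{}_\nu\dot{x}^\nu$ via~\eqref{eq:restricted_J}, apply the gradient relation~\eqref{eq:E-gradient} to the right side, invert, and conclude energy conservation from the antisymmetry of $\Om$. The only cosmetic difference is that you lower an index with $\g$ to reach $\om_{\mu\nu}\dot{x}^\nu=-\partial_\mu E$ before applying $\Om$, while the paper inverts $\J$ directly via $(\J^{-1})^\mu{}_\nu=-\Om^{\mu\rho}\g_{\rho\nu}$; your explicit handling of the degenerate case (and of the complex-linearity needed to kill $c\ket{\psi}$) is, if anything, slightly more careful than the paper's one-line remark.
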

\begin{proof}
From the definition~\eqref{eq:Videf} of the tangent space basis, we have 
\begin{align}
    \tfrac{d}{dt}\ket{\psi} &= \dot{x}^\mu \, \partial_\mu\ket{\psi}=\dot{x}^\mu\ket{V_\mu} + \frac{\braket{\psi|\frac{d}{dt}\psi}}{\braket{\psi|\psi}}\ket{\psi} \,. \label{eq:dtpsi}
\end{align}
We substitute this in~\eqref{eq:la_projected_evolution} and then expand the projectors using the relations~\eqref{eq:projectors}, \eqref{eq:restricted_J} and $\mathbb{P}_\psi \ii \ket{\psi}=0$ to obtain
\begin{equation}
     \J^\mu{}_\nu \mathcal{X}^\nu =\G^{\mu\rho}\,\frac{2\mathrm{Re}\!\bra{V_\rho}\hat{H}\ket{\psi}}{\braket{\psi|\psi}}\,.
\end{equation}
We further simplify the expression by using~\eqref{eq:E-gradient} and $(\J^{-1})^\mu{}_\nu=-\Om^{\mu\rho}\g_{\rho\nu}$ from~\eqref{eq:restricted_J}. This leads to
\begin{equation}
    \mathcal{X}^\mu=(\J^{-1})^\mu{}_\nu \G^{\nu\sigma} \partial_\sigma E=-\Om^{\mu\nu}\partial_\nu E\,.
\end{equation}
To obtain the variation of the energy expectation value $E$ we compute directly
\begin{align}
    \frac{dE}{dt}=(\partial_\mu E)\frac{dx^\mu}{dt}=-(\partial_\mu E)\Om^{\mu\nu}(\partial_\nu E)=0\,,
\end{align}
where we used the antisymmetry of $\Om^{\mu\nu}$. If $\J$ (and thus also $\Om$) is not invertible, one needs to restrict to an appropriate subspace.
\end{proof}

The most important lesson of~\eqref{eq:projected-real-time-evolution} is that projected time evolution on a Kähler manifold is equivalent to Hamiltonian evolution with respect to energy function $E(x)$. As was pointed out in~\cite{kibble1979geometrization}, already the time evolution in full projective Hilbert space, \ie $\mathcal{M}=\mathcal{P}(\mathcal{H})$, follows the classical Hamilton equations if we use the natural symplectic form $\Om^{\mu\nu}$. Let us point out that the sign in equation~\eqref{eq:projected-real-time-evolution} depends on the convention chosen for the symplectic form, which in classical mechanics differs from the one adopted here. One further consequence of equation~\eqref{eq:projected-real-time-evolution} is that the real time evolution vector field $\mathcal{X}(x)$ vanishes in stationary points of the energy, that is points $x_0$ such that $\partial_\mu E(x_0)=0$. These points will therefore also be stationary points of the evolution governed by $\mathcal{X}$ as illustrated in figure\ \ref{fig:real-time-evolution}.

Let us here recall that, if $\om_{\mu\nu}$ is not invertible, $\Om^{\mu\nu}$ refers to the pseudo-inverse, as discussed in sections~\ref{sec:general_manifold} and~\ref{sec:kaehler-non-kaehler}. This convention means that in practice the Lagrangian evolution will always take place in the submanifold of $\mathcal{M}$ on which $\om$ is invertible. There may be pathological cases where $\om$ vanishes on the whole tangent space and therefore the Lagrangian principle does not lead to any evolution. In appendix~\ref{app:calculation-pseudo-inverse}, we present a method to efficiently compute the pseudo-inverse in practical applications.

In the case of the McLachlan minimal error principle, the evolution equations take the form given in the following proposition, which cannot be simplified further. It is also in general not true that this evolution conserves the energy or that has a stationary point in energy minima.

\begin{proposition}\label{prop:equation-of-motion-ml}
The real time evolution projected based on the \textbf{McLachlan minimal error principle}~\eqref{eq:ml_rt_projected_evolution} is
\begin{align}
    \frac{dx^\mu}{dt}\equiv\mathcal{X}^\mu=-\frac{2\G^{\mu\nu}\mathrm{Re}\!\braket{V_\nu|\ii\hat{H}|\psi}}{\braket{\psi|\psi}} \,.\quad\normalfont{\textbf{(McLachlan)}} \label{eq:ml_eom}
\end{align}
\end{proposition}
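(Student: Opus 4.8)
The plan is to start from the projected McLachlan condition~\eqref{eq:ml_rt_projected_evolution}, namely $\mathbb{Q}_{\psi}\tfrac{d}{dt}\ket{\psi}=-\mathbb{P}_{\psi}\ii\hat{H}\ket{\psi}$, and simply expand both sides in the tangent-space basis $\{\ket{V_\mu}\}$ using the explicit form of the projectors. The derivation will closely parallel the proof of Proposition~\ref{prop:equation-of-motion}, but crucially it will \emph{not} produce a factor of $\J$, which is why the result cannot be simplified to a gradient flow. First I would apply the decomposition~\eqref{eq:dtpsi}, which gives $\tfrac{d}{dt}\ket{\psi}=\dot{x}^\mu\ket{V_\mu}+\tfrac{\braket{\psi|\frac{d}{dt}\psi}}{\braket{\psi|\psi}}\ket{\psi}$; since $\mathbb{Q}_\psi$ annihilates the component proportional to $\ket{\psi}$ and fixes the $\ket{V_\mu}$ (because $\mathbb{Q}_\psi\ket{V_\mu}=\ket{V_\mu}$), the left-hand side collapses to $\mathbb{Q}_\psi\tfrac{d}{dt}\ket{\psi}=\dot{x}^\mu\ket{V_\mu}$.

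Next I would handle the right-hand side. Using the tangent-space projector written in component form, $\mathbb{P}^\mu_\psi=\tfrac{2\G^{\mu\nu}\mathrm{Re}\!\bra{V_\nu}}{\braket{\psi|\psi}}$ from~\eqref{eq:projectors}, I would write $\mathbb{P}_\psi\ii\hat{H}\ket{\psi}=\ket{V_\mu}\mathbb{P}^\mu_\psi\ii\hat{H}\ket{\psi}=\ket{V_\mu}\tfrac{2\G^{\mu\nu}\mathrm{Re}\!\braket{V_\nu|\ii\hat{H}|\psi}}{\braket{\psi|\psi}}$. Equating the two expansions and matching the coefficients of the basis vectors $\ket{V_\mu}$ — which is legitimate precisely because $\{\ket{V_\mu}\}$ is a basis (or, in the degenerate case, because we work on the reduced tangent subspace where the $\ket{V_\mu}$ are independent) — yields directly
\begin{align}
    \mathcal{X}^\mu=\dot{x}^\mu=-\frac{2\G^{\mu\nu}\mathrm{Re}\!\braket{V_\nu|\ii\hat{H}|\psi}}{\braket{\psi|\psi}}\,,
\end{align}
which is the claimed formula.

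The step requiring the most care is the matching of coefficients, where one must be sure that the only surviving term in $\mathbb{Q}_\psi\tfrac{d}{dt}\ket{\psi}$ is the tangent part; this is where the role of $\mathbb{Q}_\psi$ (and the identity $\mathbb{P}_\psi=\mathbb{P}_\psi\mathbb{Q}_\psi$) is essential. I expect the main conceptual obstacle to be not the algebra — which is short — but rather verifying that no further simplification is possible: one should observe that the expression $\mathrm{Re}\!\braket{V_\nu|\ii\hat{H}|\psi}$ cannot be rewritten as a gradient $\partial_\nu E$ of the energy function, because the imaginary unit sits between $\bra{V_\nu}$ and $\hat{H}\ket{\psi}$ rather than acting on $\ket{V_\nu}$, so the reduction $\J^\mu{}_\nu$ used in the Lagrangian case does not arise here. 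This is exactly the asymmetry between the two projections in~\eqref{eq:schro-proj} on a non-Kähler manifold, and it is the reason energy conservation and the stationarity of energy minima generically fail for the McLachlan evolution, unlike in Proposition~\ref{prop:equation-of-motion}.
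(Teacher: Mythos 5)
Your proof is correct and takes essentially the same route as the paper's: expand $\tfrac{d}{dt}\ket{\psi}$ via~\eqref{eq:dtpsi}, let $\mathbb{Q}_\psi$ kill the component along $\ket{\psi}$, expand $\mathbb{P}_\psi$ in components via~\eqref{eq:projectors}, and match coefficients in the tangent basis to obtain $\dot{x}^\mu=\mathbb{P}^\mu_\psi(-\ii\hat{H}\ket{\psi})$. Your added remarks on linear independence of the $\ket{V_\mu}$ and on why no factor of $\J$ (hence no gradient form) arises are consistent with the paper's own discussion surrounding the proposition.
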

\begin{proof}
By substituting~\eqref{eq:Videf} in~\eqref{eq:ml_rt_projected_evolution}, analogously to what was done in~\eqref{eq:dtpsi}, we have
\begin{align}
   \dot x^\mu&= \mathbb{P}^\mu_{\psi} (-\ii \hat{H} \ket{\psi})\,, \label{eq:ml_with_projector} 
\end{align}
from which the proposition follows by expanding the projector according to~\eqref{eq:projectors}.
\end{proof}

To perform real time evolution in practice, either based on~\eqref{eq:projected-real-time-evolution} for Lagrangian evolution or based on~\eqref{eq:ml_eom} for McLachlan evolution, we will typically employ a numerical integration scheme~\cite{hairer2006geometric,lubich2008quantum} to evolve individual steps. It is generally hard to get rigorous bounds on the resulting error that increases over time, but in certain settings there still exist meaningful estimates~\cite{martinazzo2020local}. Let us now relate the different variational principles, which has also been discussed in~\cite{broeckhove1988equivalence}.

\begin{proposition}\label{prop:equivalence-real-time}
The Lagrangian, the McLachlan and the Dirac-Frenkel variational principle are equivalent if the variational family is Kähler.
\end{proposition}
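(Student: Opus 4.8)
The plan is to show that on a Kähler manifold the Lagrangian equation~\eqref{eq:la_projected_evolution} and the McLachlan equation~\eqref{eq:ml_rt_projected_evolution} produce the same vector field, and that the Dirac-Frenkel condition~\eqref{eq:Dirac-Frenkel} is simultaneously satisfiable and hence reduces to both. The essential input is Proposition~\ref{cor:Kaehler-property}, which states that on a Kähler manifold the tangent space $\mathcal{T}_\psi\mathcal{M}$ is a complex subspace, so that multiplication by $\ii$ commutes with the projector, i.e.\ $\mathbb{P}_\psi\ii=\ii\mathbb{P}_\psi$. Equivalently, at the level of restricted structures, $\J^2=-\id$, so $\J$ is a genuine complex structure and $\Om$, $\g$, $\J$ satisfy the full compatibility relations with $\J=-\J^{-1}=\Om\g$.

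First I would establish the equivalence of the Lagrangian and McLachlan principles. Starting from the Lagrangian form~\eqref{eq:la_projected_evolution}, namely $\mathbb{P}_\psi\ii\tfrac{d}{dt}\ket{\psi}=\mathbb{P}_\psi\hat{H}\ket{\psi}$, I would use $\mathbb{P}_\psi\ii=\ii\mathbb{P}_\psi$ (Proposition~\ref{cor:Kaehler-property}) to pull the $\ii$ through the projector on the left-hand side, obtaining $\ii\,\mathbb{P}_\psi\tfrac{d}{dt}\ket{\psi}=\mathbb{P}_\psi\hat{H}\ket{\psi}$. Since $\tfrac{d}{dt}\ket{\psi}$ differs from a tangent vector only by a multiple of $\ket{\psi}$ (see~\eqref{eq:dtpsi}), and $\mathbb{P}_\psi$ annihilates that $\ket{\psi}$-component, we have $\mathbb{P}_\psi\tfrac{d}{dt}\ket{\psi}=\mathbb{Q}_\psi\tfrac{d}{dt}\ket{\psi}$ restricted to the tangent space; multiplying through by $-\ii$ and using $-\ii^2=1$ yields exactly $\mathbb{Q}_\psi\tfrac{d}{dt}\ket{\psi}=-\mathbb{P}_\psi\ii\hat{H}\ket{\psi}$, which is the McLachlan equation~\eqref{eq:ml_rt_projected_evolution}. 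The steps are reversible, so the two projected equations coincide. Concretely, one can also verify this at the level of the explicit vector fields: the Lagrangian field is $\mathcal{X}^\mu=-\Om^{\mu\nu}\partial_\nu E$ from Proposition~\ref{prop:equation-of-motion}, and expanding the McLachlan field~\eqref{eq:ml_eom} using $\mathrm{Im}\braket{V_\nu|\hat H|\psi}$ together with~\eqref{eq:E-gradient} and the Kähler relation $\Om^{\mu\nu}=-\G^{\mu\sigma}\J^\nu{}_\sigma$ (equivalently $\J=\Om\g$) reproduces the same expression.

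Finally I would treat Dirac-Frenkel. As noted in the text after~\eqref{eq:Dirac-Frenkel}, the real and imaginary parts of $\braket{\delta\psi|(\ii\tfrac{d}{dt}-\hat H)|\psi}=0$ are precisely the Lagrangian and McLachlan conditions. In the non-Kähler case these are genuinely independent constraints and generically cannot both hold, leaving the system overdetermined. In the Kähler case, however, the first part of the argument shows the two conditions are equivalent, so imposing one automatically enforces the other; hence the complex equation~\eqref{eq:Dirac-Frenkel} is consistent and its unique solution is the common vector field. This completes the chain Lagrangian $\Leftrightarrow$ McLachlan $\Leftrightarrow$ Dirac-Frenkel. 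I expect the main obstacle to be purely bookkeeping: carefully handling the $\ket{\psi}$-direction that $\mathbb{P}_\psi$ and $\mathbb{Q}_\psi$ project out, so that pulling $\ii$ through the projector does not secretly reintroduce a spurious longitudinal component. The conceptual work is entirely carried by Proposition~\ref{cor:Kaehler-property}; once complex-linearity of $\mathbb{P}_\psi$ is invoked, the equivalence is essentially immediate.
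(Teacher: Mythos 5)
Your proof is correct and takes essentially the same route as the paper's: both rest on Proposition~\ref{cor:Kaehler-property} (complex-linearity of $\mathbb{P}_\psi$, \ie $\mathbb{P}_\psi\ii=\ii\mathbb{P}_\psi$ on Kähler manifolds) to show that the Lagrangian and McLachlan equations, which differ only by an overall factor of $\ii$, coincide, and both then dispose of Dirac-Frenkel by identifying its real and imaginary parts with these two conditions. Your additional bookkeeping step, $\mathbb{P}_\psi\tfrac{d}{dt}\ket{\psi}=\mathbb{Q}_\psi\tfrac{d}{dt}\ket{\psi}$ because $\mathbb{P}_\psi$ annihilates the longitudinal $\ket{\psi}$-component, merely makes explicit what the paper's factor-of-$\ii$ argument leaves implicit.
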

\begin{proof}
To prove the statement, it is sufficient to see that equations~\eqref{eq:la_projected_evolution} and~\eqref{eq:ml_rt_projected_evolution} can be written simply as applying the tangent space projector $\mathbb{P}_{\psi}$ to two different forms of the Schrödinger equation, \ie
\begin{align}
    &\mathrm{\textbf{Lagrangian:}}& \mathbb{P}_{\psi}(\ii\tfrac{d}{dt}-\hat{H})\ket{\psi}&=0 \label{eq:schroedinger-i-left}\\
    &\mathrm{\textbf{McLachlan:}}& \mathbb{P}_{\psi}(\tfrac{d}{dt}+\ii\hat{H})\ket{\psi}&=0. \label{eq:schroedinger-i-right}
\end{align}
These two forms only differ by a factor of $\ii$. However, as we discussed in proposition~\ref{cor:Kaehler-property}, one equivalent way to define the Kähler property of our manifold is that multiplication by $\ii$ commutes with the projector $\mathbb{P}_{\psi}$. Therefore, if the manifold is Kähler, an imaginary unit can be factored out of equations~\eqref{eq:schroedinger-i-left} and~\eqref{eq:schroedinger-i-right} making them coincide. If, on the other hand, the manifold is non-Kähler, this operation is forbidden and they are in general not equivalent.
\end{proof}

As discussed in Section~\ref{sec:kaehler-non-kaehler}, if the chosen manifold does not respect the Kähler condition, we always have the choice to locally restrict ourselves to consider only a subset of tangent directions with respect to which the manifold is again Kähler, \ie $\overline{\overline{\mathcal{T}_\psi\mathcal{M}}}$. Then both principles will again give the same equation of motion, which will have the same form as~\eqref{eq:projected-real-time-evolution} where we just replace $\Om^{\mu\nu}$ with $\overline{\overline{\Om}}{}^{\mu\nu}$, which will conserve the energy and have stationary points in the minima of the energy. We will refer to this procedure as \textit{Kählerization}.

\begin{figure}
    \centering
	\begin{tikzpicture}[scale=.7]
	\manifold[black,thick,fill=white,fill opacity=0.9]{-1,0}{10,5}
	\draw (10,5) node[right]{$\mathcal{M}$};
	\begin{scope}[shift={(-.6,-.3)}]
	    \draw[thick,gray] (5,1.9) -- (7,2.4) -- (8,4.6) node[right]{$\mathcal{T}_{\psi}\mathcal{M}$} -- (6,4.1) -- cycle;
	    \draw[dotted] (5.78, 5.1) -- node[fill=white,fill opacity=.9,rounded corners=2pt,inner sep=1pt,xshift=-1mm]{$\mathbb{P}_{\psi}$} (5.78, 1.75);
		\draw[->,very thick,blue] (6.5, 3.25) -- (5.78, 1.75) node[above,rectangle,fill opacity=.9,fill=white,rounded corners=2pt,inner sep=1pt,yshift=2mm,xshift=2mm]{$\mathcal{X}^\mu=\mathbb{P}^\mu_{\psi}(-\ii\hat{H})\ket{\psi}$};
		\draw[->,very thick] (6.5, 3.25) -- (5.78, 5.1) node[above,fill=white,fill opacity=.9,rounded corners=2pt,inner sep=1pt]{$-\ii\hat{H}\ket{\psi}$};
		\fill (6.5, 3.25) node[above,xshift=8pt]{$\psi$} circle (2pt);
	\end{scope}
	\begin{scope}[shift={(.5,.8)},scale=1.2]
		\draw[blue,decoration={markings, mark=at position 0.4 with {\arrow{<}}},postaction={decorate}] (1.3,1.2) ellipse [x radius = 4mm, y radius = 3mm,rotate=-45];
		\draw[blue,decoration={markings, mark=at position 0.365 with {\arrow{<}}},postaction={decorate}] (1.3,1.2) ellipse [x radius = 5mm, y radius = 4mm,rotate=-45];
		\draw[blue,decoration={markings, mark=at position 0.35 with {\arrow{<}}},postaction={decorate}] (1.3,1.2) ellipse [x radius = 6mm, y radius = 5mm,rotate=-45];
		\draw[blue,decoration={markings, mark=at position 0.335 with {\arrow{<}}},postaction={decorate}] (1.3,1.2) ellipse [x radius = 7mm, y radius = 6mm,rotate=-45];
		\fill (1.3,1.2) node[left,xshift=-4pt,fill=white,rounded corners=2pt,inner sep=1pt]{$\psi_0$} circle (2pt);
	\end{scope}
	\end{tikzpicture}
    \ccaption{Real time evolution}{We illustrate real time evolution on a variational manifold $\mathcal{M}$ according to the Dirac-Frenkel variational principle (where Lagrangian and McLachlan principles coincide). The time evolution vector $-\ii\hat{H}\ket{\psi}$ at a state $\psi$ is orthogonally projected through $\mathbb{P}_\psi$ onto the variational manifold $\mathcal{M}$ to define the vector field $\mathcal{X}^\mu$. We further indicate how real time evolution near a fixed point $\psi_0$ follows approximately circles or ellipses.}
    \label{fig:real-time-evolution}
\end{figure}

We can compute explicitly how the vector fields of the Lagrangian and McLachlan variational principles differ. For this, we only consider the subspaces, defined in Proposition~\ref{prop:J-standard-form}, in which the complex structure fails to be Kähler, \ie where we have
\begin{align}
    J\equiv\bigoplus_{i}\begin{pmatrix}
     & c_i\\
    -c_i & 
    \end{pmatrix}\,,
\end{align}
as in~\eqref{eq:J-standard}. On the enlarged tangent space including all vectors $\ii\ket{V_\mu}$, the enlarged complex structure
\begin{align}
\small
    \hspace{-2mm}\check{J}\equiv\bigoplus_i\begin{pmatrix}
    & c_i & & \sqrt{1-c_i^2}\\
    -c_i & & \sqrt{1-c_i^2} &\\
    & -\sqrt{1-c_i^2} & & c_i\\
    -\sqrt{1-c_i^2} & &-c_i & 
    \end{pmatrix}
\end{align}
clearly satisfies $\check{J}^2=-\id$. For the time evolution vector field $\check{\mathcal{X}}\equiv\oplus_i(a_i,b_i,\alpha_i,\beta_i)$ on the enlarged space, we find the two distinct restrictions
\begin{align}
\begin{split}
    \mathcal{X}_{\mathrm{Lagrangian}}&=J^{-1}\mathbb{P}_\psi \check{J}\check{\mathcal{X}}\\
    &\equiv\oplus_i\left(a_i-\tfrac{\sqrt{1-c^2_i}}{c_i}\alpha_i,b_i+\tfrac{\sqrt{1-c^2_i}}{c_i}\beta_i\right)\,
\end{split}\\
\begin{split}
    \hspace{-9mm}\mathcal{X}_{\mathrm{McLachlan}}&=\mathbb{P}_\psi\check{\mathcal{X}}\equiv\oplus_i(a_i,b_i)\,,
\end{split}
\end{align}
associated to the Lagrangian and the McLachlan principle, respectively. We see explicitly that they agree for $c_i=1$, but also when $\alpha_i=\beta_i=0$.

\begin{example}\label{ex:comp-princ1}
We consider the variational family from example~\ref{ex:coherent-non-Kaehler} for a system with two bosonic degrees of freedom. We choose the Hamiltonian
\begin{align}
    \hat{H}=\tfrac{\epsilon_+(\hat{n}_1+\hat{n_2})+\epsilon_-[(\hat{n}_1-\hat{n_2})\cos{\phi}+(\hat{a}_1^\dagger\hat{a}_2+\hat{a}_1\hat{a}^\dagger_2)\sin{\phi}]}{2}\,,\label{eq:example-Hamiltonian}
\end{align}
where $\epsilon_1$ and $\epsilon_2$ are the excitation energies with $\epsilon_{\pm}=\epsilon_1\pm\epsilon_2$, while $\phi$ is a coupling constant, such that $\hat{H}=\epsilon_1\hat{n}_1+\epsilon_2\hat{n}_2$ for $\phi=0$. Figure~\ref{fig:comp-principles} shows the time evolution of the expectation values $\tilde{z}^\alpha\equiv(\tilde{q},\tilde{p})$ for the two operators
\begin{align}
    \hat{\tilde{q}}&=\tfrac{1}{\sqrt{2}}(\hat{b}^\dagger+\hat{b})\,,&
    \hat{\tilde{p}}&=\tfrac{\ii}{\sqrt{2}}(\hat{b}^\dagger-\hat{b})\,
\end{align}
where $\hat{b}$ was defined in~\eqref{def:b-example}. For $r=0$, the variational family is Kähler and the two variational principles give rise to the same evolution. For $r\neq0$, the two principles generally disagree. The explicit formulas can be efficiently derived using the framework of Gaussian states reviewed in section~\ref{sec:Gaussian}, where we reconsider the present scenario in example~\ref{ex:comp-principles}.
\end{example}

\begin{figure*}[t]
    \centering
    \begin{tikzpicture}
    \draw (0,0) node[inner sep=0pt]{\includegraphics[width=\linewidth]{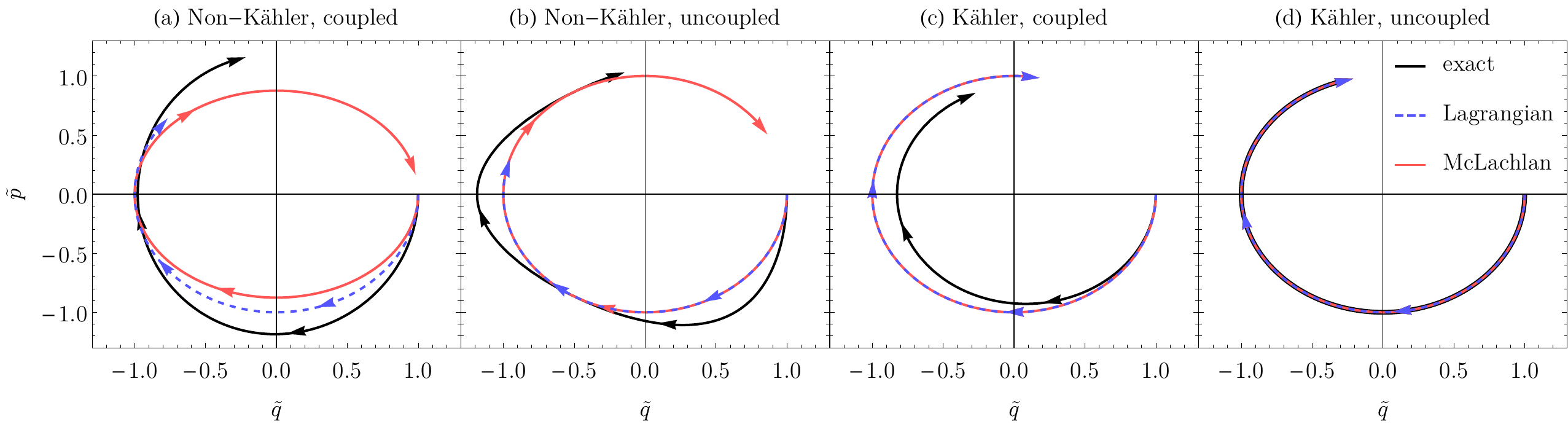}};
    \end{tikzpicture}
    \vspace{-5mm}
    \ccaption{Comparison of variational principles}{We illustrate how exact, Lagrangian and McLachlan evolution differ in example~\ref{ex:comp-princ1}. We choose $\epsilon_1=1$, $\epsilon_2=2$, initial conditions $\tilde{z}\equiv(1,0)$ and $r=0.3$ for non-Kähler, $r=0$ for Kähler, $\phi=0.3$ for coupled and $\phi=0$ for uncoupled. To indicate speed, we place an arrow at $t\in\{1.5,3,4.5\}$. (a) All trajectories differ, (b) Lagrangian and McLachlan give the same trajectories with different speed, (c) Lagrangian and McLachlan agree, (d) Langrangian and McLachlan become exact.}
    \label{fig:comp-principles}
\end{figure*}

\textbf{Kähler vs. non-Kähler.} On a Kähler manifold all three variational principles are well-defined and equivalent. They all give the same energy conserving equations of motion~\eqref{eq:projected-real-time-evolution}. On a non-Kähler manifold, only the Lagrangian and McLachlan variational principles are well-defined, but they give in general inequivalent equations of motion given by~\eqref{eq:projected-real-time-evolution} and~\eqref{eq:ml_eom}. Only the Lagrangian ones will manifestly conserve the energy and have stationary points in the minima of the energy. In table~\ref{tab:manifolds-and-actions}, we review advantages and drawbacks discussed in the following. While in most cases, the Lagrangian principle appears to be a natural choice, the McLachlan principle is often preferable if $\om$ is highly degenerate---in particular, if $\om=0$ its pseudo-inverse is $\Om=0$ and the evolution would vanish everywhere independent of $\hat{H}$, such that the McLachlan principle appears to be the better choice.

\subsubsection{Conserved quantities}
\label{sec:conserved-quantities}
Given the generator $\hat{A}$ of a symmetry of the Hamiltonian $\hat{H}$, \ie $[\hat{H},\hat{A}]=0$, the expectation value $A(t)=\bra{\psi_t}\hat{A}\ket{\psi_t}$ is necessarily preserved by unitary time evolution on the full Hilbert space
\begin{align}
    \ket{\psi_t}=U(t)\ket{\psi_0}=e^{-\ii\hat{H}t}\ket{\psi_0}\,.
\end{align}
We now consider if this continues to be true for projected time evolution on a manifold.

For a time-independent Hamiltonian $\hat{H}$, we have seen that the energy expectation value $E$ is always conserved by Lagrangian projected real time evolution. However, projected time evolution will not in general preserve expectation values of an operator $\hat{A}$ with $[\hat{H},\hat{A}]=0$. To guarantee this, one has to further require that $\hat{A}$ preserves the manifold.

\begin{proposition}\label{prop:conservation-law}
Given a variational manifold $\mathcal{M}$ and a Hermitian operator $\hat{A}$, such that $[\hat{H},\hat{A}]=0$ and $\hat{A}$ preserves the manifold in the sense of Definition~\ref{def:preserves}, \ie
\begin{align}
   \mathbb{Q}_\psi\hat{A}\ket{\psi}=(\hat{A}-\braket{\hat{A}})\ket{\psi}\in \mathcal{T}_{\psi}\mathcal{M} \quad \forall\psi\in\mathcal{M}\,,
    \label{eq:K-symmetric_manifold}
\end{align}
the expectation value $A(x(t))$, defined as in equation~\eqref{eq:def-expectation}, is preserved under real time evolution projected according to the \textbf{McLachlan} variational principle. It is also true for \textbf{Lagrangian} variational principle, if the two principles agree, \ie if the manifold is Kähler.
\end{proposition}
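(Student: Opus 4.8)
The plan is to show directly that $\tfrac{d}{dt}A(x(t))=0$ by differentiating the expectation-value function $A(x)$ of~\eqref{eq:def-expectation} along the McLachlan flow. Writing $\tfrac{dA}{dt}=(\partial_\mu A)\,\dot{x}^\mu$, I would substitute on one side the McLachlan equation of motion in its projector form~\eqref{eq:ml_with_projector}, $\dot{x}^\mu=\mathbb{P}^\mu_\psi(-\ii\hat{H}\ket{\psi})$, and on the other side the gradient identity~\eqref{eq:E-gradient} applied to $\hat{A}$, namely $\partial_\mu A=\g_{\mu\nu}\mathbb{P}^\nu_\psi\hat{A}\ket{\psi}$.

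The first real step is to turn this coordinate contraction into an intrinsic expression. Since $\g_{\mu\nu}$ is, by~\eqref{eq:gmunu-def}, precisely $\tfrac{2}{\braket{\psi|\psi}}\mathrm{Re}\braket{V_\mu|V_\nu}$, contracting the coordinate vectors $\mathbb{P}^\nu_\psi\hat{A}\ket{\psi}$ and $\mathbb{P}^\mu_\psi(-\ii\hat{H})\ket{\psi}$ with $\g_{\mu\nu}$ reproduces the real Hilbert inner product of the corresponding tangent vectors, giving
\[
  \frac{dA}{dt}=\frac{2}{\braket{\psi|\psi}}\,\mathrm{Re}\braket{\mathbb{P}_\psi\hat{A}\psi|\mathbb{P}_\psi(-\ii\hat{H})\psi}\,.
\]
At this stage the preservation hypothesis enters. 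Because $\mathbb{P}_\psi\ket{\psi}=0$ and, by Definition~\ref{def:preserves}, $(\hat{A}-\braket{\hat{A}})\ket{\psi}\in\mathcal{T}_\psi\mathcal{M}$, we have $\mathbb{P}_\psi\hat{A}\ket{\psi}=\mathbb{P}_\psi(\hat{A}-\braket{\hat{A}})\ket{\psi}=(\hat{A}-\braket{\hat{A}})\ket{\psi}$; that is, the left slot is already a genuine tangent vector.

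Next I would remove the surviving projector from the right slot. The key point is that $\mathbb{P}_\psi$, although \emph{not} Hermitian with respect to the complex inner product, \emph{is} self-adjoint with respect to the real pairing $\mathrm{Re}\braket{\cdot|\cdot}$ by virtue of being the orthogonal projector for that pairing; since the left argument is tangent, the projector can be moved onto it and then dropped. This leaves
\[
  \frac{dA}{dt}=\frac{2}{\braket{\psi|\psi}}\,\mathrm{Re}\braket{(\hat{A}-\braket{\hat{A}})\psi|(-\ii\hat{H})\psi}\,.
\]
Expanding with $\hat{A}^\dagger=\hat{A}$ and $\braket{\hat{A}}\in\mathbb{R}$, and using $\mathrm{Re}(-\ii\,z)=\mathrm{Im}(z)$, the term proportional to $\braket{\hat{A}}$ drops (being the imaginary part of a real number) and one is left with $\tfrac{2}{\braket{\psi|\psi}}\mathrm{Im}\braket{\psi|\hat{A}\hat{H}|\psi}=\tfrac{1}{\ii\braket{\psi|\psi}}\braket{\psi|[\hat{A},\hat{H}]|\psi}$, which vanishes by $[\hat{H},\hat{A}]=0$. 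The Lagrangian statement is then immediate: on a Kähler manifold Proposition~\ref{prop:equivalence-real-time} guarantees that the Lagrangian and McLachlan flows coincide, so conservation transfers unchanged.

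I expect the delicate step to be the projector manipulation, where one must carefully keep apart self-adjointness with respect to the real inner product (which holds and is exactly what makes the step legal) from Hermiticity with respect to the complex inner product (which fails, as the text emphasizes). It is also here that the preservation condition is indispensable: without it $\mathbb{P}_\psi\hat{A}\ket{\psi}$ could not be replaced by the full $(\hat{A}-\braket{\hat{A}})\ket{\psi}$, the imaginary unit would remain trapped behind the projection, and the commutator structure needed to invoke $[\hat{H},\hat{A}]=0$ would not emerge. This is genuinely a McLachlan-specific computation: the Poisson-bracket shortcut of Proposition~\ref{prop:bracket-commutator} is unavailable, since the McLachlan flow is not generated by $\Om$ and carries no Hamiltonian symplectic structure.
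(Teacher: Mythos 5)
Your proposal is correct and follows essentially the same route as the paper's own proof: differentiate $A$ along the McLachlan flow, combine the gradient identity~\eqref{eq:E-gradient} with~\eqref{eq:ml_with_projector}, use the preservation hypothesis to identify the restricted $\g$-contraction with the full Hilbert-space real inner product, and conclude via $\mathrm{Im}\braket{\psi|\hat{A}\hat{H}|\psi}\propto\braket{\psi|[\hat{A},\hat{H}]|\psi}=0$. The only difference is that you spell out explicitly (via self-adjointness of $\mathbb{P}_\psi$ with respect to $\mathrm{Re}\braket{\cdot|\cdot}$) the projector-removal step that the paper compresses into the remark that the restricted bilinear form coincides with the full Hilbert-space one.
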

\begin{proof}
We compute
\begin{align}
\begin{split}
    \hspace{-2mm}\frac{d}{dt}A(t)&=(\partial_\mu A) \, \frac{dx^\mu}{dt}= \mathbb{P}^\nu_{\psi}\hat{A}\ket{\psi} \g_{\nu\mu} \,\mathbb{P}^\mu_{\psi} (-\ii \hat{H} \ket{\psi}) \label{eq:conservation_g}\\
    &=\tfrac{2\mathrm{Re}\braket{\psi|(\hat{A}-\braket{\hat{A}}) (-\ii \hat{H})|\psi}}{\braket{\psi|\psi}}=\tfrac{\ii \braket{\psi|[ \hat{H}, \hat{A}]|\psi}}{\braket{\psi|\psi}}=0, 
\end{split}
\end{align}
where in the first line we used relation~\eqref{eq:E-gradient} for the gradient of $A$, the definition of McLachlan evolution~\eqref{eq:ml_rt_projected_evolution} and that $\mathbb{P}^\mu_{\psi}\braket{\hat{A}}\ket{\psi}=0$, in the second step we used that, thanks to the condition~\eqref{eq:K-symmetric_manifold}, the restricted bilinear form $\g$ in the first line coincides with the full Hilbert space one in the second line of~\eqref{eq:conservation_g}.
\end{proof}

This result only applies the McLachlan projected real time evolution, for which the equation of motion~\eqref{eq:ml_with_projector} holds. In the Lagrangian case, we would have
\begin{equation}
    \dot{A}=(\partial_\mu A) \, \mathcal{X}^\mu=-(\partial_\mu A) \,\Omega^{\mu\nu} \partial_\nu E =\{E,A\},
\end{equation}
which is in general not equal to $\ii \braket{\psi|[ \hat{H}, \hat{A}]|\psi}{\braket{\psi|\psi}}^{-1}$ on a non-Kähler manifold\footnote{For Kähler manifolds, as discussed in the context of Proposition~\ref{prop:bracket-commutator}, $\{F,G\}=\ii \braket{\psi|[\hat{F},\hat{G}]|\psi}{\braket{\psi|\psi}}^{-1}$ only holds if either $\hat{F}$ or $\hat{G}$ preserves the manifold.} and thus not necessarily zero.

We see here the main advantage of the \textit{Kählerization} procedure described in the previous section. Indeed, through \textit{Kählerization} we are able to define, even for general non-Kähler manifolds, a projected real time evolution that shares the desirable properties of both, the Lagrangian and the McLachlan projections, \ie it is a symplectic, energy preserving evolution with stationary points in the energy minima and at the same time preserves the expectation value of symmetry generators satisfying~\eqref{eq:K-symmetric_manifold}. Note that Kählerization may spoil the conservation laws of observables $\hat{A}$, for which $\mathbb{Q}_\psi\hat{A}\ket{\psi}$ does not lie in the Kähler subspace $\overline{\overline{\mathcal{T}_\psi\mathcal{M}}}$, in which case we will need to enforce conservation by hand, discussed next.

For operators $\hat{A}$ where~\eqref{eq:K-symmetric_manifold} is not satisfied, we have two options to correct for this:
\begin{itemize}
    \item[(a)] Enlarge the variational manifold $\mathcal{M}$, such that condition~\eqref{eq:K-symmetric_manifold} is satisfied.
    \item[(b)] Enforce conservation by hand, for which we modify the projected time evolution vector field $\mathcal{X}^\mu$.
\end{itemize}

While option (a) is typically more desirable, it requires creativity to find a suitable extension of a given family $\mathcal{M}$. Of course, if $\hat{A}$ is an important physical observable that is relevant to the problem, a manifold that does not preserve it may not be a good choice to approximate the system's behavior. In practice, however, it may still be worthwhile to check the predictions of an approximated time-evolution adopting (b).

This is done by adding a further projection of the real time evolution flow onto the subspace of the tangent space orthogonal (with respect to $\g$) to the direction $ \mathbb{P}^\mu_{\psi} \hat{A}\ket{\psi}=\G^{\mu\nu}\partial_\nu A$. This is equivalent to restricting ourselves to the submanifold
\begin{equation}
    \widetilde{\mathcal{M}}=\left\{\ket{\psi}\in\mathcal{M}\,\big|\, \tfrac{\braket{\psi|\hat{A}|\psi}}{\braket{\psi|\psi}}=A_0\right\}\subset\mathcal{M},
\end{equation}
where $A_0$ is the initial value $\braket{\psi(0)|\hat{A}|\psi(0)}{\braket{\psi(0)|\psi(0)}}^{-1}$. Note that this modified evolution may spoil other conservation laws (\eg energy) that were previously intact.

To preserve several quantities $\hat{A}_{I}$, we can project onto the subspace orthogonal to the span of $X_{I}=\mathbb{P}_{\psi} \hat{A}_{I}\ket{\psi}$. If we also want to preserve the Kähler property, we should choose $X_I=(\mathbb{P}_{\psi} \hat{A}_{I}\ket{\psi},\ii \mathbb{P}_{\psi} \hat{A}_{I}\ket{\psi})$. We can then define $\widetilde{g}_{IJ}=X_{I}^\mu \, \g_{\mu\nu}\, X_{J}^\nu$ to define the projector
\begin{align}
    \widetilde{P}^\mu{}_\nu=\delta^\mu{}_\nu-X_{I}^\mu\, \widetilde{G}^{IJ}X_{J}^\rho\, \g_{\rho\nu}\,,
\end{align}
where $\widetilde{G}^{IJ}$ is the inverse (or pseudo inverse, if not all vectors $X_I$ are linearly independent) of $\widetilde{g}_{IJ}$.

The modified Lagrangian evolution vector field $\widetilde{\mathcal{X}}^\mu$ is
\begin{align}
    \widetilde{\mathcal{X}}^\mu=-\widetilde{\Om}^{\mu\nu}(\partial_\nu E)\quad\text{with}\quad \widetilde{\Om}^{\mu\nu}=\widetilde{P}^\mu{}_\sigma\widetilde{P}^\nu{}_\rho\Om^{\sigma\rho}\,,\label{eq:mod-conserve-Lagrangian}
\end{align}
while for the McLachlan evolution, we find
\begin{align}
    \widetilde{\mathcal{X}}^\mu=\widetilde{P}^\mu{}_\nu\mathcal{X}^\nu\,,
\end{align}
where $\mathcal{X}^\mu$ represents the unmodified evolution vector field in the McLachlan case. It will conserve all expectation values $A_I(t)$ by construction. In the Lagrangian case also the energy will continue to be conserved by construction, which would need to be enforced by hand for the McLachlan case.

\textbf{Kähler vs. non-Kähler.}  On a non-Kähler manifold, where we have two inequivalent definitions of the evolution, only the one coming from the McLachlan principle preserves the expectation value of symmetry generators satisfying~\eqref{eq:K-symmetric_manifold}. Thus a key reason to Kählerize a non-Kähler manifold is to conserve these expectation values also in the Lagrangian evolution.

\subsubsection{Dynamics of global phase}
\label{sec:complex-phase}
Up to now we have always considered our variational manifolds $\mathcal{M}$ as submanifolds of projective Hilbert space and thus the tangent space $\mathcal{T}_\psi\mathcal{M}$ as a subspace of $\mathcal{H}^\perp_\psi$. This means all states are only defined up to a complex factor. In practice, our family $\psi(x)\in\mathcal{P}(\mathcal{H})$ will be described by a choice $\ket{\psi(x)}\in\mathclap{H}$, \ie for every set of parameters $x^\mu$, we will have a Hilbert space vector $\ket{\psi(x)}$ representing the quantum state $\psi(x)\in\mathcal{P}(\mathcal{H})$.

If the parametrization $x^\mu$ happens to contain the global phase or normalization of the state as an independent parameter, we are overparametrizing our family and the evolution equations~\eqref{eq:projected-real-time-evolution} or~\eqref{eq:ml_eom} will keep the evolution of some parameters undetermined leading to some gauge redundancy. This is due to the fact that normalization and phase do not change the quantum state $\ket{\psi(x)}$ represents and our equations of motion only determine the physical evolution of the quantum state and not of its Hilbert space representative.

We can include the time evolution of the global phase and the state normalization by extending our parametrization by defining
\begin{align}
    \ket{\Psi(x,\kappa,\varphi)}=e^{\kappa+\ii\varphi}\ket{\psi(x)}\,,
\end{align}
where $\kappa$ and $\varphi$ are two additional real parameters. If phase or normalization were already contained in $x^\mu$ this will lead to an overparametrization, but we have already explained how to take care of this in Section~\ref{sec:general_manifold}.

Then, on top of the real time evolution equations~\eqref{eq:la_projected_evolution} or~\eqref{eq:ml_rt_projected_evolution}, we obtain equations for these extra parameters by projecting Schrödinger's equation on the corresponding tangent space directions, \ie $\ket{V_\kappa}=\ket{\Psi}$ and $\ket{V_\varphi}=\ii\ket{\Psi}$, to find the two equations
\begin{align}
    \mathrm{Re}\braket{\Psi|(-\ii\tfrac{d}{dt}+\hat{H})|\Psi}=0\,,\quad\mathrm{Re}\braket{\Psi|\tfrac{d}{dt}+\ii\hat{H}|\Psi}=0\,. \label{eq:projection-on-psi-ipsi}
\end{align}
Equivalently, as anticipated in Section~\ref{sec:variational-principle-geometrical-representation}, we can use the Lagrangian action principle to find the same equations by extremizing the alternative action 
\begin{align}
    S=\int_{t_i}^{t_f} \!\!dt \; \mathrm{Re}\braket{\Psi(t)|(\ii\tfrac{d}{dt}-\hat{H})|\Psi(t)} \label{eq:action-prime}
\end{align}
for the full set of parameters $(x,\kappa,\varphi)$ rather than $\mathcal{S}$ from~\eqref{eq:schroedinger_action} for only $x$.

In both cases, the time evolution of $x^\mu(t)$ is unchanged, but we find the additional equations
\begin{align}
    \dot{\varphi}=\tfrac{\mathrm{Re}\braket{\psi|\ii\tfrac{d}{dt}|\psi}}{\braket{\psi|\psi}}-E(t)\quad\text{and}\quad \dot{\kappa}=-\tfrac{\mathrm{Re}\braket{\psi|\tfrac{d}{dt}|\psi}}{\braket{\psi|\psi}}\label{eq:evolution-phase-normalization}
\end{align}
relating the evolution of phase and normalization with $\ket{\psi(x(t))}$. Interestingly, the time evolution of $\kappa$ will ensure that $\ket{\Psi(x,\kappa,\varphi)}$ does not change normalization.

The procedure can be understood as follows. Global phase and normalization are conjugate parameters when considering Hilbert space as Kähler space, as can be seen from $\ket{V_\varphi}=\ii\ket{V_\kappa}$. When considering a variational manifold $\mathcal{M}\subset\mathcal{P}(\mathcal{H})$, we have the following options:
\begin{enumerate}
    \item When we are only interested in the time evolution of physical states $\psi$, we must project out the information about global phase and normalization using $\mathbb{P}^\mu_{\psi}$. Consequently, our evolution equations will not determine how to change global phase or normalization as this information is pure gauge. We followed this philosophy until the current section.
    \item When we are also interested in the time evolution of global phase and normalization, we can always extend $\mathcal{M}$ to include both phase and normalization as independent parameters. Given a generic parametrization $\ket{\psi(x)}$, we can extend it to $\ket{\Psi(x,\kappa,\varphi)}$ to ensure that it satisfies the Kähler property in the phase/normalization subspace. We can then find evolution equations for $\varphi$ and $\kappa$. This is what we explained in the current subsection.
\end{enumerate}
Finally, let us emphasize that using equations~\eqref{eq:projection-on-psi-ipsi} or extremizing action~\eqref{eq:action-prime} without first ensuring both phase and normalization are included as independent parameters may lead to unphysical results.

\begin{example}
We consider coherent states parametrized as $\ket{\psi(x)}=e^{\ii\varphi(x_1,x_2)}e^{(x_1+\ii x_2)\hat{a}^\dagger}\ket{0}$, where the states are not normalized due to $\braket{\psi(x)|\psi(x)}=e^{x_1^2+x_2^2}$ and we chose intentionally a phase $\varphi(x_1,x_2)$. We further consider the Hamiltonian $\hat{H}=\omega\hat{a}^\dagger\hat{a}$. The equation of motion on projective Hilbert space based on the action~\eqref{eq:L-action} are
\begin{align}
    \dot{x}_1=\omega x_2\quad\text{and}\quad\dot{x}_2=-\omega x_1\,.\label{eq:original}
\end{align}
However, if we use~\eqref{eq:action-prime}, we find the action
\begin{align}
    S=\int dt \left(\dot{x}_1x_2-\dot{x}_2x_1-\tfrac{\partial\varphi}{\partial x_1}\dot{x}_1-\tfrac{\partial\varphi}{\partial x_2}\dot{x}_2\right)e^{x_1^2+y^2_1}\,,
\end{align}
which leads to the equations of motion given by
\begin{align}
    (1+x^2_1+x^2_2)(\omega x_1+\dot{x}_2)&=(\tfrac{\partial\varphi}{\partial x_2}x_1-\tfrac{\partial\varphi}{\partial x_1}x_2)\dot{x}_2\,,\\
    (1+x^2_1+y^2_2)(\omega x_2-\dot{x}_1)&=(\tfrac{\partial\varphi}{\partial x_1}x_2-\tfrac{\partial\varphi}{\partial x_2}x_1)\dot{x}_1\,.
\end{align}
They only agree with~\eqref{eq:original} if $\tfrac{\partial\varphi}{\partial x_2}x_1-\tfrac{\partial\varphi}{\partial x_1}x_2=0$.
\end{example}

\subsection{Excitation spectra}
\label{sec:excitation-spectra}
We would like to use a variational family $\mathcal{M}$ to approximate the excitation energies $E_i$ of some eigenstates $\ket{E_i}$ of the Hamiltonian. Typically, we are interested in low energy  eigenstates, that is eigenstates close to the groundstate of the Hamiltonian. Suppose then that on $\mathcal{M}$ we are able to find an approximate ground state $\ket{\psi_0}$, that is the state with energy $\omega_0$ that represents the global energy minimum on $\mathcal{M}$ (we will describe a method for finding such state in Section~\ref{sec:imaginary-time-evolution}). Then there are two distinct approaches of deriving a spectrum: the projection of the Hamiltonian and the linearization of the equations of motion.

\subsubsection{Projected Hamiltonian}
Given a tangent space $\mathcal{T}_{\psi_0}\mathcal{M}$ at a state $\psi_0\in\mathcal{M}$, we can approximate the excitation spectrum of the Hamiltonian $\hat{H}$ from its projection onto $\mathcal{T}_{\psi_0}\mathcal{M}$.

Based on the two action principles (Lagrangian vs. McLachlan), we define two different projections given by
\begin{equation}
\begin{aligned}\label{eq:projected-Hamiltonian}
        \HH^\mu{}_\nu&=\mathbb{P}^\mu_{\psi_0}\hat{H}\ket{V_\nu}\,,&&\textbf{(Lagrangian)}\\
        \R^\mu{}_\nu&=-\mathbb{P}^\mu_{\psi_0}\ii\hat{H}\ket{V_\nu}\,.&&\textbf{(McLachlan)}
\end{aligned}
\end{equation}
On a Kähler manifold $\mathcal{M}$, we will have $\R^\mu{}_\nu=-\J^\mu{}_\sigma \HH^{\sigma}{}_\nu$ and $[\J,\HH]=[\R,\HH]=0$. In this case, $\HH$ represents a Hermitian operator on tangent space (which is complex sub Hilbert space) and $\R$ is anti-Hermitian. In this case, the eigenvalues of $\HH$ are real and come in pairs $(\omega_\ell,\omega_\ell)$, while the ones of $\R$ come are purely imaginary and come in conjugate pairs $(\ii\omega_\ell,-\ii\omega_\ell)$. The two associated eigenvectors of $\R$ are related by multiplication of $\J$ and also span the respective eigenspace of $\HH$.

On a non-Kähler manifold $\mathcal{M}$, the relation between $\HH$ and $\R$ as well as their respective spectra is non-trivial. The eigenvalues $\omega_\ell$ of $\HH^\mu{}_\nu$ will still be real, while the ones of $\R^\mu{}_\nu$ continue to appear in conjugate pairs. The latter also implies that for an odd-dimensional manifold $\R^\mu{}_\nu$ must have a vanishing eigenvalue, which is a pure artefact of the projection.
    
The projected Hamiltonian $\HH^\mu{}_\nu$ represents the full Hamiltonian restricted to the tangent space. The Courant–Fischer–Weyl min-max principle states that the eigenvalues $E_\ell$ of $\hat{H}$ and the eigenvalues $\omega_\ell$ of $\HH^\mu{}_\nu$ satisfy
\begin{align}
    E_\ell\leq \omega_\ell\leq E_{N-n+\ell}\,,
\end{align}
with $N=\dim_{\mathbb{R}}\mathcal{H}$ and $n=\dim_{\mathbb{R}}{\mathcal{T}_{\psi_0}\mathcal{M}}$, where we assume that all eigenvalues are sorted and appear with their multiplicity. Therefore, every approximate eigenvalue $\omega_\ell$ bounds a corresponding true eigenvalue $E_i$ from above. How good this approximation is will highly depend on the choice of variational manifold.
Note that the energy differences $\omega_\ell-\omega_0$ instead do not necessarily bound $E_\ell-E_0$, because the ground state energy $\omega_0$ might not be exact, \ie $\omega_0>E_0$.

Furthermore, the eigenvalues $\omega_\ell$ are variational in the sense that if $X_\ell^\mu$ is an eigenvector of $\HH^\mu{}_\nu$ such that $\HH^\mu{}_\nu X_\ell^\nu=\omega_\ell\,X_\ell^\mu$, then the corresponding Hilbert space vector $\ket{X_\ell}=X_\ell^\mu\ket{V_\mu}$ satisfies
\begin{equation}
    \frac{\braket{X_\ell|\hat{H}|X_\ell}}{\braket{X_\ell|X_\ell}}=\omega_\ell\,.
\end{equation}

\textbf{Kähler vs. non-Kähler.} On a Kähler manifold, $\HH^\mu{}_\nu$ and $\R^\mu{}_\nu$ are related via $\R=-\J\HH$ and they will be the representations of a complex Hermitian and anti-Hermitian operators, respectively. Real eigenvalue pairs $(\omega_\ell,\omega_\ell)$ of $\HH$ will be related to imaginary eigenvalue pairs $(\ii\omega_\ell,-\ii\omega_\ell)$ of $\R$. On a non-Kähler manifold, the eigenvalues $\omega_\ell$ of $\HH$ could all be different and unrelated to the ones $\R$, which are still imaginary appearing in conjugate pairs.
    
\subsubsection{Linearized equations of motion}\label{sec:linearized-eom}
A common alternative to projecting the Hamiltionian is to linearize the equations of motion around a fixed point $x_0$ such that $\mathcal{X}(x_0)=0$
\begin{align}
    \frac{dx^\mu}{dt}=\mathcal{X}^\mu\quad\Rightarrow\quad \frac{d}{dt}\,\delta x^\mu=\K^\mu{}_\nu\,\delta x^\nu\label{eq:linearized-eom}
\end{align}
with $\delta x^\mu=x^\mu-x_0^\mu$ and $\K^\mu{}_\nu=\partial_\nu\mathcal{X}^\mu|_{x=x_0}$. Here, $\delta x^\mu$ represents a small perturbation around the approximate ground state. The frequencies $\omega_\ell$ appearing in conjugate pairs $\pm\ii \omega_\ell$ in the spectrum of $\K^\mu{}_\nu$ represent the frequencies with which such perturbations oscillate around the ground state and thus provide an approximation to the excitation energies $E_\ell-E_0$ of the Hamiltonian.
    
As pointed out in Section~\ref{sec:variational-principles}, the fixed point $x_0$ only coincides with the approximate ground state $\psi_0$ if the real time evolution is defined in terms of the Lagrangian action principle. We thus assume the equations of motion~\eqref{eq:projected-real-time-evolution} based on Lagrangian action principle. In this case, we find
\begin{align}
    \K^\mu{}_\nu=\partial_\nu\mathcal{X}^\mu=-\partial_\nu(\Om^{\mu\rho}\partial_\rho E)=-\Om^{\mu\rho}(\partial_\rho\partial_\nu E)\,,\label{eq:K-matrix}
\end{align}
where everything is evaluated at $x_0$ after taking the derivatives. We used that $\partial_\rho E=0$ at the fixed point\footnote{Usually, defining a derivative of a vector field $\mathcal{X}^\mu$ requires a way to relate tangent spaces at adjacent points via a so-called connection. The resulting covariant derivative $\nabla_\nu\mathcal{X}^\mu=\partial_\nu\mathcal{X}^\mu+\Gamma^\mu_{\nu\rho}\mathcal{X}^\rho$ will depend on $\Gamma^\mu_{\nu\rho}$ that encodes the connection. In our case $\mathcal{X}^\mu$ vanishes at the fixed point, so that the dependence of $\Gamma^\mu_{\nu\rho}$ drops out and the spectrum of $\K^\mu{}_\nu=\nabla_\nu\mathcal{X}^\mu$ at $\ket{\psi_0}$ is canonically defined.}.

By construction, $\K^\mu{}_\nu$ is a symplectic generator, because it satisfies $\K\Om+\Om \K^\intercal=0$, which implies that $M=e^{\K}$ preserves the symplectic form, \ie $M\Om M^\intercal=\Om$. Provided that $\psi_0$ is an energy minimum, the bilinear form $h_{\mu\nu}=\partial_\nu\partial_\mu E$ is positive definite. By Williamson's theorem~\cite{williamson1936algebraic}, $\K^\mu{}_\nu$ is diagonalizable and the resulting eigenvalues appear in conjugate pairs $(\ii\omega_\ell,-\ii\omega_\ell)$.
    
From a geometric point of view, $\delta x^\mu$ represents a tangent vector $\ket{\delta \psi}= \delta x^\mu\ket{V_\mu}$ living in the tangent space $\mathcal{T}_{\psi_0}\mathcal{M}$ at the approximate ground state $\ket{\psi_0}$. The time evolution of a tangent vector at a fixed point $\ket{\psi_0}$ is described by the linearized evolution flow\footnote{Mathematically, the linearized flow $d\Phi_t$ is defined as the differential (also known as push-forward) of the flow map $\Phi_t$ defined after equation~\eqref{eq:X-field}. In general it is a map from the tangent space $\mathcal{T}_{\psi(0)}\mathcal{M}$ to the tangent space $\mathcal{T}_{\psi(t)}\mathcal{M}$. In the special case of $\ket{\psi_0}$ being a fixed point of the time evolution, it reduces to a linear map from $\mathcal{T}_{\psi_0}\mathcal{M}$ onto itself. One can then show that this map is generated by the linearization $\K^\mu{}_\nu$ of the vector field $\mathcal{X}^\mu$ that defines the evolution flow.\label{footnote:pushforward}}
\begin{align}
    d\Phi_t: \mathcal{T}_{\psi_0}\mathcal{M}\to\mathcal{T}_{\psi_0}\mathcal{M}\,.
\end{align}
$\K$ is the generator of the flow $d\Phi_t$ leading to the important relation
\begin{align}
    d\Phi_t=e^{t\K}\,,
\end{align}
which shows that $d\Phi_t$ is symplectic.

Unitary evolution on Hilbert space leads to a flow on projective Hilbert space that preserves all three Kähler structures. However, when we project this flow onto a variational manifold to find $\mathcal{X}^\mu$, we will project out the part of the vector field orthogonal to tangent space. When using the Lagrangian action principle, the projected flow will continue to be symplectic, \ie preserve $\Om$, but none of the other two Kähler structures\footnote{Note that due to the 2-out-of-3 principle, any linear map $M$ satisfying two out of the three conditions $M\Om M^\intercal=\Om$, $M\G M^\intercal=\G$ and $M\J M^{-1}=\J$ will satisfy all three. Thus, any violation will necessarily affect at least two Kähler structures.}. Geometrically, this implies that the trajectories of states near the fixed point $\psi_0$ will be elliptic rather than circular, when measured with respect to $\G$.

Therefore, even if $\mathcal{M}$ is a Kähler manifold, $\K^\mu{}_\nu$ will in general neither commute with $\J$ nor be antisymmetric with respect to $G$, \ie satisfy $\K\G=-\G\K^\intercal$. This has the following consequences:
\begin{itemize}
    \item Right-eigenvectors $\mathcal{E}^\mu(\lambda)$ with $\K^\mu{}_\nu\mathcal{E}^\nu(\lambda)\!=\!\lambda\mathcal{E}^\mu(\lambda)$ and left-eigenvectors $\widetilde{\mathcal{E}}_\mu(\lambda)$ with $\K^\mu{}_\nu\widetilde{\mathcal{E}}_\mu(\lambda)\!\!=\!\!\lambda\widetilde{\mathcal{E}}_\nu(\lambda)$ are not related via $\mathcal{E}^\mu(\lambda)=\G^{\mu\nu}\widetilde{\mathcal{E}}_\nu(\lambda)$, but need to be computed independently. This is important when computing spectral functions in section~\ref{sec:spectral-functions}.
    \item There does in general not exist a Hilbert space operator $\hat{K}$, such that $\K^\mu{}_\nu$ is its restriction in the sense of $\K^\mu{}_\nu=\mathbb{P}^\mu_{\psi_0}\hat{K}\ket{V_\nu}$ or $\K^\mu{}_\nu=\mathbb{P}^\mu_{\psi_0}\ii\hat{K}\ket{V_\nu}$. Thus, $\K$ is not a restriction of a Hamiltonian.
\end{itemize}

\textbf{Kähler vs. non-Kähler.} On a non-Kähler manifold, where we have two inequivalent definitions of the equations of motion, it only makes sense to linearize the ones coming from the Lagrangian action principle, as their fixed point coincides with the approximate ground state. The resulting generator $K^\mu{}_\nu$ will in generally not commute with $\J$, even for Kähler manifolds, which has important consequences for its eigenvectors relevant for spectral functions.

\subsubsection{Comparison: projection vs. linearization}
In the following, we will compare the previously introduced approaches of approximating excitation energy. This comparison is particularly illuminating in the case of Kähler manifold.

At a stationary point, \ie $\partial_\mu E=2\mathrm{Re}\braket{V_\mu|\hat{H}|\psi}=0$, we consider the symplectic generator $\K^\mu{}_\nu$ defined as
\begin{align}
    \K^\mu{}_\nu&=-\Om^{\mu\sigma}(\partial_{\sigma}\partial_\nu E)=(\J^{-1})^\mu{}_\sigma  \,\partial_\nu\big(\mathbb{P}^\sigma_{\psi}\hat{H}\ket{\psi}\big)\,,\label{eq:KinP}
\end{align}
where we only have $\J^{-1}=-\J$ for Kähler manifolds. Evaluating the derivative in~\eqref{eq:KinP} gives the two pieces
\begin{align}
    \,\partial_\nu\big(\mathbb{P}^\sigma_{\psi}\hat{H}\ket{\psi}\big)=\mathbb{P}^\sigma_{\psi}\hat{H}\ket{V_\nu}+\,(\partial_\nu\mathbb{P}^\sigma_{\psi})\hat{H}\ket{\psi}\,,
\end{align}
where we evaluate everything at $\psi_0$ after computing the derivatives. We recognize $\bm{H}^\sigma{}_\nu=\mathbb{P}^\sigma_{\psi}\hat{H}\ket{V_\nu}$ and define $\bm{F}^\sigma{}_\nu=(\partial_\nu\mathbb{P}^\sigma_{\psi})\hat{H}\ket{\psi}=\frac{2}{\braket{\psi|\psi}} \G^{\sigma\rho}\braket{\partial_\nu V_\rho|\hat{H}|\psi_0}$ leading to
\begin{align}
    \K^\mu{}_\nu=(\J^{-1})^\mu{}_\sigma\Big(\bm{H}^\sigma{}_\nu+\bm{F}^\sigma{}_\nu\Big)\,.\label{eq:K-decomposition}
\end{align}
In summary, we see that the linearization $\K^\mu{}_\nu$ consists of the two pieces: First, the projected Hamiltonian $\bm{H}$ and second, the derivative of the projector. These terms are multiplied with the inverse complex structure $\J^{-1}$.

In the case of a Kähler manifold there is a further way to understand these two term that make up $\K^\mu{}_\nu$. In this case, we can use $\J^2=-\mathbb{1}$ to decompose any linear operator $\K$ on $\mathcal{T}_{\psi_0}\mathcal{M}$ as $\K=\K_++\K_-$ with
\begin{align}
\hspace{-2mm}\K_\pm\!=\!\tfrac{1}{2}(\K\pm\J \K\J)\,,\,\,\{\K_+,\J\}\!=\!0\,,\,\,[\K_-,\J]\!=\!0\,.\hspace{-2mm}\label{eq:decomp}
\end{align}
We will see that this decomposition coincides exactly with the one of $\K^\mu{}_\nu$ in~\eqref{eq:K-decomposition}. To do this, we use the fact that a Kähler manifold of dimension $2n$ always admits\footnote{This ultimately coincides with showing that a Kähler manifold is also a complex manifold, that is it admits a holomorphic parametrisation in terms of complex parameters $z^\alpha=\mathrm{x}^\alpha+\ii \mathrm{y}^\alpha$.} a parametrization $x^\mu=(x_1,\cdots,x_{2n})$ satisfying for $1\leq j\leq n$
\begin{align}
    \ket{V_j}=\ii\ket{V_{n+j}}\,,
\end{align}
\ie the coordinate $x_j$ is conjugate to $x_{n+j}$. In this basis, $\J$ and $\Om$ are
\begin{align}
    \J\equiv\left(\begin{array}{cc} 0 & -\mathbb{1} \\ \mathbb{1} & 0 \end{array} \right), \quad \Om\equiv\frac{1}{2} \left(\begin{array}{cc} -\mathrm{Im}\,\eta^{-1} & -\mathrm{Re}\,\eta^{-1} \\ \mathrm{Re}\,\eta^{-1} & -\mathrm{Im}\,\eta^{-1} \end{array} \right) \,,
\end{align}
where $\eta_{jk}=\braket{V_j|V_k}$. Then the structure of matrices that commute or anti-commute with $\J$ is
\begin{align}
    \K_-=\left(\begin{array}{cc} a & b \\ -b & a \end{array} \right), \quad \K_+=\left(\begin{array}{cc} a & b \\ b & -a \end{array} \right) \,.
\end{align}
We can evaluate $\K^\mu{}_\nu$ to find exactly this form
\begin{align}
    \K^\mu{}_\nu&=-\Om^{\mu\rho} \partial_\rho\partial_\nu E=(\K_+)^\mu{}_\nu+(\K_-)^\mu{}_\nu
    \label{eq:K-decomposition-par-perp}
\end{align}
where its two pieces are explicitly given by
\begin{align}
    \K_+&\equiv\left(\begin{array}{cc} \mathrm{Im}(\eta^{-1}h) & \mathrm{Re}(\eta^{-1}h) \\ -\mathrm{Re}(\eta^{-1}h) & \mathrm{Im}(\eta^{-1}h) \end{array} \right) \,,\\
    \K_-&\equiv\left(\begin{array}{cc} \mathrm{Im}(\eta^{-1}f) & \mathrm{Re}(\eta^{-1}f) \\ \mathrm{Re}(\eta^{-1}f) & -\mathrm{Im}(\eta^{-1}f) \end{array} \right)\,, 
\end{align}
where $h_{jk}=\braket{V_j|\hat{H}|V_k}$ and $f_{jk}=\braket{\partial_j V_k|\hat{H}|\psi_0}$. This clearly shows that the two pieces are given by $\K_-=\J \bm{H}$ and $\K_+=\J \bm{F}$ as defined before~\eqref{eq:K-decomposition}.

In conclusion, from the decomposition~\eqref{eq:K-decomposition-par-perp} we immediately see again that $\K^\mu{}_\nu$ has two contributions. One is related to the projected Hamiltonian $\bm{H}^\mu{}_\nu$ and commutes with $\J$. The other is related to the overlap of $\hat{H}\ket{\psi_0}$ with the \textit{double tangent vectors} $\ket{\partial_\alpha V_\beta}$, which coincides with the one we previously described in terms of the derivative of the projector and anti-commutes with $\J$. Thus $\K_-$ is a complex linear map, while $\K_+$ is a contribution that makes $\K^\mu{}_\nu$ non-complex linear.

Finally, if we complexify tangent space, \ie treat complex linear combinations of $\ket{V_\mu}$ as linearly independent, there exists a basis transformation that makes $\J$ diagonal and brings $\K_+$ and $\K_-$ respectively, into block diagonal and block off-diagonal form, given by
\begin{equation}
    \hspace{-2mm}\J\equiv\ii \left(\begin{array}{cc}
    \mathbb{1} & 0 \\
    0 & -\mathbb{1}
    \end{array}\right)\,,\, \begin{array}{c} \K_-=\ii\left(\begin{array}{cc}
    -\eta^{-1}h & 0\\
    0 & (\eta^{-1}h)^* \end{array} \right)\,,\\ \\  \K_+=\ii\left(\begin{array}{cc}
    0 & (\eta^{-1}f)^*\\
    -\eta^{-1}f & 0 \end{array} \right)\,, \end{array} \label{eq:A_pm_diag_basis}
\end{equation}
\ie for Kähler manifolds the terms in~\eqref{eq:K-decomposition} decouple nicely. For a non-Kähler manifold, neither $\bm{H}$ nor $\bm{F}$ may commute with $\J$, but even the decomposition~\eqref{eq:decomp} will not work for $\J^2\neq-\id$.

In the next section, we will see that the term $\K_-$ can be a blessing and a curse: on the one hand, it can ensure that in systems with spontaneously broken symmetry the eigenvalues of $\K^\mu{}_\nu$ contain a Goldstone mode. On the other hand, for unfortunate choices of the variational family $\mathcal{M}$ we may encounter such massless modes even if there is no spontaneously broken symmetry (spurious Goldstone mode).

\textbf{Kähler vs. non-Kähler.} We can relate the linearization $\K^\mu{}_\nu$ with the projected Hamiltonian $\HH^\mu{}_\nu$ via~\eqref{eq:K-decomposition}. For Kähler manifolds, this decomposition becomes particularly geometric, as the two pieces correspond to its complex linear and complex anti-linear part.

\subsubsection{Spurious Goldstone mode}
The spectrum of $\K^\mu{}_\nu$ is not variational. In contrast to a variational approximation of an eigenstate, our  eigenvector $\ket{\mathcal{E}^\mu(\lambda)}$ of $\K^\mu{}_\nu$ with
\begin{align}
    \K^\mu{}_\nu\,\mathcal{E}^\nu(\lambda)=\lambda\,\mathcal{E}^\mu(\lambda)\,,
\end{align}
and $\ket{\mathcal{E}(\lambda)}=\mathcal{E}^\mu(\lambda)\ket{V_\mu}$, does not satisfy
\begin{align}
    \lambda=\pm\braket{\mathcal{E}(\lambda)|\ii\hat{H}|\mathcal{E}(\lambda)}\,.
\end{align}
The expectation value of the full Hamiltonian with respect to $\ket{\mathcal{E}(\lambda)}$ is in general not easily related to $\lambda$, as it would be for a variational state. It is also not true that for every eigenvalue pair $\pm \ii \omega_\ell$, there exists a true eigenstate $\ket{E_\ell}$ of $\hat{H}$ with excitation energy $E_\ell-E_0\leq \omega_\ell$.

In fact, there are situations, where the true ground state $\ket{E_0}$ is non-degenerate, but $\K^\mu{}_\nu$ still has a zero eigenvalue associated to a massless Goldstone mode. This typically occurs if we have a conserved quantity $\hat{A}$ with $[\hat{A},\hat{H}]=0$, such that $-\ii\hat{A}\ket{\psi}\in\mathcal{T}_{\psi}\mathcal{M}$ everywhere as discussed in the context of~Proposition~\ref{prop:conservation-law}. At this point, the question is if the global energy minimum $\ket{\psi_0}$ on $\mathcal{M}$ is invariant under $e^{-\ii\hat{A}}$ or not. Whenever the global minimum on $\psi_0$ on $\mathcal{M}$ is not invariant, \ie there is a whole family $\ket{\psi_0(\varphi)}=e^{-\ii\varphi\hat{A}}\ket{\psi_0}$ of approximate ground states, the generator $\K^\mu{}_\nu$ will have a massless Goldstone mode
\begin{align}
    \mathcal{E}^\mu_{\mathrm{G}}=\mathbb{P}^\mu_{\psi_0}(-\ii\hat{A})\ket{\psi_0}\quad\text{with}\quad \K^\mu{}_\nu\mathcal{E}^\nu_{\mathrm{G}}=0\,.
\end{align}
Whenever the true ground state $\ket{E_0}$ of the system is invariant under $e^{-\ii\hat{A}}$, this Goldstone mode is spurious and merely an artefact of a spontaneous symmetry breaking on $\mathcal{M}$, but not on full $\mathcal{P}(\mathcal{H})$.

This was pointed out in~\cite{haegeman2013post} as an important problem of approximating the spectrum via linearized equations of motion rather than using the projected Hamiltonian. However, in~\cite{guaita2019gaussian} we found that this can also be desirable to capture features of the thermodynamic limit for finite system size. In particular, the gapless Bogoliubov excitation spectrum of the Bose-Hubbard model can be shown to result from the diagonalization of the generator~\eqref{eq:K-matrix} of the linearized equations of motion on the manifold of coherent states. The Bogoliubov spectrum is gapless independent from the system size, even though the true ground state $\ket{E_0}$ becomes only degenerate in the thermodynamic limit. This also extends from Bogoliubov theory to the full Gaussian time dependent variational principle, as discussed in~\cite{guaita2019gaussian}.

We illustrate this issue in figure~\ref{fig:spurious-Goldstone}, where the Hamiltonian is spontaneously broken only on the variational manifold $\mathcal{M}$, but not in the full Hilbert space, where it has a unique ground state $\ket{E_0}$.

\begin{figure}
    \centering
\begin{tikzpicture}
    \draw[gray,thick] (4.5, 1.55)--(3.5, -.3)--(2.53, 1.03);
	\fill (3.5, -.3) node[right,fill=white,rounded corners=2pt,inner sep=1pt,xshift=4pt]{$\ket{E_0}\in\mathcal{H}$} circle (1.5pt);
	\begin{scope}[yshift=2mm]
	\draw (7,2.5) node[right]{$M$};
	\manifold[black,thick,fill=white,fill opacity=0.75]{0,0}{7,2.5}
	\draw[thick, purple,
	decoration={markings, mark=at position 0.08 with {\arrow{<}}},
	decoration={markings, mark=at position 0.18 with {\arrow{<}}},
	decoration={markings, mark=at position 0.28 with {\arrow{<}}},
	decoration={markings, mark=at position 0.38 with {\arrow{<}}},
	decoration={markings, mark=at position 0.48 with {\arrow{<}}},
	decoration={markings, mark=at position 0.58 with {\arrow{<}}},
	decoration={markings, mark=at position 0.68 with {\arrow{<}}},
	decoration={markings, mark=at position 0.78 with {\arrow{<}}},
	decoration={markings, mark=at position 0.88 with {\arrow{<}}},
	decoration={markings, mark=at position 0.98 with {\arrow{<}}},postaction={decorate}] (3.5,1.25) ellipse [x radius = 6mm, y radius = 10.5mm,rotate=-70];
	\fill (4.5, 1.45) node[right,fill=white,rounded corners=2pt,inner sep=1pt,xshift=2pt]{$\ket{\psi_0}$} circle (1.5pt);
	\fill (2.5, .5) node[left,purple,fill=white,rounded corners=2pt,inner sep=1pt,xshift=2pt]{$\ket{\psi_0(\varphi)}=e^{-\ii\varphi\hat{A}}\ket{\psi_0}$} circle (1.5pt);
	\end{scope}
    \end{tikzpicture}
    \ccaption{Spurious Goldstone mode}{Even if the Hamiltonian has a unique ground state $\ket{E_0}$ without a spontaneously broken symmetry in full Hilbert space, on a chosen sub manifold $M$ there may be inequivalent states $\ket{\psi_0(\varphi)}=e^{-\ii\varphi\hat{A}}\ket{\psi_0}$ that all minimize the energy. This leads to a spontaneous breaking of the symmetry generated by $\hat A$ and the appearance of a spurious Goldstone mode.
    }
    \label{fig:spurious-Goldstone}
\end{figure}

\begin{figure*}
    \centering
    \begin{tikzpicture}
\draw[thick,->] (0.5,-.7)--(12.2,-.7);
\begin{scope}[yshift=-1mm]
\manifold[black,thick,fill=white,fill opacity=0.95]{0.5,0.3}{3.8,2.1}
\draw[blue,decoration={markings, mark=at position 0.645 with {\arrow{>}}},
postaction={decorate}] (2.1, 1.2) ellipse [x radius = 4mm, y radius = 2mm,
rotate=20];
\draw[blue,decoration={markings, mark=at position 0.63 with {\arrow{>}}},
postaction={decorate}] (2.1, 1.2) ellipse [x radius = 5mm, y radius = 3mm,
rotate=20];
\draw[blue,decoration={markings, mark=at position 0.615 with {\arrow{>}}},
postaction={decorate}] (2.1, 1.2) ellipse [x radius = 6mm, y radius = 4mm,
rotate=20];
\draw[blue,decoration={markings, mark=at position 0.6 with {\arrow{>}}},
postaction={decorate}] (2.1, 1.2) ellipse [x radius = 7mm, y radius = 5mm,
rotate=20];
\fill (2.1, 1.2) node[right,fill=white,rounded corners=2pt,inner sep=1pt,xshift=2pt,yshift=-1pt]{$\ket{\psi_{\mathrm{g}}}$} circle (1.5pt);
\draw (2.7,.1) node[text width=3cm]{\begin{enumerate}\item $t=0$\end{enumerate}};
\end{scope}
\begin{scope}[xshift=30mm]
	\manifold[black,thick,fill=white,fill opacity=0.95]{0.5,0.3}{3.8,2.1}
	\draw[blue,decoration={markings, mark=at position 0.645 with {\arrow{>}}},
	postaction={decorate}] (2.1, 1.2) ellipse [x radius = 4mm, y radius = 2mm,
	rotate=20];
	\draw[blue,decoration={markings, mark=at position 0.63 with {\arrow{>}}},
	postaction={decorate}] (2.1, 1.2) ellipse [x radius = 5mm, y radius = 3mm,
	rotate=20];
	\draw[blue,decoration={markings, mark=at position 0.615 with {\arrow{>}}},
	postaction={decorate}] (2.1, 1.2) ellipse [x radius = 6mm, y radius = 4mm,
	rotate=20];
	\draw[blue,decoration={markings, mark=at position 0.6 with {\arrow{>}}},
	postaction={decorate}] (2.1, 1.2) ellipse [x radius = 7mm, y radius = 5mm,
	rotate=20];
	\fill (2.1, 1.2) circle (1.5pt);
    \draw[gray,thick,->] (2.1, 1.2) -- (2.1, 1.6);
    \fill[orange] (2.1, 1.6) node[right,fill=white,rounded corners=2pt,inner sep=1pt,xshift=2pt,yshift=-1pt]{$\ket{\psi_\epsilon(t')}$} circle (1.5pt);
	\draw (2.7,.1) node[text width=3cm]{\begin{enumerate}\setcounter{enumi}{1}\item $t=t'$\end{enumerate}};
\end{scope}
\begin{scope}[xshift=60mm]
	\manifold[black,thick,fill=white,fill opacity=0.95]{0.5,0.3}{3.8,2.1}
	\draw[blue,decoration={markings, mark=at position 0.645 with {\arrow{>}}},
	postaction={decorate}] (2.1, 1.2) ellipse [x radius = 4mm, y radius = 2mm,
	rotate=20];
	\draw[blue,decoration={markings, mark=at position 0.63 with {\arrow{>}}},
	postaction={decorate}] (2.1, 1.2) ellipse [x radius = 5mm, y radius = 3mm,
	rotate=20];
	\draw[blue,decoration={markings, mark=at position 0.615 with {\arrow{>}}},
	postaction={decorate}] (2.1, 1.2) ellipse [x radius = 6mm, y radius = 4mm,
	rotate=20];
	\draw[blue,decoration={markings, mark=at position 0.6 with {\arrow{>}}},
	postaction={decorate}] (2.1, 1.2) ellipse [x radius = 7mm, y radius = 5mm,
	rotate=20];
	\fill (2.1, 1.2) circle (1.5pt);
	\draw (2.7,.1) node[text width=3cm]{\begin{enumerate}\setcounter{enumi}{2}\item $t>t'\color{white}>$\end{enumerate}};
		\fill[orange] (2.1, 1.6) circle (1.5pt);
		\fill[orange] (1.64, 1.36) circle (1.5pt);
		\begin{scope}[rotate around={20:(2.1,1.2)}]
		\draw[orange,thick] (2.1, 1.2) ellipse [partial ellipse=70:130:6mm and 4mm];
		\end{scope}
		\fill[orange] (1.56, .92) circle (1.5pt);
		\begin{scope}[rotate around={20:(2.1,1.2)}]
		\draw[orange,thick] (2.1,1.2) ellipse [partial ellipse=70:190:6mm and 4mm];
		\end{scope}
		\fill[orange] (2.02, .77) circle (1.5pt);
		\begin{scope}[rotate around={20:(2.1,1.2)}]
		\draw[orange,thick] (2.1, 1.2) ellipse [partial ellipse=70:250:6mm and 4mm];
		\end{scope}
\end{scope}
\begin{scope}[xshift=90mm]
	\manifold[black,thick,fill=white,fill opacity=0.95]{0.5,0.3}{3.8,2.1}
	\draw[blue,decoration={markings, mark=at position 0.645 with {\arrow{>}}},
	postaction={decorate}] (2.1, 1.2) ellipse [x radius = 4mm, y radius = 2mm,
	rotate=20];
	\draw[blue,decoration={markings, mark=at position 0.63 with {\arrow{>}}},
	postaction={decorate}] (2.1, 1.2) ellipse [x radius = 5mm, y radius = 3mm,
	rotate=20];
	\draw[blue,decoration={markings, mark=at position 0.615 with {\arrow{>}}},
	postaction={decorate}] (2.1, 1.2) ellipse [x radius = 6mm, y radius = 4mm,
	rotate=20];
	\draw[blue,decoration={markings, mark=at position 0.6 with {\arrow{>}}},
	postaction={decorate}] (2.1, 1.2) ellipse [x radius = 7mm, y radius = 5mm,
	rotate=20];
	\draw[thick,brown,->] (2.1, 1.2)--(2.1, 1.6);
	\draw (2.1, 1.6) node[right,fill=white,rounded corners=2pt,inner sep=1pt,xshift=3pt,yshift=-1pt]{$\color{brown}\ket{\delta\psi(t)}$};
		\begin{scope}[rotate around={20:(2.1,1.2)}]
		\draw[orange,thick] (2.1, 1.2) ellipse [partial ellipse=70:130:6mm and 4mm];
		\end{scope}
		\draw[thick,brown,->] (2.1, 1.2)--(1.64, 1.36);
		\begin{scope}[rotate around={20:(2.1,1.2)}]
		\draw[orange,thick] (2.1,1.2) ellipse [partial ellipse=70:190:6mm and 4mm];
		\end{scope}
		\draw[thick,brown,->] (2.1, 1.2)--(1.56, .92);
		\begin{scope}[rotate around={20:(2.1,1.2)}]
		\draw[orange,thick] (2.1, 1.2) ellipse [partial ellipse=70:250:6mm and 4mm];
		\end{scope}
		\draw[thick,brown,->] (2.1, 1.2)--(2.02,.77);
	\fill (2.1, 1.2) circle (1.5pt);
	\draw (2.7,.1) node[text width=3cm]{\begin{enumerate}\setcounter{enumi}{3}\item $t>t'\color{white}>$\end{enumerate}};
	\end{scope}
\end{tikzpicture}
    \ccaption{Linear response theory}{We consider an approximate ground state $\psi_{0}\in\mathcal{M}$. While $\ket{\psi_0}$ does not evolve in time, certain trajectories of nearby states are approximately elliptic. A finite perturbation at time $t'$ changes the state to $\ket{\psi_\epsilon(t')}=e^{\ii\epsilon\hat{V}}\ket{\psi_0}$. This state will then evolve according to the equations of motion. We linearize by taking the limit $\epsilon\to 0$, where we find that the tangent vector $\ket{\delta\psi(t)}=\tfrac{d}{d\epsilon}\ket{\psi_\epsilon(t)}|_{\epsilon=0}$ can be decomposed into eigenvectors of $\K^\mu{}_\nu$ that rotate on ellipses.}
    \label{fig:linear-response}
\end{figure*}

\subsection{Spectral functions}
\label{sec:spectral-functions}
Next, we would like to use the variational manifold $\mathcal{M}$ to estimate the spectral function of a system with respect to the perturbation operator $\hat{V}$.

Given a Hermitian operator $\hat{V}$, the spectral function is
\begin{equation}
    \mathcal{A}(\omega)=-\frac{1}{\pi} \, \mathrm{Im} \, G^R(\omega),
    \label{eq:def_spectral_function}
\end{equation}
where $G^R$ is the retarded Green's function
\begin{equation}
    G^R(\omega)=-\ii\int \! dt \, e^{\ii \omega t} \Theta(t) \frac{\braket{\psi_0|[\hat{V}(t),\hat{V}]|\psi_0}}{\braket{\psi_0|\psi_0}}
    \label{eq:ret_greens_function}
\end{equation}
with $\Theta(t)$ being the step function and $\hat{V}(t)$ the Heisenberg evolved operator under the system Hamiltonian $\hat{H}$.

The definition in terms of the retarded Green's function stems from linear response theory. Indeed, let us suppose that a small external perturbing probe field $\epsilon \varphi(t)$ couples to our system through the operator $\hat{V}$. That is, the system state  $\ket{\psi_{\epsilon}(t)}$ evolves under the perturbed Hamiltonian $\hat{H}+\epsilon \varphi(t)\hat{V}$. Then, let us measure the response of the system through the expectation value of the same observable $\hat{V}$. As the perturbation is ideally infinitesimally small, we only consider such response up to linear order in $\epsilon$. Consequently, we define the time-domain linear response as
\begin{equation}
    \delta V(t)={\left.\frac{d}{d\epsilon}\right|}_{\epsilon=0} \frac{\braket{\psi_{\epsilon}(t)|\hat{V}|\psi_{\epsilon}(t)}}{\braket{\psi_\epsilon(t)|\psi_\epsilon(t)}}.
    \label{eq:t-linear-response}
\end{equation}
Then in frequency domain we have that
\begin{equation}
    \delta \tilde{V}(\omega)\equiv \int \! dt \, e^{\ii \omega t} \delta V(t) = \tilde{\varphi}(\omega) G^R(\omega).
    \label{eq:susceptibility}
\end{equation}
That is, $G^R$ is exactly the so-called linear susceptibility of the system, which is an experimentally accessible quantity.

Given a variational manifold there are two possible paths to trying to approximate $\mathcal{A}(\omega)$.
\begin{enumerate}
    \item We can calculate the quantity~\eqref{eq:t-linear-response} after having projected the evolution of $\ket{\psi_\epsilon(t)}$ on the manifold. In other words, we perform linear response theory directly on the variational manifold. This leads us to express $\mathcal{A}(\omega)$ in terms of the eigendecomposition of the generator of linearized real time evolution $\K^\mu{}_\nu$ introduced in~\eqref{eq:K-matrix}.
    \item Alternatively, one can try to approximate on the manifold the quantity
    \begin{equation}
        e^{-\ii\hat{H}t}\hat{V}\ket{\psi_0}, \label{eq:approx_evolved_perturbation}
    \end{equation}
    that appears in equation~\eqref{eq:ret_greens_function}. In this case one should note that in general $\hat{V}\ket{\psi_0}$ does not belong to the variational manifold, so one has to perform some truncation even before applying the time evolution operator $e^{-\ii\hat{H}t}$. The other subtlety here is that one must make sure that the quantity~\eqref{eq:approx_evolved_perturbation} is calculated with the correct global phase, as we explained in Section~\ref{sec:complex-phase}. 
\end{enumerate}

It seems to us that method 2 captures less the spirit of variational manifolds. Indeed one has that the quantity $\hat{V}\ket{\psi_0}$ would morally represent a small perturbation around the groundstate $\ket{\psi_0}$ and would thus naturally live in the tangent space to the manifold of states at $\ket{\psi_0}$. Representing $\hat{V}\ket{\psi_0}$ as a vector of $\mathcal{M}$ therefore is only meaningful if the manifold itself is a good representation of its own tangent space. But this is not true for general manifolds and indeed there is no uniquely defined method for representing $\hat{V}\ket{\psi_0}$ on $\mathcal{M}$. The first method, on the other hand, can alternatively be thought of precisely as representing the perturbations generated by $\hat{V}$ on $\mathcal{T}_{\psi_0}\mathcal{M}$. Furthermore, we will show that method 1 leads to a closed expression for the spectral function from which it is immediate to see that $\mathrm{sgn}\mathcal{A}(\omega)=\mathrm{sgn}\,{\omega}$ (as it is in the full Hilbert space), while this cannot be shown in general for method 2.

For these reasons in the next subsections we will focus on the details of the first method, giving a final expression for the spectral function estimated in this way in Proposition~\ref{prop:spectral-function}.

\subsubsection{Linear response theory}
As mentioned, a possible way of calculating spectral functions is to perform linear response theory directly on the variational manifold. In this subsection we will then briefly explain how this can be done. The idea is illustrated in figure~\ref{fig:linear-response}.

Let us consider a possibly time-dependent perturbation $\hat{A}(t)$ of our unperturbed Hamiltonian $\hat{H}_0$, such that $\hat{H}_{\epsilon}(t)=\hat{H}_0+\epsilon\hat{A}(t)$, and an observable $\hat{B}$, whose response we are interested in. For spectral functions, we will be interested in the particular case where $\hat{A}(t)=\varphi(t) \hat{V}$ for arbitrary functions $\varphi(t)$ and $\hat{B}=\hat{V}$, but we will for the moment keep our treatment general.

Our perturbed Hamiltonian gives rise to the time dependent real time evolution vector field $\mathcal{X}^\mu_{\epsilon}(t)$, which is
\begin{equation}
    \mathcal{X}_\epsilon(t)=\mathcal{X}_0+\epsilon\, \mathcal{X}_A(t)\,,
    \label{eq:perturbed-vector-field}
\end{equation}
where $\mathcal{X}_0$ and $\mathcal{X}_A(t)$ are the evolution vector fields associated to the Hamiltonians $\hat{H}_0$ and $\hat{A}(t)$ respectively. The solution of this perturbed evolution is $\ket{\psi_{\epsilon}(t)}$ satisfying
\begin{align}
    \mathbb{Q}_\psi\tfrac{d}{dt}\ket{\psi_{\epsilon}(t)}= \mathcal{X}^\mu_{\epsilon}(t)\ket{V_\mu}\,.
\end{align}

For the following Proposition~\ref{prop:propagated-perturbation} it would not be important whether the evolution vector field is defined according to the Lagrangian or McLachlan variational principles, as long as it has the form~\eqref{eq:perturbed-vector-field}. However, later we will be interested in the case in which the perturbed evolution happens around the approximate ground state $\ket{\psi_0}$ and it will be important that this state is also a fixed point of the time evolution. So, as was the case in Section~\ref{sec:linearized-eom}, from now on we will suppose that the evolution vector fields are defined according to the Lagrangian evolution~\eqref{eq:projected-real-time-evolution}.

We are interested in the response in expectation value of the observable $\hat{B}$ at linear order in $\epsilon$, that is
\begin{align}
\begin{split}
    \delta B(t)&=\tfrac{d}{d\epsilon}\left.\tfrac{\braket{\psi_{\epsilon}(t)|\hat{B}|\psi_{\epsilon}(t)}}{\braket{\psi_{\epsilon}(t)|\psi_{\epsilon}(t)}}\right|_{\epsilon=0}\\
    &=\delta x^\mu(t) \, \partial_\mu B(x(t))\,,
\label{eq:linearised-expectation}
\end{split}
\end{align}
where we defined the propagated perturbation
\begin{align}
\hspace{-3mm}\delta x^\mu(t)\ket{V_\mu}=\mathbb{Q}_{\psi}\tfrac{d}{d\epsilon}\ket{\psi_{\epsilon}(t)}\Big|_{\epsilon=0} \in \mathcal{T}_{\psi(t)}\mathcal{M} \,,\label{eq:propagated-perturbation-def}
\end{align}
which can be evaluated as follows.
\begin{proposition}
\label{prop:propagated-perturbation}
Given a variational manifold $\mathcal{M}$ we define (according to the Lagrangian action principle) the free projected real time evolution $\ket{\psi(t)}$ as governed by the free Hamiltonian $\hat{H}_0$ and the perturbed projected real time evolution $\ket{\psi_\epsilon(t)}$ as governed by the perturbed Hamiltonian $\hat{H}_{\epsilon}(t)=\hat{H}_0+\epsilon\hat{A}(t)$, both with the same initial state $\ket{\psi(0)}$. Then, the propagated perturbation, defined according to~\eqref{eq:propagated-perturbation-def}, is given by
\begin{align}
    \delta x^\mu(t)=-\int^t_{-\infty}\!\!dt'\:(d\Phi_{t-t'})^\mu{}_\nu\, \Om^{\nu\rho}\, \partial_\rho A(t')\big|_{\psi(t')} \,,
    \label{eq:propagated-perturbation}
\end{align}
where $d\Phi_t$ is the linearized free evolution flow\footnote{See footnote~\ref{footnote:pushforward}.}.
\end{proposition}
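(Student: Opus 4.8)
The plan is to treat this as a first-order (variation of parameters) calculation for the flow equation~\eqref{eq:X-field}, linearised in $\epsilon$. First I would write the perturbed trajectory in coordinates as $x^\mu(t;\epsilon)$, solving $\tfrac{d}{dt}x^\mu(t;\epsilon)=\mathcal{X}^\mu_\epsilon(x(t;\epsilon),t)$ with $\epsilon$-independent initial data fixed by the common initial state $\ket{\psi(0)}$. Using the decomposition~\eqref{eq:perturbed-vector-field}, $\mathcal{X}_\epsilon=\mathcal{X}_0+\epsilon\,\mathcal{X}_A(t)$, I would then observe that $\delta x^\mu$ as defined in~\eqref{eq:propagated-perturbation-def} is exactly the coordinate variation: since the representative depends on $\epsilon$ only through $x(t;\epsilon)$, we have $\tfrac{d}{d\epsilon}\ket{\psi_\epsilon(t)}=(\partial_\mu\ket{\psi})\,\tfrac{d}{d\epsilon}x^\mu$, and applying $\mathbb{Q}_\psi$ projects out the component along $\ket{\psi}$ (killing in particular the pure phase and normalisation directions), leaving $\delta x^\mu(t)=\tfrac{d}{d\epsilon}x^\mu(t;\epsilon)\big|_{\epsilon=0}$ contracted with $\ket{V_\mu}=\mathbb{Q}_\psi\partial_\mu\ket{\psi}$.

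Next I would differentiate the flow equation with respect to $\epsilon$ at $\epsilon=0$. The chain rule produces the linear inhomogeneous equation
\begin{align}
\frac{d}{dt}\delta x^\mu=\big(\partial_\nu\mathcal{X}_0^\mu\big)\big|_{x(t)}\,\delta x^\nu+\mathcal{X}^\mu_A(x(t),t)\,,
\end{align}
with $\delta x^\mu$ vanishing at the initial time. The homogeneous part is governed by the linearisation of the \emph{free} vector field $\mathcal{X}_0$ along the unperturbed trajectory $x(t)$, whose fundamental solution is precisely the linearised free evolution flow: because $\mathcal{X}_0$ is time-independent, the variational equation of this autonomous system has transition operator $(d\Phi_{t-t'})^\mu{}_\nu$, the pushforward of the flow map $\Phi_{t-t'}$ (see footnote~\ref{footnote:pushforward}), mapping $\mathcal{T}_{\psi(t')}\mathcal{M}$ to $\mathcal{T}_{\psi(t)}\mathcal{M}$. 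Solving by Duhamel's principle with vanishing initial data gives
\begin{align}
\delta x^\mu(t)=\int_{-\infty}^t\!\!dt'\,(d\Phi_{t-t'})^\mu{}_\nu\,\mathcal{X}^\nu_A(x(t'),t')\,.
\end{align}
Finally I would substitute $\mathcal{X}^\nu_A=-\Om^{\nu\rho}\partial_\rho A$, which follows from Proposition~\ref{prop:equation-of-motion} applied to the Hamiltonian $\hat{A}(t')$ (so that $\mathcal{X}_A$ is its Lagrangian evolution field), to arrive at the claimed expression~\eqref{eq:propagated-perturbation}.

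The main obstacle I anticipate is the careful identification of the homogeneous propagator with the pushforward $d\Phi_{t-t'}$ between the distinct tangent spaces $\mathcal{T}_{\psi(t')}\mathcal{M}$ and $\mathcal{T}_{\psi(t)}\mathcal{M}$ along a generic (non-stationary) trajectory, rather than the simpler $e^{(t-t')\K}$ that is valid only at a fixed point; this requires invoking the standard theory of the variational equation and checking that the autonomy of $\mathcal{X}_0$ makes the transition operator depend only on $t-t'$. A secondary subtlety is justifying the lower limit $-\infty$, \ie that the perturbation is switched on adiabatically (or that the data is set in the infinite past), so that the homogeneous contribution drops out and only the Duhamel integral survives; one should also confirm that the $\mathbb{Q}_\psi$ projection in~\eqref{eq:propagated-perturbation-def} discards exactly the phase and normalisation directions and does not contaminate $\delta x^\mu$.
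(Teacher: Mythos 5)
Your proof is correct and arrives at the paper's formula, but it packages the first-order perturbation theory differently from the paper's own argument. The paper works entirely with flow maps rather than with the linearized ODE: it introduces the interaction-picture flow $\widetilde{\Phi}^\epsilon_t=\Phi^0_{-t}\circ\Phi^\epsilon_t$, shows by the chain rule that $\tfrac{d}{dt}\widetilde{\Phi}^\epsilon_t(x)=\epsilon\, d\Phi^0_{-t}\,\mathcal{X}_A(\Phi^\epsilon_t(x))$, differentiates at $\epsilon=0$, integrates in time, and finally pushes forward with $d\Phi^0_t$, using the group property $d\Phi^0_t\circ d\Phi^0_{-t'}=d\Phi^0_{t-t'}$. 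This conjugation makes the pushforwards appear automatically, so the paper never needs the lemma your route rests on, namely that the fundamental solution of the variational equation $\tfrac{d}{dt}\delta x^\mu=(\partial_\nu\mathcal{X}^\mu_0)\big|_{x(t)}\delta x^\nu+\mathcal{X}^\mu_A$ along a generic trajectory of an autonomous vector field is the flow differential $(d\Phi_{t-t'})^\mu{}_\nu$ based at $x(t')$; in your Duhamel argument this identification is the key step and should be stated as such (it is standard, but note the transition operator depends on the base point $x(t')$ as well as on $t-t'$, which your phrasing "depends only on $t-t'$" slightly obscures). What your version buys is the more classical ODE presentation, with the split into homogeneous propagation plus inhomogeneous source made explicit; what the paper's version buys is self-containedness, since the chain rule replaces the appeal to the theory of variational equations. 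The remaining ingredients coincide in both proofs: the identification $\delta x^\mu=\tfrac{d}{d\epsilon}x^\mu\big|_{\epsilon=0}$ via $\mathbb{Q}_\psi\partial_\mu\ket{\psi}=\ket{V_\mu}$, the substitution $\mathcal{X}^\nu_A=-\Om^{\nu\rho}\partial_\rho A$ from Proposition~\ref{prop:equation-of-motion}, and the tacit convention on the lower limit $-\infty$ (the perturbation is switched on after the common initial data, so the homogeneous contribution drops), which the paper leaves equally implicit.
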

\begin{proof}
This can be shown in a standard way by using the interaction representation. We sketch a proof in Appendix~\ref{app:proofs}.
\end{proof}

Put simply, $\delta x^\mu$ is the superposition of all propagated perturbations, \ie a perturbation
\begin{align}
    -\J^\nu{}_\rho\mathbb{P}^\rho_{\psi(t')}\hat{A}(t')\ket{\psi(t')}=-\Om^{\nu\rho}\, \partial_\rho A\big|_{\psi(t')}
\end{align}
at time $t'$ is evolved with the linearized free evolution $d\Phi_{t-t'}$ to time $t$ where it contributes towards $\delta x^\mu(t)$.

If we now take as initial state $\ket{\psi(0)}$ the approximate ground state $\ket{\psi_0}$, that is a fixed point of the projected evolution, we have that the free evolution is trivial $\ket{\psi(t)}=\ket{\psi_0}$. It also follows that $d\Phi_{t}$ is a linear map from $\mathcal{T}_{\psi_0}\mathcal{M}$ onto itself given by
\begin{align}
    d\Phi_t=e^{Kt}\,,
    \label{eq:dPhi-as-exponential}
\end{align}
where $\K^\mu{}_\nu$ is the generator of the linearized flow introduced in~\eqref{eq:K-matrix}. The map $d\Phi_{t}$ can therefore be evaluated in terms of the spectral decomposition of $\K^\mu{}_\nu$.

\begin{proposition}
The linear response to a perturbation $\hat{A}(t)$, measured in terms of the observable $\hat{B}$, for a system initially in the state $\psi_0\in\mathcal{M}$ is given by
\begin{align}
\begin{split}\label{eq:linear-response}
    \delta B(t)&=-\ii\sum_{\ell}\,\mathrm{sgn}(\ii\lambda_\ell) [\mathcal{E}^\mu(\lambda_\ell)\partial_\mu B]\\
    &\hspace{40pt}\times\int^t_{-\infty}\hspace{-4mm}dt' \, e^{\lambda_\ell(t-t')}\,{[ \mathcal{E}^\nu(\lambda_\ell)\partial_\nu A(t')]}^*\,,
\end{split}
\end{align}
where all derivatives are evaluated at $\ket{\psi_0}$ and $\mathcal{E}^\mu(\lambda_\ell)$ is an eigenvector of $\K^\mu{}_\nu$ such that
\begin{equation}
    \K^\mu{}_\nu\mathcal{E}^\nu(\lambda_\ell)=\lambda_\ell\mathcal{E}^\mu(\lambda_\ell)\,,
\end{equation}
and normalised so that $\mathcal{E}^\mu(\lambda_\ell)\,\om_{\mu\nu}{\mathcal{E}^\nu(\lambda_\ell)}^*=\ii \;\mathrm{sgn}(\ii\lambda_\ell)$.
\end{proposition}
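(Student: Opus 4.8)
The plan is to combine the propagated perturbation of Proposition~\ref{prop:propagated-perturbation} with the linear-response expression~\eqref{eq:linearised-expectation}, $\delta B(t)=\delta x^\mu(t)\,\partial_\mu B$. Since we take the initial state to be the approximate ground state $\psi_0$, which is a fixed point, the free trajectory stays put, $\ket{\psi(t')}=\ket{\psi_0}$; hence every gradient $\partial_\rho A(t')$ and $\partial_\mu B$ is evaluated at $\psi_0$, and the linearized flow reduces to $d\Phi_{t-t'}=e^{\K(t-t')}$ as in~\eqref{eq:dPhi-as-exponential}. This turns the claim into
\begin{align}
    \delta B(t)=-\int_{-\infty}^t\!\!dt'\,(\partial_\mu B)\,(e^{\K(t-t')})^\mu{}_\nu\,\Om^{\nu\rho}\,\partial_\rho A(t')\,.
\end{align}
I would then insert the spectral decomposition $(e^{\K t})^\mu{}_\nu=\sum_\ell e^{\lambda_\ell t}\,\mathcal{E}^\mu(\lambda_\ell)\,\widetilde{\mathcal{E}}_\nu(\lambda_\ell)$, which is available because $\K$ is diagonalizable with the biorthonormal right/left eigenvectors introduced for the spectral function.

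The crucial step, where the real work sits, is to rewrite $\widetilde{\mathcal{E}}_\nu(\lambda_\ell)\,\Om^{\nu\rho}$ purely in terms of right eigenvectors. For this I would exploit that $\K$ is a symplectic generator, $\K\Om+\Om\K^\intercal=0$ (Section~\ref{sec:linearized-eom}), equivalently $\om\K+\K^\intercal\om=0$. Conjugating the eigenvalue equation $\K^\mu{}_\nu\mathcal{E}^\nu(\lambda_\ell)=\lambda_\ell\mathcal{E}^\mu(\lambda_\ell)$ and using that the eigenvalues are purely imaginary, $\lambda_\ell^*=-\lambda_\ell$, one checks that $\om_{\mu\rho}\mathcal{E}^\mu(\lambda_\ell)^*$ is a \emph{left} eigenvector of $\K$ with eigenvalue $\lambda_\ell$, hence proportional to $\widetilde{\mathcal{E}}_\rho(\lambda_\ell)$. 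The proportionality constant is pinned down by the biorthonormality $\mathcal{E}^\mu(\lambda_\ell)\widetilde{\mathcal{E}}_\mu(\lambda_\ell)=1$ together with the chosen normalization $\mathcal{E}^\mu(\lambda_\ell)\om_{\mu\nu}\mathcal{E}^\nu(\lambda_\ell)^*=\ii\,\mathrm{sgn}(\ii\lambda_\ell)$, giving $\widetilde{\mathcal{E}}_\rho(\lambda_\ell)=\ii\,\mathrm{sgn}(\ii\lambda_\ell)\,\om_{\mu\rho}\mathcal{E}^\mu(\lambda_\ell)^*$. Contracting with $\Om$ and using $\om_{\mu\nu}\Om^{\nu\rho}=\delta_\mu{}^\rho$ collapses this to $\widetilde{\mathcal{E}}_\nu(\lambda_\ell)\Om^{\nu\rho}=\ii\,\mathrm{sgn}(\ii\lambda_\ell)\,\mathcal{E}^\rho(\lambda_\ell)^*$. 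I expect the sign and normalization bookkeeping to be the main obstacle, since it is easy to drop a factor when juggling the antisymmetry of $\om$, the complex conjugation, and the compatibility between the eigenvector normalization quoted in the spectral-function footnote and the one appearing in the statement; eigenvalue degeneracies require arranging the biorthonormal basis block-wise, which is routine but should be noted.

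Finally, substituting this identity back and factorizing the $\ell$-sum, I arrive at
\begin{align}
\begin{split}
    \delta B(t)={}&-\ii\sum_\ell\mathrm{sgn}(\ii\lambda_\ell)\,[\mathcal{E}^\mu(\lambda_\ell)\partial_\mu B]\\
    &\times\int_{-\infty}^t\!\!dt'\,e^{\lambda_\ell(t-t')}\,\mathcal{E}^\rho(\lambda_\ell)^*\,\partial_\rho A(t')\,.
\end{split}
\end{align}
Since $\hat A$ is Hermitian, $A(t')$ and hence $\partial_\rho A(t')$ are real, so $\mathcal{E}^\rho(\lambda_\ell)^*\partial_\rho A(t')=[\mathcal{E}^\nu(\lambda_\ell)\partial_\nu A(t')]^*$, which reproduces the claimed formula~\eqref{eq:linear-response} exactly.
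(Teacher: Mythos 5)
Your proposal is correct and takes essentially the same route as the paper: both combine the propagated perturbation of Proposition~\ref{prop:propagated-perturbation} with $\delta B(t)=\delta x^\mu(t)\,\partial_\mu B$, insert the spectral decomposition of $\K^\mu{}_\nu$, and use the symplectic property $\K\Om+\Om\K^\intercal=0$ together with the stated normalization to replace the dual eigenvectors by $\om$-contracted complex conjugates of right eigenvectors. The only cosmetic difference is the direction of the key identity --- you show $\om_{\mu\rho}\mathcal{E}^\mu(\lambda_\ell)^*\propto\widetilde{\mathcal{E}}_\rho(\lambda_\ell)$ on the left-eigenvector side, whereas the paper shows $\Om^{\mu\nu}\widetilde{\mathcal{E}}_\nu(-\ii\omega)\propto\mathcal{E}^\mu(\ii\omega)$ and then inverts $\Om$ --- and the paper additionally verifies in a footnote (via positivity of the Hessian at the minimum) that the stated normalization is actually achievable, a well-posedness point you take as given from the proposition's hypothesis.
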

\begin{proof}
We can always decompose $\K^\mu{}_\nu$ in terms eigenvectors $\mathcal{E}^\mu(\lambda)$ with eigenvalues $\lambda$ and dual eigenvectors\footnote{The dual vector $\widetilde{\mathcal{E}}_\mu(\lambda)$ is defined by $\widetilde{\mathcal{E}}_\mu(\lambda)\,\mathcal{E}^\mu(\lambda')=\delta_{\lambda,\lambda'}$} $\widetilde{\mathcal{E}}_\mu(\lambda)$, such that
\begin{align}
    \K^\mu{}_\nu=\sum_{\ell}\lambda_\ell\,\mathcal{E}^\mu(\lambda_\ell)\,\widetilde{\mathcal{E}}_\nu(\lambda_\ell)\,.
    \label{eq:spectral_decomposition_K}
\end{align}
The eigenvalues $\lambda_\ell$ will come in conjugate pairs $\pm\ii\omega_\ell$, which implies that the associated eigenvectors and dual eigenvectors are complex and mathematically speaking lie the complexified tangent space. However, as $\K^\mu{}_\nu$ is a real map, we must have $\mathcal{E}^\mu(\ii\omega)={\mathcal{E}^\mu(-\ii\omega)}^*$.\\
We then notice that $\Om^{\mu\nu} \widetilde{\mathcal{E}}_\nu(-\ii\omega)$ is an eigenvector of $\K$ with eigenvalue $\ii\omega$. To see this it is sufficient to apply $\K$ to it and use the symplectic property $\K\Om=-\Om \K^\intercal$. It is then always possible to normalize the eigenvectors $\mathcal{E}^\mu$ such that the relation $\Om^{\mu\nu} \widetilde{\mathcal{E}}_\nu(-\ii\omega)=-\ii\,\mathrm{sgn}(\omega)\mathcal{E}^\mu(\ii\omega)=-\ii\,  \mathrm{sgn}(\omega) {\mathcal{E}^\mu(-\ii\omega)}^*$ holds.\footnote{Doing this rescaling while maintaining the property $\mathcal{E}^\mu(\ii\omega)={\mathcal{E}^\mu(-\ii\omega)}^*$ is actually only possible if $\mathcal{E}^\mu(-\ii\omega)\om_{\mu\nu}\mathcal{E}^\nu(\ii\omega)=\ii a$ with $a>0$, $\forall \omega>0$. But this is always true because by definition $\K=-\Om h$, where $h_{\mu\nu}=\partial_\mu\partial_\nu E$ is positive definite (Hessian at a local minimum). It follows that $-\om \K>0$ and therefore $0<-{\mathcal{E}^\mu(\ii\omega)}^* \om_{\mu\rho} \K^\rho{}_\nu \mathcal{E}^\nu(\ii\omega)=-\ii \omega \mathcal{E}^\mu(-\ii\omega)\om_{\mu\nu}\mathcal{E}^\nu(\ii\omega)=\omega a$.} From this, inverting $\Om$ and exploiting its antisymmetry, we have $\widetilde{\mathcal{E}}_\mu(-\ii\omega)=\ii\,  \mathrm{sgn}(\omega) {\mathcal{E}^\nu(-\ii\omega)}^*\om_{\nu\mu}$.\\
Using this and~\eqref{eq:spectral_decomposition_K}, we can rewrite~\eqref{eq:dPhi-as-exponential} as
\begin{align}
    (d\Phi_t)^\mu{}_\nu=\ii\sum_{\ell}\mathrm{sign}(\ii\lambda_\ell)\;e^{\lambda_\ell t}\,\mathcal{E}^\mu(\lambda_\ell)\,{\mathcal{E}^\rho(\lambda_\ell)}^*\,\om_{\rho\nu}.
\end{align}
Combining this with~\eqref{eq:linearised-expectation} and~\eqref{eq:propagated-perturbation} we have~\eqref{eq:linear-response}.
\end{proof}

\subsubsection{Spectral response}
To calculate spectral functions we now just need to evaluate the result~\eqref{eq:linear-response} for $\hat{A}(t)=\varphi(t) \hat{V}$ and $\hat{B}=\hat{V}$ and then take the Fourier transform.

\begin{proposition}
\label{prop:spectral-function}
The spectral function with respect to the perturbation operator $\hat{V}$, estimated by performing linear response theory on the variational manifold $\mathcal{M}$, is
\begin{equation}
    \mathcal{A}(\omega)= \mathrm{sgn}(\omega) \sum_\ell {\left|\mathcal{E}^\mu(\ii\omega_\ell)\,\partial_\mu V\right|}^2 \, \delta(\omega-\omega_\ell)\,,
    \label{eq:spectral-function}
\end{equation}
where $\mathcal{E}^i(\ii\omega_\ell)$ are the eigenvectors of $\K^\mu{}_\nu$, normalized such that ${\mathcal{E}^\mu(\ii\omega_\ell)}^*\,\om_{\mu\nu}\mathcal{E}^\nu(\ii\omega_\ell)=\ii \,\mathrm{sgn}(\omega_\ell)$, and the sum runs over all possible values of $\omega_\ell$ (appearing in pairs of opposite signs).
\end{proposition}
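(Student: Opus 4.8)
The plan is to reduce the statement to the linear-response formula~\eqref{eq:linear-response} already established in the previous proposition, specializing it to the spectral-function setup $\hat{A}(t)=\varphi(t)\hat{V}$ and $\hat{B}=\hat{V}$, then Fourier transforming and matching against the susceptibility relation~\eqref{eq:susceptibility} to read off $G^R(\omega)$. The spectral function then follows directly from its definition~\eqref{eq:def_spectral_function}.

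First I would substitute $\partial_\nu A(t')=\varphi(t')\,\partial_\nu V$ (with $\varphi$ real) and $\partial_\mu B=\partial_\mu V$ into~\eqref{eq:linear-response}. Since $[\mathcal{E}^\mu(\lambda_\ell)\partial_\mu V]\,[\mathcal{E}^\nu(\lambda_\ell)\partial_\nu V]^*=|\mathcal{E}^\mu(\lambda_\ell)\partial_\mu V|^2$, the response acquires the convolution form $\delta V(t)=\int dt'\,\chi(t-t')\,\varphi(t')$ with retarded kernel
\begin{align}
    \chi(\tau)=-\ii\,\Theta(\tau)\sum_\ell \mathrm{sgn}(\ii\lambda_\ell)\,|\mathcal{E}^\mu(\lambda_\ell)\partial_\mu V|^2\,e^{\lambda_\ell\tau}\,.
\end{align}
Comparison with~\eqref{eq:susceptibility} identifies $G^R(\omega)$ as the Fourier transform $\tilde\chi(\omega)$. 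Writing the purely imaginary eigenvalues as $\lambda_\ell=\ii\mu_\ell$ and using the retarded prescription $\omega\to\omega+\ii\epsilon$ enforced by $\Theta(\tau)$, each term integrates to $\int_0^\infty e^{(\ii\omega+\lambda_\ell)\tau}d\tau=\ii/(\omega+\mu_\ell+\ii\epsilon)$, giving
\begin{align}
    G^R(\omega)=-\sum_\ell \mathrm{sgn}(\mu_\ell)\,|\mathcal{E}^\mu(\ii\mu_\ell)\partial_\mu V|^2\,\frac{1}{\omega+\mu_\ell+\ii\epsilon}\,,
\end{align}
where I used $\mathrm{sgn}(\ii\lambda_\ell)=-\mathrm{sgn}(\mu_\ell)$ and $-\ii\cdot\ii=1$.

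Next I would apply the Sokhotski–Plemelj identity $\mathrm{Im}\,(x+\ii\epsilon)^{-1}=-\pi\delta(x)$, which discards the principal-value part and yields $\mathcal{A}(\omega)=-\tfrac{1}{\pi}\mathrm{Im}\,G^R(\omega)=\mathrm{sgn}(\omega)\sum_\ell |\mathcal{E}^\mu(\ii\mu_\ell)\partial_\mu V|^2\,\delta(\omega+\mu_\ell)$; the prefactor $\mathrm{sgn}(\omega)$ emerges because the delta forces $\mu_\ell=-\omega$, so $\mathrm{sgn}(\mu_\ell)=-\mathrm{sgn}(\omega)$. Relabeling the symmetric spectrum by $\omega_\ell:=-\mu_\ell$ converts this into $\mathrm{sgn}(\omega)\sum_\ell|\mathcal{E}^\mu(-\ii\omega_\ell)\partial_\mu V|^2\delta(\omega-\omega_\ell)$. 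The final step is to note that, since $\K^\mu{}_\nu$ is a real map, $\mathcal{E}^\mu(-\ii\omega_\ell)=\mathcal{E}^\mu(\ii\omega_\ell)^*$, while Hermiticity of $\hat{V}$ makes $\partial_\mu V$ real; hence $|\mathcal{E}^\mu(-\ii\omega_\ell)\partial_\mu V|^2=|\mathcal{E}^\mu(\ii\omega_\ell)\partial_\mu V|^2$, producing exactly~\eqref{eq:spectral-function}. The nonnegativity of this modulus squared makes $\mathrm{sgn}\,\mathcal{A}(\omega)=\mathrm{sgn}\,\omega$ manifest, as claimed earlier. I would also record that the normalization $\mathcal{E}^\mu(\lambda_\ell)\om_{\mu\nu}\mathcal{E}^\nu(\lambda_\ell)^*=\ii\,\mathrm{sgn}(\ii\lambda_\ell)$ underlying~\eqref{eq:linear-response} coincides, via the antisymmetry of $\om$, with the normalization ${\mathcal{E}^\mu(\ii\omega_\ell)}^*\om_{\mu\nu}\mathcal{E}^\nu(\ii\omega_\ell)=\ii\,\mathrm{sgn}(\omega_\ell)$ stated here, so no re-normalization is needed.

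The hard part is not any single calculation but the consistent bookkeeping of the conjugate-pair structure of the spectrum: tracking signs through $\mathrm{sgn}(\ii\lambda_\ell)$, through the $\ii\epsilon$ prescription, and through the relabeling $\mu_\ell\to-\omega_\ell$, and in particular recognizing that the apparent mismatch between $\mathcal{E}(-\ii\omega_\ell)$ and $\mathcal{E}(\ii\omega_\ell)$ is immaterial because only $|\cdot|^2$ enters and $\partial_\mu V$ is real. The distributional steps (the $\ii\epsilon$ limit and dropping the principal value) are routine but must be kept compatible with the retarded boundary condition.
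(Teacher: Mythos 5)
Your proof is correct and takes essentially the same route as the paper's: specialize the linear-response formula~\eqref{eq:linear-response} to $\hat{A}(t)=\varphi(t)\hat{V}$ and $\hat{B}=\hat{V}$, Fourier transform and match against~\eqref{eq:susceptibility} to read off $G^R(\omega)$, apply Sokhotski--Plemelj, and insert the imaginary part into~\eqref{eq:def_spectral_function}. The only difference is presentational: you spell out the relabeling $\lambda_\ell=-\ii\omega_\ell$ and the identity $\left|\mathcal{E}^\mu(-\ii\omega_\ell)\,\partial_\mu V\right|^2=\left|\mathcal{E}^\mu(\ii\omega_\ell)\,\partial_\mu V\right|^2$ (from the reality of $\K^\mu{}_\nu$ and of $\partial_\mu V$), steps the paper's displayed expression for $G^R(\omega)$ absorbs silently, and your sign bookkeeping through $\mathrm{sgn}(\ii\lambda_\ell)$ and the $\ii\epsilon$ prescription checks out.
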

\begin{proof}
Evaluating the Fourier transform of~\eqref{eq:linear-response} and comparing with~\eqref{eq:susceptibility} leads us to the estimate for the retarded Green's function
\begin{align}
\begin{split}
    G^R(\omega)&=-\ii\sum_\ell\mathrm{sgn}(\omega_\ell) {\left|\mathcal{E}^\mu(\ii\omega_\ell)\,\partial_\mu V\right|}^2 \, \int \!\!dt \, e^{\ii(\omega-\omega_\ell)t}\Theta(t)\\
    &=\sum_\ell\mathrm{sgn}(\omega_\ell) {\left|\mathcal{E}^\mu(\ii\omega_\ell)\,\partial_\mu V\right|}^2 \\
    &\hspace{20mm}\times\left[P\frac{1}{\omega-\omega_\ell}-\ii\pi\delta(\omega-\omega_\ell)\right]\,,
\end{split}
\end{align}
where the Sokhotski-Plemelj formula has been used. The imaginary part of this expression can be then be inserted into the definition of the spectral function~\eqref{eq:def_spectral_function}, leading to the result~\eqref{eq:spectral-function}.
\end{proof}

Spectral functions calculated in this way have the desirable property $\mathrm{sgn}\mathcal{A}(\omega)=\mathrm{sgn}\,{\omega}$.

\textbf{Kähler vs. non-Kähler.} On a non-Kähler manifold, where we have two inequivalent definitions of the equations of motion,  it only makes sense to perform linear response theory with the ones coming from the Lagrangian action principle, as their fixed point coincides with the approximate ground state. 

\subsection{Imaginary time evolution}
\label{sec:imaginary-time-evolution}
In the previous sections we have assumed we knew the state $\ket{\psi_0}$ that minimizes the energy on the variational manifold $\mathcal{M}$. Solving this optimization problem is often non-trivial and different methods may be more appropriate in different situations. However, we would like here to present a method, known as projected imaginary time evolution, that makes use of the same geometric notions introduced in Section~\ref{sec:real-time-evolution} for real time evolution.

On full Hilbert space, imaginary time evolution is
\begin{align}
    \frac{d}{d\tau}\ket{\psi(\tau)}=-(\hat{H}-E(\tau))\ket{\psi(\tau)}\,,\label{eq:K-imaginary-time-evolution}
\end{align}
which can be integrated to the solution
\begin{align}
    \ket{\psi(\tau)}=\frac{e^{-\hat{H}\tau}\ket{\psi(0)}}{\sqrt{\braket{\psi(0)|e^{-2\hat{H}\tau}|\psi(0)}}}\,.
\end{align}
This will converge in the limit $\tau\to\infty$ to a true ground state if and only if the initial state $\ket{\psi(0)}$ had some non-zero overlap with the ground state space. 

Given a variational manifold $\mathcal{M}$, we can approximate imaginary time evolution on it and hope that it will also converge to the approximate ground state $\ket{\psi_0}$. This can be done by projecting~\eqref{eq:K-imaginary-time-evolution} onto tangent space. Contrary to real time evolution, there does not exist a formulation of imaginary time evolution in terms of an action principle, so the projection can only be done according to the McLachlan minimal error principle.

We would like to minimize the local projection error
\begin{align}
    \left\lVert\tfrac{d}{d\tau}\ket{\psi(\tau)}-(E-\hat{H})\ket{\psi(\tau)}\right\rVert\,,
\end{align}
imposing that $\tfrac{d}{d\tau}\ket{\psi(\tau)}\in\mathcal{T}_{\psi}\mathcal{M}$, which leads to
\begin{align}
\begin{split}\label{eq:ml_projected_evolution}
    \tfrac{d}{d\tau}\ket{\psi(\tau)}&=\mathbb{P}_{\psi(\tau)}(E-\hat{H})\ket{\psi(\tau)}\\
    &=-\mathbb{P}_{\psi(\tau)}\hat{H}\ket{\psi(\tau)}\,,
\end{split}
\end{align}
where we used $\mathbb{P}_{\psi}\ket{\psi}=0$.

\begin{figure}
    \centering
	\begin{tikzpicture}[scale=.7]
	\manifold[black,thick,fill=white,fill opacity=0.9]{-1,0}{10,5}
	\draw (10,5) node[right]{$\mathcal{M}$};
	\begin{scope}[shift={(-.6,-.3)}]
	    \draw[thick,gray] (5,1.9) -- (7,2.4) -- (8,4.6) node[right]{$\mathcal{T}_{\psi}\mathcal{M}$} -- (6,4.1) -- cycle;
		\draw[->,very thick,red] (6.5, 3.25) -- (5.1, 2.9);
		\draw (8.1, 2.43) node[red,left,rectangle,fill=white,rounded corners=2pt,inner sep=1pt,fill opacity=.8]{$\mathcal{F}^\mu=\mathbb{P}^\mu_{\psi}(-\hat{H})\ket{\psi}$};
		\draw[dotted] (5.1, 4.9) -- node[fill=white,fill opacity=.9,rounded corners=2pt,inner sep=1pt]{$\mathbb{P}^\mu_{\psi}$} (5.1, 2.9);
		\draw[->,very thick,rounded corners=2pt,inner sep=1pt] (6.5, 3.25) -- (5.1, 4.9) node[above]{$-\hat{H}\ket{\psi}$};
		\fill (6.5, 3.25) node[above,xshift=8pt]{$\psi$} circle (2pt);
	\end{scope}
	\begin{scope}[shift={(1.9,2)},scale=1.2]
		\foreach \i in {1,...,6}
		{\draw[->,red] (190+48*\i:.4) -- (190+48*\i:.15);}
		\foreach \i in {1,...,6}
		{\draw[->,red] (198+48*\i:.75) -- (190+48*\i:.5);}
		\fill (0,0) node[left,xshift=-4pt,fill=white,rounded corners=2pt,inner sep=1pt]{$\psi_{0}$} circle (2pt);
	\end{scope}
	\end{tikzpicture}
    \ccaption{Imaginary time evolution}{We illustrate the imaginary time evolution vector field $\mathcal{F}^\mu$ on the variational family $\mathcal{M}$ analogously to figure\ \ref{fig:real-time-evolution}, which is given by the orthogonal projection of $-\hat{H}\ket{\psi}$ through $\mathbb{P}_\psi$ onto $\mathcal{M}$. This vector field flows towards the global minimum $\psi_0$ of the energy function.}
    \label{fig:imaginary-time-evolution}
\end{figure}

This leads to the projected evolution equation
\begin{align}
     \frac{dx^\mu}{d\tau}\ket{V_\mu}=-\mathbb{P}_{\psi(\tau)}\hat{H}\ket{\psi(\tau)}\,,
\end{align}
from which we can define the imaginary time evolution vector field $\mathcal{F}^\mu$ everywhere on $\mathcal{M}$, such that
\begin{align}
    \frac{dx^\mu}{d\tau}=\mathcal{F}^\mu(x)=-\mathbb{P}^\mu_{\psi(x)}\hat{H}\ket{\psi(x)}\,.
    \label{eq:imaginary-time-evolution-field}
\end{align}
This vector field can be understood as follows.

\begin{proposition}
Given a manifold $\mathcal{M}$, the projected imaginary time evolution is given by
\begin{align}
   \frac{dx^\mu}{d\tau}=\mathcal{F}^\mu(x)=-\G^{\mu\nu}(\partial_
   \nu E)\,,\label{eq:K-projected-imaginary-time-evoltuon}
\end{align}
where $E(x)$ is the energy function, defined in the context of equation~\eqref{eq:def-expectation}. Its solution $x(\tau)$ monotonically decreases the energy.
\end{proposition}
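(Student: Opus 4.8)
The plan is to obtain the stated form of $\mathcal{F}^\mu$ as an immediate consequence of the gradient identity already established in the discussion around~\eqref{eq:E-gradient}, and then to prove monotonicity by differentiating the energy along the resulting flow. Both steps are short; the substance lies in verifying that the objects entering the formula are well-defined, rather than in any lengthy computation.

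First I would start from the coordinate expression~\eqref{eq:imaginary-time-evolution-field} for the projected imaginary time evolution, namely $\mathcal{F}^\mu=-\mathbb{P}^\mu_{\psi}\hat{H}\ket{\psi}$, which was derived by minimizing the local projection error and using $\mathbb{P}_\psi\ket{\psi}=0$. I would then invoke the relation $\mathbb{P}^\mu_\psi\hat{A}\ket{\psi}=\G^{\mu\nu}(\partial_\nu A)$, valid for any Hermitian $\hat{A}$ and following from~\eqref{eq:E-gradient}, specialized to $\hat{A}=\hat{H}$ so that $A=E$. This yields $\mathbb{P}^\mu_\psi\hat{H}\ket{\psi}=\G^{\mu\nu}(\partial_\nu E)$ and hence $\mathcal{F}^\mu=-\G^{\mu\nu}(\partial_\nu E)$, which is exactly~\eqref{eq:K-projected-imaginary-time-evoltuon}. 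The essential observation is that this identity only involves the restricted metric $\g$ and its inverse $\G$, not the complex structure $\J$, so it holds irrespective of whether $\mathcal{M}$ is Kähler.

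Next, to establish that $E$ decreases monotonically along any solution $x(\tau)$, I would compute the derivative by the chain rule, $\frac{dE}{d\tau}=(\partial_\mu E)\frac{dx^\mu}{d\tau}=-(\partial_\mu E)\,\G^{\mu\nu}\,(\partial_\nu E)$. Since $\g_{\mu\nu}$ is symmetric and positive definite (assuming no redundancy in the parametrization, as in section~\ref{sec:general_manifold}), its inverse $\G^{\mu\nu}$ is positive definite as well, so the quadratic form on the right is non-negative. Therefore $\frac{dE}{d\tau}\leq 0$, with equality precisely at critical points where $\partial_\nu E=0$. This makes the flow a gradient descent of $E$ with respect to the metric $\g$, confirming convergence to a (possibly local) energy minimum.

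The main point requiring care is not the algebra but the meaning of $\G^{\mu\nu}$: one must confirm that it is a genuine positive-definite inverse of $\g_{\mu\nu}$, which fails only in the presence of gauge redundancies among the $\ket{V_\mu}$. In that case $\G$ is to be read as the Moore--Penrose pseudo-inverse of $\g$ on the orthogonal complement of its kernel, as in section~\ref{sec:general_manifold}; since $\g$ is positive definite on that complement, and the kernel directions correspond to vanishing tangent vectors that do not affect the physical state, the same monotonicity argument carries through unchanged. A secondary remark worth making is that, in contrast to the Lagrangian real-time evolution which relies on $\Om$, imaginary time evolution depends only on $\g$, so it is always well-defined and insensitive to the Kähler versus non-Kähler distinction.
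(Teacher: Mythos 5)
Your proposal is correct and follows essentially the same route as the paper's own proof: both start from $\mathcal{F}^\mu=-\mathbb{P}^\mu_{\psi}\hat{H}\ket{\psi}$, apply the gradient identity $\mathbb{P}^\mu_\psi\hat{H}\ket{\psi}=\G^{\mu\nu}(\partial_\nu E)$ from~\eqref{eq:E-gradient}, and then obtain monotonicity from the chain rule together with the positivity of $\G^{\mu\nu}$. Your added remarks on the pseudo-inverse in the presence of parametrization redundancies and on the Kähler-independence of the result are consistent with the paper's surrounding discussion and do not change the argument.
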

\begin{proof}
We apply the projector $\mathbb{P}^\mu_{\psi}$ in~\eqref{eq:imaginary-time-evolution-field} to find
\begin{align}
    \mathcal{F}^\mu=-\mathbb{P}^\mu\hat{H}\ket{\psi}=-\frac{2}{\braket{\psi|\psi}}\G^{\mu\nu}\,\mathrm{Re} \bra{V_\nu}\hat{H}\ket{\psi}\,.
\end{align}
We simplify this by using~\eqref{eq:E-gradient}. Plugging this back into the previous equations, we arrive at~\eqref{eq:K-projected-imaginary-time-evoltuon}. To show that the energy monotonically decreases, we find
\begin{align}
    \frac{dE}{d\tau}=(\partial_\mu E) \frac{dx^\mu}{d\tau}=-(\partial_\mu E) \,\G^{\mu\nu}\,(\partial_\nu E)\leq 0\,,
\end{align}
which follows from the positivity of $\G^{\mu\nu}$.
\end{proof}

We thus recognize projected imaginary time evolution~\eqref{eq:K-projected-imaginary-time-evoltuon} as gradient descent of the energy function $E(x)$ with respect to the natural geometry encoded in the metric $\G^{\mu\nu}$ on the manifold $\mathcal{M}$, as illustrated in figure~\ref{fig:imaginary-time-evolution}. It is our experience that solving~\eqref{eq:K-projected-imaginary-time-evoltuon} numerically has better convergence properties than performing a naive gradient descent, where we just try to minimize the energy $E(x)$ as a function of $x^\mu$ assuming a flat metric. Hence, when replacing the fixed time step by a line search, imaginary time evolution becomes equivalent to Riemannian gradient descent. Indeed, the literature on Riemannian optimization~\cite{smith1994optimization,mahony1994optimization,edelman1998geometry,absil2010optimization} describes how the Riemannian geometry (\ie the metric) of a manifold can be taken into account in each of the standard optimization algorithms such as the gradient descent method, Newton’s method, the conjugate gradient method, and quasi-Newton methods such as the (limited memory) Broyden–Fletcher–Goldfarb–Shanno scheme~\cite{huper2004newton,ring2012optimization,qi2010riemannian,huang2015broyden}.

\textbf{Kähler vs. non-Kähler.} The results discussed in this section do not rely on the manifold $\mathcal{M}$ being a Kähler manifold. The McLachlan projection principle is the only one that can be resonably defined for imaginary time evolution and leads to the desirable gradient descent result for any real differentiable manifold, independently of the Kähler property.

\subsubsection{Conserved quantities}
In many situations, one would like to further constrain our variational manifold by requiring that certain operators $\hat{A}_I$ have fixed expectation values $A_I$.
Geometrically, this amounts to restricting the search to the submanifold
\begin{align}
    \widetilde{\mathcal{M}}=\left\{\psi\in\mathcal{M}\,\big|\,\tfrac{\braket{\psi|\hat{A}_I|\psi}}{\braket{\psi|\psi}}=A_I\,\forall\, I\right\}\subset\mathcal{M}\,.
\end{align}

For example, for Hamiltonians commuting with the total particle number operator $\hat{N}$, one often wants to find lowest energy state within an eigenspace of $\hat{N}$ with $\hat{N}\ket{\psi}=N\ket{\psi}$. To approximate such a state on a variational manifold $\mathcal{M}$, we can search for minimal energy state on the submanifold of states with $\braket{\hat{N}}=N$.

In general, this manifold $\widetilde{\mathcal{M}}$ will not satisfy the Kähler property anymore. In particular, if we only fix a single expectation value, we will generically reduce the dimension of a Kähler $\mathcal{M}$ to an odd dimension, which cannot be again a Kähler manifold. However, we have seen that for the purpose of finding the state of minimal energy, we can apply formula~\eqref{eq:K-imaginary-time-evolution} on the reduced manifold, regardless of the Kähler property.

Instead of finding a new parametrization of the reduced variational manifold, as long as we choose an initial state for the imaginary time evolution that satisfies the desired constraints, we can then just implement them locally. We can indeed modify the imaginary time evolution vector field $\mathcal{F}$ by further projecting it onto the restricted tangent space $\mathcal{T}\widetilde{\mathcal{M}}$. In this way the respective expectation values are preserved by construction.

If there are several quantities $\hat{A}_{I}$ that we wish to fix, $\mathcal{T}_\psi\widetilde{\mathcal{M}}$ is given by the sub tangent space orthogonal to the span of $X_{I}^\mu=\mathbb{P}^\mu_{\psi} \hat{A}_{I}\ket{\psi}$. To project onto it, we define
\begin{align}
    \widetilde{g}_{IJ}=X_{I}^\mu \, \g_{\mu\nu}\, X_{J}^\nu\,,
\end{align}
which gives rise to the projector
\begin{equation}
    \widetilde{P}^\mu{}_\nu=\delta^\mu{}_\nu-X_{I}^\mu\, \widetilde{G}^{IJ}X_{J}^\rho\, \g_{\rho\nu}\,,
\end{equation}
where $\widetilde{G}^{IJ}$ is the inverse of $\widetilde{g}_{IJ}$ (or pseudo inverse, if not all constraints are independent). The modified imaginary time evolution vector field is then
\begin{align}
    \widetilde{\mathcal{F}}^\mu=\widetilde{P}^\mu{}_\nu\mathcal{F}^\nu\,,
\end{align}
which will conserve all the expectation values $A_I(\tau)$. In analogy to~\eqref{eq:mod-conserve-Lagrangian}, this is equivalent to
\begin{align}
    \widetilde{\mathcal{F}}^\mu=-\widetilde{\G}^{\mu\nu}(\partial_\nu E)\quad\text{with}\quad \widetilde{\G}^{\mu\nu}=\widetilde{P}^\mu{}_\sigma \widetilde{P}^\nu{}_\rho \G^{\sigma\rho}\,.
\end{align}

If we want to fix the expectation value of the number operator $\hat{N}$, we have the scalar function $N(x)=\braket{\hat{N}}$ with $X^\mu=\mathbb{P}^\mu_{\psi}\hat{N}\ket{\psi}= \G^{\mu\nu}\partial_\nu N$, such that
\begin{align}
    \widetilde{\mathcal{F}}^\mu=\mathcal{F}^\mu-\frac{\G^{\mu\nu}(\partial_\nu N)}{(\partial_\sigma N)\G^{\sigma\rho}(\partial_\rho N)}\,(\partial_\lambda N)\mathcal{F}^\lambda\,,
\end{align}
which clearly satisfies $\tfrac{dN}{d\tau}=(\partial_\mu N)\widetilde{\mathcal{F}}^\mu=0$.

\section{Applications}\label{sec:applications}
In this section we will apply the formalism developed in this paper to some examples of variational manifolds. The aim is to illustrate how different Kähler and non-Kähler structures can arise in practice.

In section~\ref{sec:Gaussian} we will discuss a very common and natural example of Kähler manifolds, namely bosonic and fermionic Gaussian states. Indeed, in the case of Gaussian states, Kähler structures arise naturally directly in the way these states can be parametrized. We will introduce them using a formalism that stresses this aspect. In sections~\ref{sec:group-theoretic-coherent-states} and~\ref{sec:generalized-Gaussian} we will focus on states of the form $\ket{\psi(x)}=\mathcal{U}(x)\ket{\phi}$ where $\mathcal{U}(x)$ is subset of unitary transformations and $\ket{\phi}$ an appropriately chosen reference state. We will show how this class of states can give rise to manifolds that violate the Kähler conditions to different degrees.

\subsection{Gaussian states}\label{sec:Gaussian}
We consider pure bosonic and fermionic Gaussian states, which are also known as quasi-free states. Bosonic Gaussian states (squeezed coherent states) form a prominent variational family in the study of bosonic systems, such as Bose-Einstein condensates~\cite{pethick2008bose}, cold atoms in optical lattices~\cite{guaita2019gaussian} and photonic systems~\cite{walls2007quantum}. Fermionic Gaussian states (generalized Hartree-Fock states, including Slater determinants) are equally important for the study of fermionic systems, including Bardeen–Cooper–Schrieffer theory~\cite{bardeen1957theory} and the Hartree-Fock method~\cite{hartree1928wave}. Other applications range from field theory~\cite{peskin2018introduction}, continuous variable quantum information~\cite{adesso2014continuous}, relativistic quantum information~\cite{bruschi2010unruh} and quantum fields in curved spacetime~\cite{ashtekar1975quantum}. Most importantly, they are completely determined by their one- and two-point function and they satisfy Wick's theorem. Interestingly, pure Gaussian states are in one-to-one correspondence to compatible Kähler structures $(\Gg_{ab},\Go_{ab},J^a{}_b)$ on the classical phase space. These Kähler structures are distinct from those $(\g_{\mu\nu},\om_{\mu\nu},\J^\mu{}_\nu)$ on tangent space, but we will derive relations between them.

While the relationship between pure Gaussian states and Kähler structures has been pointed out before in the context of quantum fields in curved spacetime~\cite{ashtekar1975quantum,wald1994quantum,holevo2012quantum,woit2017quantum}, the linear complex structure $J$ was only recently used as a convenient parametrization of states. This allowed for a unified formulation of bosonic and fermionic Gaussian states~\cite{bianchi2015entanglement,hackl2018aspects}, which enabled new results in the context of their entanglement dynamics~\cite{bianchi2018linear,hackl2018entanglement,romualdo2019entanglement}, the typicality of energy eigenstate entanglement~\cite{vidmar2017entanglement,vidmar2018volume,hackl2019average} and in the study of circuit complexity in quantum field theory~\cite{hackl2018circuit,chapman2019complexity}.

\subsubsection{Definition}
\label{sec:gaussian-def}
The theory of $N$ bosonic or fermionic degrees of freedom can be constructed on the classical phase space $V\simeq \mathbb{R}^{2N}$. We denote a phase space vector by $\xi^a\in V$ and a dual vector by $w_a\in V^*$, where we use the Latin indices $a,b,c,d,e$ for phase space to distinguish from tangent space indices $\mu,\nu,\sigma,\lambda$. We will see in a moment how this relates to standard bosonic and fermionic creation and annihilation operators.

The classical bosonic phase space $V$ is always equipped with symplectic form $\GO^{ab}$ and its inverse $\Go_{ab}$ satisfying $\GO^{ac}\Go_{cb}=\delta^a{}_b$, which define the classical Poisson brackets and then give rise to the canonical commutation relations (CCR). Similarly, one can define classical fermionic phase space $V$ by equipping it with a positive definite metric $\GG^{ab}$ and its inverse $\Gg_{ab}$ satisfying $\GG^{ac}\Gg_{cb}=\delta^a{}_b$ giving rise to the canonical anti-commutation relations (CAR).

We can always choose a basis of classical linear observables $\xi^a\qp(q_1,\cdots,q_N,p_1,\cdots,p_N)$, which can be bosonic or fermionic, such that $(\Go,\GO)$ or $(\Gg,\GG)$, respectively, take their standard form given by
\begin{equation}
\begin{aligned}\label{eq:V-background}
    \hspace{-2mm}\Go_{ab}&\qp\begin{pmatrix}
    0 & -\id\\
    \id & 0
    \end{pmatrix}\,,&\hspace{-2mm}\GO^{ab}&\qp\begin{pmatrix}
    0 & \id\\
    -\id & 0
    \end{pmatrix}\,, &&\textbf{(bosons)}\\
    \hspace{-2mm}\Gg_{ab}&\qp\begin{pmatrix}
    \id & 0\\
    0 & \id
    \end{pmatrix}\,,&\hspace{-2mm}\GG^{ab}&\qp\begin{pmatrix}
    \id & 0\\
    0 & \id
    \end{pmatrix}\,, &&\textbf{(fermions)}
\end{aligned}
\end{equation}
where we will use the symbol $\qp$ throughout this manuscript to indicate that the RHS of the equation show the vector or matrix representation with respect to the above standard basis.

Quantization promotes these linear observables $\xi^a$ to Hermitian quantum operators
\begin{align}
    \hat{\xi}^a\qp(\hat{q}_1,\cdots,\hat{q}_N,\hat{p}_1,\cdots,\hat{p}_N)
\end{align}
satisfying the commutation or anti-commutation relations
\begin{equation}
\begin{aligned}
	[\hat{\xi}^a,\hat{\xi}^b]&=\ii\GO^{ab}\,,&&\textbf{(bosons)}\\
	\{\hat{\xi}^a,\hat{\xi}^b\}&=\GG^{ab}\,.&&\textbf{(fermions)}
\end{aligned}\label{eq:boson-commutator}
\end{equation}
We refer to $\hat{\xi}^a$ as quadratures for bosons and as Majorana modes for fermions. These relations ensure that the bosonic or fermionic creation and annihilation operators $\hat{a}_i=(\hat{q}_i+\ii\hat{p}_i)/\sqrt{2}$ and $\hat{a}^\dagger_i=(\hat{q}_i-\ii\hat{p}_i)/\sqrt{2}$ will satisfy their standard commutation or anti-commutation relations, respectively.

Given a normalized quantum state $\ket{\psi}$, we define its one and two point correlation functions as
\begin{align}
    z^a&=\braket{\psi|\hat{\xi}^a|\psi}\,,\label{eq:1-point}\\
    C_2^{ab}&=\braket{\psi|(\hat{\xi}-z)^a(\hat{\xi}-z)^b|\psi}\,.\label{eq:2-point}
\end{align}
The fact that $\hat{\xi}^a$ is Hermitian implies that $z^a$ is real. This does not apply to $C_2^{ab}$, which we therefore decompose into its real and imaginary part given by
\begin{align}
    C_2^{ab}=\frac{1}{2}\left(\GG^{ab}+\ii \GO^{ab}\right)\,.
\end{align}
It is easy to verify that $G^{ab}$ is real, symmetric and positive definite with inverse $g_{ab}$ and $\Omega^{ab}$ is anti-symmetric. For bosons, $\GO^{ab}$ is fixed by the commutation relations, while for fermions, $\GG^{ab}$ is fixed by the anti-commutation relations. With respect to our basis $\hat{\xi}^a\equiv(\hat{q}_1,\cdots,\hat{q}_N,\hat{p}_1,\cdots,\hat{p}_N)$, they are given by~\eqref{eq:V-background}. We define the covariance matrix
\begin{align}
    \Gamma^{ab}=\left\{\begin{array}{lcl}
    G^{ab}=C_2^{ab}+C_2^{ba}     & &\textbf{(bosons)} \\
    \Omega^{ab}=-\ii(C_2^{ab}-C_2^{ba}) &    & \textbf{(fermions)}
    \end{array}\right.\,,\label{eq:def-covariance-matrix}
\end{align}
which is the only part of $C_2^{ab}$ that depends on $\ket{\psi}$. Later on, we will use $\Gamma^{ab}$ as parameters of our variational manifold.

We define the linear complex structure $J$ of $\ket{\psi}$ as
\begin{align}
	J^a{}_b=-G^{ac}\omega_{cb}\,,
\end{align}
as before. Note, however, that for fermions $\Omega^{ab}$ may not be invertible, in which case we define $\omega_{ab}$ as its unique pseudoinverse with respect to $G^{ab}$, \ie we invert $\Omega^{ab}$ on the orthogonal complement of its kernel. We define
\begin{align}
	\ket{\psi}\text{ is Gaussian}\quad\Leftrightarrow\quad J^2=-\id\,.
\end{align}
Put differently, pure bosonic and fermionic Gaussian states are those states, for which $(\Gg_{ab},\Go_{ab},J^a{}_b)$ are compatible Kähler structures as defined in section~\ref{sec:geometry}. For fermions, this also implies\footnote{There exist fermionic coherent states~\cite{combescure2012coherent}, where $z^a$ is a Grassmann number, which we do not consider here.} $z^a=0$, while for bosons $z^a$ is a free parameter, which we call the displacement. We denote a pure Gaussian state with displacement $z^a$ and complex structure $J$ by $\ket{J,z}$, which is uniquely determined (up to normalization and phase) by requiring
\begin{align}
	\frac{1}{2}(\delta^a{}_b+\ii J^a{}_b)(\hat{\xi}^b-z^b)\ket{J,z}=0\,.\label{eq:DefGaussian}
\end{align}
Given a Gaussian state $\ket{J,z}$, we define the operators
\begin{align}
\hat{\xi}_{\pm}^a=\frac{1}{2}(\delta^a{}_b\mp\ii J^a{}_b)(\hat{\xi}^b-z^b)\,,
\end{align}
which satisfy $\hat{\xi}^a=\hat{\xi}^a_++\hat{\xi}^a_-+z^a$ and $\hat{\xi}_{\plus}\ket{J,z}=0$. Put differently, we project onto the eigenspaces of $J$ with eigenvalues $\pm\ii$ and displaced by $z$. These spaces,  \ie complex linear combinations of $\hat{\xi}^a_+$ or $\hat{\xi}^a_-$, correspond to the spaces of creation and annihilation operators,respectively. Most importantly, $\hat{\xi}^a_\pm$ satisfy the commutation and anti-commutation relations given by
\begin{align}
[\hat{\xi}_\pm^a,\hat{\xi}_\pm^b]&=0\,,&[\hat{\xi}_{\plusminus}^a,\hat{\xi}_{\minusplus}^b]&=C_2^{ab}\,,\quad\textbf{(bosons)}\label{eq:Wick-contraction_bosons}\\
\{\hat{\xi}_\pm^a,\hat{\xi}_\pm^b\}&=0\,,&\{\hat{\xi}_{\plusminus}^a,\hat{\xi}_{\minusplus}^b\}&=C_2^{ab}\,.\quad\textbf{(fermions)}\label{eq:Wick-contraction_fermions}
\end{align}
Using these relations together with $\hat{\xi}_{\plus}\ket{J,z}=0$, one can show the equivalence of~\eqref{eq:2-point} and~\eqref{eq:DefGaussian}. Moreover, one can show that any higher order correlation function
\begin{align}
	C_n^{a_1\cdots a_n}=\braket{J,z|(\hat{\xi}-z)^{a_1}\dots(\hat{\xi}-z)^{a_n}|J,z}
\end{align}
with $n>2$ can be computed purely from $C_2^{ab}$ via Wick's theorem, which states the following:
\begin{itemize}
	\item[(a)] Odd correlation functions vanish, \ie $C_{2n+1}=0$.
	\item[(b)] Even correlation functions are given by the sum over all two-contractions
	\begin{align}
		\hspace{-1mm}\quad C^{a_1\cdots a_{2n}}_{2n}=\sum_{\sigma}\frac{|\sigma|}{n!}C_2^{a_{\sigma(1)}a_{\sigma(2)}}\hspace{-4mm}\ldots C_2^{a_{\sigma(2n-1)}a_{\sigma(2n)}}\,,
	\end{align}
	where the permutations $\sigma$ satisfy $\sigma(2i-1)<\sigma(2i)$ and $|\sigma|=1$ for bosons and $|\sigma|=\mathrm{sgn}(\sigma)$, called parity, for fermions.
\end{itemize}
This enables us to compute arbitrary expectation values of observables $\mathcal{O}$ written as polynomials in $\hat{\xi}^a$ efficiently.

\begin{example}
The harmonic oscillator $\hat{H}=\frac{1}{2}(\hat{p}^2+\omega^2\hat{q}^2)$ with standard position and momentum operators $[\hat{q},\hat{p}]=\ii$ has ground state wave function $\psi(q)=\omega^{1/4}e^{-\omega q^2/2}$ from which we find the one- and two point functions
\begin{align}
	z^a=0\,,\quad C_2^{ab}=\frac{1}{2}(G^{ab}+\ii\Omega^{ab})\equiv\frac{1}{2}\begin{pmatrix}
	1/\omega & \ii\\
	-\ii & \omega
	\end{pmatrix}\,.
\end{align}
This leads to the linear complex structure
\begin{align}
	J^a{}_b=-G^{ac}\omega_{cb}\equiv\begin{pmatrix}
	0 & 1/\omega\\
	-\omega & 0
	\end{pmatrix}\,.
\end{align}
We can compute $\hat{\xi}^a_{\plus}=\frac{1}{2}(\id+\ii J)^a{}_b(\hat{\xi}-z)^b$ as
\begin{align}
	\hat{\xi}^a_{\plus}=\frac{1}{2}\begin{pmatrix}
	1 & \ii/\omega\\
	-\ii\omega & 1
	\end{pmatrix}\begin{pmatrix}
	\hat{q}\\
	\hat{p}
	\end{pmatrix}=\frac{1}{\sqrt{2\omega}}\begin{pmatrix}
	\hat{a}\\
	-\ii\omega\hat{a}
	\end{pmatrix}\,,
\end{align}
where $\hat{a}=\frac{1}{2\sqrt{\omega}}(\omega\hat{q}+\ii\hat{p})$.
\end{example}

\begin{example}
\label{ex:fermionic-gaussian-states}
The fermionic oscillator of a single degree of freedom is given by $\hat{H}=\omega(\hat{n}-\tfrac{1}{2})$, where $\hat{n}=\hat{a}^\dagger\hat{a}$ with $\hat{a}^\dagger$ and $\hat{a}$ being fermionic creation and annihilation operators. We can go to Majorana modes $q=\frac{1}{\sqrt{2}}(\hat{a}^\dagger+\hat{a})$ and $\hat{p}=\frac{\ii}{\sqrt{2}}(\hat{a}^\dagger-\hat{a})$ which satisfy $q^2=p^2=\frac{1}{2}$. There are only two distinct fermionic Gaussian states given by $\ket{0}$ and $\ket{1}$. The associated complex structures are
\begin{align}
	J_{+}\qp\begin{pmatrix}
	0 & 1\\
	-1 & 0
	\end{pmatrix}\quad\text{and}\quad J_{-}\qp\begin{pmatrix}
	0 & -1\\
	1 & 0
	\end{pmatrix}\,,\label{eq:ex-fermionic-oscillator}
\end{align}
respectively. We have the annihilation operators $\hat{\xi}^a_{\plus}\equiv\tfrac{1}{\sqrt{2}}(\hat{a},\ii\hat{a})$ for the state $\ket{0}$ and $\hat{\xi}^a_{\plus}\equiv\tfrac{1}{\sqrt{2}}(\hat{a}^\dagger,-\ii\hat{a}^\dagger)$ for $\ket{1}$.
\end{example}

In summary, we showed how Kähler structures can be used as unifying framework to treat bosonic and fermionic Gaussian states on an equal footing. In particular, we can label pure Gaussian states $\ket{J,z}$ by their associated complex structure $J$, rather than using their covariance matrix. In praxis, we can quickly switch between the different Kähler structures using their relations as reviewed in appendix~\ref{app:common-formulas}.

\subsubsection{Gaussian transformations}
There is a subgroup of unitary transformations that map pure Gaussian states onto themselves. We refer to them as Gaussian transformations and we will see that they are given by the exponential of linear and quadratic operators in terms of $\hat{\xi}^a$.

In this context, we will encounter the symplectic and orthogonal Lie groups and algebras. We identify them with linear maps $M^a{}_b$ and $K^a{}_b$ on the classical phase space $V$. Using the symplectic form $\Omega^{ab}$ (bosons) and the metric $G^{ab}$ (fermions), we define the symplectic and orthogonal group $\mathcal{G}$ as
\begin{align}
\begin{split}
	\mathrm{Sp}(2N,\mathbb{R})&=\{M^a{}_b\in\mathrm{GL}(2N,\mathbb{R})| M\GO M^\intercal=\GO\}\,,\\
	\mathrm{O}(2N,\mathbb{R})&=\{M^a{}_b\in\mathrm{GL}(2N,\mathbb{R})| M\GG M^\intercal=\GG\}\,.\label{eq:Gaussian-groups}
\end{split}
\end{align}
respectively, with their associated Lie algebras $\mathfrak{g}$
\begin{align}
\begin{split}
\hspace{-1mm}\mathfrak{sp}(2N,\mathbb{R})&=\{K^a{}_b\in\mathfrak{gl}(2N,\mathbb{R})| K\Omega+\Omega K^\intercal=0\},\\
\hspace{-1mm}\mathfrak{so}(2N,\mathbb{R})&=\{K^a{}_b\in\mathfrak{gl}(2N,\mathbb{R})| KG+G K^\intercal=0\}\,.\label{eq:Gaussian-algebras}
\end{split}
\end{align}
Group elements $M\in\mathcal{G}$ are equivalent to the well-known Bogoliubov transformations. To see this relation, we can pick a basis $\hat{\xi}^a\qp(\hat{q}_1,\dots,\hat{q}_N,\hat{p}_1,\dots\hat{q}_N)$ of Hermitian operators satisfying~\eqref{eq:boson-commutator} and transform it to a new basis $\hat{\xi}'^a\equiv M^a{}_b\hat{\xi}^b$. This fixes a transformation $\hat{a}_i'=\sum^N_{j=1}(\alpha_{ij}\hat{a}_j+\beta_{ij}\hat{a}_j^\dagger)$, which now takes the known standard form of a Bogoliubov transformation.

We can represent Lie algebra elements $K$ faithfully as anti-Hermitian quadratic operators $\widehat{K}$ given by
\begin{align}\label{eq:quadratic-K}
    K^a{}_b\hspace{2mm}\Leftrightarrow\hspace{2mm}\widehat{K}=\left\{\begin{array}{ll}
    -\frac{\ii}{2}\Go_{ac}K^c{}_b\hat{\xi}^a\hat{\xi}^b     &  \textbf{(bosons)}\\
    \frac{1}{2}\Gg_{ac}K^c{}_b\hat{\xi}^a\hat{\xi}^b     &  \textbf{(fermions)}\\
    \end{array}\right.\,.
\end{align}
Using the canonical commutation, we can verify
\begin{align}
    \reallywidehat{[K_1,K_2]}=[\widehat{K_1},\widehat{K_2}]\,,
\end{align}
\ie our quadratic operators $\widehat{K}$ form a representation of the Lie algebra $\mathfrak{g}$. The respective exponentials of $\widehat{K}$ give rise to a projective representation of the associated Lie group $\mathcal{G}$, \ie we have the identification
\begin{align}
    M=e^{K}\quad\Leftrightarrow\quad \mathcal{S}(M)=e^{\widehat{K}}\,,\label{eq:squeezing}
\end{align}
between Lie group elements $M$ and unitary operators $\mathcal{S}(M)$, which we refer to as squeezing transformations.

For bosons, we also have displacement transformations
\begin{align}
    z^a\quad\Leftrightarrow\quad\mathcal{D}(z)&=\exp{\left(\ii z^a\omega_{ab}\hat{\xi}^b\right)}\,,\label{eq:displacements}
\end{align}
\ie we identify a phase space vector $z\in V$ with the respective unitary operator $\mathcal{D}(z)$. 

For fermions, products of $M=e^{K}$ for $K\in\mathfrak{so}(2N,\mathbb{R})$ will only generate the subgroup $\mathrm{SO}(2N,\mathbb{R})$, whose group elements satisfy $\det{M}=1$. To generate other group elements $M\in\mathrm{O}(2N,\mathbb{R})$ with $\det M=-1$, we can take any dual vector $v_a\in V^*$ satisfying $v_aG^{ab}v_b=2$ to define
\begin{align}
\mathcal{S}(M_v)=v_a\hat{\xi}^a\,,&&\textbf{(fermions)}\label{eq:G-disconnected}
\end{align}
representing $(M_v)^a{}_b=v_cG^{ca}v_b-\delta^a{}_b\in\mathrm{O}(2N,\mathbb{R}) $ with $\det{M_v}=-1$. We can further check that $\mathcal{S}(M_v)$ is unitary. Moreover, we have $\mathcal{S}^\dagger(M_v)\hat{\xi}^a\mathcal{S}(M_v)=(M_v)^a{}_b\hat{\xi}^b$. Consequently, together $\mathcal{S}(e^K)$ and $\mathcal{S}(M_v)$ for a single chosen $v_a$ generate the full orthogonal group $\mathrm{O}(2N,\mathbb{R})$, \ie every element $M\in\mathrm{O}(2N,\mathbb{R})$ with $\det M=-1$ can be represented as a $\mathcal{S}(M)\simeq\mathcal{S}(e^{K})\mathcal{S}(M_v)$ for a fixed $v_a$ and $K=\log MM_v^{-1}$.

Displacement and squeezing transformations form projective representations of vector addition in $V$ and group multiplication in $\mathcal{G}$. We define Gaussian transformations
\begin{align}
	\mathcal{U}(M,z)\cong\mathcal{D}(z)\mathcal{S}(M)\,,\label{eq:Gaussian-unitaries}
\end{align}
where ``$\cong$'' refers to equality up to a complex phase. Using Baker-Campbell-Hausdorff, it is not difficult to show
\begin{align}
	\mathcal{U}^\dagger(M,z)\hat{\xi}^a\mathcal{U}(M,z)=M^a{}_b\hat{\xi}^b+z^a\,.
\end{align}
Therefore, it is a projective representation with
\begin{align}
	\mathcal{U}(M,z)\mathcal{U}(\tilde{M},\tilde{z})\cong \mathcal{U}(M\tilde{M},z+M\tilde{z})\,,
\end{align}
where $(M\tilde{z})^a=M^a{}_b\tilde{z}^b$. The action of $\mathcal{U}(M,z)$ onto a Gaussian state is particularly simple and given by
\begin{align}
	\mathcal{U}(M,z)\ket{J_0,z_0}\cong\ket{MJ_0M^{-1},Mz_0+z}\,.
\end{align}
Given a linear complex structure $J_0$, there is a unique group of transformations $M$ preserving $J_0$, \ie
\begin{align}
	\hspace{-2mm}\mathrm{GL}(N,\mathbb{C})=\{M^a{}_b\in\mathrm{GL}(2N,\mathbb{R})|MJ_0M^{-1}=J_0\}\,.
\end{align}
This group is indeed the complex linear group $\mathrm{GL}(N,\mathbb{C})$, because the complex structure $J_0$ can turn $V\simeq\mathbb{R}^{2N}$ into a complex vector space $V\simeq\mathbb{C}^N$ with scalar multiplication $\odot: (\mathbb{C},V)\to V$ satisfying $(\alpha+\ii \beta)\odot v=\alpha\,v+\beta\,J_0v$. With this in mind, a transformation $M\in\mathrm{GL}(N,\mathbb{C})$ can be seen as a linear map on $V\simeq\mathbb{C}^N$ that commute with the representation $J_0$ of the imaginary unit, \ie $MJ_0=J_0M$.

Consequently, the subgroup of $\mathcal{G}$ which also preserves $J_0$ is the intersection
\begin{align}
	\mathcal{G}\cap\mathrm{GL}(N,\mathbb{C})=\mathrm{U}(N)\,.
\end{align}
This turns out to be the group of transformations that preserve all Kähler structures $(G,\Omega,J_0)$ on $V$, which are just the unitary transformations preserving the Kähler induced Hermitian inner product on $V$ given by
\begin{align}
	\braket{z,\tilde{z}}=\frac{1}{2}(g_{ab}+\ii\omega_{ab})z^a\tilde{z}^b\,.
\end{align}

Given a Gaussian state $\ket{J_0,0}$, we can reach another\footnote{For bosons, all Gaussian states are connected and can be reached. For fermions, there exist two disconnected sets of Gaussian states, which cannot be continuously connected.} Gaussian state $\ket{J,z}$ by applying the transformation
\begin{align}
    \mathcal{U}(e^{K},z)\cong\mathcal{D}(z)\,e^{\widehat{K}}\quad\text{with}\quad K=\frac{1}{2}\log{\Delta}\,,
\end{align}
where we introduced the relative covariance matrix
\begin{align}
    \Delta^a{}_b=-J^a{}_c(J_0)^c{}_b=\Gamma^{ac}(\Gamma_0)^{-1}_{cb}\,.
\end{align}
Above transformation follows from $J=e^{K}J_0e^{-K}$.

\begin{example}
\label{ex:1-mode-bosons}
The squeezing transformations of a single bosonic mode $\hat{\xi}^a\qp(\hat{q},\hat{p})$ are $\mathcal{G}=\mathrm{Sp}(2,\mathbb{R})$, generated by
\begin{align}
    X\qp\begin{pmatrix}
    1 & 0\\
    0 & -1
    \end{pmatrix}\,,\hspace{2mm} Y\qp\begin{pmatrix}
    0 & 1\\
    1 & 0
    \end{pmatrix}\,,\hspace{2mm} Z\qp\begin{pmatrix}
    0 & 1\\
    -1 & 0
    \end{pmatrix}\,,
\end{align}
with their associated quadratic operators
\begin{align}
    \widehat{X}=-\ii\tfrac{\hat{q}\hat{p}+\hat{p}\hat{q}}{2}\,,\quad\widehat{Y}=\ii\tfrac{\hat{q}^2-\hat{p}^2}{2}\,,\quad\widehat{Z}=-\ii\tfrac{\hat{q}^2+\hat{p}^2}{2}\,.
\end{align}
The Gaussian state $\ket{J_0,0}$ is preserved by $u\in\mathrm{U}(1)$ with
\begin{align}
    J_0\qp\begin{pmatrix}
    0 & 1\\
    -1 & 0
    \end{pmatrix}\quad\text{and}\quad u\qp\begin{pmatrix}
    \cos{\varphi} & \sin{\varphi}\\
    -\sin{\varphi} & \cos{\varphi}
    \end{pmatrix}\,,
\end{align}
where $\mathrm{U}(1)=\mathrm{Sp}(2,\mathbb{R})\cap\mathrm{GL}(1,\mathbb{C})$ and $u=e^{\varphi Z}$ with $[Z,J_0]=0$. The most general Gaussian state of one bosonic mode has complex structure
\begin{align}
    \hspace{-2mm}J\qp\begin{pmatrix}
    -\cos{\phi}\sinh{\rho} & \sin{\phi}\sinh{\rho}+\cosh{\rho} \\
    \sin{\phi} \sinh{\rho} -\cosh{\rho} & \cos{\phi}\sinh{\rho}
    \end{pmatrix}\,,\label{eq:general-GaussianJ-bosons}
\end{align}
for which we can verify $J^2=-\id$. From the relative complex structure $\Delta=-JJ_0$, we compute the generator
\begin{align}
    K=\frac{1}{2}\log{\Delta}\qp\frac{\rho}{2}\begin{pmatrix}
    \sin{\phi} & \cos{\phi}\\
    \cos{\phi} & -\sin{\phi}
    \end{pmatrix}\,,
\end{align}
such that $e^{\widehat{K}}\ket{J_0,0}\cong\ket{J,0}$.
\end{example}

\begin{example}
\label{ex:1-mode-fermions}
The squeezing transformations of a single fermionic mode $\hat{\xi}^a\!\qp\!(\hat{q},\hat{p})$ are $\mathcal{G}=\mathrm{O}(2,\mathbb{R})$, generated by
\begin{align}
    X\qp\begin{pmatrix}
    0 & 1\\
    -1 & 0
    \end{pmatrix}\quad\Leftrightarrow\quad \widehat{X}=\tfrac{\hat{q}\hat{p}-\hat{p}\hat{q}}{2}
\end{align}
and our choice of the additional group element
\begin{align}
    M_v\qp\begin{pmatrix}
    1 & 0\\
    0 & -1
    \end{pmatrix}\quad\Leftrightarrow\quad \mathcal{S}(M_v)=\sqrt{2}\,\hat{q}
\end{align}
with $v_a\equiv(\sqrt{2},0)$. As seen in example~\ref{ex:fermionic-gaussian-states}, there are exactly two distinct complex structures given by $J_+$ and $J_-$ from~\eqref{eq:ex-fermionic-oscillator} which are related by $J_+=M_v J_{-}M^{-1}_v$ and which both satisfy $uJu^{-1}=J$ for $u=e^X$. This is because $u$ is the generator $\mathrm{SO}(2,\mathbb{R})$ which is identical to the subgroup $\mathrm{U}(1)$ preserving $J_\pm$.
\end{example}

\begin{example}
\label{ex:2-mode-fermions}
The squeezing transformations of two fermionic modes $\hat{\xi}^a\qp(\hat{q}_1,\hat{q}_2,\hat{p}_1,\hat{p}_2)$ are $\mathcal{G}=\mathrm{O}(4,\mathbb{R})$ with six linearly independent generators $K_i$ satisfying $KG+GK^\intercal=0$. Given the linear complex structure
\begin{align}
    J_0\qp\begin{pmatrix}
     &  & 1& \\
     & & & 1 \\
    -1& & & \\
    & -1&  & 
    \end{pmatrix}\,,
\end{align}
there is a $4$-dimensional subspace of these generators also satisfying $[K,J_0]$, which generates  $\mathrm{U}(2)\subset\mathrm{O}(4,\mathbb{R})$. The most general fermionic complex structure is
\begin{align}
\footnotesize
    J_\pm\qp\left(
\begin{array}{cccc}
0 & \mp\sin{\theta} \sin{\phi} & \pm\cos{\theta} & \pm\sin{\theta}\cos{\phi} \\
\pm\sin{\theta} \sin{\phi} & 0 & -\sin{\theta}\cos{\phi} & \cos{\theta} \\
\mp\cos{\theta} & \sin{\theta}\cos{\phi} & 0 & \sin{\theta}\sin{\phi} \\
\mp\sin{\theta}\cos{\phi} & -\cos{\theta} & -\sin{\theta}\sin{\phi} & 0
\end{array}
\right)\,.\label{eq:general-GaussianJ-fermions}
\end{align}
We have $J_+=MJ_0M^{-1}$ and $J_-=M_vMJ_0M^{-1}M_v$ for $v_a\equiv(\sqrt{2},0,0,0)$ and $M=e^{K}$ with
\begin{align}
\footnotesize
    K=\frac{1}{2}\log{\Delta}\qp\frac{\theta}{2}\left(
\begin{array}{cccc}
 0 & \cos{\phi} & 0 & \sin{\phi} \\
  -\cos{\phi} & 0 & -\sin{\phi} & 0 \\
 0 & \sin{\phi} &  0 & -\cos{\phi} \\
 -\sin{\phi} & 0 & \cos{\phi} & 0
\end{array}
\right)
\end{align}
for $\Delta=J_+J_0$. From the explicit form of $J_\pm$, we see that $(\theta,\phi)$ behave as spherical coordinates. This agrees with the fact that the manifold of fermionic Gaussian states with two modes consists of two disjoint spheres.
\end{example}

In summary, Gaussian transformations consist of squeezing (bosons and fermions) and displacements (only bosons). The former changes both the complex structure (equivalently: covariance matrices) and displacements, while the latter only displaces the state. Whenever we choose a Gaussian state $\ket{J_0,z_0}$, it defines with the respective background structure (symplectic form for bosons or metric for fermions) a subgroup $\mathrm{U}(N)$ of transformations in $\mathcal{G}$ that preserve $J_0$.

\subsubsection{Geometry of variational family}
We will now shift gears. Rather than looking at the Kähler structures $(g_{ab},\omega_{ab},J^a{}_b)$ associated to an individual Gaussian state $\ket{J,z}$, we now consider the full variational family $\mathcal{M}$ of pure bosonic or fermionic Gaussian states to study the Kähler structures $(\g_{\mu\nu},\om_{\mu\nu},\J^\mu{}_\nu)$ on tangent space $\mathcal{T}_\psi\mathcal{M}$. For this, we will need to compute the tangent basis vectors $\ket{V_\mu}$, which we will split into two types $\ket{V_a}$ and $\ket{V_{ab}}$.

We introduced Gaussian states $\ket{J,z}$ as being uniquely (up to a complex phase) characterized by their complex structure $J$ and their displacement vector $z$. Consequently, a general tangent vector is characterized by the pair $(\delta J^a{}_b,\delta z^a)$ describing the respective changes of $J^a{}_b$ and $z^a$. However, the resulting expressions simplify if we express everything in terms of the (bosonic or fermionic) covariance matrix $\Gamma^{ab}$ as defined in~\eqref{eq:def-covariance-matrix}. We will therefore label states by $\ket{\Gamma,z}$ and tangent vectors by $(\delta\Gamma,\delta z)$ with
\begin{align}
    \ket{\delta \Gamma,\delta z}=\delta \Gamma^{ab}\ket{V_{ab}}+\delta z^a\ket{V_a}\in \mathcal{H}^\perp_{\ket{\Gamma,z}}\,.
\end{align}
The tangent space $\mathcal{T}_{(\Gamma,z)}\mathcal{M}$ thus decomposes as
\begin{align}
    \mathcal{T}_{(\Gamma,z)}=\mathcal{S}_{(\Gamma,z)}\oplus\mathcal{D}_{(\Gamma,z)}\,,
\end{align}
where $\mathcal{S}_{(\Gamma,z)}$ corresponds to the changes of $J$ and $\mathcal{D}_{(\Gamma,z)}$ refers to the changes of $z$ (only for bosons). We will see that these spaces can be naturally identified with the space of one- and two-particle excitations in Hilbert space (constructed as Fock space).

\textbf{Squeezing tangent space $\mathcal{S}_{(\Gamma,z)}$.} The tangent space of squeezings is described by variations $\delta\Gamma^{ab}$ of the covariance matrix. Note that such changes are constrained to preserve the complex structure property $J^2=-\id$ and reflect the symmetry/antisymmetry of $\Gamma$ for bosons/fermions, respectively. We therefore have
\begin{equation}
\begin{aligned}\label{eq:Gamma-constrain}
    \delta\Gamma^{ab}&=\delta\Gamma^{ba}\,,& &\delta\Gamma J^\intercal=J\delta\Gamma\,,&&\textbf{(bosons)}\\
    \delta\Gamma^{ab}&=-\delta\Gamma^{ba}\,,& &\delta\Gamma J^\intercal=J\delta\Gamma\,.&&\textbf{(fermions)}
\end{aligned}
\end{equation}
The tangent vectors $\ket{V_{ab}}=\mathbb{Q}_{(\Gamma,z)}\frac{\partial}{\partial \Gamma^{ab}}\ket{\Gamma,z}$ are then\footnote{The most general tangent vector of the squeezing manifold is given by $\ket{V_K}=\mathbb{Q}_{(\Gamma,z)}\widehat{K}\ket{\Gamma,z}$, where $\widehat{K}$ is a general quadratic operator from~\eqref{eq:quadratic-K}. We can then relate $\ket{V_K}$ to the change $\delta\Gamma=2\mathrm{Re}\braket{\Gamma,z|(\hat{\xi}^a\hat{\xi}^b+\hat{\xi}^b\hat{\xi}^a-2z^az^b)|V_K}$ for bosons and $\delta\Gamma=2\mathrm{Im}\braket{\Gamma,0|(\hat{\xi}^a\hat{\xi}^b-\hat{\xi}^b\hat{\xi}^a)|V_K}$ for fermions to find~\eqref{eq:squeezing-tangents}. For bosonic states $\ket{\Gamma,z}$ with $z^a\neq0$, $\ket{V_K}$ also has a component $\ket{\delta z}=\delta z^a\ket{V_a}$ with $\delta z^a=K^a{}_bz^b$ in the displacement tangent space $\mathcal{D}_{(\Gamma,z)}$, discussed next.}
\begin{align}\label{eq:squeezing-tangents}
    \ket{V_{ab}}=\left\{\begin{array}{ll}
    \frac{\ii}{4}g_{ac}\omega_{bd}\hat{\xi}_{\minus}^c\hat{\xi}_{\minus}^d\ket{\Gamma,z}     &  \textbf{(bosons)}\\
    \frac{1}{4}g_{ac}\omega_{bd}\hat{\xi}_{\minus}^c\hat{\xi}_{\minus}^d\ket{\Gamma,z}     &  \textbf{(fermions)}
    \end{array}\right.\,.
\end{align}
The set of tangent vectors $\ket{V_{ab}}$ is overcomplete, but this does not cause any problems as for any change $\delta \Gamma^{ab}$, the associated tangent vector is given by $\ket{\delta\Gamma}=\delta\Gamma^{ab}\ket{V_{ab}}$. Using Wick's theorem, we can compute the inner product
\begin{align}
    \braket{\delta\Gamma|\delta \tilde{\Gamma}}=\frac{1}{16}(\Gg_{ac}\Gg_{bd}+\ii \Go_{ac}\Gg_{bd})\delta\Gamma^{ab}\delta\tilde{\Gamma}^{cd}\,.
\end{align}
We thus see that the inner product between squeezing tangent vectors can be computed from the Kähler structure $(\Gg,\Go,J)$ on the classical phase space $V$.

\textbf{Displacement tangent space $\mathcal{D}_{(\Gamma,z)}$.}
The tangent space of displacements is described by the variations $\delta z^a$ of $z^a$, which can be changed freely. Therefore, we can identify the tangent space $\mathcal{T}_{\ket{\Gamma,z}}$ with the classical phase space $V$, \ie $\delta z^a\in V$. The local frame $\ket{V_a}$ is
\begin{align}
\ket{V_a}=\mathbb{Q}_{(\Gamma,z)}\frac{\partial}{\partial z^a}\ket{\Gamma,z}=\ii \Go_{ab}\hat{\xi}^b_{\minus}\ket{\Gamma,z}\,.\label{eq:tangent-Va}
\end{align}
We find the inner product
\begin{align}
\braket{\delta z|\delta \tilde{z}}=\frac{1}{2}\left(\Gg_{ab}-\ii \Go_{ab}\right)\delta z^a\delta\tilde{z}^b\,,
\end{align}
which implies that the tangent space is isomorphic to $V$ embedded with metric $\Gg_{ab}$ and symplectic form $-\Go_{ab}$.\footnote{The sign difference is due to our chosen conventions and related to the issue discussed in footnote~\ref{fn:symmetry}.\label{fn:evolution-sign}} The spaces $\mathcal{S}_{(\Gamma,z)}$ and $\mathcal{D}_{(\Gamma,z)}$ are orthogonal, because $\braket{V_a|V_{bc}}=0$ follows from $C_3^{abc}=0$ in Wick's theorem.

\textbf{Kähler structures on tangent space.} The tangent space of bosonic Gaussian states can be decomposed into the direct sum of displacements and squeezings, which are orthogonal due to $\braket{V_{ab}|V_c}=0$. The tangent space of fermionic Gaussian states, only consists of the squeezings. Evaluating the respective inner products gives
\begin{align}
\begin{split}
    \braket{\delta\Gamma,\delta z|\delta\tilde{\Gamma},\delta\tilde{z}}&=\frac{1}{16}(g_{ce}g_{df}-\ii\omega_{ce}g_{df})\delta\Gamma^{cd}\delta\tilde{\Gamma}^{ef}\\
    &\quad+\frac{1}{2}(g_{ab}-\ii\omega_{ab})\delta z^a\delta\tilde{z}^b\,,
\end{split}
\end{align}
giving rise to the Kähler structures
\begin{align}
    \g_{\mu\nu}&\equiv (+\Gg_{ab})\oplus (+\tfrac{1}{8}\Gg_{ce}\otimes \Gg_{df})\,,\\
    \om_{\mu\nu}&\equiv (-\omega_{ab})\oplus (-\tfrac{1}{8}\Go_{ce}\otimes \Gg_{df})\,.
\end{align}
Here, we do \emph{not} have $\braket{V_{ab}|V_{cd}}=\frac{1}{16}(g_{ac}g_{bd}-\ii\omega_{ac}g_{bd})$, but only $\braket{\delta\Gamma|\delta\tilde{\Gamma}}=\frac{1}{16}(g_{ac}g_{bd}-\ii\omega_{ac}g_{bd})\delta\Gamma^{ab}\delta\tilde\Gamma^{cd}$, where $\delta\Gamma^{ab}$ needs to satisfy~\eqref{eq:Gamma-constrain}. Inverting them on this subspace leads to the dual Kähler structures
\begin{align}
\begin{split}\label{eq:GaussianKaehler}
	\G^{\mu\nu}&\equiv (+\GG^{ab})\oplus (+8\GG^{ce}\otimes\GG^{df})\,,\\
	\Om^{\mu\nu}&\equiv (-\GO^{ab})\oplus (-8\GO^{ce}\otimes\GG^{df})\,,\\
	\J^\mu{}_\nu&\equiv (-J^a{}_b)\oplus (-J^c{}_e\otimes \delta^d{}_f)\,.
\end{split}
\end{align}
They are defined on the dual subspace spanned by $(w_a,W_{ab})$ with $W_{ab}$ satisfying
\begin{equation}
\begin{aligned}\label{eq:W-constrain}
    W_{ab}&=W_{ba}\,,& &W J=J^\intercal W\,,&&\textbf{(bosons)}\\
    W_{ab}&=-W_{ba}\,,& &W J=J^\intercal W\,,&&\textbf{(fermions)}
\end{aligned}
\end{equation}
dual to~\eqref{eq:Gamma-constrain}. Given a general $W_{ab}$, we can define its projection $\lfloor W\rfloor$ onto this subspace as
\begin{align}
    \lfloor W\rfloor_{ab}=\left\{
    \begin{array}{ll}
    \frac{1}{2}(W-J^\intercal WJ)_{(ab)}     & \textbf{(bosons)} \\
    &\\[-3mm]
    \frac{1}{2}(W-J^\intercal WJ)_{[ab]}     & \textbf{(fermions)}
    \end{array}\right.\,,\label{eq:W-projection}
\end{align}
with $W_{(ab)}=\frac{1}{2}(W_{ab}+W_{ba})$ and $W_{[ab]}=\frac{1}{2}(W_{ab}-W_{ba})$ referring to symmetrization and anti-symmetrization, respectively. The resulting projection $W_{ab}$ satisfies $\lfloor W\rfloor_{ab}\delta\Gamma^{ab}=W_{ab}\delta\Gamma^{ab}$ for $\delta\Gamma^{ab}$ satisfying~\eqref{eq:W-constrain}.

We find the action of $\ii$ onto $\ket{\delta\Gamma,\delta z}$ to be
\begin{align}
\ii\ket{\delta\Gamma,\delta z}=-J^a{}_c \delta \Gamma^{cb}\ket{V_{ab}}-J^c{}_d\delta z^d\ket{V_c}=\ket{-J\delta \Gamma,-J\delta z}\,.
\end{align}

\begin{example}
We consider the Gaussian state $\ket{J_0,0}$ of a single bosonic mode $\hat{\xi}^a\equiv(\hat{q},\hat{p})$ with Kähler structures
\begin{align}
    J_0\qp\begin{pmatrix}
    0 & 1\\
    -1 & 0
    \end{pmatrix}\,,\,\, G\qp\begin{pmatrix}
    1 & 0\\
    0 & 1
    \end{pmatrix}\,,\,\,\Omega\qp\begin{pmatrix}
    0 & 1\\
    -1 & 0
    \end{pmatrix}\,.
\end{align}
A change $\delta \Gamma$ of the covariance matrix $\Gamma=G$ is constrained to satisfy~\eqref{eq:Gamma-constrain}, such that we have
\begin{align}
    \delta\Gamma\qp\begin{pmatrix}
    a & b\\
    b & -a
    \end{pmatrix}\quad\text{and}\quad\delta z\qp\begin{pmatrix}
    c\\
    d
    \end{pmatrix}\,.
\end{align}
The associated tangent vectors are given by
\begin{equation}
\begin{aligned}
\hspace{-2mm}    \ket{V_{11}}&\equiv-\ket{V_{22}}\equiv \tfrac{(\hat{a}^\dagger)^2}{8}\ket{J_0,0}\,,&\ket{V_1}&=\tfrac{\hat{a}^\dagger}{\sqrt{2}}\ket{J_0,0}\,,\\
\hspace{-2mm}\ket{V_{12}}&\equiv\ket{V_{21}}\equiv \,\tfrac{\ii(\hat{a}^\dagger)^2}{8}\ket{J_0,0}\,,&\ket{V_2}&=\tfrac{\ii\hat{a}^\dagger}{\sqrt{2}}\ket{J_0,0}\,,
\end{aligned}
\end{equation}
which leads to a general tangent vector
\begin{align}
    \ket{\delta\Gamma,\delta z}=[\tfrac{1}{4}(a+\ii b)(\hat{a}^\dagger)^2+\tfrac{1}{\sqrt{2}}(c+\ii d)\hat{a}^\dagger]\ket{J_0,0}\,.
\end{align}
\end{example}

\begin{example}
We saw in example~\ref{ex:1-mode-fermions} that fermionic Gaussian states for $N=1$ consists of two points and thus is zero dimensional. Instead, we consider for $N=2$ the state $\ket{\Gamma,0}$ and allowed change $\delta\Gamma$ with
\begin{align}
\hspace{-2mm}\Gamma\qp\begin{pmatrix}
     & & 1& \\
     & & & 1\\
     -1 & & &\\
     & -1 & & 
    \end{pmatrix}\,,\quad
    \delta\Gamma\qp\begin{pmatrix}
     & a & & b\\
     -a& & -b& \\
      & b& &-a\\
     -b& & a & 
    \end{pmatrix}\,.
\end{align}
The non-zero tangent vectors are then
\begin{align}
    \ket{V_{13}}&\equiv\ket{V_{42}}\equiv-\ket{V_{24}}\equiv-\ket{V_{31}}\equiv\tfrac{\ii\hat{a}_1^\dagger\hat{a}_2^\dagger}{8}\ket{\Gamma,0}\,,\\
    \ket{V_{14}}&\equiv\ket{V_{23}}\equiv-\ket{V_{32}}\equiv-\ket{V_{41}}\equiv-\tfrac{\hat{a}_1^\dagger\hat{a}_2^\dagger}{8}\ket{\Gamma,0}\,,
\end{align}
where we recall $\ket{V_{ab}}\equiv-\ket{V_{ba}}$. We thus find
\begin{align}
    \ket{\delta\Gamma,0}=-\tfrac{1}{2}(a+\ii b)\ii\hat{a}_1^\dagger\hat{a}_2^\dagger\ket{\Gamma,0}
\end{align}
\end{example}

In summary, the Kähler structure $(\Gg_{ab},\Go_{ab},J^a{}_b)$ on the classical phase space $V$ are intimately linked to the ones $(\g_{\mu\nu},\om_{\mu\nu},\J^\mu{}_\nu)$ on tangent space. For bosons, we saw that the displacement tangent space $\mathcal{D}_{\ket{J,z}}$ can be naturally identified with phase space $V$ as Kähler space.

\subsubsection{Variational methods}
After having clarified the Kähler geometry of bosonic and fermionic Gaussian states, we can use them to illustrate the variational methods discussed in section~\ref{sec:variational_methods}. Applications include (A) real time evolution, (B) excitation spectra, (D) spectral functions and (D) imaginary time evolution.

\textbf{Real time evolution.} Based on~\eqref{eq:projected-real-time-evolution}, we have the Lagrangian evolution equation $\frac{d x^\mu}{dt}=-\mathbf{\Omega}^{\mu\nu}\frac{\partial E}{\partial x^\nu}$. As Gaussian states are parametrized by $x^\mu=(\Gamma,z)$, we need to compute $\tfrac{\partial E}{\partial \Gamma^{ab}}$ and $\tfrac{\partial E}{\partial z^a}$. As $\Om^{\mu\nu}$ is only defined on the subspace satisfying~\eqref{eq:W-constrain}, we need to project $\frac{\partial E}{\partial \Gamma}$ onto this subspace to find $\frac{d}{dt}\Gamma^{ab}=8\Omega^{ac}G^{bd} \lfloor\frac{\partial E}{\partial \Gamma}\rfloor_{cd}$ and thus\footnote{By construction the energy will only depend on the symmetric or antisymmetric part of $\Gamma^{ab}$ for bosons and fermions, respectively, such that $\frac{\partial E}{\partial \Gamma}$ will be automatically symmetric or antisymmetric.\\
Note the sign difference in $\dot{z}=\Omega \partial E$ compared to $\dot{\mathcal{X}}=-\Om\partial E$, which is due to the chosen conventions mentioned in~\ref{fn:evolution-sign}.
\label{fn:symmetry}}
\begin{align}
\begin{split}
    \tfrac{d}{dt}z^a= X^a&=\Omega^{ab} \tfrac{\partial E}{\partial z^b}\,,\\
    \hspace{-2mm}\tfrac{d}{dt}\Gamma^{ab}= X^{ab}&=4\Omega^{ac}\!\left(\tfrac{\partial E}{\partial \Gamma}-J^\intercal \tfrac{\partial E}{\partial \Gamma} J\right)_{cd}\!G^{db}\\
    &=4(\Omega\tfrac{\partial E}{\partial \Gamma}G-G\tfrac{\partial E}{\partial \Gamma}\Omega)^{ab}\label{eq:Gaussian-eom}
\end{split}
\end{align}
in agreement with the respective equations\footnote{See equations~(31) of~\cite{shi2018variational}, where $z=\Delta_R/\sqrt{2}$, $\Omega=\sigma$ and $\Gamma=G=\Gamma_b$ for bosons and $G=\id$ and $\Gamma=\Omega=-\Gamma_m$ for fermions.} of~\cite{shi2018variational}. We introduced the evolution vector field $\mathcal{X}^\mu=(X^a,X^{ab})$. We can also define the instantaneous Hamiltonian
\begin{align}
\hspace{-3mm}\hat{H}=\left\{\begin{array}{ll}
    \!\!\frac{1}{2}k_{ab}(\hat{\xi}-z)^a(\hat{\xi}-z)^b+l_a\hat{\xi}^a-E     &  \textbf{(bosons)}\\
    \!\!\frac{\ii}{2}k_{ab}\hat{\xi}^a\hat{\xi}^b-E    & \textbf{(fermions)}
    \end{array}\right.\hspace{-3mm}
\end{align}
with $l=\frac{\partial E}{\partial z}$ and $k=\pm2\lfloor\frac{\partial E}{\partial\Gamma}\rfloor$, where $(+)$ applies to bosons and $(-)$ to fermions. This is the quadratic Hamiltonian whose time evolution on the Gaussian family agrees with the projection $\mathcal{X}^\mu$, \ie $-\ii\hat{H}\ket{\Gamma,z}=\mathcal{X}^\mu\ket{V_\mu}$. We further define the instantaneous Lie algebra element
\begin{align}
    K^a{}_b=\left\{\begin{array}{ll}
    \Omega^{ac}k_{cb}     &  \textbf{(bosons)}\\
    G^{ac}k_{cb}     & \textbf{(fermions)}
    \end{array}\right.\,,
\end{align}
which allows us to write the time evolution equation for the linear complex structure and displacement simply as
\begin{align}
    \dot{z}=\Omega^{ab}l_b\quad\text{and}\quad\dot{J}=[K,J]\,.
\end{align}
Note that these equations are in general non-linear, as $K$ and $l$ depend on the state and thus on $J$ and $z$.

\textbf{Imaginary time evolution.} We recall that on a general manifold, we have the the imaginary time evolution vector field $\frac{d x^\mu}{d\tau}=-\mathbf{G}^{\mu\nu}\frac{\partial E}{\partial x^\nu}$. This translates to $\frac{d}{d\tau}\Gamma^{ab}=8\Omega^{ac}G^{bd}\lfloor\frac{\partial E}{\partial \Gamma}\rfloor_{cd}$ and thus to\footnote{See footnote~\ref{fn:symmetry}.}
\begin{align}
    \tfrac{d}{d\tau}z^a&=-G^{ab}\tfrac{\partial E}{\partial z^b}\\
    \tfrac{d}{d\tau}\Gamma^{ab}&=-4G^{ab}(\tfrac{\partial E}{\partial \Gamma}-J^\intercal\tfrac{\partial E}{\partial \Gamma} J )_{cd}G^{db}\\
    &=-4(G\tfrac{\partial E}{\partial \Gamma}G+\Omega\tfrac{\partial E}{\partial \Gamma}\Omega)^{ab}
\end{align}
which agrees with the respective equations\footnote{See equations~(30) of~\cite{shi2018variational}.} in~\cite{shi2018variational}.

\begin{table*}[t!]
	\centering
	\ccaption{Dimension counting}{We count the dimensions of the Gaussian manifold for $N$ bosonic or fermionic modes.}
	\def\arraystretch{1.2}
	\begin{tabular}{C{2cm}||C{4cm}|C{4cm}||C{4cm}}
		& \textbf{Displacements (bosons)}&\textbf{Squeezings (bosons)} & \textbf{Squeezings (fermions)} \\[-.5mm]
		\hline
		\hline
		Change & $\delta z^a$ & $\delta\Gamma^{ab}$ & $\delta\Gamma^{ab}$\\
		\hline
		Constraints & none & $\Gamma^{ab}=\Gamma^{ba}$, $\Delta\Gamma J^\intercal=J\delta\Gamma$ & $\Gamma^{ab}=-\Gamma^{ba}$, $\Delta\Gamma J^\intercal=J\delta\Gamma$\\
		\hline
 		Dimensions & $2N$ & $N(N+1)$ & $N(N-1)$
	\end{tabular}
	\label{tab:Gaussian-dimensions}
\end{table*}

\textbf{Excitation spectrum.} We recall that we can approximate the excitation spectrum either by linearizing the equations of motion or by projecting the Hamiltonian onto tangent space. The latter can be straightforwardly done in the number basis by expressing $\hat{H}$ in terms of creation and annihilation operators associated to the approximate ground state $\ket{\psi_0}$. We therefore focus here on the former case, where the spectrum is encoded in the eigenvalues of $\K^\mu{}_\nu=-\Om^{\mu\sigma}(\partial_\nu\partial_\sigma E)$. We evaluate $\K$ for Gaussian states at a stationary point, \ie at $x_0$ with $(\partial_\mu E)(x_0)=0$. We use the real time evolution vector field $\mathcal{X}^\mu=(X^a,X^{ab})$ from~\eqref{eq:Gaussian-eom} to compute
\begin{align}
    \K\equiv\left(\begin{array}{c|c}
     \frac{\partial X^a}{\partial z^c} &  \frac{\partial X^{ab}}{\partial z^c}\hspace{3mm}\\[1mm]
    \hline\\[-3mm]
    \left\lfloor\frac{\partial X}{\partial \Gamma}\right\rfloor^a_{cd} & \left\lfloor\frac{\partial X }{\partial \Gamma}\right\rfloor^{ab}_{cd}
    \end{array}\right)\,,\label{eq:Gaussian-K}
\end{align}
which can be explicitly computed as\footnote{We make the assumptions discussed in footnote~\ref{fn:symmetry}.}
\begin{align}
    \tfrac{\partial X^a}{\partial z^c}&=\Omega^{ae}\frac{\partial^2 E}{\partial z^c\partial z^e}\,,\\
    \tfrac{\partial X^{ab}}{\partial z^c}&=4\left(\Omega^{ae}\tfrac{\partial^2 E}{\partial z^c\partial \Gamma^{ef}}\Gamma^{fb}-\Gamma^{ae}\tfrac{\partial^2 E}{\partial z^c\partial \Gamma^{ef}}\Omega^{fb}\right)\,,\\
    \hspace{-4mm}\left\lfloor\tfrac{\partial X}{\partial\Gamma}\right\rfloor^a_{cd}&=\tfrac{\Omega^{ae}}{2}\left(\tfrac{\partial^2 E}{\partial\Gamma^{cd}\partial z^e}-J^f_{c}{}\tfrac{\partial^2 E}{\partial\Gamma^{fg}\partial z^e}J^g{}_d\right)\,,\\
    \begin{split}
    \hspace{-4mm}\left\lfloor\tfrac{\partial X}{\partial \Gamma}\right\rfloor^{ab}_{cd}&=4\left(\Omega^{ae}\tfrac{\partial^2 E}{\partial \Gamma^{cd}\partial \Gamma^{ef}}\Gamma^{fb}-\Gamma^{ae}\tfrac{\partial^2 E}{\partial \Gamma^{cd}\partial \Gamma^{ef}}\Omega^{fb}\right.\\
    &\quad+\Omega^{ae}\tfrac{\partial E}{\partial \Gamma^{ec}}\delta^b{}_d-\delta^a{}_c\tfrac{\partial E}{\partial \Gamma^{df}}\Omega^{fb}\\
    &\quad+J^g{}_c\Gamma^{ae}\tfrac{\partial^2 E}{\partial \Gamma^{gh}\partial \Gamma^{ef}}\Omega^{fb}J^h{}_d\\
    &\quad-J^g{}_c\Omega^{ae}\tfrac{\partial^2 E}{\partial \Gamma^{g}\partial \Gamma^{ef}}\Gamma^{fb}J^h{}_d\\
    &\quad+J^a{}_c\tfrac{\partial E}{\partial \Gamma^{gf}}\Omega^{fb}J^g{}_d-(J^\intercal)_c{}^g\Omega^{ae}\tfrac{\partial E}{\partial \Gamma^{eg}}J^b{}_d\Big)\,.
    \end{split}
\end{align}
Note that for fermions, we only have the last block $\left\lfloor\frac{\partial X}{\partial \Gamma}\right\rfloor$, as the others are related to the displacement $z^a$, which vanishes for fermions. For bosons, the spectrum of the first block $\frac{\partial X^a}{\partial z^c}$ can be related to Bogoliubov mean field theory, as discussed in~\cite{guaita2019gaussian}, \ie it captures the one-particle spectrum.

\textbf{Spectral functions.} We can then evaluate spectral functions based on formula~\eqref{eq:spectral-function} for any operator $\hat{V}$. In practice, all eigenvectors $\mathcal{E}^\mu(\lambda_\ell)$ will be represented as
\begin{align}
    \mathcal{E}^\mu(\lambda_\ell)\equiv(\delta z^a,\delta\Gamma^{ab})\,,
\end{align}
such that we can compute the dual vectors
\begin{align}
    \partial_\mu V\equiv(\tfrac{\partial V}{\partial z^a},\tfrac{\partial V}{\partial \Gamma^{ab}})\,.
\end{align}
Note that we will need to remove unphysical eigenvectors and dual eigenvectors, as discussed in the next paragraph. The spectral function is then directly computed from~\eqref{eq:spectral-function}. This was done explicitly in~\cite{guaita2019gaussian} to study the excitation spectrum and spectral functions of the paradigmatic Bose-Hubbard model, where $\hat{V}$ was chosen to either describe density variations or lattice modulations.

As explained in table~\ref{tab:Gaussian-dimensions}, the manifold of bosonic states is $N(N+3)$-dimensional, while the manifold of fermionic states is $N(N-1)$-dimensional. In contrast, the matrix representation~\eqref{eq:Gaussian-K} has a much larger dimension, due to the fact that we do not implement the constraints on $\Gamma$ directly in the basis of $\K$, but rather by applying the projection~\eqref{eq:W-projection} in the definition of $\K$. By construction, any forbidden change $\delta\Gamma$ violating the constraints~\eqref{eq:Gamma-constrain} will be projected out and thus contributes the eigenvalue zero to the spectrum of $\K$. In practice, we therefore have two options:
\begin{itemize}
    \item[(a)] We can just compute the spectrum of $\K$ as written in~\eqref{eq:Gaussian-K} and drop $N(N\mp1)$ vanishing eigenvalues for bosons and fermions, respectively. If there are more zero eigenvalues, the spectrum will have one or more zero modes. To identify the corresponding (physical) eigenvector $(z^c,\delta\Gamma^{cd})$, one would need identify those eigenvectors which do not violate the constraints~\eqref{eq:Gamma-constrain}. All eigenvectors with non-vanishing eigenvalues are physical and necessarily satisfy the constraint.
    \item[(b)] We can also reduce the dimension of $\K$ by constructing an orthonormal basis $\delta\Gamma^{cd}_i$ explicitly. This allows us restrict/project $\K$ onto this subspace. This is particulary important when 
\end{itemize}

There are large classes of examples where variational methods have been successfully applied to the family of Gaussian states to describe physical systems. Given a single fixed unitary transformation $U_0$, we can define the variational family of transformed Gaussian states $\mathcal{M}'=\{U_0\ket{\Gamma,z}\}$. Using this variational family for a given Hamiltonian $\hat{H}$ is equivalent to applying our methods directly to the transformed Hamiltonian $\hat{H}'=U_0^\dagger \hat{H}U_0$. This approach has been successfully applied to various systems, such as the Kondo problem~\cite{ashida2018variational}. 

\begin{example}
We consider a free bosonic system with one degree of freedom. Let its quadratic Hamiltonian be 
\begin{align}
    \hat{H}=\frac{1}{2}h_{ab}\hat{\xi}^a\hat{\xi}^b\quad \text{with}\quad h\qp\begin{pmatrix}
    \omega & 0\\
    0 & \omega
    \end{pmatrix}\,.
\end{align}
The manifold of Gaussian states was explicitly parametrized in~\eqref{eq:general-GaussianJ-bosons}, such that we find the 
\begin{align}
    \hspace{-2mm}E=\frac{1}{4}h_{ab}\Gamma^{ab}+\frac{1}{2}h_{ab}z^az^b=\frac{\omega}{2}(\cosh{\rho}+z_1^2+z_2^2)\,.
\end{align}
The change of $z^a$ and $\Gamma^{ab}$ under time evolution are
\begin{align}
\footnotesize
    \hspace{-2mm}\tfrac{dz}{dt}qp\omega\begin{pmatrix}
    z_2\\
    -z_1
    \end{pmatrix}\,,\,\,
    \tfrac{d\Gamma}{dt}\qp-2\omega\sinh{\rho}\begin{pmatrix}
    \cos{\phi} & -\sin{\phi}\\
    -\sin{\phi} & -\cos{\phi}
    \end{pmatrix}\hspace{-1mm}
\end{align}
leading to $\dot{\rho}=0$ and $\dot{\phi}=2\omega$. Similarly, imaginary time evolution is given by
\begin{align}
\footnotesize
\hspace{-5mm}\tfrac{dz}{d\tau}\!\qp\!\begin{pmatrix}
-\omega z_1\\
-\omega z_2
\end{pmatrix}\,,\hspace{54mm}\\
\footnotesize
\hspace{-4mm}\tfrac{d\Gamma}{d\tau}\!\qp\!-\omega\begin{pmatrix}
2\sinh^2{\rho}+\sin{\phi}\sinh{2\rho} & \cos{\phi}\sinh{2\rho}\\
\cos{\phi}\sinh{2\rho} & 2\sinh^2{\rho}-\sin{\phi}\sinh{2\rho}
\end{pmatrix}\hspace{-4mm}
\end{align}
leading to $\rho'=-2 \sinh{\rho}$ and $\phi'=0$. Finally, we find
\begin{align}
    \bm{K}\equiv\left(\begin{array}{cc|cc}
    & \omega&  & \\
    -\omega & && \\
    \hline
    &&& 2\omega\\
    &&  -2\omega  & 
    \end{array}\right)\,,
\end{align}
from which we can read off the $1$- and $2$-particle excitation energies of the free Hamiltonian. For the matrix representation of $K$, we chose the orthonormal basis $(z^a,\Gamma_i)$
\begin{align}
    \delta\Gamma_1\qp\begin{pmatrix}
    1&\\
    &-1
    \end{pmatrix}\,,\qquad\delta\Gamma_2\qp\begin{pmatrix}
    &1\\
    1&
    \end{pmatrix}\,.
\end{align}
\end{example}

\begin{example}
We consider a free fermionic system with two degrees of freedom. Let its quadratic Hamiltonian be
\begin{align}
    \hspace{-2mm}\hat{H}=\frac{1}{2}h_{ab}\hat{\xi}^a\hat{\xi}^b\quad\text{with}\quad \footnotesize h\qp\left(\begin{array}{cccc}
    &  &  \omega_1   &  \\
    &&&\omega_2\\
    -\omega_1 & &   & \\
    &-\omega_2&&
    \end{array}\right)\,.
\end{align}
This manifold of Gaussian states was explicitly parametrized in~\eqref{eq:general-GaussianJ-fermions} with two coordinates $x^\mu\equiv(\theta,\phi)$ leading to the energy of the state with $J_\pm$ given by
\begin{align}
    E=-\frac{1}{4}h_{ab}\Gamma^{ab}=-(\omega_2\pm\omega_1)\cos{\theta}\,.
\end{align}
The resulting equations for real time evolution are
\begin{align}
\footnotesize
\hspace{-2mm}\frac{d\Gamma}{dt}\!\qp\!(\omega_2\pm\omega_1)\sin{\theta}\left(\begin{array}{cccc}
    &\cos{\phi}&& \sin{\phi}\\
    -\cos{\phi}& & -\sin{\phi}&\\
    &\sin{\phi}&& -\cos{\phi}\\
    -\sin{\phi}& &\cos{\phi}&
\end{array}\right)\hspace{-2mm}
\end{align}
leading to the equations $\dot{\theta}=0$ and $\dot{\phi}=(\omega_2\pm\omega_1)$. Similarly, imaginary time evolution is described by
\begin{align}
\footnotesize
\hspace{-2mm}\frac{d\Gamma}{d\tau}\!\qp\!\tfrac{(\omega_2\pm\omega_1)\sin{2\theta}}{2}\left(\begin{array}{cccc}
    &-\sin{\phi}&-\tan{\theta}& \cos{\phi}\\
    \sin{\phi}& &-\cos{\phi} &-\tan{\theta}\\
    \tan{\theta}&\cos{\phi}&& \sin{\phi}\\
    -\cos{\phi}& \tan{\theta}&-\sin{\phi}&
\end{array}\right)\hspace{-2mm}
\end{align}
leading to $\theta'=(\omega_2\pm\omega_1)\sin{\theta}$ and $\phi'=0$. We can compute the matrix representation of $\bm{K}^\mu{}_\nu$ as
\begin{align}
    \bm{K}\equiv\begin{pmatrix}
    0 & \omega_1+\omega_2\\
    -\omega_1-\omega_2 & 0
    \end{pmatrix}\,,
\end{align}
whose eigenvalues correspond to the 2-particle spectrum of the free Hamiltonian $\hat{H}$. For the matrix representation of $\bm{K}^\mu{}_\nu$, we chose the orthonormal basis of variations
\begin{align}
    \delta\Gamma_1\equiv\begin{pmatrix}
    &1&&\\
    -1&&&\\
    &&&-1\\
    &&1&
    \end{pmatrix},\,\delta\Gamma_2\equiv\begin{pmatrix}
    &&&1\\
    &&-1&\\
    &1&&\\
    -1&&&
    \end{pmatrix}.
\end{align}
\end{example}

\begin{example}\label{ex:comp-principles}
We now derive the time evolution for example~\ref{ex:comp-princ1} of a free bosonic system with two degrees of freedom. We fix a basis $\hat{\xi}^a\qp(\hat{q}_1,\hat{q}_2,\hat{p}_1,\hat{p}_2)$, with respect to which symplectic form and covariance matrix of $\ket{\Gamma,z}$ are
\begin{align}
    \Omega\qp\begin{pmatrix}
    &&1&\\
    &&&1\\
    -1&&&\\
    &-1&&
    \end{pmatrix}\,,\quad\Gamma\qp\begin{pmatrix}
    1&&&\\
    &1&&\\
    &&1&\\
    &&&1
    \end{pmatrix}\,.
\end{align}
We now choose the $2$-dimensional variational families of coherent states $\ket{\Gamma,z}$, where $z^a\qp(q_1,p_1,q_2,p_2)$ depends on the two variational parameters $\tilde{z}^{\alpha}\qp(\tilde{q},\tilde{p})$ via
\begin{align}
    z^a=T^a{}_\beta \tilde{z}^\beta\,\,\,\text{with}\,\,\,T\qp\begin{pmatrix}
    \cosh{r} & 0\\
    \sinh{r} & 0\\
    0&\cosh{r}\\
    0&-\sinh{r}
    \end{pmatrix}\,,
\end{align}
which agrees with $\ket{\psi(\alpha)}$ from example~\ref{ex:coherent-non-Kaehler}. The $\tilde{z}^\alpha$ are the expectation values of the canonically conjugated operators
\begin{align}
    \hat{\tilde{q}}&=\hat{q}_1\cosh{r}-\hat{q}_2\sinh{r}\,,\\
    \hat{\tilde{p}}&=\hat{p}_1\cosh{r}-\hat{p}_2\sinh{r}\,.
\end{align}
We represent the orthogonal projector $\mathbb{P}_{(\Gamma,z)}$ as $2$-by-$4$ matrix $P^\alpha{}_b=\mathbb{P}_{(\Gamma,z)}^\alpha\ket{V_b}$ with $P^\alpha{}_bT^b{}_{\gamma}=\delta^\alpha{}_\gamma$ given by
\begin{align}
    P&\qp\begin{pmatrix}
    \cosh{r} & -\sinh{r} & 0 & 0\\
    0 &  0 & \cosh{r} & \sinh{r}
    \end{pmatrix}\,,
\end{align}
where $P$ acts on the tangent space of all displacements $\delta z$ from~\eqref{eq:tangent-Va} and orthogonally projects onto the tangent space of our family described by $\delta\tilde{z}$. The Hamiltonian~\eqref{eq:example-Hamiltonian} can be rewritten as $\hat{H}=\frac{1}{2}h_{ab}\hat{\xi}^a\hat{\xi}^b$ with
\begin{align}
    h\qp\begin{pmatrix}
    c_1 & c_3 & & \\
    c_3 & c_2 & & \\
    &  & c_1& c_3\\
    &  & c_3& c_2
    \end{pmatrix}\,,
\end{align}
where $c_1=\epsilon_1\cos^2{\phi}+\epsilon_2\sin^2{\phi}$, $c_1=\epsilon_1\sin^2{\phi}+\epsilon_2\cos^2{\phi}$ and 
$c_3=\frac{\epsilon_1-\epsilon_2}{2}\sin{2\phi}$. We consider the time evolution under $\hat{H}$ of the expectation values $\tilde{z}\equiv(\tilde{q},\tilde{p})$ for the following scenarios.\\
\textbf{True evolution.} The time evolution equation $\dot{z}^a=\mathcal{X}^a$ follows from~\eqref{eq:Gaussian-eom} and is given by
\begin{align}
    \mathcal{X}^a=K^a{}_bz^b\quad\text{with}\quad K^a{}_b=\Omega^{ac}h_{cb}
\end{align}
and solved by $z(t)=M(t)z(0)$ with $M(t)=e^{tK}$.\\
\textbf{Lagrangian vs. McLachlan evolution.} The Lagrangian and McLachlan evolution are based on projecting the equations of motion~\eqref{eq:Gaussian-eom} in the two ways
\begin{align}
    \mathcal{X}^\alpha_{\mathrm{Lagrangian}}&=P^{\alpha}{}_b\mathcal{X}^b=(K_1)^\alpha{}_\gamma\tilde{z}^\gamma\,,\\
    \mathcal{X}^\alpha_{\mathrm{McLachlan}}&=(\tilde{J}^{-1})^\alpha{}_\beta P^\beta{}_cJ^c{}_d\mathcal{X}^d=(K_2)^\alpha{}_\gamma\tilde{z}^\beta\,,
\end{align}
where we have $\mathcal{X}^a=K^a{}_bT^b{}_\gamma\tilde{z}^\gamma$. We compute
\begin{align}
    \hspace{-2mm}\tilde{J}\!=\!PJT\qp\!\begin{pmatrix}
    0 & -\mathrm{sech}\,{2r}\\
    \mathrm{sech}\,{2r} & 0
    \end{pmatrix},\, K_i\!\qp\!\begin{pmatrix}
     0 & a^+_i\\
     a^-_i &0
     \end{pmatrix},
\end{align}
where we introduced the constants given by
\begin{align}
\begin{split}
    a^\pm_1&=\pm\tfrac{(\epsilon_1-\epsilon_2)\cos{2\phi}+(\epsilon_1+\epsilon_2)\mathrm{sech}\,(2r)}{2}\,,\\
    a^\pm_2&=\pm\tfrac{(\epsilon_1+\epsilon_2)\cosh{2r}+(\epsilon_1-\epsilon_2)(\cos{2\phi}\mp\sin{2\phi}\sinh{2\phi})}{2}\,.
\end{split}
\end{align}
We can compare the time evolution of the variational parameters $\tilde{z}^\alpha\equiv(\tilde{q},\tilde{p})$ for the two variational principles with the exact evolution of the expectation values of $(\hat{\tilde{q}},\hat{\tilde{q}})$ for the same initial state $\ket{\Gamma,z}$, as shown in figure~\ref{fig:comp-principles}.
\end{example}

\subsubsection{Approximating expectation values}
In this section, we compare how the expectation value of observables changes depending on if we evolve in the full Hilbert space with $-\ii \hat{H}$ or on our variational manifold with $\mathbb{P}_{\psi}(-\ii\hat{H})$. Clearly, any observable $\mathcal{O}$ that can be expanded in powers of $\hat{\xi}^a$ can, in principle, be computed exactly from the covariance matrix $\Gamma^{ab}$ and displacement vector $z^a$ (for bosons) using Wick's theorem.

As it turns out, the derivative $\frac{d}{dt}\langle \hat{A}\rangle$ for linear-quadratic observables $\hat{A}$ agrees in the two cases on the Gaussian manifold, \ie linear-quadratic observables are insensitive to non-Gaussianities to linear order on the Gaussian manifold.

Put differently, the variation $\delta\langle \hat{A}\rangle$ is insensitive if we perturb the Gaussian state $\ket{\Gamma,z}$ either into the direction $\ket{\delta\psi}$, which will generally leave the class of Gaussian states, or into the projected direction $\mathbb{P}_{\ket{\Gamma,z}}\ket{\psi}$, which is projected onto the Gaussian manifold. Here, we have $\delta\langle\hat{A}\rangle=\frac{d}{dt}\braket{\hat{A}}$ if we choose $\ket{\delta\psi}=\frac{d}{dt}\ket{\psi}=-\ii\hat{H}\ket{\psi}$.

\begin{proposition}
Given a Gaussian state $\ket{\Gamma,z}$ and an arbitrary tangent vector $\ket{\phi}$ with $\braket{\Gamma,z|\phi}=0$, the change of linear-quadratic observables $\hat{A}=f_a\hat{\xi}^a+\frac{1}{2}h_{ab}\hat{\xi}^a\hat{\xi}^b$ is
\begin{align}
\begin{split}
    \delta\langle \hat{A}\rangle&=2\mathrm{Re}\braket{\Gamma,z|\hat{A}\,|\phi}=2\mathrm{Re}\braket{\Gamma,z|\hat{A}\,\mathbb{P}_{(\Gamma,z)}|\phi}\,.\label{eq:change-quadratic}
\end{split}
\end{align}
\end{proposition}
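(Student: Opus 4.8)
The plan is to prove the two equalities separately, writing $\ket{\psi}=\ket{\Gamma,z}$ and using that Gaussian states are normalized, $\braket{\psi|\psi}=1$. For the first equality I would interpret $\delta\langle\hat{A}\rangle$ as the first-order variation of the normalized expectation value $A=\braket{\psi|\hat{A}|\psi}/\braket{\psi|\psi}$ under the perturbation $\ket{\psi}\to\ket{\psi}+\ket{\phi}$. Differentiating, the numerator contributes $\braket{\phi|\hat{A}|\psi}+\braket{\psi|\hat{A}|\phi}=2\mathrm{Re}\braket{\psi|\hat{A}|\phi}$, while the variation of the denominator is proportional to $2\mathrm{Re}\braket{\psi|\phi}$, which vanishes by the hypothesis $\braket{\psi|\phi}=0$. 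This gives the first equality directly.

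For the second equality, observe that it is equivalent to $\mathrm{Re}\braket{\psi|\hat{A}(\id-\mathbb{P}_\psi)|\phi}=0$, i.e.\ to showing that the component $\ket{\phi_\perp}:=(\id-\mathbb{P}_\psi)\ket{\phi}$ orthogonal to the tangent space does not contribute. Since $\braket{\psi|\phi}=0$ we have $\mathbb{Q}_\psi\ket{\phi}=\ket{\phi}$, and because $\mathbb{P}_\psi$ commutes with $\mathbb{Q}_\psi$ the vector $\ket{\phi_\perp}$ still lies in $\mathcal{H}^\perp_\psi$ and is, by construction, orthogonal to $\mathcal{T}_\psi\mathcal{M}$ with respect to the real inner product $\mathrm{Re}\braket{\cdot|\cdot}$.

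The crux is to show that the linear-quadratic operator $\hat{A}$ preserves the Gaussian manifold in the sense of Definition~\ref{def:preserves}, that is $(\hat{A}-\braket{\hat{A}})\ket{\psi}\in\mathcal{T}_\psi\mathcal{M}$. To this end I would insert the decomposition $\hat{\xi}^a=\hat{\xi}^a_{\plus}+\hat{\xi}^a_{\minus}+z^a$ into $\hat{A}=f_a\hat{\xi}^a+\tfrac{1}{2}h_{ab}\hat{\xi}^a\hat{\xi}^b$ and use $\hat{\xi}^a_{\plus}\ket{\psi}=0$ together with the contraction relations~\eqref{eq:Wick-contraction_bosons}--\eqref{eq:Wick-contraction_fermions}. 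Normal-ordering the result, the linear term produces a one-particle excitation $\hat{\xi}^a_{\minus}\ket{\psi}$ plus a multiple of $\ket{\psi}$; the quadratic term produces a two-particle excitation $\hat{\xi}^a_{\minus}\hat{\xi}^b_{\minus}\ket{\psi}$, a one-particle excitation, and a c-number (obtained when a single annihilation operator is commuted past a creation operator, yielding $C_2^{ab}$) times $\ket{\psi}$. Comparing with the explicit tangent frames $\ket{V_a}\propto\hat{\xi}^b_{\minus}\ket{\psi}$ in~\eqref{eq:tangent-Va} and $\ket{V_{ab}}\propto\hat{\xi}^c_{\minus}\hat{\xi}^d_{\minus}\ket{\psi}$ in~\eqref{eq:squeezing-tangents} shows that every non-scalar piece lands in $\mathcal{D}_{(\Gamma,z)}\oplus\mathcal{S}_{(\Gamma,z)}=\mathcal{T}_\psi\mathcal{M}$; subtracting the scalar part, which equals $\braket{\hat{A}}\ket{\psi}$, confirms the claim.

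With this in hand the second equality follows immediately. Using Hermiticity of $\hat{A}$ and $\braket{\psi|\phi_\perp}=0$,
\begin{align}
\mathrm{Re}\braket{\psi|\hat{A}|\phi_\perp}=\mathrm{Re}\braket{(\hat{A}-\braket{\hat{A}})\psi|\phi_\perp}=0\,,
\end{align}
since $(\hat{A}-\braket{\hat{A}})\ket{\psi}\in\mathcal{T}_\psi\mathcal{M}$ while $\ket{\phi_\perp}$ is $\mathrm{Re}$-orthogonal to this space. The main obstacle is the bookkeeping in the crux step: one must verify that after normal-ordering no excitation of order higher than two survives, and that the surviving one- and two-particle pieces match the spans of $\ket{V_a}$ and $\ket{V_{ab}}$ exactly. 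This is precisely where the linear-quadratic nature of $\hat{A}$ and the identification of the Gaussian tangent space with one- and two-particle excitations enter, and where a generic higher-order observable would fail.
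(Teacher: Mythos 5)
Your proof is correct and takes essentially the same route as the paper's: the crux in both is the decomposition of $\hat{A}\ket{\Gamma,z}$ into a scalar multiple of the state plus one- and two-particle excitations spanning $\mathcal{T}_{\psi}\mathcal{M}$ (the paper writes this as $\hat{A}\ket{\Gamma,z}=C\ket{\Gamma,z}+X^a\ket{V_a}+X^{ab}\ket{V_{ab}}$), followed by the observation that the part of $\ket{\phi}$ removed by $\mathbb{P}_{(\Gamma,z)}$ is $\mathrm{Re}$-orthogonal both to that tangent subspace and to $\ket{\Gamma,z}$ itself. You simply carry out the normal-ordering and the orthogonality bookkeeping (phrased via Definition~\ref{def:preserves} and the component $\ket{\phi_\perp}$) in more explicit detail than the paper's terse two-step argument.
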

\begin{proof}
The proof is rather simple and goes in two steps:
First, we note that a linear-quadratic operator $\hat{A}$ acting on $\ket{\Gamma,z}$ allows for the decomposition
\begin{align}
    \hat{A}\ket{\Gamma,z}=C\ket{\Gamma,z}+X^a\ket{V_a}+X^{ab}\ket{V_{ab}}\,.
\end{align}
Second, the inner product between $\hat{A}\ket{\Gamma,z}$ and the tangent vector $\ket{\phi}$ only happen in this subspace.
Consequently, equation~\eqref{eq:change-quadratic} follows.
\end{proof}

\begin{corollary}
The time derivatives of displacement vector $\dot{z}^a$ and covariance matrix $\dot{\Gamma}^{ab}$ at a Gaussian state $\ket{\Gamma,z}$ are the same for the true time evolution with some interacting Hamiltonian $\hat{H}$ and for its projection onto the Gaussian manifold. In formulas, this means
\begin{align}
\begin{split}
\hspace{-.5mm}\dot{z}^a\!&=\!2\mathrm{Re}\!\braket{\Gamma,z|\hat{\xi}^a\mathbb{P}_{(\Gamma,z)}(-\ii\hat{H})|\Gamma,z}\\
    &=\!\braket{\Gamma,z|[-\ii \hat{H},\hat{\xi}^a]|\Gamma,z}\,,
\end{split}\label{eq:projected-gaussian-eom-1}\\
\begin{split}
\hspace{-.5mm}\dot{G}^{ab}\!&=\!2\mathrm{Re}\!\braket{\Gamma,z|\{\hat{\xi}^a,\hat{\xi}^b\}\mathbb{P}_{(\Gamma,z)}(-\ii\hat{H})|\Gamma,z}-\tfrac{d(z^az^b)}{dt}\hspace{-1mm}\\
    &=\!\braket{\Gamma,z|[-\ii\hat{H},\{\hat{\xi}^a,\hat{\xi}^b\}]|\Gamma,z}-\dot{z}^az^b-z^a\dot{z}^b\,,
\end{split}\\
\begin{split}
\hspace{-.5mm}\dot{\Omega}^{ab}\!&=\!2\mathrm{Re}\!\braket{\Gamma,z|[\hat{\xi}^a,\hat{\xi}^b]\mathbb{P}_{(\Gamma,z)}(-\ii\hat{H})|\Gamma,z}\\
    &=\!\braket{\Gamma,z|[-\ii\hat{H},[\hat{\xi}^a,\hat{\xi}^b]]|\Gamma,z}\,,
\end{split}\label{eq:projected-gaussian-eom-2}
\end{align}
where $\Gamma=G$ for bosons and $\Gamma=\Omega$ for fermions.
\end{corollary}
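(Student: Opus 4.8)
The plan is to treat this statement as a direct specialization of the preceding Proposition to the specific linear and quadratic operators whose expectation values encode the covariance data $(z^a,\Gamma^{ab})$. The conceptual work is already done there; what remains is an algebraic dictionary plus the product rule.

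First I would write the covariance data as expectation values of linear-quadratic observables. From \eqref{eq:1-point} one has $z^a=\braket{\Gamma,z|\hat{\xi}^a|\Gamma,z}$, which is linear; expanding \eqref{eq:2-point} gives $C_2^{ab}=\braket{\hat{\xi}^a\hat{\xi}^b}-z^az^b$, so that by \eqref{eq:def-covariance-matrix} the bosonic $G^{ab}=\braket{\{\hat{\xi}^a,\hat{\xi}^b\}}-2z^az^b$ and the fermionic $\Omega^{ab}=-\ii\braket{[\hat{\xi}^a,\hat{\xi}^b]}$ (recall $z^a=0$ for fermions). In every case the operator $\hat{\xi}^a$, $\{\hat{\xi}^a,\hat{\xi}^b\}$, or $-\ii[\hat{\xi}^a,\hat{\xi}^b]$ is at most quadratic in the $\hat{\xi}$'s, which is exactly the hypothesis under which the Proposition applies.

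Next I would differentiate these identities along the true Schrödinger flow $\tfrac{d}{dt}\ket{\Gamma,z}=-\ii\hat{H}\ket{\Gamma,z}$, taking the tangent direction $\ket{\phi}=\mathbb{Q}_{(\Gamma,z)}(-\ii\hat{H})\ket{\Gamma,z}$ so that $\braket{\Gamma,z|\phi}=0$. For any Hermitian $\hat{A}$ the product rule gives $\tfrac{d}{dt}\braket{\hat{A}}=2\,\mathrm{Re}\braket{\Gamma,z|\hat{A}|\phi}$, and \eqref{eq:change-quadratic} then lets me replace $\ket{\phi}$ by $\mathbb{P}_{(\Gamma,z)}\ket{\phi}$ without changing the value. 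Using $\mathbb{P}_{(\Gamma,z)}\mathbb{Q}_{(\Gamma,z)}=\mathbb{P}_{(\Gamma,z)}$, this is precisely the first equality in each displayed line: the full-space rate of change equals the one obtained after projecting the evolution generator onto the Gaussian tangent space. For $G^{ab}$ the product rule acting on the $z^az^b$ term in the dictionary is what produces the explicit $\dot z$-correction subtracted on the right-hand side, so I would simply carry that term along.

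Finally, the second equality in each line recasts $2\,\mathrm{Re}\braket{\hat{A}\,(-\ii\hat{H})}$ as a bare commutator expectation. This is the standard Heisenberg identity: the component of $(-\ii\hat{H})\ket{\Gamma,z}$ along $\ket{\Gamma,z}$ is $-\ii\braket{\hat{H}}\ket{\Gamma,z}$, and since $\braket{\hat{H}}$ and $\braket{\hat{A}}$ are real its contribution to the real part vanishes, leaving $\tfrac{d}{dt}\braket{\hat{A}}=\braket{[-\ii\hat{H},\hat{A}]}$ in full Hilbert space. I expect the main obstacle to be purely bookkeeping rather than conceptual: one must correctly propagate the $z^az^b$ subtractions through the product rule and reconcile the bosonic-versus-fermionic orientation and normalization conventions for $\Gamma$ fixed in \eqref{eq:def-covariance-matrix} (including the sign and ordering of the commutator). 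The entire physical content---that to first order the one- and two-point functions cannot distinguish evolution under the interacting $\hat{H}$ from evolution under its Gaussian projection $\mathbb{P}_{(\Gamma,z)}(-\ii\hat{H})$---is inherited wholesale from the Proposition, because a linear-quadratic operator acting on $\ket{\Gamma,z}$ lands in $\mathrm{span}\{\ket{\Gamma,z},\ket{V_a},\ket{V_{ab}}\}$ and hence is blind to the non-Gaussian part of $\ket{\phi}$.
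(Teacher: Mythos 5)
Your proposal is correct and is precisely the derivation the paper leaves implicit: the corollary is obtained by specializing the preceding proposition to the linear--quadratic operators $\hat{\xi}^a$, $\{\hat{\xi}^a,\hat{\xi}^b\}$ and $[\hat{\xi}^a,\hat{\xi}^b]$ that encode $(z^a,\Gamma^{ab})$, taking $\ket{\phi}=\mathbb{Q}_{(\Gamma,z)}(-\ii\hat{H})\ket{\Gamma,z}$, using $\mathbb{P}_{\psi}\mathbb{Q}_{\psi}=\mathbb{P}_{\psi}$, and finishing with the product rule on the $z^az^b$ terms and the Ehrenfest identity for the commutator form. The only caveats are bookkeeping ones already present in the paper's own statement (the sign convention in $\braket{[-\ii\hat{H},\cdot]}$ and the coefficient of the $\tfrac{d}{dt}(z^az^b)$ correction relative to the dictionary $G^{ab}=\braket{\{\hat{\xi}^a,\hat{\xi}^b\}}-2z^az^b$), not gaps in your argument.
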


In summary, projecting onto the Gaussian manifold is equivalent to truncating the equations of motion of the $n$-point functions at second order, \ie we integrate equations~\eqref{eq:projected-gaussian-eom-1} to~ \eqref{eq:projected-gaussian-eom-2} and ignore non-Gaussian evolution of higher $n$-point functions $C_n$ for $n>2$.

\subsection{Group theoretic coherent states}
\label{sec:group-theoretic-coherent-states}

Standard coherent states of the form $\ket{\alpha}=e^{\alpha \hat{a}^\dagger-\alpha^*\hat{a}}\ket{0}$ of a single bosonic degree of freedom were introduced by Glauber~\cite{glauber1963coherent}. From the perspective of Gaussian states, as presented in the previous section, coherent states correspond to a submanifold of bosonic Gaussian states with fixed complex structure $J^a{}_b$ (or covariance matrix $\Gamma^{ab}$), but variable displacement vector $z^a$. Given a fixed Gaussian state $\ket{J,0}$, we can reach the set of coherent states $\ket{J,z}$ with all possible $z^a\in V$ by applying the displacement transformation $\ket{J,z}=\mathcal{D}(z)\ket{J,0}$. So, here, the main structure is $\mathcal{D}(z)$, which gives a projective representation of the group of vector addition $V$.

In this section, we generalize this construction by considering, instead of $\mathcal{D}(z)$, generic unitary representations of arbitrary semi-simple Lie groups. This leads to construct the so-called \emph{group theoretic coherent states}. They were independently introduced by Gilmore~\cite{gilmore1972geometry,gilmore1974properties} and Perelomov~\cite{perelomov1972coherent,perelomov2012generalized}. While we follow the excellent review~\cite{zhang1990coherent}, we will particularly emphasize the geometric structure of the resulting variational families and their advantages. In particular, we will show that group theoretic coherent states constructed from certain vectors, called highest weight vectors, always give rise to Kähler manifolds. Furthermore, we will connect to the previous section to show explicitly how the full families of bosonic and fermionic Gaussian states can be naturally understood as group theoretic coherent states with respect to groups $\mathcal{G}=\mathrm{Sp}(2N,\mathbb{R})$ and $\mathcal{G}=\mathrm{O}(2N,\mathbb{R})$ for bosons and fermions, respectively.

\subsubsection{Definition}
We consider a separable Hilbert space $\mathcal{H}$, a real Lie group $\mathcal{G}$ with real Lie algebra $\mathfrak{g}$ and a possibly projective unitary representation $\mathcal{U}$ of $\mathcal{G}$ on $\mathcal{H}$, \ie we have
\begin{align}
    \hspace{-1mm}\mathcal{U}(M): \mathcal{H}\to\mathcal{H}\,\,\text{with}\,\, \mathcal{U}(\M_1)\mathcal{U}(\M_2)\simeq \mathcal{U}(\M_1\M_2)\,,
\end{align}
where $\simeq$ indicates equality up to a complex phase. We may later impose conditions, such as requiring that the Lie group is compact or semi-simple. Given a basis $\Xi_i$ of the Lie algebra $\mathfrak{g}$, we have the Lie brackets\footnote{In physics, some authors~\cite{georgi2018lie} choose the basis $X_i=-\ii\Xi_i$, such that $A=A^i\Xi_i=\ii A^iX_i$ and $[X_i,X_j]=\ii c_{ij}^k\Xi_k$.}
\begin{align}
    [\Xi_i,\Xi_j]=c_{ij}^k\Xi_k\,,
\end{align}
where $c_{ij}^k$ are called structure constants. Our group representation induces a representation of $\Xi_i$ as operators
\begin{align}
    \hat{\Xi}_i=\frac{d}{ds}\mathcal{U}(e^{s\Xi_i})\big|_{s=0}\,,
\end{align}
which are anti-Hermitian\footnote{A basis $X_i=-\ii\Xi_i$ would lead to Hermitian operators $\hat{X}_i=-\ii\hat{\Xi}_i$.} due to $\mathcal{U}^\dagger=\mathcal{U}^{-1}$. A general Lie algebra element $A\in\mathfrak{g}$ is then represented as anti-Hermitian operator $\hat{A}=A^i\hat{\Xi}_i$.

We can always represent group and Lie algebra through their action on the Lie algebra itself. This is called the adjoint representation. Here, a Lie group element $M$ is represented as the linear map $\mathrm{Ad}_M: \mathfrak{g}\to\mathfrak{g}$ with
\begin{align}
    \mathrm{Ad}_M(\Xi_i)=\frac{d}{ds}Me^{s\Xi_i}M^{-1}\big|_{s=0}\,,
\end{align}
which reduces for matrices to $\mathrm{Ad}_{M}(\Xi_i)=\M\Xi_i \M^{-1}$. Similarly, the adjoint representation of a Lie algebra element $\Xi_i$ is given by the linear map $\mathrm{ad}_i: \mathfrak{g}\to\mathfrak{g}$ with
\begin{align}
    \mathrm{ad}_i(\Xi_j)=[\Xi_i,\Xi_j]=c^k_{ij}\Xi_k\,,
\end{align}
which implies that the adjoint representation of $\Xi_i$ always takes the matrix form $(\mathrm{ad}_i)^k{}_j=c^k_{ij}$ with respect to $\Xi_j$. The adjoint representation defines the Killing form
\begin{align}
    \mathcal{K}_{ij}=\mathrm{Tr}(\mathrm{ad}_i\mathrm{ad}_j)=(\mathrm{ad}_i)^k{}_l(\mathrm{ad}_i)^l{}_k=c^k_{il}c^l_{jk}\,,
\end{align}
which is non-degenerate for semi-simple Lie groups.

\begin{example}\label{ex:SU(2)}
We consider the Lie group $\mathrm{SU}(2)$ consisting of complex, unitary $2$-by-$2$ matrices with unit determinant. Its algebra is well known to be spanned by the matrices $\Xi_i\equiv\frac{\ii}{2}\sigma_i$, where $\sigma_i$ are the Pauli matrices, with structure constants $c^k_{ij}\equiv\epsilon^k{}_{ij}$ given by the totally antisymmetric tensor with $\epsilon^1_{11}=1$. The Killing form is thus $\mathcal{K}_{ij}=\epsilon^k{}_{il}\epsilon^l{}_{jk}=-2\delta_{ij}$. We consider the following two representations:\\
\textbf{Spin-$\tfrac{1}{2}$.} This is the fundamental representation and thus coincides with the definition. We represent the group on the Hilbert space $\mathcal{H}=\mathbb{C}^2$ and it is generated by the anti-Hermitian operators $\hat{\Xi}_i\equiv-\tfrac{\ii}{2}\sigma_i$ with
\begin{align}
\footnotesize
    \hspace{-2mm}\hat{\Xi}_1\equiv-\tfrac{1}{2}\begin{pmatrix}
    & \ii \\
    \ii& \\
    \end{pmatrix}\,,\, \hat{\Xi}_2\equiv\tfrac{1}{2}\begin{pmatrix}
    & -1 \\
    1& 
    \end{pmatrix}\,,\,\hat{\Xi}_3\equiv\tfrac{1}{2}\begin{pmatrix}
    -\ii& \\
    & \ii
    \end{pmatrix}\,,
\end{align}
which coincides with the definition $\Xi_i$.\\
\textbf{Spin-$1$.} This representation is also the adjoint representation and the matrices can be chosen to be real. It is then given by real $3$-by-$3$ rotation matrices acting on the Hilbert space $\mathcal{H}=\mathbb{C}^3$. It is generated by
\begin{align}
\tiny
\hspace{-2mm}\hat{\Xi}_1\equiv\begin{pmatrix}
    0& & \\
    & & 1\\
    & -1&
    \end{pmatrix}\,,\, \hat{\Xi}_2\equiv\begin{pmatrix}
    & & -1\\
    & 0& \\
    1& &
    \end{pmatrix}\,,\,\hat{\Xi}_3\equiv\begin{pmatrix}
    &1 &\\
    -1& & \\
    & &0
    \end{pmatrix}\,,
\end{align}
which are infinitesimal rotations in the three planes. As matrices, they correspond to $(\hat{\Xi}_i)^k{}_j=\epsilon^k{}_{ij}$, which confirms that this is the adjoint representation.
\end{example}

We will now explain how the choice of a reference state $\phi$ together with a projective representation $\mathcal{U}(M)$ on some Hilbert space $\mathcal{H}$ defines a variational family $\mathcal{M}_\phi\subset\mathcal{P}(\mathcal{H})$.

\begin{definition}
\label{def:group-theoretic-coherent-states}
Choosing a non-zero state $\ket{\phi}\in\mathcal{H}$ defines the associated manifold of group theoretic coherent states
\begin{align}
    \mathcal{M}_\phi=\{\mathcal{U}(\M)\ket{\phi}|\M\in\mathcal{G}\}/\!\!\sim\,\,\subset\mathcal{P}(\mathcal{H})\,,
\end{align}
where $\ket{\psi}\sim\ket{\tilde{\psi}}$ $\Leftrightarrow$ $\ket{\psi}=c\ket{\tilde{\psi}}$ for some $c\in\mathbb{C}$.
\end{definition}

When applying variational methods to $\mathcal{M}_\phi$, it is useful to parametrize states by group elements, \ie $\ket{\psi(\M)}=\mathcal{U}(\M)\ket{\phi}$. Rather than taking derivatives with respect to some artificial coordinates, we define local coordinates
\begin{align}
    \ket{\psi(\M,x)}=\mathcal{U}(\M e^{x^i\Xi_i})\ket{\phi}=\mathcal{U}(\M)e^{x^i\hat{\Xi}_i}\ket{\phi}
    \label{eq:local-coordinates}
\end{align}
around every state $\ket{\psi(\M)}$ with tangent vectors at $x=0$
\begin{align}
\begin{split}
    \ket{\mathcal{V}_i}&=\mathbb{Q}_{\psi(\M)}\tfrac{\partial}{\partial x^i}\ket{\psi(\M,x)}\big|_{x=0}\\
    &=\mathbb{Q}_{\psi(\M)}\mathcal{U}(\M)\,\hat{\Xi}_i\ket{\phi}\,.\label{eq:Vig}
\end{split}
\end{align}
This allows us to introduce a map between the Lie algebra $\mathfrak{g}$ and the tangent spaces of the manifold $\mathcal{M}_\phi$ as follows.
\begin{definition}\label{def:Lie-tangent}
For any $M\in\mathcal{G}$, we associate to every Lie algebra element $A=A^i\Xi_i$ the tangent vector
\begin{align}
    A^i\ket{\mathcal{V}_i}=A^i\mathbb{Q}_{\psi(\M)}\mathcal{U}(\M)\,\hat{\Xi}_i\ket{\phi}\in\mathcal{T}_{\psi(M)}\mathcal{M}_\phi\,.
\end{align}
\end{definition}
This map is only an isomorphism if there are no Lie algebra elements $A$ that only generate a change of complex phase and are thus mapped to $A^i\ket{\mathcal{V}_i}=0$. Such Lie algebra elements define a subalgebra $\mathfrak{h}_\phi$ generating the stabilizer group $H_\phi$, defined as
\begin{align}
    H_{\phi}&=\left\{\M\in \mathcal{G}\,\big|\, \mathcal{U}(\M)\ket{\phi}\sim\ket{\phi}\right\}\,,\\
    \mathfrak{h}_\phi&=\{A=A^i\Xi_i\,|\,A^i\ket{\mathcal{V}_i}=A^ i\mathbb{Q}_\phi\hat{\Xi}_i\ket{\phi}=0\}\,.
\end{align}
As a manifold, we thus have $\mathcal{M}_\phi=\mathcal{G}/H_\phi$. Consequently, all tangent spaces $\mathcal{T}_{\psi(M)}\mathcal{M}_\phi$ are isomorphic to $\mathfrak{g}/\mathfrak{h}_\phi=\mathfrak{g}/\!\!\approx$ with $A\approx B$ if $A-B\in\mathfrak{h}_\phi$. If the Lie algebra is semi-simple, we can use the then non-degenerate Killing form to uniquely represent $\mathfrak{g}/\mathfrak{h}_\phi$ as
\begin{align}
    \mathfrak{h}^\perp_\phi&=\{B=B^i\Xi_i\,|\,\mathcal{K}_{ij}A^iB^j=0\,\forall\,A\in\mathfrak{h}_\phi\}\,.
\end{align}

We can now proceed to calculate the restricted Kähler strucures for the manifold $\mathcal{M}_\phi$. We can use the Lie algebra induced tangent space vectors $\ket{\mathcal{V}_i}$, introduced in~\eqref{eq:Vig}, to derive simple expressions of the restricted Kähler structures independent of $\M$.

\begin{proposition}\label{prop:g-independence}
The restricted Kähler structures of the manifold $\mathcal{M}_\phi$ are
\begin{align}
    \hspace{-2mm}\g_{ij}&=\tfrac{2\mathrm{Re}\braket{\mathcal{V}_i(\M)|\mathcal{V}_j(\M)}}{\braket{\psi(\M)|\psi(\M)}}=-\tfrac{\braket{\phi|\hat{\Xi}_i\mathbb{Q}_\phi\hat{\Xi}_j+\hat{\Xi}_j\mathbb{Q}_\phi\hat{\Xi}_i|\phi}}{\braket{\phi|\phi}}\,,\hspace{-1mm}\label{eq:kaehler-structures-group-coherent-g}\\
    \hspace{-2mm}\om_{ij}&=\tfrac{2\mathrm{Im}\braket{\mathcal{V}_i(\M)|\mathcal{V}_j(\M)}}{\braket{\psi(\M)|\psi(\M)}}=\tfrac{\braket{\phi|\hat{\Xi}_j\mathbb{Q}_\phi\hat{\Xi}_i-\hat{\Xi}_i\mathbb{Q}_\phi\hat{\Xi}_j|\phi}}{\braket{\phi|\phi}}\,,\hspace{-1mm}
    \label{eq:kaehler-structures-group-coherent-om}
\end{align}
which are independent of $M$ and thus everywhere the same.
\end{proposition}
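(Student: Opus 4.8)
The plan is to exploit the unitarity of the representation $\mathcal{U}(\M)$ in order to transport every quantity back to the fixed reference state $\ket{\phi}$, thereby making the $\M$-dependence drop out by construction. Starting from the definition~\eqref{eq:gmunu-def} of the restricted Kähler structures applied to the Lie-algebra induced frame $\ket{\mathcal{V}_i(\M)}=\mathbb{Q}_{\psi(\M)}\mathcal{U}(\M)\hat{\Xi}_i\ket{\phi}$ from~\eqref{eq:Vig}, I would write both $\g_{ij}$ and $\om_{ij}$ as twice the real and imaginary part, respectively, of the single overlap $\braket{\mathcal{V}_i(\M)|\mathcal{V}_j(\M)}/\braket{\psi(\M)|\psi(\M)}$. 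It then suffices to evaluate this one overlap and show it is independent of $\M$.

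Two elementary ingredients carry the argument. First, since $\mathcal{U}(\M)$ is unitary and $\ket{\psi(\M)}=\mathcal{U}(\M)\ket{\phi}$, the norm is preserved, $\braket{\psi(\M)|\psi(\M)}=\braket{\phi|\phi}$. Second, I would establish the conjugation identity $\mathcal{U}^\dagger(\M)\,\mathbb{Q}_{\psi(\M)}\,\mathcal{U}(\M)=\mathbb{Q}_\phi$, obtained by expanding the orthogonal projector as $\mathbb{Q}_{\psi(\M)}=\id-\ket{\psi(\M)}\bra{\psi(\M)}/\braket{\psi(\M)|\psi(\M)}$ from~\eqref{eq:Pperp}, inserting $\ket{\psi(\M)}=\mathcal{U}(\M)\ket{\phi}$, and repeatedly using $\mathcal{U}^\dagger(\M)\mathcal{U}(\M)=\id$ together with the norm invariance just noted.

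With these in place the computation collapses. Using that $\mathbb{Q}_{\psi(\M)}$ is Hermitian and idempotent (so $\mathbb{Q}_{\psi(\M)}^\dagger\mathbb{Q}_{\psi(\M)}=\mathbb{Q}_{\psi(\M)}$) and that the generators are anti-Hermitian ($\hat{\Xi}_i^\dagger=-\hat{\Xi}_i$), I obtain
\begin{align}
\braket{\mathcal{V}_i(\M)|\mathcal{V}_j(\M)}=-\braket{\phi|\hat{\Xi}_i\,\mathbb{Q}_\phi\,\hat{\Xi}_j|\phi}\,,
\end{align}
which no longer contains $\M$. Dividing by $\braket{\phi|\phi}$ and taking twice the real and imaginary parts via $2\mathrm{Re}\,z=z+\bar z$ and $2\mathrm{Im}\,z=(z-\bar z)/\ii$, while using that the adjoint of $\hat{\Xi}_i\mathbb{Q}_\phi\hat{\Xi}_j$ is $\hat{\Xi}_j\mathbb{Q}_\phi\hat{\Xi}_i$, then produces the symmetric combination~\eqref{eq:kaehler-structures-group-coherent-g} and the antisymmetric combination~\eqref{eq:kaehler-structures-group-coherent-om}.

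The argument contains no genuine analytic obstacle; the delicate points are bookkeeping. The principal one is the conjugation identity, which is exactly where unitarity is indispensable and where one should verify that the \emph{projective} nature of $\mathcal{U}$ is harmless: only the single operator $\mathcal{U}(\M)$ and its adjoint ever appear, never a product $\mathcal{U}(\M_1)\mathcal{U}(\M_2)$, so the undetermined cocycle phase cancels automatically and Definition~\ref{def:group-theoretic-coherent-states} is respected. The second subtlety is to track the anti-Hermitian convention for $\hat{\Xi}_i$ consistently through the separation into real and imaginary parts, so that the factors of $\ii$ land correctly in the symplectic form~\eqref{eq:kaehler-structures-group-coherent-om}.
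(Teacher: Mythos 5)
Your proposal is correct and follows essentially the same route as the paper's own proof: both reduce $\braket{\mathcal{V}_i(\M)|\mathcal{V}_j(\M)}$ to $-\braket{\phi|\hat{\Xi}_i\mathbb{Q}_\phi\hat{\Xi}_j|\phi}$ via the conjugation identity $\mathcal{U}^\dagger(\M)\mathbb{Q}_{\psi(\M)}\mathcal{U}(\M)=\mathbb{Q}_\phi$ together with $\hat{\Xi}_i^\dagger=-\hat{\Xi}_i$, and then read off metric and symplectic form as real and imaginary parts. The only difference is that you spell out the supporting details (the derivation of the conjugation identity, the harmlessness of the projective phase, and the adjoint bookkeeping) that the paper simply asserts, which is fine.
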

\begin{proof}
We can straightforwardly compute
\begin{align}
\begin{split}
    \braket{\mathcal{V}_i|\mathcal{V}_j}&=\braket{\phi|\hat{\Xi}^\dagger_i\mathcal{U}^\dagger(M)\mathbb{Q}_{\psi(M)}\mathcal{U}(M)\hat{\Xi}_j|\phi}\\
    &=-\braket{\phi|\hat{\Xi}_i\mathbb{Q}_{\phi}\hat{\Xi}_j|\phi}\,, 
\end{split}
\end{align}
where we used $\mathcal{U}^\dagger(M)\mathbb{Q}_{\psi(M)}\mathcal{U}(M)=\mathbb{Q}_{\phi}$ and $\hat{\Xi}_i^\dagger=-\hat{\Xi}_i$.
\end{proof}

Proposition~\ref{prop:g-independence} implies a dramatic simplification of the variational manifold $\mathcal{M}_\phi$, because it suffices to choose a single basis $\Xi_i$ of generators to bring the Kähler structures into a standard form which extends to all tangent spaces via the map of definition~\ref{def:Lie-tangent}. This is particularly important for numerical implementations as discussed in section~\ref{sec:numerical-implementation}.

\begin{example}\label{ex:SU2-second}
We will reconsider the Lie group $\mathrm{SU}(2)$ from example~\ref{ex:SU(2)}. As we are interested in the possible families $\mathcal{M}_\phi$, we will need to understand which $\phi$ are inequalivant, \ie do not give rise to the same family $\mathcal{M}_\phi$. We can compute the tangent vectors $\ket{\mathcal{V}_i}$ based on definition~\ref{def:Lie-tangent} for the following representations.\\
\textbf{Spin-$\frac{1}{2}$.} Up to a multiplication with a complex number $c$, we can use our fundamental representation $\mathcal{U}(M)$ to transform any non-zero vector into any other complex vector. As our definition of $\mathcal{M}_\phi$ ignores complex rescalings, we therefore must have $\mathcal{M}_\phi=\mathcal{P}(\mathbb{C}^2)$, \ie for any non-zero state $\phi$, the resulting family is the full projective Hilbert space as discussed in example~\ref{ex:bloch-sphere}. To compute the tangent space vectors, we take $\ket{\phi}\equiv(1,0)$ and find
\begin{align}
    \ket{\mathcal{V}_1}\equiv\begin{pmatrix}
    0\\
    \frac{\ii}{2}
    \end{pmatrix}\,,\quad\ket{\mathcal{V}_2}\equiv\begin{pmatrix}
    0\\
    \frac{1}{2}
    \end{pmatrix}\,,\quad\ket{\mathcal{V}_3}\equiv\begin{pmatrix}
    0\\
    0
    \end{pmatrix}\,.
\end{align}
As $\ket{\mathcal{V}_3}=0$, the tangent space is $\mathcal{T}_{\phi}\mathcal{M}_\phi=\mathrm{span}_{\mathbb{R}}(\ket{\mathcal{V}_1},\ket{\mathcal{V}_2})$ leading to the restricted Kähler structures
\begin{align}
    \g\equiv\begin{pmatrix}
    2 & 0\\
    0 & 2
    \end{pmatrix}\,,\,\om\equiv\begin{pmatrix}
    0 & 2\\
    -2 & 0
    \end{pmatrix}\,,\J\equiv\begin{pmatrix}
    0 & -1\\
    1 & 0
    \end{pmatrix}\,,
\end{align}
which clearly satisfy $\J^2=-\id$.\\
\textbf{Spin-$1$.} Our representation $\mathcal{U}(M)$ consists of standard $3$-by-$3$ rotation matrices acting on $\mathbb{C}^3$. In the basis of these matrices, we can thus represent any vector as column $\ket{\phi}\equiv \vec{a}+\ii\vec{b}$, with $\vec{a},\vec{b}\in\mathbb{R}^3$. We choose this vector to be normalized, such that $\vec{a}^2+\vec{b}^2=1$. Multiplying with a complex phase $e^{\ii\varphi}$ corresponds to a rotation $\vec{a}\to\cos(\varphi) \vec{a}+\sin(\varphi)\vec{b}$ and $\vec{b}\to\cos{(\varphi)}\vec{b}-\sin(\varphi)\vec{a}$, which we can use to ensure $\vec{a}\cdot\vec{b}=0$ and $\vec{a}^2\geq\vec{b}^2$, such that we can always choose $0\leq\theta\leq\frac{\pi}{4}$ with $|\vec{a}|=\cos{\theta}$ and $|\vec{b}|=\sin{\theta}$. We can then apply the rotation matrices $\mathcal{U}(M)$, such that $\vec{a}=(\cos{\theta},0,0)$ and $\vec{b}=(0,\sin{\theta,0})$. Furthermore, we find the tangent vectors
\begin{align}
\footnotesize
\ket{\mathcal{V}_1}\!\equiv\!\begin{pmatrix}
    0\\ 0 \\ -\ii\sin{\theta}
    \end{pmatrix},\,\ket{\mathcal{V}_2}\!\equiv\!\begin{pmatrix}
    0\\0\\ \cos{\theta}
    \end{pmatrix},\,\ket{\mathcal{V}_3}\!\equiv\!\begin{pmatrix}
    -\ii \cos{2\theta}\sin{\theta}\\-\cos{2\theta}\cos{\theta}\\0
    \end{pmatrix}.\hspace{-1mm}
\end{align}
For this choice, we can compute the Kähler structures
\begin{align}
    \footnotesize
    \g\!\equiv\! 2\begin{pmatrix}
    \sin^2{\theta} & 0& 0\\
    0& \cos^2{\theta} & 0\\
    0& 0 & \cos^2{\tfrac{\theta}{2}}
    \end{pmatrix},\,\om\!\equiv\! 2\begin{pmatrix}
    0 & \sin{2\theta}& 0\\
    -\sin{2\theta}& 0 & 0\\
    0& 0 & \cos^2{\tfrac{\theta}{2}}
    \end{pmatrix}
\end{align}
leading to the linear complex structure
\begin{align}
\footnotesize
    \J\equiv\begin{pmatrix}
    0 & -\cot{\theta}& 0\\
    \tan{\theta}& 0 & 0\\
    0& 0 & 0
    \end{pmatrix}\,.
\end{align}
For $0<\theta<\tfrac{\pi}{4}$, we have $\mathfrak{h}_\phi=\mathrm{span}(\ket{\mathcal{V}_1},\ket{\mathcal{V}_2},\ket{\mathcal{V}_3})$ and $\mathcal{M}_\phi$ is degenerate non-Kähler. For $\theta=0$, we have $\mathfrak{h}_\phi=\mathrm{span}(\ket{\mathcal{V}_2},\ket{\mathcal{V}_3})$, on which $\omega$ vanishes and $\mathcal{M}_\phi$ is again degenerate non-Kähler. Only for $\theta=\tfrac{\pi}{4}$, we have $\mathfrak{h}_\phi=\mathrm{span}(\ket{\mathcal{V}_1},\ket{\mathcal{V}_2})$, on which $\J^2=-\id$ holds, and $\mathcal{M}_\phi$ is Kähler. The families $\mathcal{M}_\phi\subset\mathcal{P}(\mathbb{C}^3)$ are 3-dimensional copies of $\mathrm{SO}(3,\mathbb{R})$ for $0<\theta<\tfrac{\pi}{4}$ and spheres $S^2$, otherwise. Together these orbits foliate the projective Hilbert space $\mathcal{P}(\mathbb{C}^3)$ just like a sphere can be foliated in circles of latitudes with single points at the poles.
\end{example}

\subsubsection{Compact Lie groups}
\label{sec:lie-algebra-notions}
We are interested in geometry of the variational family $\mathcal{M}_\phi\subset\mathcal{P}(\mathcal{H})$, namely its restricted Kähler structures~\eqref{eq:kaehler-structures-group-coherent-g} and~\eqref{eq:kaehler-structures-group-coherent-om}. In particular, we would like to find a simple criterion when such a family of group theoretic coherent states is a Kähler manifold. In this section, we will focus on the representation theory of compact semi-simple Lie algebras, as discussed in~\cite{georgi2018lie}. Later, we will also consider Lie algebras that are not compact or not semi-simple, as discussed in~\cite{knapp1996liegroups}.

For readers familiar with the theory of Lie algebras, let us emphasize that we need to carefully distinguish between the theory of real and complex Lie algebras. In our case, we have a real Lie algebra $\mathfrak{g}$, because we started from a real Lie group $\mathcal{G}$ and its unitary representation $\mathcal{U}$. We will be lead to also consider the complexification
\begin{align}
    \mathfrak{g}^{\mathbb{C}}=\{A^i\Xi_i\,|\,A^i\in\mathbb{C}\}\,,
\end{align}
but only real elements $A\in\mathfrak{g}\subset\mathfrak{g}^{\mathbb{C}}$ will be represented by anti-Hermitian operators $\hat{A}$. For a general element $A=A^i\Xi_i\in\mathfrak{g}^{\mathbb{C}}$, we define its conjugation as $A^*=A^{*i}\Xi_i$, such that $A$ is only real if $A=A^*$.

We consider a compact semi-simple Lie group $\mathcal{G}$ with real Lie algebra $\mathfrak{g}$, \ie the Killing form $\mathcal{K}$ is negative definite, such that $\mathcal{K}(A,B)<0$ for all non-zero $A,B\in\mathfrak{g}$. Our construction relies on first choosing a Cartan subalgebra $\mathfrak{h}\subset\mathfrak{g}$ (not to be confused with $\mathfrak{h}_\phi$), characterized by the property that if $[X,Y]\in\mathfrak{h}$ for all $X\in\mathfrak{h}$ also $Y\in\mathfrak{h}$. For semi-simple Lie algebras, this implies that $\mathfrak{h}$ is Abelian, \ie $[X,Y]=0$ for all $X,Y\in\mathfrak{h}$. We choose a basis of $\mathfrak{h}$ as $H_I=H_I^i\Xi_i$, which is smaller than $\Xi_i$ that spans $\mathfrak{g}$. While the choice of $\mathfrak{h}$ is not unique, they are all isomorphic for compact semi-simple Lie algebras and they give rise to the following structures\footnote{The same structures arise for non-compact semi-simple Lie algebras, but they are not necessarily isomorphic anymore.}:
\begin{itemize}
    \item The adjoint representation $\mathrm{ad}_I=H^i_I\,\mathrm{ad}_i$ of the Cartan basis $H_I$ has joint eigenspaces $\mathfrak{v}_{\alpha}\subset\mathfrak{g}^{\mathbb{C}}$ with
    \begin{align}
        \qquad\mathrm{ad}_I(E_{\alpha})=[H_I,E_{\alpha}]=\alpha_I\,E_{\alpha}\,\,\,\forall\,\,\,E_{\alpha}\in\mathfrak{v}_{\alpha}\,,
    \end{align}
    where the set of eigenvalues $\alpha_I$ is called a root\footnote{Each root is a linear map $\alpha: \mathfrak{h}\to \mathbb{C}$ with $\alpha(A^IH_I)=A^I\alpha_I$.}, which always come in pairs $(\alpha,-\alpha)$.
    \item The root system $\Delta$ is the set of all non-zero roots $\alpha$. It can be split into the two disjoint sets of positive roots $\Delta^+$ and negative roots $\Delta^-$, such that for every $\alpha\in\Delta^+$, we have $-\alpha\in\Delta^-$.
    \item We have the root space decomposition given by
    \begin{align}
        \mathfrak{g}^{\mathbb{C}}=\mathfrak{h}^{\mathbb{C}}\oplus \bigoplus_{\alpha\in \Delta^+}\mathfrak{v}_\alpha\oplus \mathfrak{v}_{-\alpha}\,,\label{eq:g-splitting}
    \end{align}
    where all $\mathfrak{v}_\alpha$ are complex and one-dimensional.
    \item For every root $\alpha\in\Delta$, we have the generator\footnote{There is a slight abuse of notation: $H_I$ refers to a real basis of $\mathfrak{h}$ and is distinct from $H_{\alpha}\in\mathfrak{h}^{\mathbb{C}}$ defined in~\eqref{eq:standard-commutators}.}
    \begin{align}
        H_{\alpha}=\alpha_I (\mathcal{K}^{-1})^{IJ}H_{J}\in\mathfrak{h}^{\mathbb{C}}\,,\label{eq:standard-commutators}
    \end{align}
    such that $[E_{\alpha},E_{-\alpha}]=\mathcal{K}(E_{\alpha},E_{-\alpha})\,H_{\alpha}$.
    \item We can choose for each eigenspace $\mathfrak{v}_\alpha$ an eigenvector $E_{\alpha}$, such that $\mathcal{K}(E_{\alpha},E_{-\alpha})>0$ and such that
    \begin{align}
        \quad[E_\alpha,E_\beta]&=N_{\alpha\beta}\,E_{\alpha+\beta}\quad\text{if}\quad \alpha+\beta\in\Delta\,,
    \end{align}
    where $N_{\alpha\beta}$ are real and satisfy $N_{\alpha,\beta}=-N_{-\alpha,-\beta}$, while all other brackets $[E_\alpha,E_{\beta}]$ vanish.
    \end{itemize}
One can use the property of $\mathfrak{g}$ being compact to show that above choices of $E_{\alpha}$ satisfy $E_{\alpha}^*=-E_{-\alpha}$ and that all $\alpha_I$ are imaginary. Thus, we have the real basis
\begin{align}
    \Xi_i\equiv(H_I,Q_\alpha,P_\alpha)\,,\label{eq:compact-real-algebra-basis}
\end{align}
where $\alpha\in\Delta^+$ and we introduced the real generators
\begin{align}
    Q_{\alpha}=E_{\alpha}-E_{-\alpha}\quad\text{and}\quad P_{\alpha}=\ii(E_{\alpha}+E_{-\alpha})\,.
\end{align}

When we represent $\Xi_i$ by anti-Hermitian operators $\hat{\Xi}_i$ on a Hilbert space $\mathcal{H}$, all $\hat{H}_I=H_I^i\hat{\Xi}_i$ commute with each other. A common eigenvector $\ket{\mu}\in\mathcal{H}$ of all $\hat{H}_I$ with
\begin{align}
    \hat{H}_I\ket{\mu}=\mu_I\,\ket{\mu}
\end{align}
is called a weight vector and the joint eigenvalues $\mu_I$ are called its weight\footnote{Weights are linear maps $\mu: \mathfrak{h}\to\mathbb{C}$ with $\mu(A^IH_I)=A^I\mu_I$.}. A weight vector $\ket{\mu}$ is called highest weight vector\footnote{Technically, we could also call it lowest weight vector, if we switched the roles of positive and negative roots, which is why some authors use the term `extremal weight'.} if it is annihilated by all positive root operators $\hat{E}_{\alpha}$ with $\alpha\in\Delta^+$, \ie
\begin{align}
    \ket{\mu}\text{ highest weight}\quad\Leftrightarrow\quad \hat{E}_\alpha\ket{\mu}=0\;\forall\,\alpha\in\Delta^+\,.
\end{align}
Different choices of $\mathfrak{h}$ and different assignments of which roots are positive will lead to different highest weight vectors. We refer to all vectors $\ket{\mu}$ as highest weight vectors, for which there exists a choice of Cartan subalgebra $\mathfrak{h}$ and an assignment of positive roots, such that $\ket{\mu}$ is a highest weight vector in the above sense. For compact Lie groups, all such highest weight vectors are related by applying $\mathcal{U}(M)$ for some $M\in\mathcal{G}$, \ie the family $\mathcal{M}_\phi$ for $\phi$ being a highest weight vector is actually the set of \emph{all} highest weight vectors with respect to all possible choices of Cartan subalgebras and positivity of roots. The following proposition shows that such $\mathcal{M}_{\phi}$ is Kähler, which was also recognized in~\cite{zhang1990coherent}.

\begin{proposition}
If $\phi$ is a highest weight vector and $\mathcal{G}$ a semi-simple compact group, the manifold $\mathcal{M}_\phi$ is Kähler.
\label{prop:highest-weight}
\end{proposition}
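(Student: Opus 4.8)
The plan is to use the explicit formulas~\eqref{eq:kaehler-structures-group-coherent-g} and~\eqref{eq:kaehler-structures-group-coherent-om} from Proposition~\ref{prop:g-independence}, which tell us that the restricted Kähler structures $\g_{ij}$, $\om_{ij}$, and hence $\J$, can be evaluated once and for all at the reference state $\ket{\phi}$ itself. So the whole statement reduces to a purely algebraic computation at $\ket{\phi}$: I want to show that on the relevant tangent directions the induced $\J$ satisfies $\J^2=-\id$. Since $\mathcal{M}_\phi$ is Kähler iff the tangent space $\mathcal{T}_\phi\mathcal{M}_\phi$ is a \emph{complex} subspace of $\mathcal{H}$ (Proposition~\ref{cor:Kaehler-property}), the cleanest route is to show directly that multiplication by $\ii$ maps tangent vectors to tangent vectors.

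First I would work in the real basis~\eqref{eq:compact-real-algebra-basis}, $\Xi_i\equiv(H_I, Q_\alpha, P_\alpha)$ with $Q_\alpha=E_\alpha-E_{-\alpha}$ and $P_\alpha=\ii(E_\alpha+E_{-\alpha})$, and compute the tangent vectors $\ket{\mathcal{V}_i}=\mathbb{Q}_\phi\hat\Xi_i\ket{\phi}$. The highest-weight condition $\hat E_\alpha\ket{\phi}=0$ for all $\alpha\in\Delta^+$ is the key input. It gives $\hat Q_\alpha\ket{\phi}=-\hat E_{-\alpha}\ket{\phi}$ and $\hat P_\alpha\ket{\phi}=\ii\hat E_{-\alpha}\ket{\phi}$, so that $\hat P_\alpha\ket{\phi}=-\ii\,\hat Q_\alpha\ket{\phi}$ at the level of the raw vectors. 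Moreover each $\hat H_I\ket{\phi}=\mu_I\ket{\phi}$ is proportional to $\ket{\phi}$, so after applying $\mathbb{Q}_\phi$ the Cartan directions drop out entirely: $\ket{\mathcal{V}_{H_I}}=0$. Hence the surviving tangent vectors come only from the pairs $(Q_\alpha,P_\alpha)$, and they satisfy $\ket{\mathcal{V}_{P_\alpha}}=-\ii\ket{\mathcal{V}_{Q_\alpha}}$ (the projector $\mathbb{Q}_\phi$ commutes with this relation since $\mathbb{Q}_\phi\ii=\ii\mathbb{Q}_\phi$). This is exactly the pairing that makes the tangent space a complex subspace: the $\ket{\mathcal{V}_{Q_\alpha}}$ and $\ii\ket{\mathcal{V}_{Q_\alpha}}$ are both in the span, so $\ii$ preserves $\mathcal{T}_\phi\mathcal{M}_\phi$. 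By Proposition~\ref{cor:Kaehler-property} this already establishes that $\mathcal{M}_\phi$ is Kähler.

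To make this airtight I would verify that the $\ket{\mathcal{V}_{Q_\alpha}}=\mathbb{Q}_\phi\hat E_{-\alpha}\ket{\phi}$ for distinct positive roots $\alpha$ are genuinely the tangent directions and that no further collapse spoils the complex-linear pairing. The vectors $\hat E_{-\alpha}\ket{\phi}$ are weight vectors of weight $\mu-\alpha$, so they lie in mutually orthogonal weight spaces for distinct $\alpha$ (and orthogonal to $\ket{\phi}$, which carries weight $\mu$); thus $\mathbb{Q}_\phi$ acts as the identity on them and the pairing $\ket{\mathcal{V}_{P_\alpha}}=-\ii\ket{\mathcal{V}_{Q_\alpha}}$ holds on the nose, direction by direction. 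Feeding this into the definition $\J^\mu{}_\nu=-\G^{\mu\sigma}\om_{\sigma\nu}$, or equivalently noting that $\J$ sends $\ket{\mathcal{V}_{Q_\alpha}}\mapsto\ket{\mathcal{V}_{P_\alpha}}=-\ii\ket{\mathcal{V}_{Q_\alpha}}$ and $\ket{\mathcal{V}_{P_\alpha}}\mapsto-\ket{\mathcal{V}_{Q_\alpha}}$, one reads off $\J^2=-\id$ on the quotient $\mathfrak g/\mathfrak h_\phi$.

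The main obstacle I anticipate is not the complex-linearity itself but the bookkeeping around the stabilizer: I must make sure that the directions I am discarding are exactly $\mathfrak h_\phi$, \ie that $\ket{\mathcal{V}_{Q_\alpha}}=0$ can occur (when $\mu-\alpha$ is not a weight of the representation, so $\hat E_{-\alpha}\ket{\phi}=0$) and that in every such case its conjugate partner $\ket{\mathcal{V}_{P_\alpha}}$ also vanishes, so that $\mathfrak h_\phi$ is itself a \emph{complex} (that is, $\J$-invariant, or equivalently conjugation-closed under $E_\alpha\leftrightarrow E_{-\alpha}$) subspace. This is where compactness and semi-simplicity enter crucially: they guarantee $E_\alpha^*=-E_{-\alpha}$ and the clean root-space decomposition~\eqref{eq:g-splitting}, which is what forces the annihilated directions to come in conjugate pairs and keeps the non-degenerate part genuinely Kähler rather than merely non-degenerate non-Kähler.
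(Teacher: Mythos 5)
Your proposal is correct and follows essentially the same route as the paper's proof: working in the real basis $(H_I,Q_\alpha,P_\alpha)$ of~\eqref{eq:compact-real-algebra-basis}, using $\hat{E}_\alpha\ket{\phi}=0$ to show the Cartan directions are projected out and the surviving tangent vectors come in pairs related by multiplication by $\ii$ (with the roots annihilating $\ket{\phi}$ split off exactly as the paper's $\tilde\alpha$ versus $\alpha$ distinction), hence the tangent space is a complex subspace and Kähler. Your explicit appeals to Proposition~\ref{prop:g-independence} and Proposition~\ref{cor:Kaehler-property}, and the orthogonality check via distinct weights $\mu-\alpha$, only make explicit what the paper uses implicitly.
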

\begin{proof}
As the group is compact, because of the discussion in section~\ref{sec:lie-algebra-notions}, a basis of the corresponding algebra is given by~\eqref{eq:compact-real-algebra-basis}. Let $\ket{\phi}=\ket\mu$ the highest weight vector with respect to the Cartan subalgebra $\mathfrak{h}^\mathbb{C}$ spanned by $H_\alpha$. We split our set $\Delta^+$ of positive root into those $\tilde\alpha\in\Delta^+$ with $\hat{E}_{-\tilde{\alpha}}\ket{\mu}=0$ and those $\alpha\in\Delta^+$ with $\hat{E}_{-\alpha}\ket{\mu}\neq0$. Note that $\hat{E}_{\alpha}\ket{\mu}=0$ for all $\alpha\in\Delta^+$ due to $\ket\mu$ being highest weight. With this, we can split our basis $\Xi_i$ further into the three parts $\Xi_i\equiv(H_I,Q_{\tilde\alpha},P_{\tilde\alpha},Q_{\alpha},P_{\alpha})$. We can construct the induced tangent space basis $\ket{\mathcal{V}_i}$ from their definition~\eqref{def:Lie-tangent} to find
\begin{align}
    \ket{\mathcal{V}_I}&=\mathbb{Q}_{\mu}\hat{H}_I\ket{\mu}=0\,,\\
    \ket{\mathcal{V}^ 1_{\tilde{\alpha}}}&=\mathbb{Q}_{\mu}\hat{Q}_{\tilde\alpha}\ket{\mu}=0\,,\\\
    \ket{\mathcal{V}^ 2_{\tilde{\alpha}}}&=\mathbb{Q}_{\mu}\hat{P}_{\tilde\alpha}\ket{\mu}=0\,,\\
    \ket{\mathcal{V}^ 1_{\alpha}}&=\mathbb{Q}_{\mu}\hat{Q}_{\alpha}\ket{\mu}=\hat{E}_{-\alpha}\ket{\mu}\propto\ket{\mu-\alpha}\neq0\,,\\
    \ket{\mathcal{V}^ 2_{\alpha}}&=\mathbb{Q}_{\mu}\hat{P}_{\alpha}\ket{\mu}=\ii\hat{E}_{-\alpha}\ket{\mu}\propto\ii\ket{\mu-\alpha}\neq0\,.
\end{align}
Here, $(\ket{\mathcal{V}^1_{\alpha}},\ket{\mathcal{V}^ 2_{\alpha}})$ forms a basis of tangent space. Indeed $\hat{E}_{-\alpha}\ket{\mu}$ is an eigenvectors of $\hat{H}_I$ with eigenvalue $\mu_I-\alpha_I$, such that $\hat{H}_I \hat{E}_{-\alpha}\ket{\mu}=(\mu_I-\alpha_I)\hat{E}_{-\alpha}\ket{\mu}$ and as such are orthonormal. Clearly, we have the pairs $\ket{\mathcal{V}^ 1_{\alpha}}$ and $\ket{\mathcal{V}^ 2_{\alpha}}=\ii\ket{\mathcal{V}^1_{\alpha}}$ ensuring that tangent space satisfies the Kähler property.
\end{proof}

Assuming $\ket{\mu}$ is a weight vector, \ie an eigenvector of the Cartan subalgebra operators, we can explicitly compute the matrix representations
\begin{align}
    \g&\equiv2\bigoplus_{\alpha}\frac{\braket{\mu|[\hat{E}_{-\alpha},\hat{E}_{\alpha}]+2\hat{E}_{\alpha}\hat{E}_{-\alpha}|\mu}}{\braket{\mu|\mu}}\begin{pmatrix}
    1 & 0\\
    0 & 1
    \end{pmatrix}\,,\\
    \om&\equiv2\bigoplus_{\alpha}\frac{\braket{\mu|[\hat{E}_{-\alpha},\hat{E}_{\alpha}]|\mu}}{\braket{\mu|\mu}}\begin{pmatrix}
    0 & 1\\
    -1 & 0
    \end{pmatrix}\,,\\
    \J&\equiv\bigoplus_{\alpha} \frac{\braket{\mu|[\hat{E}_{-\alpha},\hat{E}_{\alpha}]|\mu}}{\braket{\mu|[\hat{E}_{-\alpha},\hat{E}_{\alpha}]+2\hat{E}_{\alpha}\hat{E}_{-\alpha}|\mu}}\begin{pmatrix}
    0 & -1\\
    1 & 0
    \end{pmatrix}\,.
\end{align}
with respect to $(\ket{\mathcal{V}^1_{\alpha}},\ket{\mathcal{V}^2_{\alpha}})$. We see immediately that such structures then take the canonical Kähler form ($\J^2=-\id$) only if $\ket{\mu}$ is the highest weight, that is $\braket{\mu|\hat{E}_{\alpha}\hat{E}_{-\alpha}|\mu}=0$. In this case, they only depend on the following factor which we evaluate explicitly using~\eqref{eq:standard-commutators}:
\begin{align}
    \tfrac{\braket{\mu|[\hat{E}_{-\alpha},\hat{E}_{\alpha}]|\mu}}{\braket{\mu|\mu}}=-\tfrac{\braket{\mu|\mathcal{K}(E_{\alpha},E_{-\alpha})\hat{H}_\alpha|\mu}}{\braket{\mu|\mu}}=-\tfrac{\mathcal{K}(E_{\alpha},E_{-\alpha})\mu(H_\alpha)}{\braket{\mu|\mu}}\,.
\end{align}
This can be explicitly checked in the following example.

\begin{example}
We reconsider example~\ref{ex:SU2-second} of $\mathrm{SU}(2)$ for the spin-$\frac{1}{2}$ and spin-$1$ representation. $\mathrm{SU}(2)$ is a compact group. We choose as real Cartan subalgebra $\mathfrak{h}=\mathrm{span}_{\mathbb{R}}(\Xi_3)$, which gives rise to $E_{\pm}=\frac{\ii}{2}(\Xi_1\pm\ii\Xi_2)$, where we $E_+$ corresponds to the only positive root.\\
\textbf{Spin-$\frac{1}{2}$.} There is only one $\mathcal{M}_\phi$, which is the full space $\mathcal{P}(\mathbb{C}^2)$. Therefore every state in the representation is a highest weight state with respect to some choice of $\mathfrak{h}^\mathbb{C}$ and selection of positive roots. For our choice, the highest weight state is $\ket{\phi}=(1,0)$, for which we already computed the tangent vector in example~\ref{ex:SU2-second} and verified that $\mathcal{M}_\phi$ is Kähler.\\
\textbf{Spin-$1$.} Not every state is a highest weight state anymore. With respect to our choice of positive root vector $E_+$, the highest weight state is $\ket{\phi}=(\tfrac{1}{\sqrt{2}},\tfrac{\ii}{\sqrt{2}},0)$, which corresponds to the boundary case $\theta=\pi/4$ from our previous considerations. This agrees with our finding that $\mathcal{M}_\phi$ is only Kähler for $\theta=\frac{\pi}{4}$. Note that we can always include more generators to construct a larger group (here: $\mathrm{SU}(3)$), so that also states for $\frac{\pi}{4}\neq\theta\neq0$ are highest weight states (with respect to the larger state) and $\mathcal{M}_\phi$ will be Kähler (here: full $\mathcal{P}(\mathcal{H})$).
\end{example}

\subsubsection{General Lie groups}
\label{sec:general-lie-groups}
Proposition~\ref{prop:highest-weight} shows that $\mathcal{M}_\phi$ is a Kähler manifold if $\mathcal{G}$ is a compact semisimple Lie grup and $\phi$ a highest weight vector. How much of this analysis can be carried out for non-compact or non-semi-simple Lie groups/algebras?

For a semi-simple, non-compact real Lie algebra $\mathfrak{g}$, \ie $\mathcal{K}$ is not negative definite anymore (but still non-degenerate), we can still choose a Cartan subalgebra $\mathfrak{h}\subset\mathfrak{g}$ and use the same root space decomposition~\eqref{eq:g-splitting}. Not all choices of $\mathfrak{h}$ are isomorphic anymore, as $\mathcal{K}$ restricted to $\mathfrak{h}$ may have different signatures. This may lead to additional requirements for a highest weight vector $\ket{\mu}$ to give rise to Kähler manifolds. For any choice of Cartan subalgebra $\mathfrak{h}$ and a positive root system $\Delta^+$, we consider representations\footnote{For non-compact Lie groups, any non-trivial unitary representation will be infinite dimensional, in which case there are also representations without highest weight vectors. We do not consider those.} with unique highest weight vector $\ket{\mu}$ annihilated by all positive root operators $\hat{E}_{\alpha}$. We can distinguish the following cases:

\begin{figure*}[t]
    \centering
    \begin{tikzpicture}
    \draw (-6,-1) rectangle (2.6,4.4);
    \draw (3.2,-1) rectangle (11.8,4.4);
    \begin{scope}[xshift=-4cm,yshift=1.6cm,scale=.8]
        \draw[thick,dgreen,->] (0,0) -- (0,2) node[above,black]{$\hat{a}^\dagger_2\hat{a}^\dagger_2$};
        \draw[thick,dred,->] (0,0) -- (0,-2) node[below,black]{$\hat{a}_2\hat{a}_2$};
        \draw[thick,->,dgreen] (0,0) -- (2,0) node[above,black]{$\hat{a}^\dagger_1\hat{a}^\dagger_1$};
        \draw[thick,->,dred] (0,0) -- (-2,0) node[above,black]{$\hat{a}_1\hat{a}_1$};
        \draw[thick,->,dgreen] (0,0) -- (1,1) node[above,black]{$\hat{a}^\dagger_1\hat{a}^\dagger_2$};
        \draw[thick,->,dred] (0,0) -- (-1,-1) node[below,black]{$\hat{a}_1\hat{a}_2$};
        \draw[thick,->,dred] (0,0) -- (1,-1) node[below,black]{$\hat{a}^\dagger_1\hat{a}_2$};
        \draw[thick,->,dgreen] (0,0) -- (-1,1) node[above,black]{$\hat{a}_1\hat{a}^\dagger_2$};
        \fill (0,0) node[right,scale=.9]{} circle (2pt);
        \draw[thick] (0,0) node[left,xshift=-2pt,yshift=3pt,scale=.9]{} circle (6pt);
        \draw (0,-2.75) node{\textbf{roots}};
    \end{scope}
    \begin{scope}[scale=.8,xshift=-1.5cm]
    \foreach \i in {0,2}{
    \foreach \j in {0,2}{
        \fill (\i,\j) node[right,xshift=-4pt,yshift=6pt,scale=.7]{$\ket{\i,\j}$} circle (2pt);
    }
    }
    \foreach \i in {1,3}{
    \foreach \j in {1,3}{
        \fill (\i,\j) node[right,xshift=-4pt,yshift=6pt,scale=.7]{$\ket{\i,\j}$} circle (2pt);
    }
    }
    \foreach \i in {0,2}{
        \draw (4,\i) node{$\cdots$};
        \draw (\i,4) node{$\vdots$};
    }
    \foreach \j in {1,3}{
        \fill[blue] (0,\j) node[right,xshift=-4pt,yshift=6pt,scale=.7]{$\ket{0,\j}$} circle (2pt);
        \fill[blue] (\j,0) node[right,xshift=-4pt,yshift=6pt,scale=.7]{$\ket{\j,0}$} circle (2pt);
        \fill[blue] (2,\j) node[right,xshift=-4pt,yshift=6pt,scale=.7]{$\ket{2,\j}$} circle (2pt);
        \fill[blue] (\j,2) node[right,xshift=-4pt,yshift=6pt,scale=.7]{$\ket{\j,2}$} circle (2pt);
    }
    \draw (3.8,3.8) node[rotate=90]{$\ddots$};
    \draw (2,-.75) node{\textbf{weights}};
    \end{scope}
    \begin{scope}[xshift=6cm,yshift=1.6cm,scale=1]
        \draw[thick,->,dgreen] (0,0) -- (1,1) node[above,black]{$\hat{a}^\dagger_1\hat{a}^\dagger_2$};
        \draw[thick,->,dred] (0,0) -- (-1,-1) node[below,black]{$\hat{a}_1\hat{a}_2$};
        \draw[thick,->,dred] (0,0) -- (1,-1) node[below,black]{$\hat{a}^\dagger_1\hat{a}_2$};
        \draw[thick,->,dgreen] (0,0) -- (-1,1) node[above,black]{$\hat{a}_1\hat{a}^\dagger_2$};
        \fill (0,0) node[right,scale=.9]{} circle (2pt);
        \draw[thick] (0,0) node[left,xshift=-2pt,yshift=3pt,scale=.9]{} circle (6pt);
        \draw (0,-2.2) node{\textbf{roots}};
    \end{scope}
    \begin{scope}[xshift=9cm,yshift=1cm,scale=1]
    \fill (1,1) node[right,yshift=2pt,scale=.8]{$\ket{1,1}$} circle (2pt);
    \fill (0,0) node[right,yshift=2pt,scale=.8]{$\ket{0,0}$} circle (2pt);
    \fill[blue] (1,0) node[right,yshift=2pt,scale=.8]{$\ket{1,0}$} circle (2pt);
    \fill[blue] (0,1) node[right,yshift=2pt,scale=.84]{$\ket{0,1}$} circle (2pt);
    \draw (.7,-1.6) node{\textbf{weights}};
    \end{scope}
    \draw (4.1,4) node{\textbf{fermions}};
    \draw (-5.2,4) node{\textbf{bosons}};
    \end{tikzpicture}
    \ccaption{Root and weight diagrams for $\mathfrak{sp}(4,\mathbb{R})$ and $\mathfrak{so}(4,\mathbb{R})$}{We illustrate the roots and weights for the squeezing representation $\mathcal{S}(M)$ in the case of two bosonic (left) and two fermionic (right) modes. The corresponding Lie algebras $\mathfrak{g}$ are $\mathfrak{sp}(4,\mathbb{R})$ and $\mathfrak{so}(4,\mathbb{R})$, respectively. As Cartan subalgebra we choose $\hat{H}_I\equiv(\ii\hat{n}_1\pm\tfrac{\ii}{2},\ii\hat{n}_2\pm\tfrac{\ii}{2})$. There are eight bosonic and four fermionic roots. The root vectors are purely imaginary, allowing us to represent $\mathrm{Im}(\alpha_I)$ by arrows in two dimensions (red = positive roots, green = negative roots). The weight vectors  are the number eigenstates $\ket{n_1,n_2}$ with highest weight vector $\ket{0,0}$. They can also be represented in two dimensions by plotting on each axis their eigenvalue with respect to one of the Cartan algebra elements (black dots = even sector, blue dots = odd sector).}
    \label{fig:root-diagrams}
\end{figure*}

\textbf{Compact Cartan subalgebra.} We refer to a chosen Cartan subalgebra $\mathfrak{h}$ as compact if the Killing form $\mathcal{K}$ restricted to $\mathfrak{h}$ is negative definite. In this case, all roots $\alpha\in\Delta$ are imaginary, \ie all $\alpha(H_I)\in \ii\mathbb{R}$, and the positive roots $\Delta^+$ can be divided into two sets\footnote{If the algebra is not compact, it can be divided into the two subspaces $\mathfrak{g}=\mathfrak{k}\oplus\mathfrak{p}$, such that the Killing form is negative definite on $\mathfrak{k}$ and positive definite on $\mathfrak{p}$. Having a compact Cartan subalgebra means $\mathfrak{h}\subset\mathfrak{k}$ and is equivalent to having only imaginary roots. An imaginary root $\alpha$ is respectively compact or non-compact if $\mathfrak{v}_\alpha$ and $\mathfrak{v}_{-\alpha}$ are contained in $\mathfrak{k}^\mathbb{C}$ or $\mathfrak{p}^\mathbb{C}$.}: the set $\Delta_{\mathfrak{k}}$ of compact roots and the set $\Delta_{\mathfrak{p}}$ of non-compact roots. The associated root operators $E_{\alpha}$ satisfy $E_{\alpha}^*=\mp E_{-\alpha}$ with $(-)$ for compact roots and $(+)$ for non-compact roots. A real basis of the real algebra $\mathfrak{g}$ is then
\begin{align}
    \Xi_i\equiv(H_I,Q_{\alpha_{\mathfrak{k}}},P_{\alpha_{\mathfrak{k}}},\ii Q_{\alpha_{\mathfrak{p}}},\ii P_{\alpha_{\mathfrak{p}}})\,,
    \label{eq:non-compact-real-algebra-basis}
\end{align}
where $\alpha_{\mathfrak{k}}\in\Delta_{\mathfrak{k}},\alpha_{\mathfrak{p}}\in\Delta_{\mathfrak{p}}$. At this stage, the proof of proposition~\ref{prop:highest-weight} can be directly applied to above basis $\Xi_i$ and $\mathcal{M}_\phi$ is Kähler.

\textbf{Non-compact Cartan subalgebra.} Whenever the Killing form restricted to our chosen Cartan subalgebra is not negative definite, some roots $\alpha\in\Delta$ may be non-imaginary, in which case $\mathcal{M}_\mu$ constructed from the associated highest weight vector $\ket{\mu}$ may not be Kähler. Only if \emph{all} the non-imaginary root operators $\hat{E}_{\alpha}$ annihilate the highest weight state $\ket{\mu}$, \ie $E_\alpha\ket{\mu}=E_{-\alpha}\ket{\mu}=0$ for all non-imaginary roots $\alpha$, we can once again apply the proof of proposition~\ref{prop:highest-weight}. Indeed, in this case the basis of the real Lie algebra $\mathfrak{g}$ will be given by~\eqref{eq:non-compact-real-algebra-basis} plus some real basis vectors constructed from the non-imaginary root spaces. Those additional basis vectors, however, annihilate the reference state and thus do not play a role in the properties of the tangent space and $\mathcal{M}_\phi$ will again be Kähler. In summary, a highest weight vector $\ket{\mu}$ will give rise to a Kähler manifold $\mathcal{M}_\mu$ if it is not only annihilated by the positive imaginary root operators $\hat{E}_{\alpha}$, but also by all non-imaginary root operators $\hat{E}_{\pm\alpha}$.

Much less is known for general Lie groups that are not semi-simple, because their Lie algebras are still not classified. Instead one can attempt to apply the presented analysis case by case. Most importantly, this analysis works for the prominent family of regular coherent states, constructed from the Heisenberg algebra that is not semi-simple, as we will explain in example~\ref{ex:coherent-states-as-group-theoretic}.

\begin{example}
\label{ex:gaussian-as-group-coherent}
Bosonic and fermionic Gaussian states $\ket{J,0}$ as defined in section~\ref{sec:Gaussian} are group theoretic coherent states with respect to the groups $\mathrm{Sp}(2N,\mathbb{R})$ and $\mathrm{O}(2N,\mathbb{R})$ respectively, as defined in~\eqref{eq:Gaussian-groups}. The corresponding algebras $\mathfrak{sp}(2N,\mathbb{R})$ and $\mathfrak{so}(2N,\mathbb{R})$ are represented by anti-Hermitian quadratic combinations of the linear phase space operators $\hat{\xi}^a$, as in~\eqref{eq:quadratic-K}. For bosons this representation is reducible and decomposes over the even and odd part of the Hilbert space, \ie $\mathcal{H}=\mathcal{H}_+\oplus\mathcal{H}_-$ where $\mathcal{H}_{\pm}$ are the eigenspace of the parity operator $P=e^{\ii\pi\hat{N}}$ with $\hat{N}$ being a total number operator. For fermions, the group representation is irreducible due $\mathcal{S}(M_v)$ defined in~\eqref{eq:G-disconnected}, which mixes the two sectors, while the algebra representation still splits over the even and odd sector, as the bosonic case. The algebra is $N(2N+1)$ dimensional for bosons and $N(2N-1)$ dimensional for fermions.\\
We can always select creation and annihilation operators $\hat{a}_i^\dagger$ and $\hat{a}_i$ with number operators $\hat{n}_i=\hat{a}_i^\dagger\hat{a}_i$. This allows us to choose the $N$-dimensional Cartan subalgebra, whose basis is represented as $H_I\equiv(\ii \hat{n}_1\!\pm\!\tfrac{\ii}{2},\dots,\ii \hat{n}_N\!\pm\!\tfrac{\ii}{2})$, where $(+)$ applies to bosons and $(-)$ to fermions. We divide the associated $N(2N-1\pm1)$ roots into positive and negative roots and pick the corresponding basis vectors as
\begin{align}
    \hat{E}_{\alpha}&\in\left\{\begin{array}{ll}
    \{\ii\hat{a}_i\hat{a}_j,\hat{a}_k^\dagger\hat{a}_l|i\leq j,k<l\}     &  \normalfont{\textbf{(bosons)}}\\[1mm]
    \{\hat{a}_i\hat{a}_j,\hat{a}_k^\dagger\hat{a}_l|i<j,k<l\}    &  \normalfont{\textbf{(fermions)}}
    \end{array}\right.\hspace{-1mm},\hspace{-1mm}\\
    \hat{E}_{-\alpha}&\in\left\{\begin{array}{ll}
    \{\ii\hat{a}^\dagger_i\hat{a}^\dagger_j,\hat{a}_l^\dagger\hat{a}_k|i\leq j,k<l\}     &  \normalfont{\textbf{(bosons)}}\\[1mm]
    \{\hat{a}^\dagger_j\hat{a}_i^\dagger,\hat{a}^\dagger_l\hat{a}_k|i< j,k<l\}    &  \normalfont{\textbf{(fermions)}}
    \end{array}\right.\hspace{-1mm}.\hspace{-1mm}
\end{align}
With this choice of real Cartan subalgebra all roots are imaginary. Only the roots associated to $\hat{a}_i\hat{a}_j$ and $\hat{a}^\dagger_i\hat{a}^\dagger_j$ for bosons are non-compact, while all others are compact. We verify $\hat{E}_{\alpha}^\dagger=\pm\hat{E}_{-\alpha}$, with $(+)$ for compact and $(-)$ for non-compact roots. Vice versa $E^*_{\alpha}=\mp E_{-\alpha}$, with $(-)$ for compact and $(+)$ for non-compact roots. With this choice, the weight vectors of the representation on $\mathcal{H}_+$ are the states $\ket{n_1,\cdots,n_N}$ with fixed excitation numbers $n_i$ for all $\hat{n}_i$ and $\sum_in_i$ even.  The highest weight vector is $\ket{0,\dots,0}$. The representation on $\mathcal{H}_-$ is the same, except that we require $\sum_in_i$ to be odd, such that the highest weight vector is $\ket{1,0,\dots,0}$. For fermions, the highest weight families $\mathcal{M}_\phi$ of the two sectors are actually a single family (consisting of two disconnected components), because we also represent group elements $M$ with $\det M=-1$ in our representation. For bosons, only the family constructed from the highest weight, $\ket{0,\dots,0}$, in the even representation is called Gaussian. Let us highlight that the family $\mathcal{M}_\phi$ for $\ket{\phi}=\ket{1,0,\dots,0}$ is an interesting Kähler family whose properties are not fully explored. For $N=2$, the Cartan subalgebra is $2$-dimensional, which allows us to plot roots and weights in the plane, as done in figure~\ref{fig:root-diagrams}. 
\end{example}

\begin{example}
\label{ex:coherent-states-as-group-theoretic}
Standard bosonic coherent states are probably the most well-known type of coherent states used in physics. These are generated from the displacement transformations $\mathcal{D}(z)$ introduced in~\eqref{eq:displacements}. While $\mathcal{D}(z)$ is a projective representation of the Abelian group of phase space translations $(V,+)$ with $\mathcal{D}(z)\mathcal{D}(\tilde{z})=e^{\ii z^a\omega_{ab}\tilde{z}^b}\mathcal{D}(z+\tilde{z})$, this is actually not true for its Lie algebra. Instead, one can consider $e^{\ii\varphi}\mathcal{D}(z)$ as a representation of the Heisenberg group with $(2N+1)$-dimensional Lie algebra represented by the anti-Hermitian operators $\hat{\Xi}_i\equiv(\ii\hat{q}_1,\ii\hat{p}_1,\dots,\ii\hat{q}_N,\ii\hat{p}_N,\ii\id)$. This Lie algebra is not semi-simple and the standard approach of computing roots fails. However, if we extend the Lie algebra even further to also include the $N$ elements represented as number operators $\ii\hat{n}_i=\ii\hat{a}_i^\dagger\hat{a}_i=\frac{\ii}{2}(\hat{q}^2_i+\hat{p}_i^2-1)$, we can construct a root diagram. The Cartan subalgebra is then represented by $\hat{H}_I\equiv(\ii\hat{n}_1,\dots,\ii\hat{n}_N,\ii\id)$ with root vectors $\hat{a}_i$ and $\hat{a}_i^\dagger$. Consequently, the roots form an orthonormal basis, the eigenstates $\ket{n_1,\dots,n_N}$ of the number operators $\hat{n}$ are the weight states and $\ket{0,\dots,0}$ is the highest weight state, as for bosonic Gaussian states.
\end{example}

To summarize we have the following cases:
\begin{itemize}
    \item \textbf{Semi-simple, compact algebra.} Any compact group $\mathcal{G}$ has a compact Lie algebra $\mathfrak{g}$, for which the family $\mathcal{M}_\phi$ is Kähler if $\ket{\phi}$ is the highest weight state of the representation. \emph{See example~\ref{ex:gaussian-as-group-coherent} of fermionic Gaussian states.}
    \item \textbf{Semi-simple, non-compact algebra.} If the group is non-compact, not all highest weight vectors give rise to Kähler manifolds. For every highest weight vectors associated to the real Cartan subalgebra $\mathfrak{h}\subset\mathfrak{g}$, we need to split the corresponding roots into imaginary and non-imaginary roots. Only highest weight vectors that are also annihilated by all (not just the positive) non-imaginary root vectors $E_{\alpha}$ will give rise to Kähler manifolds. This includes in particular highest weight vectors, whose Cartan subalgebra is compact and thus has only imaginary roots. \emph{See example~\ref{ex:gaussian-as-group-coherent} of bosonic Gaussian states.}
    \item \textbf{Not semi-simple algebra.} The Kähler properties of the manifold have to be checked case by case, as there is no general classification theory. In the physical important case of the Heisenberg algebra (regular coherent states), the presented construction still works. \emph{See example~\ref{ex:coherent-states-as-group-theoretic} of coherent states.}
\end{itemize}
Our proofs and discussions only showed one direction, namely that $\phi$ being a highest weight vector (and annihilated by non-imaginary roots, if there are any) implies that $\mathcal{M}_\phi$ is Kähler. According to~\cite{zhang1990coherent, bengtsson2017geometry}, the opposite is also true, \ie group theoretic coherent states are Kähler if and only if they constructed from such a state $\phi$. According to~\cite{zhang1990coherent}, variational families $\mathcal{M}_\phi$ that are Kähler also satisfy the conditions of so called symmetric spaces~\cite{helgason2001differential}.

\subsubsection{Co-adjoint orbits}\label{sec:co-adjoint orbits}
In the context of group theoretic coherent states, one often transforms the problem of describing the geometry of $\mathcal{M}_\phi$ to a real submanifold $\mathcal{O}_\phi$ of the dual Lie algebra $\mathfrak{g}^*$, which is called co-adjoint orbit. Using this language is particularly useful when we can establish an isomorphism $\mathcal{M}_\phi\simeq\mathcal{O}_\phi$, \ie when they are equivalent manifolds.

\begin{definition}
Given a non-zero state $\ket{\phi}\in\mathcal{H}$ and $\M\in\mathcal{G}$, we define the dual Lie algebra element $\beta_{\M}\in\mathfrak{g}^*$ as
\begin{align}
    \beta_{\M}: \mathfrak{g}\to\mathbb{R}; A\mapsto \ii\frac{\braket{\phi|\mathcal{U}^\dagger(\M)\hat{A}\,\mathcal{U}(\M)|\phi}}{\braket{\phi|\phi}}\,.
\end{align}
This gives rise to its coadjoint orbit as the submanifold
\begin{align}
    \mathcal{O}_\phi=\{\beta_{\M}|\M\in\mathcal{G}\}\subset\mathfrak{g}^*\,,
\end{align}
where we can compute $(\beta_\M)_i=(\beta_\id)_j(\mathrm{Ad}_\M)^j{}_i$.
\end{definition}

The motivation for introducing the coadjoint orbit is that  $\mathcal{O}_\phi$ and $\mathcal{M}_\phi$ will turn out to coincide under certain conditions, including the important case where $\mathcal{M}_\phi$ is of Kähler type. Analogous to $H_\phi$ and $\mathfrak{h}_\phi$, we define
\begin{align}
\begin{split}
    S_{\phi}&=\{M\in G\,|\, \beta_{\M}=\beta_\id\}\,,\\
    \mathfrak{s}_\phi&=\{A\in\mathfrak{g}\,|\,(\beta_\id)_k(\mathrm{ad}_A)^k{}_j=(\beta_\id)_k (A^ic^k_{ij})=0 \}\,, 
\end{split}
\end{align}
which are the stabilizer subgroup of the dual vector $\beta_\id=\ii\braket{\phi|\hat{\Xi}_i|\phi}/\braket{\phi|\phi}$ and the associated Lie algebra. In other words, $S_{\phi}$ behaves to the orbit $\mathcal{O}_\phi$ just like $H_\phi$ to $\mathcal{M}_\phi$.

\begin{proposition}
We always have $\mathfrak{h}_\phi\subset\mathfrak{s}_\phi$ and $H_\phi\subset S_\phi$. If they are equal, $\mathcal{O}_\phi$ and $\mathcal{M}_\phi$ are isomorphic, \ie
\begin{align}
    \mathfrak{h}_\phi=\mathfrak{s}_\phi\quad\Leftrightarrow\quad\mathcal{M}_\phi\simeq \mathcal{G}/H_\phi=\mathcal{G}/S_\phi\simeq\mathcal{O}_\phi\,.
\end{align}
\end{proposition}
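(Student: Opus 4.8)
The plan is to show two things: first, the inclusion $\mathfrak{h}_\phi\subset\mathfrak{s}_\phi$ (and correspondingly $H_\phi\subset S_\phi$) always holds, and second, that when these coincide the manifolds $\mathcal{M}_\phi=\mathcal{G}/H_\phi$ and $\mathcal{O}_\phi=\mathcal{G}/S_\phi$ are isomorphic. The key observation driving everything is that both $\mathcal{M}_\phi$ and $\mathcal{O}_\phi$ are homogeneous spaces for the same group $\mathcal{G}$, obtained by quotienting by their respective stabilizers, so the entire statement reduces to comparing the two stabilizer subgroups at the reference point.

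First I would establish $\mathfrak{h}_\phi\subset\mathfrak{s}_\phi$. Take $A\in\mathfrak{h}_\phi$, so by definition $\mathbb{Q}_\phi\hat{A}\ket{\phi}=0$, which means $\hat{A}\ket{\phi}=c\ket{\phi}$ for some complex scalar $c$. Since $\hat{A}$ is anti-Hermitian, $c$ is purely imaginary. I would then compute the action of such $A$ on $\beta_\id$ via the infinitesimal coadjoint action $(\beta_\id)_k(A^ic^k_{ij})$. Using $c^k_{ij}$ as structure constants and the relation $\hat{A}\ket{\phi}\propto\ket{\phi}$, the contraction $(\beta_\id)_k(\mathrm{ad}_A)^k{}_j$ is proportional to $\ii\braket{\phi|[\hat{A},\hat{\Xi}_j]|\phi}/\braket{\phi|\phi}$, and because $\hat{A}\ket{\phi}$ and $\bra{\phi}\hat{A}^\dagger$ are both proportional to $\ket{\phi}$ and $\bra{\phi}$ respectively, the two terms $\braket{\phi|\hat{A}\hat{\Xi}_j|\phi}$ and $\braket{\phi|\hat{\Xi}_j\hat{A}|\phi}$ cancel against each other. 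Hence $A\in\mathfrak{s}_\phi$. The group-level inclusion $H_\phi\subset S_\phi$ follows by the analogous finite argument: if $\mathcal{U}(M)\ket{\phi}\sim\ket{\phi}$ then conjugation $\mathcal{U}^\dagger(M)\hat{\Xi}_i\mathcal{U}(M)$ leaves the expectation value $\braket{\phi|\hat{\Xi}_i|\phi}$ unchanged, so $\beta_M=\beta_\id$.

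Next, for the equivalence, I would exploit that by construction $\mathcal{M}_\phi\simeq\mathcal{G}/H_\phi$ and $\mathcal{O}_\phi\simeq\mathcal{G}/S_\phi$ as homogeneous spaces, where the identifications come from the orbit maps $M\mapsto\mathcal{U}(M)\ket{\phi}$ and $M\mapsto\beta_M$ respectively. The hypothesis $\mathfrak{h}_\phi=\mathfrak{s}_\phi$, together with the already-established $H_\phi\subset S_\phi$, forces $H_\phi=S_\phi$ (at least at the level of connected components; one checks that equality of Lie algebras plus the inclusion pins down the identity components, and the discrete part is handled by the explicit inclusion). Once $H_\phi=S_\phi$, the two quotients $\mathcal{G}/H_\phi$ and $\mathcal{G}/S_\phi$ are literally the same homogeneous space, and the isomorphism $\mathcal{M}_\phi\simeq\mathcal{O}_\phi$ follows by composing the orbit-map identifications. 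The reverse implication in the biconditional is immediate, since if the two manifolds are isomorphic as $\mathcal{G}$-spaces through the natural maps then their stabilizers must agree, giving $\mathfrak{h}_\phi=\mathfrak{s}_\phi$.

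The main obstacle I anticipate is the subtlety that $H_\phi$ is defined only up to the projective phase (the condition $\mathcal{U}(M)\ket{\phi}\sim\ket{\phi}$ allows a complex rescaling), whereas $S_\phi$ is defined by an honest equality $\beta_M=\beta_\id$ of real dual vectors. One must verify that the phase ambiguity in $H_\phi$ is exactly what is quotiented away when passing to expectation values, so that the two notions of stabilizer are genuinely comparable; this is precisely where the anti-Hermiticity of $\hat{A}$ and the projector $\mathbb{Q}_\phi$ in the definition of $\mathfrak{h}_\phi$ do the work of stripping out the phase direction. A secondary technical point is that $\mathcal{O}_\phi$ is defined as a submanifold of $\mathfrak{g}^*$ and need not a priori be a smooth orbit of the expected dimension, but the equality of stabilizer Lie algebras guarantees the orbit map has constant rank and hence that $\mathcal{O}_\phi$ is an embedded homogeneous submanifold, making the identification $\mathcal{O}_\phi\simeq\mathcal{G}/S_\phi$ legitimate.
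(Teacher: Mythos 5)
Your proof of the inclusions is correct, and it is essentially the paper's argument run in the opposite order: the paper proves $H_\phi\subset S_\phi$ first (if $\mathcal{U}(M)\ket{\phi}\sim\ket{\phi}$, then conjugation by $\mathcal{U}(M)$ leaves every expectation value $\ii\braket{\phi|\hat{\Xi}_i|\phi}/\braket{\phi|\phi}$ unchanged, so $\beta_M=\beta_\id$) and then simply deduces $\mathfrak{h}_\phi\subset\mathfrak{s}_\phi$, whereas you prove the Lie-algebra inclusion directly by the infinitesimal computation $\beta_\id([A,\Xi_j])\propto\braket{\phi|[\hat{A},\hat{\Xi}_j]|\phi}=0$ whenever $\hat{A}\ket{\phi}=c\ket{\phi}$. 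Your observation that anti-Hermiticity of $\hat{A}$ forces $c$ to be purely imaginary, which is exactly what makes the two terms of the commutator cancel, is the right mechanism, and both routes are sound.

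The second half of your argument, however, contains a genuine gap. You claim that $\mathfrak{h}_\phi=\mathfrak{s}_\phi$ together with $H_\phi\subset S_\phi$ forces $H_\phi=S_\phi$, with ``the discrete part handled by the explicit inclusion.'' This is backwards: equality of Lie algebras only identifies the identity components $H_\phi^0=S_\phi^0$, and the inclusion $H_\phi\subset S_\phi$ tells you that every component of $H_\phi$ sits inside $S_\phi$ --- it does not exclude that $S_\phi$ has components containing no element of $H_\phi$ at all. In that situation $\mathcal{G}/H_\phi\to\mathcal{G}/S_\phi$ is a nontrivial covering and the two quotients need not be isomorphic. Closing this requires extra input, for instance connectedness of the coadjoint stabilizer $S_\phi$ (automatic for compact connected $\mathcal{G}$, where coadjoint stabilizers are centralizers of tori and hence connected, but not automatic for the disconnected or non-compact groups such as $\mathrm{O}(2N,\mathbb{R})$ and $\mathrm{Sp}(2N,\mathbb{R})$ that the paper actually uses). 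To be fair, the paper's printed proof stops at the inclusions and leaves the identification $\mathcal{G}/H_\phi=\mathcal{G}/S_\phi$ implicit, so you are attempting strictly more than the paper proves; but the component-counting step as you wrote it would not survive scrutiny and should either be restricted to the connected-stabilizer case or supported by an argument specific to the groups at hand.
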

\begin{proof}
We have $H_{\phi}\subset S_{\phi}$, because for $\M\in H_{\psi}$ we have
\begin{align}
\hspace{-2mm}\beta_{\M} (A) &= \ii\tfrac{\braket{\psi|\mathcal{U}^\dag(\M)\hat{A}\mathcal{U}(\M)|\psi}}{\braket{\phi|\phi}}=\ii\braket{\psi|\hat{A}|\psi}=\beta_\id (A).
\end{align}
Consequently, we also have $\mathfrak{h}_\phi\subset\mathfrak{s}_\phi$.
\end{proof}

Only if $H_\phi=S_\phi$, we will have the equality while $\mathcal{M}_\phi$ is in general larger than $\mathcal{O}_\phi$, \ie there may be distinct states $\mathcal{U}(\M)\ket{\psi}\in\mathcal{M}_\phi$ for which $\beta_{\M}$ are the same. The following proposition allows the efficient computation of $\om_{ij}$ from $\beta_\id$.

\begin{proposition}
The symplectic form $\om_{ij}$ on $\mathcal{M}_\phi$ is
\begin{align}
    \om_{ij}=\beta_{\id}([\Xi_i,\Xi_j])=(\beta_{\id})_kc^k_{ij}\,.\label{eq:coherent-omega}
\end{align}
It is non-degenerate if and only if $\mathfrak{h}_\phi=\mathfrak{s}_\phi$ or, equivalently, $\mathcal{M}_\phi=\mathcal{O}_\phi$. This directly implies
\begin{align}
    \mathcal{M}_\phi\text{ is Kähler}\quad\Rightarrow\quad\mathcal{M}_\phi=\mathcal{O}_\phi\,.
\end{align}
\end{proposition}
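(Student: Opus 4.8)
The plan is to prove the three assertions in sequence, building everything from the $M$-independent expression for the symplectic form in Proposition~\ref{prop:g-independence} together with the definition of $\beta_\id$, and then reading off non-degeneracy directly from the resulting formula.

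First I would establish the identity $\om_{ij}=(\beta_\id)_kc_{ij}^k$. Starting from
\begin{align}
\om_{ij}=\frac{\braket{\phi|\hat{\Xi}_j\mathbb{Q}_\phi\hat{\Xi}_i-\hat{\Xi}_i\mathbb{Q}_\phi\hat{\Xi}_j|\phi}}{\braket{\phi|\phi}}\,,\nonumber
\end{align}
I would insert $\mathbb{Q}_\phi=\id-\tfrac{\ket{\phi}\bra{\phi}}{\braket{\phi|\phi}}$. The contribution of $\tfrac{\ket{\phi}\bra{\phi}}{\braket{\phi|\phi}}$ is proportional to $\braket{\phi|\hat{\Xi}_i|\phi}\braket{\phi|\hat{\Xi}_j|\phi}$, which is symmetric under $i\leftrightarrow j$ and therefore cancels in the antisymmetric combination. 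What survives is $\braket{\phi|\hat{\Xi}_j\hat{\Xi}_i-\hat{\Xi}_i\hat{\Xi}_j|\phi}/\braket{\phi|\phi}$, i.e.\ an expectation value of the commutator, keeping careful track of the factor relating $2\,\mathrm{Im}$ to the commutator. Since $\mathcal{U}$ is a representation, the generators obey $[\hat{\Xi}_i,\hat{\Xi}_j]=\reallywidehat{[\Xi_i,\Xi_j]}=c_{ij}^k\hat{\Xi}_k$, so the expression collapses to $\mathrm{i}\,c_{ij}^k\braket{\phi|\hat{\Xi}_k|\phi}/\braket{\phi|\phi}=(\beta_\id)_kc_{ij}^k=\beta_\id([\Xi_i,\Xi_j])$. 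This is manifestly real, consistent with $\om$ being a real symplectic form and $(\beta_\id)_k$ being real.

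Next I would characterise the kernel of $\om$. Viewed as a bilinear form on the full Lie algebra $\mathfrak{g}$ through the tangent vectors $\ket{\mathcal{V}_i}$, a vector $A=A^i\Xi_i$ satisfies $\om_{ij}A^i=(\beta_\id)_kc_{ij}^kA^i=0$ for all $j$ exactly when it lies in $\ker\om$. Because $(\mathrm{ad}_A)^k{}_j=A^ic_{ij}^k$, this is word-for-word the defining condition of $\mathfrak{s}_\phi$, so $\ker\om=\mathfrak{s}_\phi$. On the other hand, every $A\in\mathfrak{h}_\phi$ has $A^i\ket{\mathcal{V}_i}=0$ and hence $\om_{ij}A^i=0$, giving $\mathfrak{h}_\phi\subseteq\ker\om=\mathfrak{s}_\phi$ (this recovers the containment of the preceding proposition). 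Since $\mathcal{T}_{\psi}\mathcal{M}_\phi\simeq\mathfrak{g}/\mathfrak{h}_\phi$ and $\mathfrak{h}_\phi$ sits inside the kernel, $\om$ descends to the quotient with kernel $\mathfrak{s}_\phi/\mathfrak{h}_\phi$. Therefore $\om$ is non-degenerate on $\mathcal{T}_{\psi}\mathcal{M}_\phi$ if and only if $\mathfrak{s}_\phi=\mathfrak{h}_\phi$, which by the preceding proposition is equivalent to $\mathcal{M}_\phi\simeq\mathcal{O}_\phi$.

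Finally, the implication that $\mathcal{M}_\phi$ Kähler forces $\mathcal{M}_\phi=\mathcal{O}_\phi$ is then immediate: by Definition~\ref{def:kaehler-space} a Kähler structure requires $\om$ to be non-degenerate (equivalently $\J=-\G\om$ invertible), so the equivalence just proved yields $\mathfrak{h}_\phi=\mathfrak{s}_\phi$ and hence $\mathcal{M}_\phi=\mathcal{O}_\phi$. I expect the only genuinely delicate points to be bookkeeping rather than conceptual: one must pin down the correct factor of $\mathrm{i}$ when passing from $2\,\mathrm{Im}\braket{\mathcal{V}_i|\mathcal{V}_j}$ to the commutator so that the reality of $\om_{ij}$ is preserved, and one must be scrupulous about whether $\om$ is regarded as a form on $\mathfrak{g}$ or on the quotient $\mathfrak{g}/\mathfrak{h}_\phi$, since the clean identification $\ker\om=\mathfrak{s}_\phi$ lives on $\mathfrak{g}$ whereas non-degeneracy is a statement on the tangent space. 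Once these matches are made, every step reduces to a direct substitution.
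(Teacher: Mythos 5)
Your proposal is correct and takes essentially the same approach as the paper's proof: you compute $\om_{ij}$ by inserting $\mathbb{Q}_\phi$, observing that the rank-one correction cancels in the antisymmetric combination so that only the commutator expectation $(\beta_\id)_k c^k_{ij}$ survives, and you then identify degeneracy of $\om$ on $\mathcal{T}_\psi\mathcal{M}_\phi\simeq\mathfrak{g}/\mathfrak{h}_\phi$ with the existence of elements of $\mathfrak{s}_\phi$ outside $\mathfrak{h}_\phi$, exactly as the paper does before invoking non-degeneracy of $\om$ as part of the Kähler property. Your quotient-space phrasing of the kernel argument and your cancellation of the projector cross-terms (products of scalars commute, so no appeal to anti-Hermiticity is needed) are only cosmetic differences.
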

\begin{proof}
We compute explicitly
\begin{align}
\begin{split}
    \om_{ij}&=\tfrac{\braket{\phi|\hat{\Xi}_j\mathbb{Q}_\phi\hat{\Xi}_i-\hat{\Xi}_i\mathbb{Q}_\phi\hat{\Xi}_j|\phi}}{\ii\braket{\phi|\phi}}\\
    &=\tfrac{\braket{\phi|[\hat{\Xi}_j,\hat{\Xi}_i]|\phi}}{\braket{\phi|\phi}}+\tfrac{2\mathrm{Im}(\braket{\phi|\Xi_i|\phi}\braket{\phi|\Xi_j|\phi})}{\braket{\phi|\phi}^2}\\
    &=\tfrac{\ii\braket{\phi|c^k_{ij}\hat{\Xi}_k|\phi}}{\braket{\phi|\phi}}+0=c^k_{ij}\,(\beta_{\id})_k\,.
\end{split}
\end{align}
Degeneracy of $\om_{ij}$ on tangent space means that there is a non-zero $A^i\ket{V_i}\neq0$ with $A^i\om_{ij}=0$. Recalling
\begin{align}
    \mathfrak{h}_\phi&=\{A^i\Xi_i\,|\, A^i\ket{V_i}=0\}\,,\\
    \mathfrak{s}_\phi&=\{A^i\Xi_i\,|\,(\beta_\id)_k(A^ic^k_{ij})=A^i\om_{ij}=0\}\,,
\end{align}
implies such $A^i$ cannot exist if and only if $\mathfrak{h}_\phi=\mathfrak{s}_\phi$.
\end{proof}

The conclusion of this proposition is that $\om_{ij}$ is only non-degenerate if $\mathcal{M}_\phi=\mathcal{O}_\phi$ and vice versa. It is well-known in the theory of Lie groups~\cite{hall2015lie} that any coadjoint orbit comes naturally equipped with a non-degenerate symplectic form and here we see that it agrees with the one on $\mathcal{M}_\phi$ if $\mathcal{M}_\phi=\mathcal{O}_\phi$.

\begin{example}
Let us consider the example of $\mathrm{SU}(2)$ for the spin-$\frac{1}{2}$ and spin-$1$ representation a final time. The co-adjoint representation is isomorphic to the spin-$1$ representation (just like the adjoint) and can be understood as real $3$-by-$3$ rotation matrices acting on real dual vectors $(\beta_\id)_k$. Therefore, all co-adjoint orbits for $\beta_\id\neq0$ are $2$-spheres, while the orbit $\beta_\id=0$ is the single point $\{0\}$.\\
\textbf{Spin-$\frac{1}{2}$.} We recall that there is only a single $\mathcal{M}_\phi=\mathcal{P}(\mathcal{C}^2)$, so that we can just compute $\beta_\id$ for the representative $\ket{\phi}\equiv(1,0)$. This gives the dual vector $\beta_\id\equiv(0,0,\frac{1}{2})$. Unsurprisingly, we have the orbit $\mathcal{O}_\phi\simeq\mathcal{M}_\phi\simeq S^2$.\\
\textbf{Spin-$1$.} We were able to parametrize all possible families $\mathcal{M}_\phi$ by a representative $\ket{\phi}\equiv(\cos{\theta},\ii\sin{\theta},0)$ for $0\leq\theta\leq\frac{\pi}{4}$. From here, we find $\beta_\id\equiv(0,0,\sin{2\theta})$. Consequently, the orbit $\mathcal{O}_\phi$ will be a sphere for $\theta>0$. However, only for $\theta=\frac{\theta}{4}$, \ie for $\phi$ being of highest weight, we have $\mathcal{M}_\phi\simeq\mathcal{O}_\phi$.
\end{example}

\subsubsection{Numerical implementation}
\label{sec:numerical-implementation}
The goal of this section is to explain how any class of coherent states, be it Kähler or non-Kähler, allows for an efficient implementation of real and imaginary time evolution. Here, we use the Lagrangian action for real time dynamics. The key advantage of coherent states lies in the fact that we can use the Lie algebra to identify the different tangent spaces, such that symplectic form $\Om^{ij}$ and metric $\G^{ij}$ does not need to be evaluated at every step. This provides a significant computational advantage compared to a naive implementation without taking the natural group structure into account.

In practice, this reduces the calculation of real and imaginary time evolution tremendously. Instead of needing to compute metric and symplectic form at every step with respect to given coordinates, we can parametrize states by matrices $M$ and use the identification
\begin{align}
    \mathcal{T}_{\psi(M)}\mathcal{M}_\phi\simeq\mathfrak{h}^{\perp}_\phi
\end{align}
from definition~\ref{def:Lie-tangent} to evaluate $\G^{ij}$ or $\Om^{ij}$ once based on proposition~\ref{prop:g-independence}. This gives the following dynamics.

\begin{proposition}
The equations of motion for $M$ are
\begin{align}
    \tfrac{dM}{dt}(t)&=M\,\Xi_i\, \mathcal{X}^i\,,\\
    \tfrac{dM}{d\tau}(\tau)&=-M\,\Xi_i\, \G^{ij} (\partial_i E)
\end{align}
for real time and imaginary time evolution respectively, where $\mathcal{X}^i$ is given by~\eqref{eq:projected-real-time-evolution} for Lagrangian and by~\eqref{eq:ml_eom} for McLachlan evolution.
\end{proposition}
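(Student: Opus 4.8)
The plan is to express the projected evolution directly as a curve $M(t)$ in the group $\mathcal{G}$, with $\ket{\psi(t)}=\mathcal{U}(M(t))\ket{\phi}$, and to use the local coordinate chart $\ket{\psi(M,x)}=\mathcal{U}(Me^{x^i\Xi_i})\ket{\phi}$ from~\eqref{eq:local-coordinates} to translate the already-known flow $\dot{x}^i=\mathcal{X}^i$ (real time) or $\dot{x}^i=\mathcal{F}^i=-\G^{ij}\partial_j E$ (imaginary time) into an equation for $M$ itself. Since at each instant the current state sits at $x=0$ of the chart centred at $M(t)$, the statement reduces to a chain-rule identification of the instantaneous left-logarithmic velocity $M^{-1}\dot{M}\in\mathfrak{g}$ with the coordinate velocity of the flow.

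First I would set $A(t):=M^{-1}\tfrac{dM}{dt}=A^i(t)\Xi_i\in\mathfrak{g}$ and compute the tangent of the curve after projection with $\mathbb{Q}_{\psi(M)}$. Writing $M(t+s)=M(t)e^{sA+O(s^2)}$ and using that $\mathcal{U}$ is a (projective) representation, differentiation gives $\tfrac{d}{ds}\ket{\psi(M(t+s))}\big|_{s=0}=\mathcal{U}(M)\hat{A}\ket{\phi}+(\text{phase})\ket{\psi(M)}$ with $\hat{A}=A^i\hat{\Xi}_i$, where the second term is proportional to $\ket{\psi(M)}$ itself because of the cocycle of the projective representation. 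Applying $\mathbb{Q}_{\psi(M)}$ annihilates that phase term (as $\mathbb{Q}_\psi\ket{\psi}=0$), so that by the definition~\eqref{eq:Vig} of $\ket{\mathcal{V}_i}$,
\begin{align}
    \mathbb{Q}_{\psi(M)}\tfrac{d}{dt}\ket{\psi(M(t))}=A^i\,\mathbb{Q}_{\psi(M)}\mathcal{U}(M)\hat{\Xi}_i\ket{\phi}=A^i\ket{\mathcal{V}_i}\,.
\end{align}
On the other hand, both variational principles produce a flow $\dot{x}^i=\mathcal{X}^i$ (with $\mathcal{X}$ from~\eqref{eq:projected-real-time-evolution} or~\eqref{eq:ml_eom}), and since $\mathbb{Q}_\psi\tfrac{d}{dt}\ket{\psi}=\dot{x}^i\ket{\mathcal{V}_i}$ by construction, the tangent-projected time derivative equals $\mathcal{X}^i\ket{\mathcal{V}_i}$ in the present frame. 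Matching the two expressions gives $A^i\ket{\mathcal{V}_i}=\mathcal{X}^i\ket{\mathcal{V}_i}$, and choosing the canonical representative $A=\mathcal{X}^i\Xi_i\in\mathfrak{h}^\perp_\phi$ yields $\tfrac{dM}{dt}=M\,\Xi_i\,\mathcal{X}^i$. The imaginary-time equation follows verbatim, replacing $\tfrac{d}{dt}$ by $\tfrac{d}{d\tau}$ and using that the projected imaginary-time flow~\eqref{eq:K-projected-imaginary-time-evoltuon} reads $\dot{x}^i=\mathcal{F}^i=-\G^{ij}\partial_j E$, so that $A=-\Xi_i\,\G^{ij}\partial_j E$.

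The main obstacle is handling the non-injectivity of the map $A\mapsto A^i\ket{\mathcal{V}_i}$, whose kernel is the stabilizer algebra $\mathfrak{h}_\phi$: the matching condition $A^i\ket{\mathcal{V}_i}=\mathcal{X}^i\ket{\mathcal{V}_i}$ fixes $A$ only modulo $\mathfrak{h}_\phi$, reflecting the freedom to move $M$ along $H_\phi$ without changing the physical state $\psi\in\mathcal{M}_\phi$. I would resolve this gauge ambiguity by declaring the velocity to lie in $\mathfrak{h}^\perp_\phi$, where $\mathcal{X}^i\Xi_i$ already lives; this makes the lift $M(t)$ unique given $M(0)$ and consistent, since any other lift differs by an $H_\phi$-valued curve and hence represents the same trajectory in $\mathcal{M}_\phi$. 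A secondary point, already implicit above, is to verify that the projective phase contribution drops cleanly under $\mathbb{Q}_{\psi(M)}$; this is precisely what makes the construction insensitive to the cocycle of $\mathcal{U}$ and what allows proposition~\ref{prop:g-independence} to be invoked so that $\G^{ij}$ and $\Om^{ij}$ are computed once rather than at every integration step.
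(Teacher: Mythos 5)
Your proposal is correct and follows essentially the same route as the paper: both identify the left-logarithmic velocity $M^{-1}\tfrac{dM}{ds}$ with $\Xi_i\,\dot{x}^i$ via the local chart $\ket{\psi(M,x)}=\mathcal{U}(Me^{x^i\Xi_i})\ket{\phi}$ of~\eqref{eq:local-coordinates} and then substitute the known coordinate flow ($\mathcal{X}^i$ or $\mathcal{F}^i=-\G^{ij}\partial_j E$). The paper's proof is a one-line chain-rule computation in that chart; your version merely makes explicit two points the paper leaves implicit, namely that the projective cocycle phase is annihilated by $\mathbb{Q}_{\psi(M)}$ and that the lift of the flow to $\mathcal{G}$ is fixed only modulo $\mathfrak{h}_\phi$, resolved by taking the representative in $\mathfrak{h}^\perp_\phi$.
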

\begin{proof}
At each point $M(s)$, with respect to the local coordinates introduced in~\eqref{eq:local-coordinates} the time evolution is
\begin{align}
    \frac{d}{ds}M=\frac{d}{ds}M(s)\equiv\frac{d}{ds}(Me^{x^i(s)\,\Xi_i})=M\Xi_i \dot{x}^i,
\end{align}
where $s=t$ and $\dot{x}^i$ is given by $\mathcal{X}^i$ from~\eqref{eq:projected-real-time-evolution} or~\eqref{eq:ml_eom} for real time evolution and $s=\tau$ and $\dot{x}^i$ is given by $\mathcal{F}^i$ from~\eqref{eq:K-projected-imaginary-time-evoltuon} for imaginary time evolution.
\end{proof}

Consequently, a numerical implementation can be based on the following algorithm.
\begin{enumerate}
    \item Choose a basis $\Xi_i$ of $\mathfrak{h}^{\perp}_\phi$ represented as matrices and compute the required geometric structure, such as $\g_{ij}$ or $\om_{ij}$ and their inverses $\G^{ij}$ or $\Om^{ij}$. In many situations, it is convenient to choose $\Xi_i$, such that the associated geometric structures take some standard forms.
    \item Compute the gradient of $E(M)$ at $M$ as
    \begin{align}
        \partial_i E(M):=\frac{d}{ds} f(Me^{s\Xi_i})\big|_{s=0}\,.
    \end{align}
    The evaluation of this derivative is the only problem specific piece to be implemented. For Gaussian states $\ket{\psi(M)}=\ket{MJ_0M^{-1},z}$, the energy function $E(M)=\braket{MJ_0M^{-1},z|\hat{H}|MJ_0M^{-1},z}$ can be evaluated analytically using Wick's theorem, such that also its derivative can be found as formal expression in terms of $M$ and $\Xi_i$. The evolution is then
    \begin{align}
        \hspace{2mm}X^i(M)=\left\{\begin{array}{ll}
        -\Om^{ij} (\partial_j E)(M) & \textbf{(Lagrangian)}\\[1mm]
        -\G^{ij} \tfrac{2\mathrm{Re}\braket{\mathcal{V}_j|\ii\hat{H}|\psi}}{\braket{\psi|\psi}} & \textbf{(McLachlan)}\\[1mm]
        -\G^{ij} (\partial_j E)(M) & \textbf{(gradient)}
        \end{array}\right.\hspace{-7mm}
    \end{align}
    for real time (Lagrangian or McLachlan) or imaginary time evolution, respectively. Let us emphasize that $\G^{ij}$ or $\Om^{ij}$ does not need to be evaluated again, as it does not depend on $M$ when parametrized in this way.
    \item The evolution equation is then solved by performing discrete steps. Starting with some initial matrix $M_0$, we compute
    \begin{align}
    \qquad M_{n+1}=M_n\, e^{\epsilon \delta M_{n}}\approx M_n\,\left(\frac{\id+\tfrac{\epsilon}{2} \,\delta M_{n}}{\id-\tfrac{\epsilon}{2} \,\delta M_{n} }\right)\,,\label{eq:exponential-approximation}
    \end{align}
    where $\delta M_{n}=X^i(M_{n})\,\Xi_i$. Here, we approximate the exponential for small $\epsilon$, such that $M_{n+1}\in\mathcal{G}$.
\end{enumerate}
Note that we need to choose $\epsilon$ sufficiently small to ensure that the denominator has eigenvalues close to $1$, which can be achieved by choosing $\epsilon^{-1}$ larger than the largest eigenvalue of $\delta M_n$. For imaginary time evolution, we need further to choose $\epsilon$ sufficiently small, such that the energy decreases in each step. For real time Langrangian evolution, the energy is only preserved for infinitesimal steps, while for finite $\epsilon$ some deviation may occur. There exist various numerical schemes, including symplectic integration, to deal with this issue more efficiently~\cite{hairer2006geometric,lubich2008quantum}. For real time evolution with varying step sizes, it is important to keep track of the passed time to capture the correct dynamics of observables. This is less important for imaginary time evolution where we are mostly interested in the convergence to the energy minimum. The algorithm can be further enhanced by approximating the exponential function in~\eqref{eq:exponential-approximation} to higher order, similar to higher order Runge-Kutta methods.

In~\cite{guaita2019gaussian}, this method is used to study the ground state properties of the Bose-Hubbard model in the superfluid phase. Apart from real and imaginary time evolution, the prescribed method can be used for any functional optimization on the constructed manifold $\mathcal{M}$. In~\cite{cam2020}, these methods are used to evaluate entanglement and complexity of purification for Gaussian states, which requires a minimization over all Gaussian purifications of a given mixed state.

\subsection{Generalized Gaussian states}\label{sec:generalized-Gaussian}

In the previous section we have considered manifolds made up of states of the form $\mathcal{U}(M)\ket{\phi}$, where, as in definition~\ref{def:group-theoretic-coherent-states}, $\mathcal{U}$ is a unitary representation of the Lie group $\mathcal{G}$, $M$ is an element of $\mathcal{G}$ and $\ket{\phi}$ is a chosen reference state. We have seen that the geometric properties of the manifold defined in this way crucially depend on the choice of $\ket{\phi}$. Indeed, if $\ket{\phi}$ is chosen as a highest weight vector of the representation then the Kähler property of the manifold follows naturally from the group structure. On the other hand if $\ket{\phi}$ is not a highest weight vector, then the Kähler property is not guaranteed.

We can schematically distinguish the following cases:
\begin{itemize}
    \item \textbf{Highest weight state $\ket{\phi}$}\\
    The family $\ket{\psi(M)}=\mathcal{U}(M)\ket{\phi}$ is defined taking $\ket{\phi}$ as a highest weight state with respect to the representation $\mathcal{U}(M)$. If $\mathcal{G}$ is a compact semi-simple Lie group, the resulting variational family is always Kähler\footnote{If the group is not compact or not semi-simple $\ket{\phi}$ may have to satisfy further conditions, as we will discuss in section~\ref{sec:general-lie-groups}.}. As discussed in examples~\ref{ex:gaussian-as-group-coherent} and~\ref{ex:coherent-states-as-group-theoretic}, several commonly used variational fall into this class, in particular coherent states and general bosonic or fermionic Gaussian states. We have reviewed in detail the Kähler properties of bosonic and fermionic Gaussian states in section~\ref{sec:Gaussian}.
    \item \textbf{Fixed non-highest weight state $\ket{\phi}$}\\
    The family $\ket{\psi(M)}=\mathcal{U}(M)\ket{\phi}$ is defined taking a reference state $\ket{\phi}$ which is not a highest weight state with respect to the representation $\mathcal{U}(M)$. The resulting variational family is typically non-Kähler. Nonetheless, these families maintain the advantages discussed in section~\ref{sec:group-theoretic-coherent-states}, namely a natural basis in every tangent space, with respect to which the restricted Kähler structures have the same matrix representations and need not to be evaluated at every step.
    \item \textbf{Family of reference states $\ket{\phi(x)}$}\\
    The family $\ket{\psi(M)}=\mathcal{U}(M)\ket{\phi(x)}$ is defined taking a reference state $\ket{\phi(x)}$ that depends on additional parameters. As $\ket{\phi(x)}$ can include highest weight states, the full manifold will in general consist of individual sheets labelled by $x$, which may be partially Kähler and partially non-Kähler (making the full family non-Kähler). Such families were recently~\cite{shi2018variational} used to construct non-Gaussian states with correlations between bosonic and fermionic degrees of freedom, as we discuss in section~\ref{sec:definition-ggs}.
\end{itemize}

We will now illustrate this scheme in more detail by focusing on a specific choice of Lie group and representation, \ie the group of Gaussian unitaries $\mathcal{U}(M,z)$ introduced in\ \eqref{eq:Gaussian-unitaries}. These are unitaries that can be written as the exponential of anti-Hermitian operators at most quadratic in the linear operators $\hat{\xi}^a$ and, as seen in example~\ref{ex:gaussian-as-group-coherent}, they give a representation of the Lie groups $\mathrm{ISp}(2N,\mathbb{R})$ for bosons and $ \mathrm{O}(2N,\mathbb{R})$ for fermions.

As warm up, we will take the simple case of a single bosonic mode. Here, all the above cases emerge depending on the choice of reference state $\ket{\phi}$. Then we will move to the general case of an arbitrary number of bosonic and fermionic modes, where we will use the formalism of the previous sections to define the variational family of generalized Gaussian states~\cite{shi2018variational} in the group-theoretic language and study their Kähler properties.

\subsubsection{Warm up examples (single bosonic mode)}
Let us consider a single bosonic mode defined by the creation and annihilation operators $\hat{b}^\dag$ and $\hat{b}$ or the quadratures $\hat{q}=\frac{1}{\sqrt{2}}(\hat{b}^\dag+\hat{b})$ and $\hat{p}=\frac{\ii}{\sqrt{2}}(\hat{b}^\dag-\hat{b})$. As discussed in example~\ref{ex:gaussian-as-group-coherent}, in this case the Gaussian algebra is spanned by the Cartan subalgebra element $\hat{H}=\ii(1+2\hat{b}^\dag \hat{b})$ and by the elements $E_+=\hat{b}\hat{b}$ and $E_-= \hat{b}^\dag\hat{b}^\dag$, corresponding to the one positive and one negative root of the algebra. Then, the real basis~\eqref{eq:compact-real-algebra-basis} of the algebra is represented by the operators
\begin{align}
    \hat{\Xi}_1&\equiv\widehat{X}=\hat{b}^{\dag2}-\hat{b}^2\\
    \hat{\Xi}_2&\equiv\widehat{Y}=\ii(\hat{b}^{\dag 2}+\hat{b}^2)\\
    \hat{\Xi}_3&\equiv\widehat{Z}=\ii(\hat{n}+\tfrac{1}{2})\,,
\end{align}
where we followed the conventions of example~\ref{ex:1-mode-bosons}.
Indeed, the most general Gaussian squeezing operator in such system can be then written as
\begin{equation}
     \mathcal{U}(x,y,z)=e^{x\hat{X}+y\hat{Y}+z\hat{Z}}\,.
\end{equation}

We then consider the manifold of group theoretic coherent states of the form 
\begin{equation}
 	\ket{\psi(x,y,z)}=\mathcal{U}(x,y,z)\ket{\phi}\,.
\end{equation}
As discussed in the context of defintion~\ref{def:Lie-tangent}, for states of this form there exists a natural isomorphism between the tangent spaces at different points, which are all equivalent to a subspace of the associated Lie algebra.
It is thus sufficient to consider the tangent space at the point $\ket{\psi(0,0,0)}=\ket{\phi}$.
At this point, tangent space is spanned by the vectors
\begin{align}
\begin{split}\label{eq:del-psi-single-mode-non-gaussian}
    \ket{\mathcal{V}_1}&\equiv\mathbb{Q}_{\phi}(\hat{b}^{\dag 2}-{\hat{b}}^2)\ket{\phi}\,,\\
    \ket{V_2}&\equiv\ii\mathbb{Q}_{\phi}(\hat{b}^{\dag 2}+{\hat{b}}^2)\ket{\phi}\,,\\
    \ket{V_3}&\equiv\ii\mathbb{Q}_{\phi}(1+2\hat{n})\ket{\phi}\,,
\end{split}
\end{align}
which also coincide with the ones introduced in~\eqref{eq:Vig}.
The form of these states, after applying the projector $\mathbb{Q}$, depends on the specific choice of $\ket{\phi}$.

We will now review some possible choices, showing that if we choose a highest weight state of the representation (first case) we will obtain a Kähler manifold, while if we do not (subsequent cases) we can construct non-Kähler manifolds of the different types discussed in the previous sections.

\textbf{Kähler (Gaussian) case $\ket{\phi}=\ket{0}$.} As already discussed extensively, choosing $\ket{\phi}$ as the Fock vacuum leads to the manifold of one-mode bosonic squeezed states. We have seen that these form a Kähler manifold, as can be expected in the light of what discussed in Section~\ref{sec:group-theoretic-coherent-states}. Indeed, $\ket{0}$ is the highest weight state of the representation  of $\mathrm{Sp}(2,\mathbb{R})$ we are using, so the resulting manifold is Kähler by Proposition~\ref{prop:highest-weight}. More concretely, we see that the last vector in~\eqref{eq:del-psi-single-mode-non-gaussian} will be proportional to $\ket{\phi}$ and thus will vanish once the projector $\mathbb{Q}_\phi$ is applied. We are then left with a two dimensional tangent space, spanned by the first two which form a Kähler pair. Overall we indeed have
\begin{align}
    \ket{\mathcal{V}_1}\equiv\sqrt{2}\ket{2}\,,\quad\ket{\mathcal{V}_2}\equiv\ii\sqrt{2}\ket{2}\,,\quad\ket{\mathcal{V}_3}\equiv0\,.
\end{align}
Note that, as we are only considering squeezing and not displacements, a similar behaviour will appear also for $\ket{\phi}=\ket{1}$. This is indeed the the highest weight state of the odd sector of the representation, as discussed in example~\ref{ex:gaussian-as-group-coherent}.

\textbf{Non-Kähler non-degenerate case $\ket{\phi}=\ket{2}$.} If we choose $\ket{\phi}$ as a Fock state $\ket{n}$ with $n\geq2$ then we will similarly have that the the last vector in~\eqref{eq:del-psi-single-mode-non-gaussian} vanishes after applying $\mathbb{Q}_\phi$. The remaining two vectors will however not form a conjugate pair. For $n=2$, we find
\begin{align}
    \ket{\mathcal{V}_1}&\equiv\sqrt{12}\ket{4}-\sqrt{2}\ket{0}\,,\\
    \ket{\mathcal{V}_2}&\equiv\ii(\sqrt{12}\ket{4}+\sqrt{2}\ket{0})\,,\\
    \ket{\mathcal{V}_3}&\equiv0\,.
\end{align}
Through simple calculations one can see that in this case the symplectic form $\om$ will be invertible, but the complex structure $\J$ will, in the basis $\{\ket{\mathcal{V}_1},\ket{\mathcal{V}_2}\}$, have the form
\begin{align}
	 \J&=\left(\begin{array}{cc} 0 & -5/7 \\ 5/7 & 0 \end{array}\right)\,,
\end{align}
which clearly does not square to $-\id$.

\textbf{Non-Kähler degenerate case $\ket{\phi}=\frac{1}{\sqrt{2}} (\ket{0}+\ket{2})$.} If we choose a $\ket{\phi}$ that is not an eigenstate of $\hat{n}$, for example a superposition of two different Fock states, then it will not be a weight state of the representation. In this case none of the vectors in~\eqref{eq:del-psi-single-mode-non-gaussian} will vanish after applying $\mathbb{Q}_\phi$. Therefore, we will have a three dimensional tangent space. Being odd-dimensional, it cannot admit an invertible symplectic form $\om$. In particular, for $\ket{\phi}=\frac{1}{\sqrt{2}} (\ket{0}+\ket{2})$ we have
\begin{align}
    \ket{\mathcal{V}_1}&\equiv\sqrt{6}\ket{4}+\ket{2}-\ket{0}\,,\\
    \ket{\mathcal{V}_2}&\equiv\ii\sqrt{6}\ket{4}\,,\\
    \ket{\mathcal{V}_3}&\equiv\ii\sqrt{2}(\sqrt{2}-\sqrt{0})\,.
\end{align}
This leads to the complex structure
\begin{equation}
    \J=\frac{1}{4}\left(\begin{array}{ccc} 0 & -3 & -\sqrt{2} \\ 4 & 0 & 0 \\ 2\sqrt{2} & 0 & 0 \end{array}\right)\,,
\end{equation}
from which we have that $\J^2$ has eigenvalues $0$, $-1$ and $-1$. We see therefore that the manifold is not naturally Kähler, however if we invert $\om$ only on the two-dimensional subspace where this is possible (\ie with the pseudo-inverse), as discussed in section~\ref{sec:kaehler-non-kaehler}, then the resulting manifold is Kähler. 

\textbf{Variable reference state $\ket{\phi(\alpha,\Gamma,z)}=e^{\ii \alpha \hat{n}^2}\ket{\Gamma,z}$.} Finally, we can also consider the case where $\ket{\phi}$ is not a fixed state, but rather depends itself on some parameters that will then be part of the total parameter set of the manifold. In this case, different instances of the possibilities discussed above may occur at different points of the manifold. In particular, let us consider for $\ket{\phi}$ the states obtained by applying the unitary $e^{\ii\alpha\hat{n}^2}$, depending on the single real parameter $\alpha$, to the family of Gaussian states, parametrized by $\ket{\Gamma,z}$ according to the conventions of section~\ref{sec:Gaussian}. Here, at a generic point of the manifold, varying the parameter $\alpha$ will lead to a single independent unpaired tangent vector $\mathbb{Q}_\phi \partial_\alpha e^{\ii\alpha\hat{n}^2}\ket{\Gamma,z}$, which will necessarily make the tangent space non-Kähler. However at the special points where $J=\id$ and $z=0$, that is $\ket{\phi}=\ket{0}$ for any $\alpha$, we have that the tangent space will be isomorphic to the one of Gaussian states, that is Kähler.

\subsubsection{Definition of a class of generalized Gaussian states}
\label{sec:definition-ggs}
As seen in the previous section, choosing variable reference state $\ket{\phi(x)}$ can lead to manifolds with rather elaborate geometric structures. In this section, we will introduce a set of states that can be considered as a generalization of such example to the case of an arbitrary number of bosonic and fermionic modes. These states were first introduced in~\cite{shi2018variational} as a variational~\textit{ansatz} in many-body physics that extends Gaussian states. We will define this family in group-theoretic terms, study their Kähler properties and thereby show that they do not form Kähler manifolds. Thus, they are an important example to apply the methods introduced in the previous sections.

We consider a Hilbert space containing both bosonic and fermionic degrees of freedom, \ie $\mathcal{H}=\mathcal{H}_{\mathrm{b}}\otimes\mathcal{H}_{\mathrm{f}}$ where $\mathcal{H}_{\mathrm{b}}$ is the bosonic Fock space of $N_{\mathrm{b}}$ bosonic modes and $\mathcal{H}_{\mathrm{f}}$ is the fermionic Fock space of $N_{\mathrm{f}}$ fermionic modes. We refer to the classical phase spaces $V_{\mathrm{b}}\simeq\mathbb{R}^{2N_{\mathrm{b}}}$ and $V_{\mathrm{f}}\simeq\mathbb{R}^{2N_{\mathrm{f}}}$. On this space, we fix a basis of bosonic and fermionic linear operators
\begin{align}
    \hat{\xi}_{\mathrm{b}}^a&\equiv(\hat{q}^{\mathrm{b}}_1,\cdots,\hat{q}^{\mathrm{b}}_{N_{\mathrm{b}}},\hat{p}^{\mathrm{b}}_1,\cdots,\hat{p}^{\mathrm{b}}_{N_{\mathrm{b}}})\,, \\
    \hat{\xi}_{\mathrm{f}}^a&\equiv(\hat{q}^{\mathrm{f}}_1,\cdots,\hat{q}^{\mathrm{f}}_{N_{\mathrm{f}}},\hat{p}^{\mathrm{f}}_1,\cdots,\hat{p}^{\mathrm{f}}_{N_{\mathrm{f}}})\,,
\end{align}
which also determine the number operators
\begin{align}
    \hat{n}^i_{\mathrm{b/f}}=\frac{1}{2}\left(\left(\hat{q}^{\mathrm{b/f}}_i\right)^2+\left(\hat{p}^{\mathrm{b/f}}_i\right)^2-1\right)
\end{align}
for all $i=1,\dots,N_{\mathrm{b/f}}$.

Gaussian states of systems with bosonic and fermionic degrees of freedom are defined as the tensor product
\begin{align}
    \ket{\Gamma_{\mathrm{b}},z_{\mathrm{b}},\Gamma_{\mathrm{f}}}\equiv\ket{\Gamma_{\mathrm{b}},z_{\mathrm{b}}}\otimes\ket{\Gamma_{\mathrm{f}},0}
\end{align}
and are thus unable to capture correlations between bosons and fermions. This is an important drawback when studying correlated boson-fermion mixtures, which generalized Gaussian states are able to overcome. Gaussian transformations on the mixed Hilbert space $\mathcal{H}=\mathcal{H}_{\mathrm{b}}\otimes\mathcal{H}_{\mathrm{f}}$ are defined as the representation of the group $\mathrm{ISp}(2N_{\mathrm{b}},\mathbb{R})\times\mathrm{O}(2N_{\mathrm{f}},\mathbb{R})$
\begin{align}
    \mathcal{U}&(M_{\mathrm{b}},M_{\mathrm{f}})=\mathcal{U}_{\mathrm{b}}(M_{\mathrm{b}},z_{\mathrm{b}})\otimes\mathcal{U}_{\mathrm{f}}(M_{\mathrm{f}})\,,
\end{align}
with $M_{\mathrm{b}}\in\mathrm{Sp}(2N_{\mathrm{b}},\mathbb{R})$, $z_{\mathrm{b}}\in V_{\mathrm{b}}$ and $_{\mathrm{f}}\in\mathrm{O}(2N_{\mathrm{f}},\mathbb{R})$, such that  $\mathcal{U}_{\mathrm{b}}$ and $\mathcal{U}_{\mathrm{f}}$ are the representations of respectively bosonic and fermionic Gaussian unitaries defined in section~\ref{sec:Gaussian}.

We now consider states of the form $\mathcal{U}(M_{\mathrm{b}},M_{\mathrm{f}})\ket{\phi(x)}$ for variable choices of the state $\ket{\phi(x)}$. More precisely, we consider the non-Gaussian reference states
\begin{align}
    \ket{\phi(\alpha,\Gamma_{\mathrm{b}},z_{\mathrm{b}},\Gamma_{\mathrm{f}})}=\mathcal{U}_\mathrm{NG}(\alpha)\ket{\Gamma_{\mathrm{b}},z_{\mathrm{b}},\Gamma_{\mathrm{f}}}\,.
\end{align}
Here, $\mathcal{U}_\mathrm{NG}$ is a non-Gaussian unitary, acting on the full Hilbert space $\mathcal{H}=\mathcal{H}_{\mathrm{b}}\otimes\mathcal{H}_{\mathrm{f}}$, parametrized by the real parameters $\alpha$ and $\ket{\Gamma_{\mathrm{b}},z_{\mathrm{b}},\Gamma_{\mathrm{f}}}$ is  Gaussian.

There are two important choices for the unitary $\mathcal{U}_\mathrm{NG}$, depending on the type of correlation between the bosonic and fermionic sector one wishes to introduce,
\begin{align}
     \mathcal{U}^{(1)}_\mathrm{NG}(\alpha^\mathrm{b},\alpha^\mathrm{f},\alpha^{\mathrm{bf}})&=e^{\ii( \alpha^\mathrm{b}_{ij}\hat{n}_{\mathrm{b}}^i \hat{n}_{\mathrm{b}}^j +  \alpha^\mathrm{f}_{\tilde{i}\tilde{j}} \hat{n}_{\mathrm{f}}^{\tilde{i}} \hat{n}_{\mathrm{f}}^{\tilde{j}} + \alpha^{\mathrm{bf}}_{i\tilde{j}} \hat{n}_{\mathrm{b}}^{i} \hat{n}_{\mathrm{f}}^{\tilde{j}})}\,, \\
     \mathcal{U}^{(2)}_\mathrm{NG}(\alpha^\mathrm{b},\alpha^\mathrm{f},\alpha^{\mathrm{bf}})&=e^{\ii( \alpha^\mathrm{b}_{ij}\hat{n}_{\mathrm{b}}^i \hat{n}_{\mathrm{b}}^j +  \alpha^\mathrm{f}_{\tilde{i}\tilde{j}}\hat{n}_{\mathrm{f}}^{\tilde{i}} \hat{n}_{\mathrm{f}}^{\tilde{j}} + \alpha^{\mathrm{bf}}_{a\tilde{j}} \hat{\xi}_{\mathrm{b}}^{a} \hat{n}_{\mathrm{f}}^{\tilde{j}})}\,,
\end{align}
where the indices $i,j$ run over the bosonic modes and $\tilde{i},\tilde{j}$ run over the fermionic ones. The resulting family of non-Gaussian states is then given by
\begin{align}
    \ket{\psi(M,\alpha,\Gamma)}
    =\mathcal{U}(M)\,\mathcal{U}_{\mathrm{NG}}(\alpha)\ket{\Gamma}\,,
\end{align}
where we have $M\!=\!(M_{\mathrm{b}},z_\mathrm{b},M_{\mathrm{f}})$, $\alpha\!=\!(\alpha^\mathrm{b},\alpha^\mathrm{f},\alpha^{\mathrm{bf}})$ and $\Gamma\!=\!(\Gamma_{\mathrm{b}},z_{\mathrm{b}},\Gamma_{\mathrm{f}})$. Note that this parametrizations certainly contains redundancies and careful analysis of the resulting family is desirable. This choice of non-Gaussian unitaries is motivated by the fact that, while going beyond the space of Gaussian transformation, they still allow for efficient computations. They satisfy the property $\mathcal{U}_\mathrm{NG}^\dag \hat{\xi}^a \mathcal{U}_\mathrm{NG}=(\mathcal{U}_\mathrm{G})^a{}_b \hat{\xi}^b$, where $\mathcal{U}_\mathrm{G}$ is a Gaussian unitary combined with a linear transformation of the $\hat{\xi}^a$.

Moreover, these states can be understood as true generalizations of Gaussian states in the sense of a generalized Wick's theorem. As discussed in~\cite{shi2018variational}, the evaluation of $n$-function follows from a quadratic generating function, just like in Wick's theorem. However, what makes them truly different from Gaussian states is that this generating functional is different for different $n$, such that more interesting correlation structures (such as boson-fermion correlations, but even within one sector) can be captured. For this reason any $n$-point function can be efficiently computed for the states introduced in this section, generalizing the property that in the case of Gaussian states follows from Wick's theorem. For this property to hold, the unitarity of $\mathcal{U}_\mathrm{NG}$ is essential. With this in mind, the parameters $\alpha$ should be considered real and not extended to the complex plane.

Therefore, the previous considerations about Kähler manifolds apply directly. Clearly, the manifold decomposes into equivalence classes (sheets) of states $\mathcal{U}\ket{\phi(x)}$ related by Gaussian transformations. In particular, the sheets constructed this way will only be Kähler where the reference state $\ket{\phi(x)}$ is the highest weight state of the representation, \ie the vacuum state $\ket{\Gamma_{\mathrm{b}},z_{\mathrm{b}},\Gamma_{\mathrm{f}}}=\ket{0}_\mathrm{b}\otimes\ket{0}_\mathrm{f}$, while the overall manifold (collection of sheets) will not be Kähler. Consequently, these states provide a natural playground for the concepts and methods introduced in this paper and vice versa a rigorous understanding of variational methods for non-Kähler manifolds is required to use generalized Gaussian states in practice.

\section{Summary and discussion}\label{sec:discussion}
We have presented a systematic geometric framework to use variational methods for the study of closed quantum systems. Our results build on extensive previous work ranging from the geometric formulation of the time dependent variational principle by Saraceno and Kramer~\cite{kramer1980geometry} to the formulation of group theoretic coherent states due to Gilmore and Perelomov~\cite{gilmore1972geometry,perelomov1972coherent}.

The main contribution of the present work is to recognize the Kähler property as an important criterion in the classification of variational families, to extend existing methods to the non-Kähler case and thereby provide a systematic framework for such calculations. While section~\ref{sec:variational-principle-geometrical-representation} served as less rigorous exposition, we wrote a concise review of the necessary mathematical background in section~\ref{sec:geometry} to understand our main results in section~\ref{sec:variational_methods}.

\textbf{Real time evolution (\ref{sec:real-time-evolution}).} We gave explicit formulas for the Lagrangian evolution~\eqref{eq:projected-real-time-evolution} and the McLachlan evolution~\eqref{eq:ml_eom}, whose equivalence for Kähler manifolds was shown in proposition~\ref{prop:equivalence-real-time}. We also compared their conservation laws when the two evolutions differ (non-Kähler case).

\textbf{Excitation spectra (\ref{sec:excitation-spectra}).} We rephrased the two approaches of computing excitation spectra from tangent spaces geometrically, namely projecting the Hamiltonian according to~\eqref{eq:projected-Hamiltonian} vs. linearizing the equations of motion according to~\eqref{eq:linearized-eom} à la Gross–Pitaevskii, and discuss their relations for Kähler and non-Kähler manifolds. For the latter approach, in~\eqref{eq:K-matrix} we introduced the local generator $\bm{K}^\mu{}_\nu$ of time evolution.

\textbf{Spectral functions (\ref{sec:spectral-functions}).} We formulated linear response theory in our geometric language to derive a new approximation of the spectral function in~\eqref{eq:spectral-function} using the eigenvectors of $\bm{K}^\mu{}_\nu$, which can be applied to both Kähler and non-Kähler manifolds.

\textbf{Imaginary time evolution (\ref{sec:imaginary-time-evolution}).} We gave a geometric derivation of projected imaginary time evolution~\eqref{eq:K-projected-imaginary-time-evoltuon}, which makes its equivalence to gradient descent explicit and does not rely on the Kähler property.

In our application section~\ref{sec:applications}, we considered commonly used families of states under the light of variational methods and Kähler geometry: We started with the well-known families of bosonic and fermionic Gaussian states (\ref{sec:Gaussian}), where we related for the first time the Kähler structures on tangent space ($\g$, $\om$, $\J$) of the family with the ones on the classical phase space $(\Gg,\omega,J)$. We then reviewed group theoretic coherent states (\ref{sec:group-theoretic-coherent-states}) à la Gilmore-Perelemov . Finally, we investigated the recently introduced families of generalized Gaussian states (\ref{sec:generalized-Gaussian}) through the lense of group theory and Kähler geometry.

Geometric formulations of quantum theory have a long tradition~\cite{kibble1979geometrization,heslot1985quantum,gibbons1992typical,hughston2017geometric} in the mathematical physics community, but their usage for everyday applications in quantum many body physics and quantum optics has been limited for several reasons. Studying real and imaginary time evolution on high dimensional variational families requires extensive computational resources, which were not available thirty years ago. While it was often sufficient in the past to do a first order calculation based on standard families, such as coherent states, more complex models and more complicated physical questions (often inspired by experiments) often require new approaches, such as larger variational families. Finally, finding suitable variational families is always about striking a balance between choosing the family's dimension sufficiently large to capture interesting physics, but also sufficiently small to make calculations feasible. While the dimension of Hilbert space typically scales exponentially with the number of degrees of freedom $N$, most effective variational families scale polynomially (coherent states~$\sim N$, Gaussian states~$\sim N^2$, generalized Gaussian states~$\sim N^2$).

We expect that in the near future the systematic application of existing and newly proposed variational families in quantum many body problems will contribute to a better understanding of the involved physics with the potential of making new predictions relevant for experimental studies.
We believe that with the present paper we have laid the theoretical foundations for the exploration of general variational families and methods. A prominent example of these are generalized Gaussian states as proposed in~\cite{shi2018variational} and reviewed in section~\ref{sec:generalized-Gaussian}. They present a new approach for the variational study of various systems, ranging from boson-fermion mixtures in Holstein models~\cite{wang2019zero} to Su-Schrieer-Heeger models and Kondo models~\cite{shi2018variational}, but their geometric and mathematical structures have been largely unexplored. They form manifest non-Kähler manifolds and this makes our formalism particularly suited for their study. Even revisiting known models with enlarged variational families can reveal new properties. For example, moving from coherent to the larger family of bosonic Gaussian states revealed recently that the ground state of trapped Bose-Einstein condensates with attractive s-wave interaction exhibits features of a squeezed state~\cite{shi2019trapped,wang2020theory}.

The present paper focused exclusively on variational families of pure quantum states, used for the study of closed quantum systems, \ie at zero temperature. A natural extension of our formalism would be to also incorporate open quantum systems by allowing mixed states within our variational family. The approximate ground states $\psi_0$ minimizing the energy would be replaced by density operator $\rho_0$ minimizing the free energy. Non-equilibrium phenomena are often treated in their Markovian approximation, where time evolution is governed by master equations of the Gorini-Kossakowski-Sudarshan-Lindblad form~\cite{gorini1976completely,lindblad1976generators}. Using this to derive meaningful variational equations is an important challenge, which has been only partially accomplished in the context of specific variational families~\cite{werner2016positive,weimer2019simulation}. We believe that the geometric perspective laid out in the current manuscript may also be helpful for developing variational methods to study the dynamics of open quantum systems.

\begin{acknowledgements}
We thank Abhay Ashtekar, Pavlo Bulanchuk, Eugenio Bianchi, Jens Eisert, Marcos Rigol, Richard Schmidt, Jan Philip Solovej, Lev Vidmar, Albert Werner and Yao Wang for inspiring discussions. LH acknowledges support by VILLUM FONDEN via the QMATH center of excellence (grant no.10059) and by the Max Planck Harvard Research Center for Quantum Optics. TG, LH and IC thank Harvard University for the hospitality during several visits. TG, LH and IC are supported by the Deutsche Forschungsgemeinschaft (DFG, German Research Foundation) under Germany’s Excellence Strategy – EXC-2111 – 39081486. TS acknowledges the Thousand-Youth-Talent Program of China and NSFC 11974363. JH acknowledges funding through ERC Grant ERQUAF, ERC-2016-STG (Grant no.715861). ED acknowledges funding through Harvard-MIT CUA, AFOSR-MURI: Photonic Quantum Matter (award FA95501610323) and DARPA DRINQS program (award D18AC00014). IC acknowledges funding through ERC Grant QUENOCOBA, ERC-2016-ADG (Grant no.742102).
\end{acknowledgements}

\appendix

\section{Conventions and notation}
In this appendix, we review the conventions and notation used in this manuscript. The goal of our formalism is to be largely self-explanatory with an easy conversion between abstract objects (vectors, tensors, operators) and their numerical representation (lists, matrices, arrays).

\subsection{Nomenclature}
The following list contains most of the symbols and their meaning, used throughout this manuscript.
\vspace{-3mm}
\begin{center}
	\renewcommand*{\arraystretch}{1.25}
    \begin{longtable}{rcl}
        \textbf{Symbol} & \hspace{4mm}&\textbf{Meaning} \\
        \hline
        $\ket{\psi}\in\mathcal{H}$ & & Hilbert space vector\\
        $\ket{\psi(x)}\in M$ && family of Hilbert space vectors $M\subset \mathcal{H}$\\
        $\ket{v_\mu}$ && tangent space vector on $M\subset \mathcal{H}$\\
        $\omega_{\mu\nu},g_{\mu\nu}, J^{\mu}{}_{\nu}$ && restricted Kähler structures on $M\subset \mathcal{H}$\\
        $\varepsilon$ & & Expectation value $\varepsilon=\braket{\psi|\hat{H}|\psi}$\\
        $L$ & & Lagrangian on $M$\\
        \hline
        $\psi\in\mathcal{P}(\mathcal{H})$ & & quantum state in projective Hilbert space\\
        $\psi(x)\in\mathcal{M}$ && family of states $\mathcal{M}\subset \mathcal{H}$\\
        $\ket{V_\mu}$ && tangent space vector on $\mathcal{M}\subset \mathcal{H}$\\
        $\mathcal{T}_\psi\mathcal{M}$ & & tangent space of $\mathcal{M}$ at $\psi$\\
        $\mu,\nu,\delta,\gamma$ && tangent space indices\\
        $\om_{\mu\nu},\g_{\mu\nu}, \J^{\mu}{}_{\nu}$ && restricted Kähler structures on $\mathcal{M}\subset \mathcal{H}$\\
        $\mathcal{L}$ & & Lagrangian on $\mathcal{L}$\\
        $E$ &  & Expectation value $E=\braket{\psi|\hat{H}|\psi}/\braket{\psi|\psi}$\\
        $\mathcal{H}^\perp_{\psi}$ && vectors orthogonal to $\ket{\psi}$\\
        $\mathbb{Q}_\psi$ & & orthogonal projector onto $\mathcal{H}^\perp_\psi$\\
        $\mathcal{P}_\psi$ && orthogonal projector onto $\mathcal{T}_\psi\mathcal{M}$\\
        $A(x)=\braket{\hat{A}}(x)$ && expectation value of $\hat{A}$ for state $\psi(x)$\\
        $\{A,B\}$ & & Poisson bracket on $\mathcal{M}$ for functions $A,B$\\
        \hline
        $\mathcal{X}^\mu$ && real time evolutation vector field\\
        $\mathcal{F}^\mu$ && imaginary time evolution vector field\\
        $\widetilde{P}^\mu{}_\nu$ && projector onto conservation laws subspace\\
        $\widetilde{\mathcal{X}}^\mu,\widetilde{\mathcal{F}}^\mu$ && conservation laws preserving vector fields\\
        $\widetilde{\mathcal{M}}$ && manifolds of constant conserved quantities\\
        $\ket{\Psi}=e^{\kappa+\ii\varphi}\ket{\psi}$ && family with variable phase/normalization\\
        $\K^\mu{}_\nu$ && linearized time evolution flow\\
        $\lambda_\ell=\pm\ii\omega_\ell$ && eigenvalues of $\K^\mu{}_\nu$\\
        $\mathcal{E}^\mu(\lambda)$ && right-eigenvectors of $\K^\mu{}_\nu$\\
        $\widetilde{\mathcal{E}}_\mu(\lambda)$ && left-eigenvectors of $\K^\mu{}_\nu$\\
        $\mathcal{A}(\omega)$ && spectral function\\
        \hline
        $V$ && classical bosonic/fermionic phasespace\\
        $a,b,c,d$ && classical phase space indices\\
        $\Gg_{ab},\Go_{ab},J^a{}_b$ && Gaussian Kähler structures\\
        $\GG^{ab},\GO^{ab}$ && inverse Kähler structures\\ 
        $z^a,C_2^{ab}$ && one-point and two-point function\\
        $\Gamma^{ab}$ && bosonic/fermionic covariance matrix\\
        $C_n^{a_1\dots a_n}$ && $n$-point function\\
        $\widehat{K}$ && representation of generator $K$\\
        $\mathcal{S}(M)$ && squeezing transformation\\
        $\mathcal{D}(z)$ && displacement transformation\\
        $\mathcal{U}(M,z)$ && Gaussian transformation\\
        $\Delta=-JJ_0$ && relative complex structure\\
        $\ket{V_a}\in\mathcal{D}_{(\Gamma,z)}$ && displacement tangent vector\\
        $\ket{V_{ab}}\in\mathcal{S}_{(\Gamma,z)}$ && squeezing tangent vector\\
        \hline
        $\mathcal{G},\mathfrak{g}$ && Lie group and Lie algebra\\
        $i,j,k,l$ && Lie algebra indices\\
        $\Xi_i$ && Lie algebra basis\\
        $\hat{\Xi}_i$ && operator representation of Lie algebra\\
        $\mathcal{K}_{ij}$ && Killing form\\
        $\ket{\mathcal{V}_i}$ && Lie algebra induced tangent space basis\\
        $\phi,\ket{\phi}$ && reference state and state vector\\
        $\mathcal{M}_\phi\subset\mathcal{P}(\mathcal{H})$ && group theoretic coherent states\\
        $H_\phi,\mathfrak{h}_{\phi}$ && stabilizer group and algebra of $\phi$\\
        $\mathfrak{h}$ && real Cartan subalgebra\\
        $I,J,K,L$ && Cartan subalgebra indices\\
        $H_I$ && Cartan subalgebra basis\\
        $\mathfrak{v}_{\alpha}$ && root spaces with roots $\alpha$\\
        $E_{\alpha}$ && root vectors with root $\alpha$\\
        $(\beta_M)_k\in\mathfrak{g}^*$ && expectation value of $\hat{\Xi}_k$\\
        $\mathcal{O}_\phi$ && co-adjoint orbit of $\phi$\\
        $S_\phi,\mathfrak{s}_\phi$ && stabilizer group and algebra of $\beta_{\id}$
    \end{longtable}
\end{center}

\subsection{Abstract index notation}\label{app:abstractindices}
Throughout this paper, all equations containing indices follow the conventions of \emph{abstract index notation}. This formalism is commonly used in the research field of general relativity and gravity, where differential geometry plays an important role, but we believe that it is also of great benefit when studying the geometry of variational manifolds.

The formalism is suitable to conveniently keep track of tensors built on a vector space. Given a finite dimensional real vector space $V$ with dual $V^*$, a $(r,s)$-tensor $T$ is a linear map
\begin{align}
    T: V^*\times \cdots \times V^*\times V\times\cdots\times V\to\mathbb{R}\,.
\end{align}
In particular, a $(1,0)$-tensor is a vector, a $(0,1)$-tensor is a dual vector and a $(1,1)$-tensor is a linear map. To keep track of the type of tensor, abstract index notation refers to the $(r,s)$-tensor $T$ as $T^{a_1\cdots a_r}{}_{b_1\cdots b_s}$, \ie we assign $r$ upper indices and $s$ lower indices. Typically, we choose the indices from some alphabet to indicate which vector space, we are referring to. For tangent space $\mathcal{T}_{\psi}\mathcal{M}$ of a variational manifold $\mathcal{M}$, we use Greek letters $\mu,\nu,\gamma,\delta$, while we use Latin letters $a,b,c,d$ to refer to the classical phase space $V$ of a bosonic or fermionic system. In the context of group theoretic states, we use $i,j,k,l$ to refer to the Lie algebra and $I,J,K,L$ to refer to its Cartan subalgebra.

The key advantage of abstract index notation in the context of variational manifolds is that it helps us to keep track of what types of tensors, we are dealing with and which contractions are allowed. Apart from vectors $X^a$ and dual vectors $w_a$, we are mostly dealing with tensors that have two indices, namely linear maps $\J^a{}_b$, bilinear forms $\g_{ab}$ and dual bilinear forms $\Om^{ab}$

In the present paper, we often deal with linear maps and bilinear form, \ie tensors that have two indices. They are naturally represented as matrices, in particular, for numerical evaluation. For convenience, we will also use the notation, where tensors with suppressed indices are implied to be contracted, just as standard matrix multiplication works. Obviously, this means that only such expressions are allowed where the adjacent indices are given by one upper and one lower index.

\subsection{Special tensors and tensor operations}
In the following, we review common matrix and tensor operations and emphasize how they are defined if we do not have a natural identification between a vector space and its dual space. This highlights that certain formulas involving matrix operations (such as computing determinants, traces, eigenvalues or transposes) are only well defined in certain cases, \ie if the respective matrix represents a linear map in some cases or bilinear form in other cases.

\textbf{Identity.} Every vector space $V$ comes with the canonical identity map $\delta^a{}_b$ satisfying $\delta^a{}_b X^a=X^a$. Note that the notation $\id^a{}_b$ would also be consistent, but we stayed with the commonly used Kronecker delta. There does not exist a canonical analogue as bilinear form, \eg a form $\delta_{ab}$ or $\delta^{ab}$ which only make sense with respect to a specific basis and are therefore not canonical, but rather a specific choice, such as a metric $\g_{ab}$.

\textbf{Transformation rules.} An invertible linear map $M^a{}_b: V\to V$ of the vector space $V$ acts on a general $(r,s)$-tensor $T^{a_1\cdots a_r}{}_{b_1\cdots b_s}$ and transforms it to
\begin{align}
    \hspace{-2mm} M^{a_1}{}_{c_1}\cdots M^{a_r}{}_{c_r}(M^{-1})^{d_1}{}_{b_1}\cdots(M^{-1})^{d_s}{}_{b_s}T^{c_1\cdots c_r}{}_{d_1\cdots d_s}
\end{align}
In particular, a vector $X^a$ transforms as $M^a{}_bX^b$, a dual vector $w_a$ as $w_b(M^{-1})^b{}_a$, a dual bilinear form $S^{ab}$ as $M^a{}_cB^{cd}(M^\intercal)_d{}^b$, a bilinear form $s_{ab}$ as $(M^{-1\intercal})_a{}^cb_{cd}(M^{-1})^d{}_b$ and a linear map $K^a{}_b$ as $M^a{}_cK^d{}_d (M^{-1})^d{}_b$.

\textbf{Determinant.} The determinant $\det{(M)}$ is only well-defined for a linear map $M^a{}_b$. The determinant of a bilinear form $s_{ab}$ or $S^{ab}$ is ill defined, unless we have a reference object, such as a metric $g_{ab}$ or $G^{ab}$. Then, we can compute the determinant of the matrix of the linear maps $S^{ac}g_{cb}$ or $G^{ac}s_{cb}$.

\textbf{Trace.} The trace $\mathrm{tr}(M)=M^a{}_a$ is only defined for a linear map, not for bilinear forms $S^{ab}$ or $s_{ab}$, unless we again have a reference object, such as a metric.

\textbf{Eigenvalues.} Without additional structures, we can only defined eigenvalues for a linear map $M^a{}_b$, where an eigenvalue $\lambda$ associated to an eigenvector $X^a$ satisfies
    \begin{align}
        M^a{}_b X^b=\lambda\,X^b\,.
    \end{align}
This is well-known from linear algebra. A bilinear form $X^{ab}$ does not have intrinsic eigenvalues, but we can compute its eigenvalues relative to another bilinear form. Given a bilinear form $s_{ab}$ and a metric $G^{ab}$ or symplectic form $\Omega^{ab}$, we can define the metric or symplectic eigenvalues as the regular eigenvalues of the linear map $G^{ac}s_{cb}$ or $\Omega^{ac}s_{cb}$, respectively. 

\textbf{Transpose.} The transpose of a linear map $M^a{}_b: V\to V$ is the map $(M^\intercal)_a{}^b: V^*\to V^*$. We have the relation $(M^\intercal)_a{}^b=M^b{}_a$, which means that the two represent the same tensor and typically one does not distinguish between the two in abstract index notation. However, for our shorthand notation, it is important to keep the order of indices in right order. From the perspective of abstract index notation, there is not really much point to use the transpose operation, but we will still write the respective expressions for convenience, so that they can be easily converted to matrix expressions, \eg for numerical implementations.

\textbf{Gradient.} Given a function $f(x)$ on some manifold $\mathcal{M}$, its gradient $(df)_\mu=\partial_\mu f$ is field of covectors, also known as 1-form. This means that the gradient alone does not define a tangent space direction, \eg to move in the direction of steepest ascent. Indeed, only if we have a metric $\G^{\mu\nu}$, we can define typical gradient vector field $\mathcal{F}^\mu=\G^{\mu\nu}(\partial_\nu f)$. The reason is that the gradient $df$ as dual vector encodes the linearized change $df(X)=df_\mu\,X^\mu$ of the function $f$, when performing a step in the direction $X^\mu$. Clearly, by increasing our step size, we can make this change arbitrarily large, so there is no ``steepest'' direction. Only if we have an absolute measure of our step size, \eg a norm $\lVert X\rVert=\sqrt{X^\mu g_{\mu\nu}X^\nu}$ induced by an inner product $\g_{\mu\nu}$, we can find a unique direction, for which a step of fixed size maximizes the change.

\subsection{Common formulas}\label{app:common-formulas}
Given a triangle of Kähler structures $(G,\Omega,J)$ with inverses $(g,\omega,-J)$, we have the following relations. We list them both in abstract index notation and in shorthand notation.
\begin{align}
	-J^2&=\id &\Leftrightarrow&& -J^a{}_cJ^c{}_b&=\delta^a{}_b\\
 -(J^\intercal)^2&=\id^\intercal &\Leftrightarrow&& -(J^\intercal)_a{}^c(J^\intercal)_c{}^b&=\delta_a{}^b\\
 -J^{-1} &=J &\Leftrightarrow&& -(J^{-1})^{a}{}_b&=J^a{}_b\\
 J\Omega J^\intercal&=\Omega &\Leftrightarrow&& J^a{}_c\Omega^{cd}(J^\intercal)_d{}^b&=\Omega^{ab}\\
  -\Omega J^\intercal&=J\Omega &\Leftrightarrow&& -\Omega^{ac}(J^\intercal)_c{}^b&=J^a{}_c\Omega^{cb}\\
 JG J^\intercal&=G &\Leftrightarrow&& J^a{}_cG^{cd}(J^\intercal)_d{}^b&=G^{ab}\\
 -GJ^\intercal&=JG &\Leftrightarrow&& -G^{ac}(J^\intercal)_c{}^b&=J^a{}_cG^{cb}\\
 \Omega J^\intercal&=G &\Leftrightarrow&& \Omega^{ac}(J^\intercal)_c{}^b&=G^{ab}\\
 -J \Omega&=G &\Leftrightarrow&& -J^a{}_c\Omega^{cb}&=G^{ab}\\
 \Omega\omega&=\id &\Leftrightarrow&& \Omega^{ac}\omega_{cb}&=\delta^a{}_b\\
 \omega\Omega&=\id^\intercal &\Leftrightarrow&& \omega_{ac}\Omega^{cb}&=\delta_a{}^b\\
 Gg&=\id &\Leftrightarrow&& G^{ac}g_{cb}&=\delta^a{}_b\\
 gG&=\id^\intercal &\Leftrightarrow&& g_{ac}G^{cb}&=\delta_a{}^b\\
 -\omega G\omega&=g &\Leftrightarrow&& -\omega_{ac}G^{cd}\Omega_{db}&=g_{ab}\\
 -g\Omega g&=\omega &\Leftrightarrow&& -g_{ac}\Omega^{cd}G_{db}&=\omega_{ab}\\
 \Omega g&=J &\Leftrightarrow&& \Omega^{ac}g_{cb}&=J^a{}_b\\
 -G\omega&=J &\Leftrightarrow&& -G^{ac}\omega_{cb}&=J^a{}_b\\
 -\Omega^\intercal&=\Omega &\Leftrightarrow&& -\Omega^{ba}&=\Omega^{ab}\\
 G^\intercal&=G &\Leftrightarrow&& G^{ba}&=G^{ab}
\end{align}
A symplectic group element $M^a{}_b\in\mathrm{Sp}(2N,\mathbb{R})$ and a symplectic algebra element $K^a{}_b\in\mathfrak{sp}(2N,\mathbb{R})$ are characterized by the following properties.
\begin{align}
    M\Omega M^\intercal&=\Omega &\Leftrightarrow&& \hspace{-2mm}M^a{}_c\Omega^{cd}(M^\intercal)_d{}^b&=\Omega^{ab}\\
    \Omega M^\intercal\omega&=M^{-1} \hspace{-2mm}&\Leftrightarrow&& \hspace{-2mm}\Omega^{ac}(M^\intercal)_c{}^d \omega_{db}&=(M^{-1})^a{}_b\\
    -\Omega K^\intercal&=K\Omega\hspace{-2mm} &\Leftrightarrow&& -\Omega^{ac}(K^\intercal)_c{}^b&=K^a{}_c\Omega^b
\end{align}
An orthogonal group element $M^a{}_b\in\mathrm{O}(2N)$ and an orthogonal algebra element $K^a{}_b\in\mathfrak{so}(2N)$ are characterized by the following properties.
\begin{align}
    MG M^\intercal&=G &\Leftrightarrow&& \hspace{-2mm}M^a{}_cG^{cd}(M^\intercal)_d{}^b&=G^{ab}\\
    G M^\intercal G&=M^{-1} \hspace{-2mm}&\Leftrightarrow&& \hspace{-2mm}G^{ac}(M^\intercal)_c{}^d G_{db}&=(M^{-1})^a{}_b\\
    -G K^\intercal&=KG\hspace{-2mm} &\Leftrightarrow&& -G^{ac}(K^\intercal)_c{}^b&=K^a{}_cG^b
\end{align}

\section{Proofs}
\label{app:proofs}
In this appendix, we present several technical proofs of selected propositions from the main text, whose proof would have interrupted the reading flow.

\addtocounter{proposition}{-17}
\begin{proposition}
On a tangent space $\mathcal{T}_\psi\mathcal{M}\subset\mathcal{H}$ of a submanifold $\mathcal{M}\subset\mathcal{P}(\mathcal{H})$ we can always find an orthonormal basis $\{\ket{V_\mu}\}$, such that $\g_{\mu\nu}\equiv\id$ and the restricted complex structure is represented by the block matrix
\begin{align}
\footnotesize
	\J^\mu{}_\nu\equiv\left(\begin{array}{ccc|ccccc|cc}
		& 1 &  &  &  & & & & & \\
	-1  &   &  &  &  & & & & & \\
		&& \ddots&&  & & & & &\\
		\hline
		&&  &  & c_1 & & & & &\\
		&&  &  -c_1& & & & & &\\
		&&  &  &  & & c_2& & &\\
		&&  &  & & -c_2 & & & &\\
		&&  &  & & & &\ddots & &\\
		\hline
		&&  &  & & & & &0 &\\
		&&  &  & & & & & &\ddots
	\end{array}\right)\label{eq:J-standard2}
\end{align}
with $0<c_i<1$. This standard form induces the decomposition of $\mathcal{T}_\psi\mathcal{M}$ into the three orthogonal parts
\begin{align}
	\mathcal{T}_\psi\mathcal{M}=\underbrace{\overline{\overline{\mathcal{T}_\psi\mathcal{M}}}\oplus\mathcal{I}_\psi\mathcal{M}}_{\overline{\mathcal{T}_\psi\mathcal{M}}}\oplus\mathcal{D}_\psi\mathcal{M}\,,
\end{align}
where $\overline{\overline{\mathcal{T}_\psi\mathcal{M}}}$ is the largest Kähler subspace and $\overline{\mathcal{T}_\psi\mathcal{M}}$ is the largest space on which $J$ and $\omega$ are invertible.
\end{proposition}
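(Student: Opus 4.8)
The plan is to read the proposition as a simultaneous normal-form statement for the pair $(\g,\om)$ on the real vector space $\mathcal{T}_\psi\mathcal{M}$, where $\g$ is positive definite and $\om$ is antisymmetric, together with one spectral constraint inherited from the embedding $\mathcal{T}_\psi\mathcal{M}\subset\mathcal{H}$. First I would record two structural facts about $\J=-\G\om$. Since $\om$ is antisymmetric and $\g\J=-\om$, the matrix $\g\J$ is antisymmetric, so $\J$ is skew-adjoint with respect to the inner product $\g$, i.e.\ $\g(\J X,Y)=-\g(X,\J Y)$. Second, I would establish the bound $0\geq \J^2\geq -\id$ (as $\g$-symmetric operators). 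Using $\J^\mu{}_\nu=\mathbb{P}^\mu_\psi\ii\ket{V_\nu}$ from~\eqref{eq:restricted_J}, the tangent vector $\J Y$ satisfies $\ket{\J Y}=\mathbb{P}_\psi\ii\ket{Y}$; since multiplication by $\ii$ preserves $\mathrm{Re}\!\braket{\cdot|\cdot}$ and $\mathbb{P}_\psi$ is the $\mathrm{Re}\!\braket{\cdot|\cdot}$-orthogonal projector of~\eqref{eq:projectors}, we get $\lVert\ket{\J Y}\rVert=\lVert\mathbb{P}_\psi\ii\ket{Y}\rVert\leq\lVert\ii\ket{Y}\rVert=\lVert\ket{Y}\rVert$. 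Because $\g$ is proportional to the restricted real inner product by~\eqref{eq:gmunu-def}, this is precisely a $\g$-contraction, and combined with $\g$-skew-adjointness it yields $\g(Y,-\J^2 Y)=\lVert\ket{\J Y}\rVert_\g^2\in[0,\g(Y,Y)]$, which is the claimed bound.

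With these two facts in hand, I would invoke the real spectral theorem. Choosing a $\g$-orthonormal basis makes $\g\equiv\id$, and in this basis $\J$ is a genuine real skew-symmetric matrix. Such a matrix is brought by a real orthogonal change of basis (which preserves $\g\equiv\id$) into the canonical block-diagonal form built from $2\times 2$ blocks $\left(\begin{smallmatrix}0&c_i\\-c_i&0\end{smallmatrix}\right)$ with $c_i>0$, together with a zero block spanning $\ker\J$. On each $2\times 2$ block $\J^2=-c_i^2\,\id$, so the bound $0\geq\J^2\geq-\id$ forces $0<c_i\leq 1$ on every nondegenerate block. Reordering the blocks so that the $c_i=1$ blocks appear first, then the blocks with $0<c_i<1$, then the zero block, produces exactly the matrix~\eqref{eq:J-standard}.

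The block structure then reads off the orthogonal decomposition directly. I would identify $\overline{\overline{\mathcal{T}_\psi\mathcal{M}}}$ with the span of the $c_i=1$ blocks, $\mathcal{I}_\psi\mathcal{M}$ with the span of the $0<c_i<1$ blocks, and $\mathcal{D}_\psi\mathcal{M}=\ker\J=\ker\om$ with the zero block; mutual orthogonality is automatic since all three are coordinate subspaces of a $\g$-orthonormal basis. For the maximality statements I would use that $\J^2$ is $\g$-symmetric with eigenvalues $-c_i^2\in[-1,0]$: the eigenvalue $-1$ eigenspace is $\J$-invariant and is the largest subspace on which $\J^2=-\id$ holds, hence the largest Kähler subspace, while the orthogonal complement of $\ker\J$ (the span of all blocks with $c_i>0$) is the largest subspace on which $\J$, and therefore $\om$, is invertible, giving $\overline{\mathcal{T}_\psi\mathcal{M}}=\overline{\overline{\mathcal{T}_\psi\mathcal{M}}}\oplus\mathcal{I}_\psi\mathcal{M}$.

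The only genuinely delicate step is the contraction bound $0\geq\J^2\geq-\id$; everything downstream is the standard skew-symmetric normal form plus bookkeeping. The care there lies in the clean identification of $\J$ as the $\g$-compression of the orthogonal operator ``multiplication by $\ii$'': one must verify that $\ii$ preserves $\mathrm{Re}\!\braket{\cdot|\cdot}$ on all of $\mathcal{H}$ and that $\mathbb{P}_\psi$ is genuinely the orthogonal projector onto $\mathcal{T}_\psi\mathcal{M}$, so that $\ket{\J Y}=\mathbb{P}_\psi\ii\ket{Y}$ is literally the restriction of $\mathbb{P}_\psi\ii$ to the tangent space. Once this identification is secured, the norm estimate and hence the spectral bound are immediate, and I expect no further obstacles.
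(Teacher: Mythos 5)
Your proof is correct and follows essentially the same route as the paper's: both rest on the $\g$-skew-adjointness of $\J$ (which yields the real skew-symmetric normal form in a $\g$-orthonormal basis) and on the fact that $\J$ is the tangential compression of the isometry given by multiplication with $\ii$, which forces $0<c_i\leq 1$. Your contraction estimate $\lVert\mathbb{P}_\psi\ii\ket{Y}\rVert\leq\lVert\ket{Y}\rVert$ is the same inequality the paper obtains via Pythagoras, decomposing $\ii\ket{a}$ into tangential and normal parts and discarding the normal contribution.
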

\begin{proof}
We focus on a single tangent space $\mathcal{T}_\psi\mathcal{M}\subset\mathcal{H}$ and refer to the Kähler structures on $\mathcal{H}$, rather than the restricted ones on $\mathcal{T}_\psi\mathcal{M}$, as $(g,\omega,J)$. To shorten notation, we define $A:=\mathcal{T}_\psi\mathcal{M}$ and $B$ as its orthogonal complement in $\mathcal{H}$ with respect to $g$, so that $\mathcal{H}=A\oplus B$. We will refer to the restricted Kähler structures on $A$ or $B$ by $(g_A,\omega_A,J_A)$ and $(g_B,\omega_B,J_B)$, respectively. The relation $J=G\omega=-\Omega g$ implies $g=-\omega J$ or, equivalently, $g(v,w)=-\omega(v,Jw)$, and also $g(Jv,Jw)=g(v,w)$ for all $v,w\in\mathcal{H}$. From here, $g(Jv,w)=-g(v,Jw)$ follows and we can derive for $a,a'\in A$
\begin{align}
g_A(J_Aa,a')&=g(Ja,a')=-g_A(a,J_Aa')\,,
\end{align}
which implies that $J_A$ is antisymmetric with respect to $g_A$ and thus is diagonalizable, has either vanishing or purely imaginary eigenvalues with the latter appearing in pairs $\pm c_i$. Furthermore, we can always choose an orthonormal basis, such that $g_A=\id$ and $J_A$ is represented by~\eqref{eq:J-standard2}.\\
Next, we show that $c_i\in (0,1]$. We define the orthogonal projectors $\mathbb{P}_A\hspace{-.5mm}:\hspace{-.5mm} \mathcal{H}\to A$ and $\mathbb{P}_B\hspace{-.5mm}:\hspace{-.5mm} \mathcal{H}\to B$, such that
\begin{align}
\begin{split}
J=\left(\begin{array}{c|c}
J_A & J_{AB}\\
\hline
J_{BA} & J_B
\end{array}\right)\,,\,\,
\begin{array}{rcc}
J_A\hspace{-1mm}:&\hspace{-1.7mm} A\to A,&\hspace{-1.7mm} a\mapsto \mathbb{P}_A(Ja)\,,\\
J_B\hspace{-1mm}:&\hspace{-1.7mm} B\to B,&\hspace{-1.7mm} b\mapsto \mathbb{P}_B(Jb)\,,\\
J_{AB}\hspace{-1mm}:&\hspace{-1.7mm} B\to A,&\hspace{-1.7mm} b\mapsto \mathbb{P}_A(Jb)\,,\\
J_{BA}\hspace{-1mm}:&\hspace{-1.7mm} A\to B,&\hspace{-1.7mm} a\mapsto \mathbb{P}_B(Ja)\,.
\end{array}
\end{split}
\end{align}
We write $J^2-\mathbb{1}=0$ in blocks to find
\begin{align}\label{eq:J2_as_block_matrix}
\left(\begin{array}{c|c}
J_A^2+J_{AB}J_{BA}-\!\id_A & J_AJ_{AB}+J_{AB}J_B\\
\hline
J_BJ_{BA}+J_{BA}J_A & J_B^2+J_{BA}J_{AB}-\!\id_B
\end{array}\right)=0\,.
\end{align}
We consider an eigenvector $a\in A$ of $J_A$ with $J_Aa=\ii c a$ for non-zero $c$, which implies $J_A^2\,a=-c^2\, a$. We compute
\begin{align}
\begin{split}
g(a,a)&=g(J_Aa+J_{BA}a,J_Aa+J_{BA}a)\\
&=g(J_Aa,J_Aa)+g(J_{BA}a,J_{BA}a)\\
&\geq g(J_Aa,J_Aa)=-g(a,J_A^2a)=c^2 g(a,a)
\end{split}
\end{align}
where we used that $A$ and $B$ are orthogonal which eliminates crossing terms. This implies the inequality $c^2\leq 1$ and thus, we can choose $c_i\in(0,1]$ as in~\eqref{eq:J-standard}.
\end{proof}

\begin{proposition}
The Kähler property is equivalent to requiring that $\mathcal{T}_{\psi}\mathcal{M}$ is not just a real, but also a complex subspace, \ie for all $\ket{X}\in\mathcal{T}_{\psi}\mathcal{M}$, we also have $\ii\!\ket{X}\in \mathcal{T}_{\psi}\mathcal{M}$. Therefore, the multiplication by $\ii$ commutes with the projector $\mathbb{P}_\psi$, \ie $\mathbb{P}_\psi\ii=\ii\mathbb{P}_\psi$ and $\mathbb{P}_\psi$ is complex-linear.
\end{proposition}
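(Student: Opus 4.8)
The plan is to prove Proposition~\ref{cor:Kaehler-property} by establishing the equivalence of three conditions: (i) the Kähler property $\J^2 = -\id$; (ii) the tangent space $\mathcal{T}_\psi\mathcal{M}$ is a complex subspace of $\mathcal{H}$, i.e.\ $\ii\ket{X}\in\mathcal{T}_\psi\mathcal{M}$ whenever $\ket{X}\in\mathcal{T}_\psi\mathcal{M}$; and (iii) $\ii$ commutes with $\mathbb{P}_\psi$. The cleanest route is to show (i)$\Leftrightarrow$(ii) first, then deduce (iii). For (i)$\Leftarrow$(ii), I would start from the formula $\J^\mu{}_\nu = \mathbb{P}_\psi^\mu\,\ii\ket{V_\nu}$ in~\eqref{eq:restricted_J}. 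The key observation is that $\J$ always agrees with the \emph{full} multiplication by $\ii$ up to the projection: namely $\J^\mu{}_\nu\ket{V_\mu} = \mathbb{P}_\psi\,\ii\ket{V_\nu}$. If the tangent space is complex, then $\ii\ket{V_\nu}$ already lies in $\mathcal{T}_\psi\mathcal{M}$, so $\mathbb{P}_\psi$ acts trivially and $\J^\mu{}_\nu\ket{V_\mu} = \ii\ket{V_\nu}$. Applying $\J$ once more then gives $\J^2\ket{V_\nu} = \ii(\ii\ket{V_\nu}) = -\ket{V_\nu}$, whence $\J^2 = -\id$.

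For the converse (i)$\Rightarrow$(ii), the natural tool is Proposition~\ref{prop:J-standard-form}, which I may assume. That proposition gives an orthonormal basis in which $\J$ takes the block-diagonal standard form~\eqref{eq:J-standard}, decomposing $\mathcal{T}_\psi\mathcal{M}$ into a Kähler block (with $c_i=1$), blocks with $0<c_i<1$, and a kernel block. The hypothesis $\J^2=-\id$ forces all the $c_i$ to equal $1$ and eliminates the kernel block, so only the first type of block survives. I would then argue that on precisely these blocks the projection $\mathbb{P}_\psi$ acts as the identity on $\ii\ket{V_\nu}$: from the inequality chain in the proof of Proposition~\ref{prop:J-standard-form}, $c^2 = 1$ forces the cross term $g(J_{BA}a, J_{BA}a)$ to vanish, meaning $\ii\ket{X}$ has no component orthogonal to $\mathcal{T}_\psi\mathcal{M}$. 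Hence $\ii\ket{X}\in\mathcal{T}_\psi\mathcal{M}$ for every tangent vector, establishing that the tangent space is complex.

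Finally, for condition (iii), I would show that (ii) directly yields $\mathbb{P}_\psi\ii = \ii\mathbb{P}_\psi$. Given any $\ket{\phi}\in\mathcal{H}$, write $\ket{\phi} = \mathbb{P}_\psi\ket{\phi} + \ket{\phi}^\perp$ where $\ket{\phi}^\perp$ is orthogonal to $\mathcal{T}_\psi\mathcal{M}$ with respect to $\mathrm{Re}\braket{\cdot|\cdot}$. Multiplying by $\ii$, the first piece $\ii\mathbb{P}_\psi\ket{\phi}$ stays in the tangent space by (ii), and I would check that $\ii\ket{\phi}^\perp$ remains orthogonal to $\mathcal{T}_\psi\mathcal{M}$: for any $\ket{X}\in\mathcal{T}_\psi\mathcal{M}$, $\mathrm{Re}\braket{X|\ii\phi^\perp} = -\mathrm{Im}\braket{X|\phi^\perp} = \mathrm{Re}\braket{\ii X|\phi^\perp}$, which vanishes since $\ii\ket{X}\in\mathcal{T}_\psi\mathcal{M}$ and $\ket{\phi}^\perp$ is orthogonal to the whole tangent space. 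Thus $\ii\ket{\phi}^\perp$ is the orthogonal complement component of $\ii\ket{\phi}$, giving $\mathbb{P}_\psi\,\ii\ket{\phi} = \ii\mathbb{P}_\psi\ket{\phi}$; complex-linearity of $\mathbb{P}_\psi$ is immediate from this together with its real-linearity. The main obstacle I anticipate is the converse direction (i)$\Rightarrow$(ii): it requires the nontrivial structural input of Proposition~\ref{prop:J-standard-form} and the sharp use of the $c^2\le 1$ bound at equality, rather than a one-line algebraic manipulation, so care is needed to argue that equality $c_i=1$ genuinely forces the vanishing of the orthogonal component.
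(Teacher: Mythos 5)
Your proposal is correct, and for the hard direction it reaches the same key object as the paper --- the cross-block $J_{BA}$ mapping $A=\mathcal{T}_\psi\mathcal{M}$ into its $g$-orthogonal complement $B$ --- but kills it by a different mechanism. The paper never diagonalizes $J_A$: it reads off the identity $J_A^2+J_{AB}J_{BA}=-\id_A$ from the diagonal block of $J^2=-\id$, so the Kähler hypothesis $J_A^2=-\id_A$ gives $J_{AB}J_{BA}=0$ immediately, and then for an \emph{arbitrary} real $a\in A$ it computes $g(J_{BA}a,J_{BA}a)=g(Ja,J_{BA}a)=-g(a,J_{AB}J_{BA}a)=0$, concluding $J_{BA}=0$ from positive-definiteness of $g$. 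Your route instead invokes the full standard form of Proposition~\ref{prop:J-standard-form}, diagonalizes $J_A$, and exploits the equality case $c_i=1$ of the bound $c^2\leq 1$, i.e.\ of $g(a,a)=g(J_Aa,J_Aa)+g(J_{BA}a,J_{BA}a)$. Both arguments rest on the same two facts ($J$ is $g$-antisymmetric and norm-preserving; $g$ is positive definite), but the paper's version is more economical: it needs no spectral decomposition and avoids the complexification caveat your argument inherits from working with eigenvectors $J_Aa=\ii c\,a$ --- for full rigor you must pass to real and imaginary parts (or to the sesquilinear extension of $g$) before concluding $J_{BA}a=0$ from $g(J_{BA}a,J_{BA}a)=0$, the same caveat that already affects the paper's own proof of Proposition~\ref{prop:J-standard-form}. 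What your write-up buys in exchange is completeness: you prove the easy direction (complex tangent space implies $\J^2=-\id$) explicitly via $\J^\mu{}_\nu\ket{V_\mu}=\mathbb{P}_\psi\,\ii\ket{V_\nu}$, which the paper leaves implicit, and your derivation of $\mathbb{P}_\psi\ii=\ii\mathbb{P}_\psi$ directly from invariance of both $\mathcal{T}_\psi\mathcal{M}$ and its orthogonal complement under multiplication by $\ii$ is cleaner than the paper's closing remark that block-diagonality of $J$ implies commutation with the projectors. One small slip in that last step: since the bra is antilinear, $\mathrm{Re}\!\braket{X|\ii\phi^\perp}=-\mathrm{Im}\!\braket{X|\phi^\perp}=-\mathrm{Re}\!\braket{\ii X|\phi^\perp}$, not $+\mathrm{Re}\!\braket{\ii X|\phi^\perp}$; the sign is immaterial here because the right-hand side vanishes either way, but it is worth fixing.
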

\begin{proof}
We want to show that $J_A^2=-\id_A$ implies that for all $a\in A$, we also have $\ii a=Ja\in A$. Therefore, we need to show that $Ja=J_Aa$, which is equivalent to $J_{BA}=0$. For arbitrary $a\in A$, we compute
\begin{align}
\begin{split}
g(J_{BA}a,J_{BA}a)&=g(Ja,J_{BA}a)=-g(a,JJ_{BA}a)\\
&=-g(a,J_{AB}J_{BA}a)\,.
\end{split}
\end{align}
This expression vanishes if $J_A^2=-\id_A$, because in that case  $J_{AB}J_{BA}=J_A^2-\id_A=0$ follows from the first block in~\eqref{eq:J2_as_block_matrix}. Since $g$ is non-degenerate, this implies $J_{BA}=0$. Similarly, we can use the last block in~\eqref{eq:J2_as_block_matrix} to conclude $J_B^2=-\id_B$, which implies $J_{BA}=0$. With vanishing $J_{AB}$ and $J_{BA}$, $J$ is block diagonal and commutes with the projectors. In the language of complex vector spaces, this implies that $\mathbb{P}_{\psi}\ii=\ii\mathbb{P}_\psi$.
\end{proof}

\addtocounter{proposition}{5}
\begin{proposition}
Given a variational manifold $\mathcal{M}$ we define (according to the Lagrangian action principle) the free projected real time evolution $\ket{\psi(t)}$ as governed by the free Hamiltonian $\hat{H}_0$ and the perturbed projected real time evolution $\ket{\psi_\epsilon(t)}$ as governed by the perturbed Hamiltonian $\hat{H}_{\epsilon}(t)=\hat{H}_0+\epsilon\hat{A}(t)$, both with the same initial state $\ket{\psi(0)}$. Then, the propagated perturbation, defined according to~\eqref{eq:propagated-perturbation-def}, is given by
\begin{align}
    \delta x^\mu(t)=-\int^t_{-\infty}\!\!dt'\:(d\Phi_{t-t'})^\mu{}_\nu\, \Om^{\nu\rho}\, \partial_\rho A(t')\big|_{\psi(t')} \,,
    \label{eq:propagated-perturbation2}
\end{align}
where $d\Phi_t$ is the linearized free evolution flow.
\end{proposition}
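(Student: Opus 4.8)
The plan is to work in local coordinates $x^\mu$ on $\mathcal{M}$ and reduce the statement to the standard variation-of-constants (Duhamel) formula for a linear inhomogeneous ordinary differential equation, exactly as one does when passing to the interaction picture. First I would write both trajectories as flows of the vector field~\eqref{eq:perturbed-vector-field}: the perturbed curve $x_\epsilon^\mu(t)$ solves $\dot x_\epsilon^\mu = \mathcal{X}_0^\mu(x_\epsilon) + \epsilon\,\mathcal{X}_A^\mu(t,x_\epsilon)$ with initial data $x_\epsilon(t_0)=x(t_0)$ held fixed and independent of $\epsilon$, while the free curve $x(t)=x_0(t)$ solves the $\epsilon=0$ equation. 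Because the right-hand side depends smoothly on $\epsilon$ and the two curves share the same initial condition, $x_\epsilon(t)$ is differentiable in $\epsilon$, and the geometric object defined in~\eqref{eq:propagated-perturbation-def} is precisely the coordinate representation of $\partial_\epsilon x_\epsilon^\mu(t)\big|_{\epsilon=0}$: since $\psi_\epsilon(t)\in\mathcal{M}$ for every $\epsilon$, the chain rule gives $\mathbb{Q}_\psi\partial_\epsilon\ket{\psi_\epsilon(t)}\big|_{\epsilon=0}=\big(\partial_\epsilon x^\mu_\epsilon(t)\big|_{\epsilon=0}\big)\ket{V_\mu}$, so the geometric and coordinate definitions of $\delta x^\mu$ agree.

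Next I would differentiate the flow equation with respect to $\epsilon$ at $\epsilon=0$. Using that $x_0(t)=x(t)$ is the free solution, this produces the linear inhomogeneous equation
\begin{equation}
    \frac{d}{dt}\delta x^\mu = (\partial_\nu\mathcal{X}_0^\mu)\big|_{x(t)}\,\delta x^\nu + \mathcal{X}_A^\mu(t)\big|_{x(t)}\,,\qquad \delta x^\mu(t_0)=0\,.
\end{equation}
The homogeneous part is generated by the Jacobian $\partial_\nu\mathcal{X}_0^\mu$ of the free vector field evaluated along the free trajectory, and the key geometric observation is that its fundamental (state-transition) solution is exactly the linearized free flow $d\Phi$: by the definition of the pushforward recalled in footnote~\ref{footnote:pushforward}, the map $d\Phi_{t,s}:\mathcal{T}_{\psi(s)}\mathcal{M}\to\mathcal{T}_{\psi(t)}\mathcal{M}$ obeys $\tfrac{d}{dt}(d\Phi_{t,s})^\mu{}_\nu = (\partial_\rho\mathcal{X}_0^\mu)(d\Phi_{t,s})^\rho{}_\nu$ with $d\Phi_{s,s}=\id$, and since $\hat{H}_0$ is time-independent the free flow is autonomous so that $d\Phi_{t,s}=d\Phi_{t-s}$.

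With the propagator identified, Duhamel's principle (variation of constants) gives
\begin{equation}
    \delta x^\mu(t)=\int_{t_0}^{t}\!\!dt'\;(d\Phi_{t-t'})^\mu{}_\nu\,\mathcal{X}_A^\nu(t')\big|_{x(t')}\,.
\end{equation}
Sending $t_0\to-\infty$ and substituting $\mathcal{X}_A^\nu=-\Om^{\nu\rho}\partial_\rho A$, which is the Lagrangian evolution vector field~\eqref{eq:projected-real-time-evolution} associated with the perturbation $\hat{A}(t)$, yields~\eqref{eq:propagated-perturbation2}.

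The step I expect to require the most care is the passage to the lower limit $t_0\to-\infty$: one must justify that the perturbed and free trajectories coincide in the distant past, i.e.\ that the perturbation is adiabatically switched on (or simply vanishes before a finite time), so that no boundary contribution survives. This is precisely what the interaction-picture argument encodes, and it is the reason the integral in~\eqref{eq:propagated-perturbation2} runs from $-\infty$. A secondary point worth stating cleanly, though trivial in the autonomous case, is that the pushforward of the free flow map genuinely is the fundamental solution of the linearized equation; everything else is the routine chain-rule differentiation and the Duhamel representation.
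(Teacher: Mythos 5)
Your proof is correct and follows essentially the same route as the paper's: both are the interaction-picture argument in which $\delta x^\mu(t)$ solves a linear inhomogeneous equation whose homogeneous propagator is the linearized free flow $(d\Phi_{t-t'})^\mu{}_\nu$ and whose source is the Lagrangian vector field $\mathcal{X}_A^\nu=-\Om^{\nu\rho}\partial_\rho A$ of the perturbation. The only cosmetic difference is that the paper constructs the interaction flow $\widetilde{\Phi}^\epsilon_t=\Phi^0_{-t}\circ\Phi^\epsilon_t$ and re-derives variation of constants by differentiating it first in $t$ and then in $\epsilon$, whereas you differentiate the flow equation in $\epsilon$ first to obtain the variational equation and then invoke Duhamel's formula directly.
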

\begin{proof}
Let us define the perturbed evolution flow $\Phi^\epsilon_t$ as the map that sends the coordinates of an initial state $x^\mu(0)$ to the coordinates $x^\mu(t)$ of the state time evolved under the projected perturbed real time evolution. It is governed by
\begin{align}
    \frac{d}{dt}\Phi_t^\epsilon(x)=\mathcal{X}_\epsilon(\Phi_t^\epsilon(x))=\mathcal{X}_0(\Phi_t^\epsilon(x))+\epsilon \mathcal{X}_A(\Phi_t^\epsilon(x))\,,
\end{align}
where $\mathcal{X}_0$ and $\mathcal{X}_A$ are the evolution vector fields associated to the Hamiltonians $\hat{H}_0$ and $\hat{A}$ respectively. We define the free evolution flow $\Phi^0_t$ analogously by just setting $\epsilon=0$ in the previous expressions.\\
Let us now define the interaction picture flow $\widetilde{\Phi}^\epsilon_t=\Phi^0_{-t}\circ\Phi^\epsilon_t$. It has the useful property that its time evolution only depends on the perturbing vector field:
\begin{align}
\begin{split}
    \frac{d}{dt}\widetilde{\Phi}^\epsilon_t(x)&=-\mathcal{X}_0(\widetilde{\Phi}_t^\epsilon(x))+d\Phi^0_{-t} \mathcal{X}_\epsilon(\Phi_t^\epsilon(x))
\end{split}\\
\begin{split}
    &=-\mathcal{X}_0(\widetilde{\Phi}_t^\epsilon(x))+d\Phi^0_{-t} \mathcal{X}_0(\Phi_t^\epsilon(x))\\
    &\hspace{70pt}+\epsilon\: d\Phi^0_{-t} \mathcal{X}_A(\Phi_t^\epsilon(x))
\end{split}\\
\begin{split}
    &=-\mathcal{X}_0(\widetilde{\Phi}_t^\epsilon(x))+\frac{d}{dt'}\Big|_{t'=0}\Phi^0_{-t} \Phi_{t'}^0\Phi_t^\epsilon(x)\\
    &\hspace{70pt}+\epsilon\: d\Phi^0_{-t} \mathcal{X}_A(\Phi_t^\epsilon(x))
\end{split}\\
\begin{split}
    &=-\mathcal{X}_0(\widetilde{\Phi}_t^\epsilon(x))+\frac{d}{dt'}\Big|_{t'=0}\Phi^0_{t'-t}\Phi_t^\epsilon(x)\\
    &\hspace{70pt}+\epsilon\: d\Phi^0_{-t} \mathcal{X}_A(\Phi_t^\epsilon(x))
\end{split}\\
\begin{split}
    &=-\mathcal{X}_0(\widetilde{\Phi}_t^\epsilon(x))+\mathcal{X}_0(\Phi^0_{-t}\Phi_t^\epsilon(x))\\
    &\hspace{70pt}+\epsilon \:d\Phi^0_{-t} \mathcal{X}_A(\Phi_t^\epsilon(x))
\end{split}\\
\begin{split}
    &=\epsilon \:d\Phi^0_{-t} \mathcal{X}_A(\Phi_t^\epsilon(x))\,.
    \label{eq:interaction-pic-evolution}
\end{split}
\end{align}
We are interested in the propagated perturbation
\begin{equation}
    \delta x^\mu(t)=\frac{d}{d\epsilon}\Big|_{\epsilon=0} \Phi^0_t \widetilde{\Phi}^\epsilon_t(x)=d\Phi^0_t \left(\frac{d}{d\epsilon}\Big|_{\epsilon=0} \widetilde{\Phi}^\epsilon_t(x)\right).
    \label{eq:propagated-perturbation-intermediate}
\end{equation}
The quantity $\frac{d}{d\epsilon}\Big|_{\epsilon=0} \widetilde{\Phi}^\epsilon_t(x)$ is for all times a vector of $\mathcal{T}_{\psi(0)}\mathcal{M}$ and its time evolution can be obtained by using~\eqref{eq:interaction-pic-evolution} after having commuted derivatives:
\begin{align}
    \frac{d}{dt}\left[\frac{d}{d\epsilon}\Big|_{\epsilon=0} \widetilde{\Phi}^\epsilon_t(x)\right]&=\frac{d}{d\epsilon}\Big|_{\epsilon=0}\left[\frac{d}{dt}\widetilde{\Phi}^\epsilon_t(x)\right]\\
    &=\frac{d}{d\epsilon}\Big|_{\epsilon=0}\epsilon \:d\Phi^0_{-t} \mathcal{X}_A(\Phi_t^\epsilon(x))\\
    &=d\Phi^0_{-t} \mathcal{X}_A(\Phi_t^0(x))\,.
\end{align}
The solution to this equation follows from integrating as
\begin{equation}
    \frac{d}{d\epsilon}\Big|_{\epsilon=0} \widetilde{\Phi}^\epsilon_t(x)=\int_{-\infty}^t\!dt'\,d\Phi^0_{-t'} \mathcal{X}_A(\Phi_{t'}^0(x))\,.
\end{equation}
Combining this with~\eqref{eq:propagated-perturbation-intermediate} and the expression~\eqref{eq:projected-real-time-evolution} for the Lagrangian real time evolution vector field $\mathcal{X}_A(\Phi_{t'}^0(x))$ leads to the result~\eqref{eq:propagated-perturbation2}.
\end{proof}

\section{Kähler manifolds}\label{app:Kaehler-structures}
Kähler manifolds play a central role in this manuscript. For completeness, in this appendix we will discuss their definition and properties. More information can be found in~\cite{huybrechts2005complex}. A Kähler manifold is a manifold $\mathcal{M}$ equipped with a metric $\g_{\mu\nu}$ and a symplectic form $\om_{\mu\nu}$ that satisfy several properties. These include some local properties, that is that $\J^\mu{}_\nu=-\G^{\mu\sigma}\om_{\sigma\nu}$ verifies $\J^2=-\id$ at all points, and also some non-local properties (closedness of $\om$ and vanishing Nijenhuis tensor). The precise definition is as follows.

\begin{definition}
A real manifold $\mathcal{M}$ is called Kähler if each tangent space is equipped with a positive definite metric $\g_{\mu\nu}$ and a compatible symplectic form $\om_{\mu\nu}$ as in definition~\ref{def:kaehler-space}, such that $\J^{\mu}{}_{\nu}=-\G^{\mu\sigma}\om_{\sigma\nu}$ with $\J^2=-\id$, and the following conditions are satisfied:
\begin{itemize}
    \item \textbf{Symplectic form $\om$ is closed ($d\om=0$) with}
    \begin{align}
    \begin{split}
        \hspace{5mm}(d\om)_{\mu\nu\sigma}&=\tfrac{1}{6}\left(\partial_\mu\om_{\nu\sigma}+\partial_\nu\om_{\sigma\mu}+\partial_\sigma\om_{\mu\nu}\right.\\
        &\qquad\left.-\partial_\mu\om_{\sigma\nu}-\partial_\nu\om_{\mu\sigma}-\partial_\sigma\om_{\nu\mu}\right)\,.
    \end{split}
    \end{align}
    \item \textbf{Nijenhuis tensor $N_{\J}$ vanishes ($N_{\J}=0$) with}
    \begin{align}
    \begin{split}
        (N_{\J})^{\mu}_{\nu\sigma}&=\J^\lambda{}_\sigma\,\partial_{\lambda}\J^\mu{}_\nu-\J^\lambda{}_\nu\,\partial_{\lambda}\J^\mu{}_\sigma\\
        &\quad+\J^\mu{}_\lambda(\partial_{\nu}\J^\lambda{}_{\sigma}-\partial_{\sigma}\J^\lambda{}_{\nu})\,.
    \end{split}
    \end{align}
\end{itemize}
\end{definition}

In essence, a Kähler manifold is simulteneously a Riemannian, a symplectic and a complex manifold, such that the respective structures in every tangent space are compatible in the sense of definition~\ref{def:kaehler-space}. 

For the purpose of the methods presented in this manuscript, it is of interest only whether the restricted Kähler structures on $\mathcal{M}$ satisfy the compatibility conditions from definition~\ref{def:kaehler-space}. If they do, the manifold is known as an almost-Hermitian manifold. We do not use the additional properties of $\om$ being closed or $N_{\J}$ vanishing.

However, as shown in the following proposition, if an almost-Hermitian manifold $\mathcal{M}$ is also a submanifold of a Kähler manifold, the additional non-local conditions are automatically satisfied and $\mathcal{M}$ is itself a Kähler manifold. In the context of this manuscript we always deal with manifolds $\mathcal{M}\subset\mathcal{P}(\mathcal{H})$, where projective Hilbert space $\mathcal{P}(\mathcal{H})$ is known to be a Kähler manifold~\cite{ashtekar1975quantum}. For this reason, for all the manifolds we encounter, the local compatibility conditions from definition~\ref{def:kaehler-space} are sufficient conditions for the manifold to be Kähler and we will therefore refer to manifolds that satisfy them as Kähler.

\begin{proposition}
Given a Kähler manifold $\tilde{\mathcal{M}}$ with compatible Kähler structures $(\tilde{\g},\tilde{\om},\tilde{\J})$, a sub manifold $\mathcal{M}\subset\tilde{\mathcal{M}}$ equipped with the restricted Kähler structures $(\g,\om,\J=-\G\om)$ is itself a Kähler manifold provided that $\J^2=-\id$.
\end{proposition}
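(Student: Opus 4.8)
The plan is to verify the two non-local conditions in the definition of a Kähler manifold---closedness of $\om$ and vanishing of the Nijenhuis tensor $N_{\J}$---since the local compatibility of $(\g,\om,\J)$ together with the assumption $\J^2=-\id$ already supplies the pointwise Kähler structure in each tangent space. The decisive structural input, furnished by Proposition~\ref{cor:Kaehler-property}, is that $\J^2=-\id$ forces each tangent space $\mathcal{T}_\psi\mathcal{M}$ to be a \emph{complex}, i.e.\ $\tilde{\J}$-invariant, subspace of $\mathcal{T}_\psi\tilde{\mathcal{M}}$. In particular the restricted complex structure is genuinely the restriction of the ambient one, $\J=\tilde{\J}|_{\mathcal{T}\mathcal{M}}$. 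I would record this invariance first, as both remaining steps depend on it.

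For closedness I would note that the restricted structures are exactly the pullbacks under the inclusion $\iota:\mathcal{M}\hookrightarrow\tilde{\mathcal{M}}$, namely $\g=\iota^*\tilde{\g}$ and $\om=\iota^*\tilde{\om}$; this is immediate from the defining formulas~\eqref{eq:gmunu-def}, which simply restrict the Hilbert-space inner product to tangent vectors of $\mathcal{M}$. Since the exterior derivative commutes with pullback and $\tilde{\om}$ is closed on the Kähler manifold $\tilde{\mathcal{M}}$, we obtain $d\om=d(\iota^*\tilde{\om})=\iota^*(d\tilde{\om})=0$. This step is routine.

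The substantive work is the Nijenhuis tensor, for which I would use the invariant bracket form $N_{\J}(X,Y)=[\J X,\J Y]-\J[\J X,Y]-\J[X,\J Y]-[X,Y]$, equivalent (up to the overall sign/normalization convention) to the coordinate expression displayed in the definition. Because $N_{\J}$ is tensorial, it suffices to evaluate it on vector fields $X,Y$ tangent to $\mathcal{M}$, which I would extend arbitrarily to vector fields on $\tilde{\mathcal{M}}$ that remain tangent to $\mathcal{M}$ along $\mathcal{M}$. The $\tilde{\J}$-invariance from the first step gives $\J X=\tilde{\J}X$ and $\J Y=\tilde{\J}Y$ along $\mathcal{M}$, so $\J X,\J Y$ are again tangent to $\mathcal{M}$; moreover the Lie bracket of two vector fields tangent to a submanifold is tangent to it, with the intrinsic bracket agreeing with the ambient bracket along $\mathcal{M}$. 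Consequently every term of $N_{\J}(X,Y)$ coincides with the corresponding term of $N_{\tilde{\J}}(X,Y)$ restricted to $\mathcal{M}$, whence $N_{\J}=N_{\tilde{\J}}|_{\mathcal{M}}=0$ because $\tilde{\mathcal{M}}$ is Kähler.

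The hard part will be the careful bookkeeping in this last step: one must invoke the tensoriality of $N_{\J}$ (so that extending $X,Y$ off $\mathcal{M}$ is legitimate) and the compatibility of intrinsic and ambient Lie brackets for submanifold-tangent fields. Both are standard facts, but they are precisely where the $\tilde{\J}$-invariance of $\mathcal{T}\mathcal{M}$ is indispensable---without it $\J X$ need not be tangent to $\mathcal{M}$ and the brackets would fail to close, so the restriction $N_{\J}=N_{\tilde{\J}}|_{\mathcal{M}}$ would be meaningless. A purely index-based alternative, working in coordinates adapted so that $\mathcal{M}$ is a coordinate slice and comparing the coordinate Nijenhuis expressions directly, is available but considerably messier, so I would favor the invariant bracket argument.
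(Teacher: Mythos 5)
Your proof is correct, and it reaches the paper's two targets ($d\om=0$ and $N_{\J}=0$) by an invariant route where the paper works in coordinates. Both arguments pivot on the same structural fact — that $\J^2=-\id$ forces each $\mathcal{T}_\psi\mathcal{M}$ to be invariant under the ambient complex structure — but you import it through Proposition~\ref{cor:Kaehler-property} (complex-subspace characterization), whereas the paper invokes Proposition~\ref{prop:J-standard-form} to build adapted coordinates $\tilde{x}^{\tilde\mu}\equiv(x^\mu,x'^{\mu'})$ in which $\tilde{\g}$, $\tilde{\om}$, $\tilde{\J}$ are block diagonal, $\tilde{\J}=\J\oplus\J'$. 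From there the executions diverge: for closedness the paper observes that $(d\om)_{\mu\nu\sigma}$ is a sub-block of $(d\tilde{\om})_{\tilde\mu\tilde\nu\tilde\sigma}$, which is the coordinate rendering of your pullback identity $d(\iota^*\tilde{\om})=\iota^*(d\tilde{\om})$; for the Nijenhuis tensor the paper writes the index expression and uses $\J^{\tilde\lambda}{}_\mu=\J^{\lambda}{}_\mu$ (block diagonality) to argue that the contraction over the ambient index $\tilde\lambda$ collapses to the tangential index $\lambda$, which is exactly the coordinate counterpart of your observation that $\J X=\tilde{\J}X$ stays tangent so all brackets close on $\mathcal{M}$. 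Your bracket-plus-tensoriality argument is the cleaner and more standard differential-geometric formulation (and you correctly identified the coordinate-slice computation as the messier alternative — it is precisely what the paper does); the paper's version has the modest advantage of staying entirely within the index conventions it uses elsewhere and of not needing the submanifold bracket-compatibility lemma as a separate input. One small point of care: Proposition~\ref{cor:Kaehler-property} is stated in the paper for submanifolds of $\mathcal{P}(\mathcal{H})$ with the ambient structure being multiplication by $\ii$, so strictly you are using its linear-algebra content (proved in appendix~\ref{app:proofs} for a general subspace of a Kähler vector space) applied pointwise to $\mathcal{T}_\psi\mathcal{M}\subset\mathcal{T}_\psi\tilde{\mathcal{M}}$; the paper makes the analogous transplant of Proposition~\ref{prop:J-standard-form}, so this is a matter of citation hygiene, not a gap.
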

\begin{proof}
$\mathcal{M}$ satisfies all local Kähler conditions. We therefore only need to show that $\om$ is closed and $N_{\J}=0$. We consider local coordinates $\tilde{x}^{\tilde\mu}$ on $\tilde{\mathcal{M}}$, such that $x^{\tilde{\mu}}\equiv(x^{\mu},x'^{\mu'})$ where changes in $x^\mu$ preserve the submanifold $\mathcal{M}$, while changes in $x'^{\mu'}$ are orthogonal to it. We can further choose $x^\mu$ and $x'^{\mu'}$ locally, such that the matrix representations of the Kähler structures $(\om,\g,\J)$ with respect to the decomposition $\tilde{\mu}\equiv(\mu,\mu')$ are
\begin{align}
    \tilde{\g}\equiv\left(\begin{array}{c|c}
    \g    &  0\\[.2mm]
    \hline\\[-3.5mm]
    0     & \g'
    \end{array}\right),\,\,\tilde{\om}\equiv\left(\begin{array}{c|c}
    \om     &  0\\[.2mm]
    \hline\\[-3.5mm]
    0     & \om'
    \end{array}\right),\,\,\tilde{\J}\equiv\left(\begin{array}{c|c}
    \J     &  0\\[.2mm]
    \hline\\[-3.5mm]
    0     & \J'
    \end{array}\right),
\end{align}
which is a consequence of $\J^2=-\id$, as proven in proposition~\ref{prop:J-standard-form}. Thus, this implies that $\tilde{\J}=\J\oplus \J'$ with respect to this decomposition $\mathcal{T}_{\psi}\tilde{\mathcal{M}}=\mathcal{T}_{\psi}\mathcal{M}\oplus(\mathcal{T}_{\psi}\mathcal{M})^\perp$.
\begin{itemize}
    \item Symplectic form is closed. In the above basis, $(d\om)_{\mu\nu\sigma}$ corresponds to a sub block of the array $(d\tilde{\om})_{\tilde\mu\tilde\nu\tilde\sigma}$. Consequently, $d\tilde\om=0$ implies $d\om=0$.
    \item We restrict $\tilde{N}_{\tilde\J}$ on $\tilde{\mathcal{M}}$ to $\mathcal{M}$ to find
    \begin{align}
    \begin{split}
        (\tilde{N}_{\tilde\J})^{\mu}_{\nu\sigma}&=\J^{\tilde\lambda}{}_\sigma\,\partial_{\tilde\lambda}\J^\mu{}_\nu-\J^{\tilde\lambda}{}_\nu\,\partial_{\tilde\lambda}\J^\mu{}_\sigma\\
        &\quad+\J^\mu{}_{\tilde\lambda}(\partial_{\nu}\J^{\tilde\lambda}{}_{\sigma}-\partial_{\sigma}\J^{\tilde\lambda}{}_{\nu})\,
    \end{split}
    \end{align}
    which is not obviously equal to $(N_{\J})^\mu{}_{\nu\sigma}$ due to the contraction over $\tilde\lambda$, which takes the full manifold into account. However, our previous considerations showed that $\tilde\J=\J\oplus\J'$. This implies that $\J^{\tilde{\lambda}}{}_\mu=\J^{\lambda}{}_\mu$, which proves the equality. Consequently $N_{\tilde{\J}}=0$ implies $N_{\J}=0$.
\end{itemize}
We therefore conclude that any submanifold $\mathcal{M}$ of a Kähler manifold $\tilde{\mathcal{M}}$ with $\J^2=-\id$ everywhere is again a Kähler manifold. Note that this implies in particular that $\mathcal{M}$ is also a complex and a symplectic manifold.
\end{proof}

\section{Normalized states as principal bundle}
\label{app:normalized-states-as-principal-bundle}
We introduced projective Hilbert space $\mathcal{P}(\mathcal{H})$ as the space of distinguishable quantum states, where normalization and phases of Hilbert space vectors can be ignored. However, in practice we usually parametrizing a variational manifold $\mathcal{M}\subset\mathcal{P}(\mathcal{H})$ by a set normalized states $\ket{\psi(x)}$ which depend on some real parameters $x^i$. It is therefore useful to introduce the manifold of normalized states
\begin{align}
    \mathcal{N}(\mathcal{H})=\left\{\ket{\psi}\in\mathcal{H}\,\big|\,\braket{\psi|\psi}=1\right\}\,.
\end{align}
This manifold will play an important when we are interested in relative phases between states. Mathematically, $\mathcal{N}(\mathcal{H})$ is a principal fiber bundle over the base manifold $\mathcal{P}(\mathcal{H})$. Given a quantum state $\psi\in\mathcal{P}(\mathcal{H})$, we have a corresponding fiber $e^{i\varphi}\ket{\psi}\in\mathcal{N}(\mathcal{H})$ of normalized Hilbert state vector representing this state. We also have a natural $\mathrm{U}(1)$ group action onto such fibers given by multiplication with a complex phase, \ie $\ket{\psi}\to e^{\ii\varphi}\ket{\psi}$ with $e^{\ii\varphi}\in\mathrm{U}(1)$.

We can now ask if we can use any structures of the Hilbert space to define a natural notion of parallel transport, \ie how to choose the complex phases when changing the quantum state continuously. This question is well-studied in the context of gauge theories and amounts to choosing a natural notion of \emph{horizontal tangent spaces}, which encode how to move naturally through the principal fiber bundle. Interestingly, $\mathcal{N}(\mathcal{H})$ is equipped with a natural notion of moving horizontally, namely by requiring that a horizontal curve $\ket{\psi(t)}$ satisfies
\begin{align}
    \braket{\psi(t)|\tfrac{d}{dt}|\psi(t)}=0\,,
\end{align}
\ie we require that the tangent vector $\tfrac{d}{dt}\ket{\psi(t)}$ is always orthogonal to the state $\ket{\psi(t)}$. The tangent space of normalized states is given by
\begin{align}
    \mathcal{T}_{\ket{\psi}}\mathcal{N}(\mathcal{H})=\left\{\ket{\varphi}\in\mathcal{H}\,\big|\,\mathrm{Re}\braket{\psi|\phi}=0\right\}
\end{align}
Locally, we can decompose this space into
\begin{align}
    \mathcal{T}_{\ket{\psi}}\mathcal{N}(\mathcal{H})=\mathrm{span}_{\mathbb{R}}(\ket{\psi})\oplus \mathcal{H}^\perp_{\psi}\,,
\end{align}
where the former the former is vertical space along the fiber and the latter is our natural choice of horizontal subspace.

We can understand the local choice of complex phase $e^{\ii \varphi}$ as pure gauge and $\mathrm{U}(1)$ as the corresponding gauge group. This implies that our description of normalized states $\mathcal{N}(\mathcal{H})$ is equivalent to electromagnetism on the base manifold $\mathcal{P}(\mathcal{H})$. In particular, we can compute the  gauge field $A_\mu$ and its field strength tensor $F_{\mu\nu}$.

\begin{example}
We consider the Bloch sphere with
\begin{align}
    \ket{\psi(x)}=\cos\left(\tfrac{\theta}{2}\right)\ket{0}+e^{\ii\phi}\sin\left(\tfrac{\theta}{2}\right)\ket{0}\,.
\end{align}
We can compute the gauge field $A_i$ as
\begin{align}
    A_\mu=\mathrm{Im}\braket{\psi(x)|\partial_\mu|\psi(x)}\equiv(0,\sin^2\left(\tfrac{\theta}{2}\right))\,.
\end{align}
As differential form, we therefore have $A=\sin^2\left(\tfrac{\theta}{2}\right)d\phi$. We find the field strength as its differential
\begin{align}
    F=dA=2\sin\left(\tfrac{\theta}{2}\right)\,d\theta\wedge d\phi\,.
\end{align}
We can compute the change of phase $\Delta\varphi$, also called holonomy, if we move horizontally in $\mathcal{N}(\mathcal{H})$, such that our path describes a circle at constant latitude $\theta$ when projected onto the Bloch sphere. We find
\begin{align}
    \Delta\varphi=\int^{2\pi}_0 d\phi A=2\pi\sin^2\left(\tfrac{\theta}{2}\right)\,.
\end{align}
We can also use Hamiltonian evolution by
\begin{align}
    \hat{H}=E_0+\omega\,\sigma_z
\end{align}
to compute the time evolution. The Hamiltonian vector field is given by
\begin{align}
    \dot{x}^\mu=\mathcal{X}^\mu=(0,-2\omega)\,,
\end{align}
such that the solution of the equation of motions is just $\phi(t)=-\omega t$. The quantum state will return to its original state at $t=\frac{\pi}{\omega}$, where $\phi(t)=-2\pi$. However, we will pick up a relative phase given by the integral
\begin{align}
    \Delta \varphi=\int^{\pi/2}_0 (A_\mu\dot{x}^\mu-E) dt=-\frac{\pi  (E_0+\omega )}{\omega}\,,
\end{align}
where we used $E=E_0 +\omega \cos{\theta}$. 
\end{example}

In general, we conclude that the change of complex phase after returning to the same quantum state through time evolution is given by
\begin{align}
    \Delta \varphi=\int^T_0(A_\mu\dot{x}^\mu-E(x))\,,\label{eq:general_holonomy}
\end{align}
where $E(x)=E(x_0)$ is constant for time-independent Hamiltonians.

If we go to variational families $\mathcal{M}\subset \mathcal{P}(\mathcal{H})$, we can still use~\eqref{eq:general_holonomy} to compute the holonomy associated to projected time evolution.  This is important if we want to compute spectral functions from real time evolution on $\mathcal{M}$ rather than via the linearization around a stationary point.

In practice, we therefore see that the only required additional structure on a variational family $\mathcal{M}$ is the computation of the gauge field $A_\mu$. Once this is computed, we can derive the change of relative phases from integration over $A_\mu$ and the energy expectation value, which is constant for time-independent Hamiltonians. Once, we have taken the relative phase in the time evolution into account, we can also use it to compute inner products between different state vectors, while ensuring the correct complex phase.

\section{Calculation of the pseudo-inverse}
\label{app:calculation-pseudo-inverse}
In section~\ref{sec:real-time-evolution}, we have described how calculating the Lagrangian real time evolution amounts to solving the differential equation of motion
\begin{align}
    \frac{dx^\mu}{dt}\equiv\mathcal{X}^\mu=-\Om^{\mu\nu}(\partial_\nu E)\,.\label{eq:lagrange-eom-appendix}
\end{align}
In practical applications, this equation will be integrated numerically, requiring to calculate its right hand side at each time step.

The numerical complexity of computing this quantity lies in the cost of evaluating the gradient $\partial_\mu E$ (which will depend of the specific structure of the energy function $E(x)$ under consideration) and in the cost of computing the matrix $\Om$ and contracting it with such gradient vector. As discussed in sections~\ref{sec:general_manifold} and~\ref{sec:kaehler-non-kaehler}, the matrix $\Om$ is a certain pseudo-inverse of the symplectic form $\om$, namely the one where we invert $\om$ on the orthogonal complement of its kernel.

In this appendix, we will present a method to correctly evaluate this pseudo-inverse $\Om^{\mu\nu}$ and simultaneously contract it with the gradient $\partial_\nu E$ to optimize the numerical cost of the operation. The need to compute a pseudo-inverse rather than a regular inverse lies in the fact that $\om$ may be degenerate and thus not invertible in the regular sense. In other words, $\om$ may have a non-trivial kernel 
\begin{equation}
    \mathrm{ker}\: \om=\{v^\mu\in\mathbb{R}^{2N}|\; \om_{\mu\nu}v^\nu=0\}\,.
\end{equation}
We can distinguish two types of vectors inside $\mathrm{ker}\:\om$:
\begin{enumerate}
\item Vectors $v^\mu$ that also lie in $\mathrm{ker}\:\g$, \ie $\g_{\mu\nu}v^{\nu}=0$.
\item Vectors $v^\mu$ that have a non-vanishing length with respect to $\g$, \ie $\lVert v\rVert^2=v^\mu\g_{\mu\nu}v^{\nu}>0$
\end{enumerate}
The vectors of the first type arise due to redundancies in the parametrization of the variational manifold $\mathcal{M}$. In this case, not all vectors $\ket{V_\mu}$ will correspond to linearly independent directions in tangent space, so that it is possible to find such non-trivial linear combinations of them that vanish, \ie $v^\mu\ket{V_\mu}=0$ for $v^\mu$ of the first type. Consequently,  such $v^\mu$ will have vanishing matrix elements in $\g$ and $\om$. The vectors of the second type, instead, represent physical directions in tangent space and may arise if $\mathcal{M}$ is not a Kähler manifold, as explained in section~\ref{sec:kaehler-non-kaehler}. In particular, such directions always exist in odd dimensional manifolds $\mathcal{M}$.

The vector $\mathcal{X}^\mu$, we wish to compute in~\eqref{eq:lagrange-eom-appendix}, must solve
\begin{align}
    \om_{\mu\nu}\mathcal{X}^\nu=-\partial_\mu E\,.\label{eq:linear-problem-for-X}
\end{align}
If $\om$ is non-degenerate and thus invertible, this solution is unique and given by $\mathcal{X}^\mu=-\Om^{\mu\nu}(\partial_\nu E)$, where $\Om$ is defined as the regular inverse of $\om$. If, however, $\om$ is degenerate, then equation~\eqref{eq:linear-problem-for-X} will generally be ill defined and not admit a unique solution. The set of possible solutions of equation~\eqref{eq:linear-problem-for-X} has the form $\mathcal{X}_0+\mathrm{ker}\:\om$, where $\mathcal{X}_0^\mu$ is one possible solution of the equation. To find a meaningful solution, we need to consider the following:
\begin{enumerate}
    \item[(a)] If two solutions differ by an element in $\mathrm{ker}\:\om$ of the first type discussed above (\ie arising from an overparametrization of the manifold), any of them could be picked without influencing the physical results of the calculation. Indeed, one would obtain the same physical vector, just parametrized in two of the many possible redundant ways.
    \item[(b)] If, instead, two solutions differ by vectors of $\mathrm{ker}\:\om$ of the second type (\ie physical vectors), then it is necessary to specify which solution should be picked. This essentially amounts to specifying how to compute the pseudo-inverse of $\om$.
    \item[(c)] Finally, it may be ill defined because $\partial_\mu E$ may not vanish on $\mathrm{ker}\:\om$, \ie there may be $v^\mu$ with $v^\mu\omega_{\mu\nu}=0$, but $v^\mu\partial_\mu E\neq 0$, which implies that~\eqref{eq:linear-problem-for-X} cannot be solved.
\end{enumerate}

In section~\ref{sec:general_manifold}, we explain how issues~(a) and~(b) can be addressed by imposing that we pick a solution that is orthogonal with respect to $\g$ to $\mathrm{ker}\:\om$. In other words, we pick the solution $\mathcal{X}\in \mathcal{X}_0+\mathrm{ker}\:\om$, such that $\mathcal{X}^\mu\g_{\mu\nu}v^{\nu}=0$ for all $v^\mu\in \mathrm{ker}\:\om$. This corresponds to choosing to move only in the submanifold in which $\J$ can be put in the form discussed in Proposition~\ref{prop:J-standard-form} without the diagonal zero block. This solution can also be identified\footnote{For this, we observe that since $\g$ is non-negative the quantity $\lVert w\rVert^2=w^\mu\g_{\mu\nu}w^{\nu}$ admits a global minimum at $w_0$, defined by the stationarity condition $2 w_0^\mu\g_{\mu\nu} \delta w^{\nu}=0$, for all possible variations $\delta w$ of $w$ around $w_0$. Such minimum may not be unique because $\g$ might be degenerate. However, such arbitrariness corresponds precisely to variations along directions reflecting redundant parametrizations which, as we just explained, do not change the physical result. In our case, the possible variations around any point inside the set of solutions $\mathcal{X}_0+\mathrm{ker}\:\om$ correspond to all vectors $v\in\mathrm{ker}\:\om$. Therefore, the solution vector of minimum length is identified by $\mathcal{X}^\mu\g_{\mu\nu}v^{\nu}=0$ for every $v^\mu\in \mathrm{ker}\:\om$, \ie it is also orthogonal to $\mathrm{ker}\:\om$.} as the element $\mathcal{X}\in \mathcal{X}_0+\mathrm{ker}\:\om$ with minimal length $\lVert\mathcal{X}\rVert^2=\mathcal{X}^\mu\g_{\mu\nu}\mathcal{X}^{\nu}$. Thus, the problem of correctly computing $\mathcal{X}^\mu$ in~\eqref{eq:lagrange-eom-appendix} is equivalent to finding a solution of~\eqref{eq:linear-problem-for-X} that also minimizes the quantity $\mathcal{X}^\mu\g_{\mu\nu}\mathcal{X}^{\nu}$. In summary, we have recast the pseudo-inversion as the linearly constrained quadratic minimization problem (\textit{quadratic program}):
\begin{align}
    \text{Minimize}\quad\mathcal{X}^\mu\g_{\mu\nu}\mathcal{X}^{\nu}\quad\text{such that}\quad\om_{\mu\nu}\mathcal{X}^\nu+\partial_\mu E=0\,.\nonumber
\end{align}
It is known~\cite{powell1978algorithms} that such problems can be solved by introducing the Lagrange multipliers $\lambda^\mu\in\mathbb{R}^{2N}$ and finding a stationary point of the Lagrangian function
\begin{equation}
    f(\mathcal{X},\lambda)=\mathcal{X}^\mu\g_{\mu\nu}\mathcal{X}^{\nu}+\lambda^\mu\, (\om_{\mu\nu}\mathcal{X}^\nu+\partial_\mu E)\,.
\end{equation}
A stationary point of $f(\mathcal{X},\lambda)$, in turn, is given by a solution of the equation
\begin{equation}
    \left(\begin{array}{c} \frac{\partial f}{\partial \mathcal{X}} \\[1mm] \frac{\partial f}{\partial \lambda} \end{array}\right)= \underbrace{\left(\begin{array}{cc} 2\g & \om^\intercal \\ \om & 0  \end{array}\right)}_{A}\left(\begin{array}{c} \mathcal{X} \\ \lambda \end{array}\right) +\underbrace{\left(\begin{array}{c} 0 \\ \partial E \end{array}\right)}_{B} = 0\,. \label{eq:pseudo-inverse-linear-problem}
\end{equation}
This is a $4N$-dimensional linear problem of the form $Ax+B=0$ which, for large system sizes, can be efficiently tackled numerically. While we have succeeded in reducing the complicated pseudo-inverse problem to the linear problem~\eqref{eq:pseudo-inverse-linear-problem}, this problem still contains in itself all the intrinsic redundancies that characterize the pseudo-inverse problem. Here, however, these redundancies are re-expressed in such a way that they can be easily dealt with. More specifically, redundant solutions of~\eqref{eq:pseudo-inverse-linear-problem} can be produced either by shifting $\mathcal{X}$ by an element of $\mathrm{ker}\:\g$ or by shifting $\lambda$ by an element of $\mathrm{ker}\:\om$. In the first case, we are just shifting between different redundant parametrizations of the same physical state. In the second case, we are not creating any physical differences, as we are just modifying the Lagrange multipliers which play no physical role in the theory. So any solution of~\eqref{eq:pseudo-inverse-linear-problem} returned to us by our numerical solution method corresponds to a physically acceptable value for $\mathcal{X}^\mu$ that can be used in the integration scheme of our choice for~\eqref{eq:lagrange-eom-appendix}.

Finally, we also need to address (c), \ie that~\eqref{eq:linear-problem-for-X} and thus also~\eqref{eq:pseudo-inverse-linear-problem} may actually not have any solutions if $\partial E$ has some overlap with the kernel of $\om$. In this case, the best we can do is to minimize $\|Ax+B\|$, which is equivalent to projecting out the components of $B$ in $\mathrm{ker}\:A$. Indeed, $(Ax+B)$ can be split into two orthogonal components $(Ax+B)_\parallel$ and $(Ax+B)_\perp$, which are in $\mathrm{ker}\:A$ and $(\mathrm{ker}A)^\perp$ with respect to the flat metric of equation~\eqref{eq:pseudo-inverse-linear-problem}. Clearly, $(Ax+B)_\parallel=B_\parallel$ is independent of $x$, while $(Ax+B)_\perp$ can be made to vanish exactly as $A$ is invertible in the orthogonal complement to its kernel. Therefore minimizing $\|Ax+B\|$ is equivalent to solving $Ax+B=0$ after having removed the component $B_\parallel$ of $B$. This can be done efficiently by multiplying the whole equation by $A^\intercal=A$, that is solving
\begin{equation}
    A(Ax+B)=A^2x+AB=0\,.
    \label{eq:pseudo-inverse-defpos-linear-problem}
\end{equation}
Such equation indeed is solved if and only if $(Ax+B)_\perp=0$. Another advantage of~\eqref{eq:pseudo-inverse-defpos-linear-problem} is that now the coefficient matrix $A^2$ is non-negative definite and the problem can thus be efficiently tackled with conjugate gradient methods. Such methods typically converge to an approximate solution more efficiently than relying on performing a full singular value decomposition.

\bibliography{references}

\end{document}